\documentclass[10pt]{article}

\def\final{0}

\usepackage{csc}

\usepackage{graphicx}
\usepackage{forest}
\usepackage{amsmath}
\usepackage{amssymb}
\usepackage{fullpage}
\usepackage{comment}
\usepackage{commath}
\usepackage{amsthm}
\usepackage{setspace}
\usepackage[margin=1in]{geometry}
\usepackage{float}
\usepackage{framed}
\usepackage{algorithm, algorithmic}
\usepackage{tikz}
\usepackage[backref]{hyperref}
    \definecolor{darkcolor1}{RGB}{15, 43, 109}
    \definecolor{darkcolor2}{RGB}{90, 15, 28}
    \hypersetup{
    colorlinks=true,
    urlcolor=blue,
    linkcolor=darkcolor1,
    citecolor=darkcolor1,
    linktocpage=true
    }
\usepackage{thm-restate}
\usepackage{microtype}

\excludecomment{question}

\def \E {{\mathbb{E}}}

\ifnum\final=0
\newcommand{\mynote}[2]{\marginpar{\color{#1}\tiny\sf #2}}
\newcommand{\mybignote}[2]{{\color{#1} $\langle \langle$ #2$\rangle \rangle$}}
\else
\newcommand{\mynote}[2]{}
\newcommand{\mybignote}[2]{}
\fi

\renewcommand{\vec}[1]{#1}

\newcommand{\uni}{U}

\newcommand{\dsize}{n}
\newcommand{\qmat}{Q}

\newcommand{\queries}{\mathcal{Q}}
\newcommand{\ds}{x}
\newcommand{\elem}{u}
\newcommand{\eqdef}{\coloneqq}
\newcommand{\eps}{\varepsilon}
\DeclareMathOperator{\polylog}{polylog}

\newcommand{\INDSTATE}[1][1]{\STATE\hspace{#1\algorithmicindent}}

\newcommand{\ex}[2]{{\ifx&#1& \mathbb{E} \else \underset{#1}{\mathbb{E}} \fi \left[#2\right]}}
\newcommand{\pr}[2]{{\ifx&#1& \mathbb{P} \else \underset{#1}{\mathbb{P}} \fi \left[#2\right]}}
\newcommand{\var}[2]{{\ifx&#1& \mathsf{Var} \else \underset{#1}{\mathsf{Var}} \fi \left[#2\right]}}

\newcommand{\nope}[1]{}

\newcommand{\zo}{\{0,1\}}

\newcommand{\card}[1]{\left| #1 \right|}

\newcommand{\from}{:}

\renewcommand{\epsilon}{\varepsilon}

\newcommand{\prot}{\rightleftharpoons}

\newcommand{\mat}[1]{\ensuremath{#1}}

\newcommand{\OPT}{\mathit{OPT}}

\newcommand{\cQ}{\mathcal{Q}}
\newcommand{\cR}{\mathcal{R}}

\newcommand{\cY}{\mathcal{Y}}

\newcommand{\Lap}{\mathrm{Lap}}

\newcommand{\QS}{\mat{Q}_{S}}
\newcommand{\hdisc}{\mathrm{hdisc}}
\newcommand{\disc}{\mathrm{disc}}

\newtheorem{thm}{Theorem}[section]
\newtheorem{defn}[thm]{Definition}
\newtheorem{cor}[thm]{Corollary}

\newtheorem{lem}[thm]{Lemma}

\newtheorem{rem}[thm]{Remark}

\newtheorem{theorem}[thm]{Theorem}
\newtheorem{corollary}[thm]{Corollary}
\newtheorem{lemma}[thm]{Lemma}

\newtheorem{definition}[thm]{Definition}

\newtheorem{fact}[thm]{Fact}

\theoremstyle{definition}

\let\originalleft\left
\let\originalright\right
\renewcommand{\left}{\mathopen{}\mathclose\bgroup\originalleft}
\renewcommand{\right}{\aftergroup\egroup\originalright}

\title{Online Matrix Factorization, Online Private Query Release, and Online Discrepancy Minimization}
\author{Aleksandar Nikolov\thanks{Department of Computer Science, University of Toronto. Supported in part by an NSERC Discovery Grant (RGPIN-2021-03206), and the Canada Research Chairs program (CRC-2020-00004).} \and Haohua Tang\thanks{Department of Computer Science, University of Toronto. Supported in part by an NSERC Discovery Grant (RGPIN-2021-03206), and the Canada Research Chairs program (CRC-2020-00004).} \and Jonathan Ullman\thanks{Khoury College of Computer Sciences, Northeastern University.  Part of this work were supported by NSF awards CNS-2232692 and CNS-2247484.}}
\date{}

\begin{document}
\maketitle

\begin{abstract}
    In this paper we consider several related online computation problems. First, we study answering sequences of statistical queries arriving online, and being answered immediately when they arrive with differential privacy. Known matrix factorization mechanisms can answer a set of statistical queries with error bounded by the $\gamma_2$ norm of their query matrix, but require that all queries are known in advance. We show that nearly the same error bounds can be achieved in the online setting for non-adaptively chosen queries. To do so, we give an online factorization algorithm that competitively matches the best offline factorization up to logarithmic factors. In the online matrix factorization problem, a new row $q_t$ of a matrix arrives at each time step $t$, and the algorithm needs to maintain a factorization $L_tR_t=Q_t$ such that at each time it appends some rows to $R_t$, and outputs a new row $\ell_t$ s.t. $\ell_tR_t=q_t$. Our algorithm maintains the competitiveness over this online process, even if the number of rows to arrive is unknown. As another application, we give an online discrepancy minimization algorithm that achieves discrepancy competitive against the $\gamma_2$ norm (and also against hereditary discrepancy) up to logarithmic factors.
\end{abstract}
\vfill\newpage

\tableofcontents
\vfill\newpage

\section{Introduction}

Suppose that a data analyst $A$ is analyzing a dataset $\ds$ that contains potentially sensitive information about $n$ individuals, and we want to enable the analyst to extract useful information about $\ds$ without compromising the privacy of the individuals. Differential privacy~\cite{DworkMNS06} has become the \emph{de facto} standard way to approach this problem. On a high level, differential privacy stipulates that the analyst can only learn information about $\ds$ through an algorithm $M$, and $M$ must satisfy the property that changing any single data point in $\ds$ has only a small effect on the distribution of outputs of $M$. 
One option for the analyst is to specify \emph{all} the queries they want to ask about $\ds$ ahead of time, and send them to $M$, which responds with its answers. We call this the offline, or batch, or non-interactive setting. The offline setting certainly has some advantages. In particular, it allows $M$ to adapt how it computes its private answers to the structure of the queries of $A$, giving more accurate answers to easier sets of queries. On the other hand, specifying all queries ahead of time can be a challenging task for the analyst, who might instead prefer to explore $x$ gradually over time, perhaps by engaging with $M$ through an interactive process. Can the private algorithm $M$ still adapt to the structure of $A$'s queries in this interactive setting, and provide more accurate answers to easier queries? This question is the starting point of this paper. 

\subsection{Private Query Release}

Let us introduce some notation and definitions so that we can discuss the question above more precisely. We model the dataset $\ds$ as a collection $(\ds_1, \ldots, \ds_n)$ of $n$ elements from some data universe $\uni$. For our results, we will need the data universe to be finite, and we will denote its size by $N$. We think of each data point $x_i$ as holding information about one individual. We say two datasets $x$ and $x'$ are neighboring, written $x\sim x'$, if they differ in exactly one data point. Then differential privacy is defined as follows.
\begin{defn}[Differential Privacy \cite{DworkMNS06}] A randomized algorithm $M \from U^n \to \cR$ is \emph{$(\eps, \delta)$-differentially private} if for every pair of neighboring datasets $x \sim x'$ and every $R \subseteq \cR$,
$$
\pr{}{M(x) \in R} \leq e^{\eps} \pr{}{M(x') \in R} + \delta.
$$
\end{defn}
We focus on differentially private algorithms that release answers to statistical queries. A statistical query on $x$ is defined by a function $q:\uni\to \R$, and, overloading notation, equals $q(x) = \frac1n \sum_{i=1}^n q(x_i)$. An important special case are counting queries, i.e., the queries defined by a Boolean function $q:\uni \to \zo$. Counting queries ask what fraction of the dataset satisfies some property, e.g., live in a geographic region and have voted for a candidate in an election. Given a set of counting queries $\mathcal{Q} = (q_1, \ldots, q_m)$, we use $\mathcal{Q}(x) = (q_1(x), \ldots, q_m(x))$ for the vector of answers to the queries. 

Developing private algorithms for answering statistical queries (also known as private query release) has been a cornerstone of research in differential privacy. Many data releases, such as those published by official statistics agencies like the US Census Bureau~\cite{CensusDP17}, take the form of releasing answers to sets of counting queries. Moreover, answering such queries is a useful primitive for other private algorithms~\cite{BlumDMN05}. As a baseline, one of the simplest algorithm that releases answers to a set of queries $\queries$ is the Gaussian mechanism: we release $\queries(x) + Z$, where $Z$ is a vector of independent mean $0$ Gaussian random variables, each with standard deviation proportional to the sensitivity of $\mathcal{Q}$, defined as $\Delta\mathcal{Q} = \max_{x\sim x'} \norm{\mathcal{Q}(x)-\mathcal{Q}(x')}_2$. Note that $\Delta\mathcal{Q}$ is just $\frac{1}{n} \max_{\elem,\elem' \in \uni} \|\queries(\elem) - \queries(\elem')\|_2$, where $\queries(\elem) = (q_1(\elem), \ldots, q_m(\elem))$. The Gaussian mechanism works both in the offline and in the online setting, even when the queries are chosen adaptively by the analyst.

\subsubsection{Offline Private Query Release via Factorization}

As mentioned above, in the offline setting, when the analyst asks a set of statistical queries $\mathcal{Q}$ that is fully specified in advance, the private algorithm can tailor its answers to $\mathcal Q$. As an important example, let us consider data-oblivious, i.e., noise-adding algorithms, which release $\mathcal{Q}(x) + \frac{1}{\dsize} Z$ for some $k$-dimensional random noise vector $Z$ that depends only on $\mathcal Q$ but not on $\ds$. The Gaussian noise mechanism is a special case, where $Z$ has independent Gaussian coordinates. More generally, factorization mechanisms~\cite{EdmondsNU20}, defined more precisely below, choose $Z$ to be a mean $0$ multivariate Gaussian random vector, and optimize the covariance matrix of $Z$ so that some error measure is minimized and the output $\mathcal{Q}(x) + \frac{1}{\dsize} Z$ is private. Adding correlated noise can often lead to much smaller error than adding independent noise to each query. The benefits of correlating noise have also been observed for threshold queries~\cite{ChanSS11,DworkNPR10}, higher-dimensional range queries~\cite{XiaoWG11}, for halfspace queries and other queries with low VC dimension~\cite{MuthukrishnanN12}, for queries arising in streaming algorithms~\cite{LebedaT24}, and even for arbitrary workloads of $k$ counting queries~\cite{Lebeda25}. Moreover, in some settings factorization mechanisms achieve nearly optimal error among all differentially private algorithms (once we fix the privacy parameters and dataset size)~\cite{EdmondsNU20}. Existing factorization mechanisms~\cite{LiHRMM10,EdmondsNU20,HDMM,NikolovT24} for general sets of counting queries, however, require full knowledge of $\mathcal{Q}$ so that they can compute an optimal choice of covariance for the noise so that privacy is preserved and the noise is minimized. It is therefore, not clear if the error bounds achieved by factorization mechanisms can be matched in an interactive setting.

There is an alternative characterization of factorization mechanisms, for which it helps to introduce some linear algebraic notation. For an ordered set of queries $\mathcal{Q} = (q_1, \ldots, q_m)$ on a universe $U$ of size $N$, let us define the corresponding $m\times N$ query matrix $\qmat$, in which the $i$-th row is the truth table of $q_i$. Overloading notation, we use $q_i$ as well for the $i$-th row of $Q$. We define the histogram $h$ of the dataset $\ds = (\ds_1, \ldots, \ds_\dsize)$ as a vector in $\R^\uni$ with entries $h_{\elem} \eqdef |\{i: \ds_i = \elem\}|$. We can then write $\mathcal{Q}(\ds)$ as a linear transformation: $\mathcal{Q}(\ds) = \frac1n \qmat h$. The central insight of factorization mechanisms is that, instead of releasing $\mathcal{Q}(\ds) = \frac1n \qmat h$ with independent and identically distributed (i.i.d.)~Gaussian noise, as in the Gaussian mechanism, it may be beneficial to instead release the answers to some \emph{strategy queries} with i.i.d.~noise, and then reconstruct answers to $\mathcal{Q}$ by taking linear combinations of the noisy answers to the strategy queries. The strategy queries are encoded by a matrix $R$, and the linear combinations required for reconstruction are encoded by a matrix $L$. Then, the exact answers to the strategy queries are given by $\frac1n Rh$, and being able to reconstruct answers to $\mathcal{Q}$ translates to requiring that $\qmat = LR$. A factorization mechanism takes matrices $L$ and $R$ satisfying $\qmat = LR$, and outputs $\frac1n L(Rh+Z) = Qh + \frac1n LZ$, where $Z \sim N(0,\sigma^2 I)$ is Gaussian, and $\sigma$ is proportional to the sensitivity of the strategy queries. In particular, we need $\sigma = O\left(\frac1n\|R\|_{1\to 2}\right)$, where $\|R\|_{1\to 2}$ is the matrix norm equal to the maximum $\ell_2$ norm of a column of $R$. (Here we ignore the privacy parameters for simplicity.) Then a standard analysis shows that the maximum expected squared error over queries is
\(
\max_{i=1}^m \E[(LZ)_i^2] = \sigma^2 \|L\|_{2 \to \infty}^2,
\)
where $\|L\|_{2 \to \infty}$ is the maximum $\ell_2$ norm of a row of $L$. To minimize the maximum standard deviation of the error  over choices of $L$ and $R$, we need to solve the optimization problem
\[
\min \left\{\frac{\|R\|_{1\to 2}\|L\|_{2 \to \infty}}{n}: \qmat = LR\right\}= 
\frac{1}{ n}\cdot\min \left\{\|R\|_{1\to 2}\|L\|_{2 \to \infty}: \qmat = LR\right\}.
\]
The quantity $\min\left\{\|R\|_{1\to 2}\|L\|_{2 \to \infty}: \qmat = LR\right\}$ is known as the $\gamma_2$ factorization norm of $\qmat$ and is denoted by $\gamma_2(\qmat)$. Thus, an optimal choice of factorization mechanism allows us to release $\mathcal{Q}(\ds)$ under differential privacy constraints so that the standard deviation of the error for each query is bounded by $O\left(\frac1n\gamma_2(\qmat)\right)$. Standard concentration arguments also allow us to bound the maximum error over queries (i.e., $\ell_\infty$ error) by $O\left(\frac1n\gamma_2(\qmat)\sqrt{\log m}\right)$ with high probability.

\subsubsection{Online Private Query Release}

A factorization of $Q$ that achieves $\gamma_2(Q)$ is efficiently computable by semidefinite programming~\cite{LinialMSS07}, but, naturally, this requires knowing all of $Q$. One might guess that attaining the error bound $O\left(\frac{\gamma_2(\qmat)\sqrt{\log m}}{n}\right)$ might be impossible when queries arrive online, but in this work we show that, surprisingly, this is not the case, at least up to logarithmic factors. We prove the following theorem. 

\begin{restatable}{theorem}{thmmainpriv}\label{thm:main-priv}
    For any $\varepsilon > 0$, $\delta \le e^{-\eps}$, $\beta, c > 0$, there is an interactive algorithm that is $(\varepsilon,\delta)$-differentially private for non-adaptive data analysts, and, when the analyst chooses counting queries $\mathcal{Q} = (q_1, \ldots, q_m)$ on a universe $U$, the algorithm computes answers $a_1, \ldots, a_m$ such that, with probability at least $1-\beta$, $\max_{t=1}^m |q_t(x) - a_t|$ is bounded by
    \[
      O\left(\frac{\gamma_2(Q)\log(\gamma_2(Q))^{\frac{1}{2}+c}\log(|U|)^3 \sqrt{\log(m + 1/\beta)\log(1/\delta)}}{\varepsilon n}\right),
    \]
    where $Q$ is the query matrix of $\mathcal{Q}$. Moreover, when the queries are specified by the corresponding row vectors in the query matrix $Q$, the algorithm runs in polynomial time in $m,|U|,n$.
\end{restatable}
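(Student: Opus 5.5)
We will derive the theorem from an online factorization result: an algorithm that, given rows $q_1,q_2,\dots$ online, maintains $R_t$ (only appending rows) and outputs $\ell_t$ with $\ell_t R_t = q_t$. It should guarantee $\|R_T\|_{1\to 2}\cdot\max_{t\le T}\|\ell_t\|_2 \le \gamma_2(Q_T)\cdot \mathrm{polylog}$ at every time $T$, with no knowledge of $m$. Given this, the mechanism is as follows. Whenever the factorization appends a new strategy row $r$, we immediately release $\frac1n(\langle r,h\rangle + z_r)$ with fresh Gaussian noise $z_r$. We then answer $q_t$ by $a_t=\ell_t\cdot(\text{released strategy answers})$, which equals $q_t(x)+\frac1n\ell_t z$.

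Since the analyst is non-adaptive, the query sequence, and hence the whole sequence of strategy rows, is a fixed deterministic function of $Q$ that does not depend on $x$. So the whole transcript is a post-processing of the Gaussian mechanism applied to the final strategy matrix $R$. For privacy we would like noise scale $\sigma \propto \|R\|_{1\to2}\sqrt{\log(1/\delta)}/\eps$. The difficulty is that $\|R\|_{1\to 2}$ is not known in advance, because $\gamma_2(Q)$ grows over time.

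We handle the unknown scale by a doubling scheme. We organize the strategy rows into epochs $j=1,2,\dots$, where epoch $j$ corresponds to the guess $\gamma_2\in(2^{j-1},2^j]$. Within epoch $j$ we normalize the rows so that the column norms of the epoch-$j$ block are at most $1$, and add noise $\sigma_j$. Privacy is then obtained by composing across epochs. To keep the total budget bounded, we allocate $\eps_j\propto \eps/ (j^{1/2+c'})$ (in a zCDP / Rényi sense, $\rho_j\propto \rho/j^{1+2c'}$), which is summable. This is what produces the $\log(\gamma_2(Q))^{1/2+c}$ factor: the last epoch $J\approx\log\gamma_2(Q)$ pays noise inflated by $J^{1/2+c}$. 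Earlier epochs contribute geometrically smaller error, because their $\ell$-weights scale with $2^j$. The overall $(\eps,\delta)$ guarantee follows from the zCDP-to-approximate-DP conversion, which gives the $\sqrt{\log(1/\delta)}$ factor (this uses $\delta\le e^{-\eps}$).

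For accuracy, each $\frac1n\ell_t z$ is Gaussian with standard deviation at most $\frac1n\sum_j \|\ell_t^{(j)}\|_2\,\sigma_j$. Using the competitive guarantee, this is $O\bigl(\gamma_2(Q)\log(\gamma_2(Q))^{1/2+c}\,\mathrm{polylog}|U|\,\sqrt{\log(1/\delta)}/(\eps n)\bigr)$. A union bound over the $m$ queries, each failing with probability $\beta/m$, gives the $\sqrt{\log(m+1/\beta)}$ factor. The factor $\log(|U|)^3$ should be exactly the competitive ratio of the online factorization.

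The main obstacle is the online factorization itself. My plan is to use an online covering/multiplicative-weights approach on the dual of the $\gamma_2$ SDP. By Lemma-style duality, $\gamma_2(Q)=\max$ over probability vectors $p$ on rows and $w$ on columns of $\|\mathrm{diag}(p)^{1/2}Q\,\mathrm{diag}(w)^{1/2}\|_{S_1}$. Online, we maintain a column weighting $w$ on the $N$ universe elements and take $R$ to consist of scaled singular-vector directions of the rows seen so far. When a new $q_t$ lies mostly outside the span of the current $R$ in the $w$-weighted norm, we append its residual direction and multiplicatively boost $w$ on the heavy columns. A potential argument then bounds how often this can happen at each scale. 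The $\log|U|$ factors come from the entropy of $w$ over $N$ columns, multiplied across dyadic scales of singular values and row norms. Polynomial running time follows because each step is linear algebra plus at most one SDP/eigenvalue computation.
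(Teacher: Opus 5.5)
Your reduction from the theorem to online factorization is sound and is essentially the paper's. The Gaussian mechanism on strategy rows, with answers $\ell_t\cdot(\cdot)$ obtained by post-processing, is Lemma~\ref{lm:priv-to-fact}. Splitting a zCDP budget $\rho_j\propto\rho/j^{1+2c'}$ across doubling epochs is equivalent to Lemma~\ref{lm:fact-doubling}, which restarts a $(\gamma^{(\phi)},C)$-bounded factorization in each phase and scales the phase-$\phi$ block of $R$ by $1/\sqrt{f(\phi)}$ under a single noise level. One correction: your geometric-decay accounting needs the normalized guarantee $\|\ell_t\|_2\le 1$ for all $t$ together with $\|R_T\|_{1\to 2}\lesssim\gamma_2(Q_T)\,\mathrm{polylog}$ (the transposed column-arrival model). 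A bound on the product alone is not enough. With only that bound, each of the $J$ epochs can contribute error of order $\gamma_2(Q_t)\,\mathrm{polylog}\cdot\sigma_j$, and you lose an extra $\sqrt{J}$.

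The genuine gap is the online factorization itself. It carries essentially all of the difficulty, and you give no algorithm, no potential function, and no derivation of the ratio $\log(|U|)^3$. Your sketch also runs into the obstacles the paper has to work around.

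\emph{Monotonicity.} Since $R_t$ may only gain rows, $R_t^TR_t$ must be nondecreasing in the PSD order. Strategies recomputed from reweighted singular vectors, or from multiplicative/exponential weights, lack this property, because the matrix exponential is not operator monotone.

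\emph{Residual appending fails.} Exactness forces you to append something whenever $q_t\notin\mathrm{rowspan}(R_{t-1})$. Appending the (weighted) residual, as written, is monotone but abandons the optimization. Take threshold queries $q_t=\mathbf{1}_{[t]}$ arriving in order. Under any diagonal weighting, the residual of $q_t$ against $\mathrm{span}\{e_1,\ldots,e_{t-1}\}$ is $e_t$. So $R_t$ has rows $\alpha_ie_i$, and $\ell_t=(\alpha_1^{-1},\ldots,\alpha_t^{-1})$ is forced. This gives $\|R_t\|_{1\to 2}\|\ell_t\|_2\ge\sqrt{t}$, while $\gamma_2(Q)=O(\log N)$. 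The right test for whether $q_t$ is already served is $q_t(R_t^TR_t)^+q_t^T\le C^2\gamma^2$ (Lemma~\ref{lm:fact-psd}), not span membership.

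\emph{Column norms.} Boosting $w$ on heavy columns cannot remove mass already committed to them, so nothing in your scheme controls $\|R_t\|_{1\to 2}$.

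The paper's fix has two parts. First, it solves only the average problem, $\|R_t\|_F\le\sqrt{N}$ (i.e.\ $\gamma_F(Q^T)$), with the primal-dual rule $X(y)=\gamma^2I+2\gamma\sqrt{N}(\sum_iy_iq_i^Tq_i)^{1/2}$, raising $y_t$ until $X\succeq q_t^Tq_t$. Monotonicity comes from operator monotonicity of the square root. If the trace reaches $3\gamma^2N$ first, then $\partial\,\mathrm{tr}X/\partial y_s>2\gamma^2N$ while $q_s$ is uncovered, which gives $\sum_iy_i<1$. That is a weak-duality certificate that $\gamma_2(Q_t)>\gamma$. Second, it converts average to maximum column norm by kicking columns of norm above $\sqrt{2}$ into the next of $O(\log N)$ zones. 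This uses Bentley--Saxe style transformations supporting column insertions and one-time deletions. Each of these three layers costs $O(\log N)$, and that, not an entropy argument over $w$, is the source of $\log(|U|)^3$.
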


Note that Theorem~\ref{thm:main-priv} holds even when the interactive algorithm has no prior information about the sequence of queries $\mathcal{Q}$, and, in particular, the algorithm does not know the number of queries $m$.


The proof of Theorem~\ref{thm:main-priv} is given in Section~\ref{sec:gen-fact}, and is based on an online algorithm for factoring matrices (see Section~\ref{sec:intro-fact} for an overview). We, in fact, prove a more general statement that applies to arbitrary statistical queries rather than just counting queries. We also show, in Theorem~\ref{thm:priv-lb}, that the $\log(\gamma_2(Q))^{\frac{1}{2}+c}$ cannot be improved significantly in this more general setting.

Note that Theorem~\ref{thm:main-priv} assumes that the data analyst is \emph{non-adaptive}, i.e., that the analyst determines the queries in $\mathcal{Q}$ independently of the answers $a_1, \ldots, a_m$ computed by the algorithm. The privacy of our algorithm in fact holds against adaptive adversaries, but the accuracy guarantee does not (see Section~\ref{sec:prelims} for details about how privacy and accuracy are defined in this setting). It is an interesting question whether a similar result holds in the fully adaptive setting.  We remark that for worst-case queries there is no separation between oblivious and adaptive adversaries~\cite{BunSU17}.

It is interesting to compare Theorem~\ref{thm:main-priv} to results by Edmonds, Nikolov, and Ullman~\cite{EdmondsNU20}, who showed that these offline factorization mechanisms are optimal in the \emph{large dataset regime}. That is, for small enough constant $\varepsilon, \delta, \beta$, and for any (offline)  $(\varepsilon,\delta)$-differentially private algorithm computing answers $a_1, \ldots, a_m$ to some set $\mathcal Q$ of statistical queries, we have that, \emph{for sufficiently large values of $n$}, there is some dataset $x \in \uni^n$,

\[
\max_{i = 1}^m |a_i - q_i(x)| = \Omega\left(\frac{\gamma_2(Q)}{\varepsilon n}\right)
\]
with probability at least $1-\beta$.  Thus, for large enough datasets, the error bound in Theorem~\ref{thm:main-priv} is \emph{competitive} against an optimal private query release algorithm up to a competitive ratio polynomial in $\log(mU/(\beta\delta))$.  As far as we are aware, this is the first competitiveness result of this type for online private query release.  In Section~\ref{sec:intro-smalln} we discuss partial progress towards giving online competitive private query release in the small dataset regime.   More generally, our algorithms are competitive with the optimal offline factorization mechanism in all regimes, whether or not that mechanism is optimal among all differentially private algorithms.

\subsubsection{Online Private Query Release for Small Datasets}\label{sec:intro-smalln}
When the dataset size $n$ is small a different set of private mechanisms give optimal error guarantees with better error bounds than Theorem~\ref{thm:main-priv}, and the optimal error rate is governed by different quantities than $\gamma_2(Q)$.  Specifically, Nikolov, Talwar, and Zhang showed that if we fix a set of queries with query matrix $Q$, and a dataset of size $n$, then the optimal error is governed by the hereditary discrepancy of $Q$.  Given a matrix $Q$ with columns indexed by $U$, and a subset $S \subseteq U$, we use $\QS$ to denote the $m \times |S|$ submatrix of $\mat{Q}$ consisting of all columns of $\mat{Q}$ indexed by $S$.  With this notation, we define the \emph{hereditary discrepancy}, and the $\ell_2$ hereditary discrepancy as
$$
\hdisc(\mat{Q}, n) := \max_{S \subset [N] : |S| \leq n} \left(\min_{\vec{x} \in \{\pm 1\}^{S}} \| \QS \vec{x} \|_{\infty}\right)
\quad \text{and} \quad
\hdisc_2(\mat{Q}, n) := \max_{S \subset [N] : |S| \leq n} \left(\min_{\vec{x} \in \{\pm 1\}^{S}} \| \QS \vec{x} \|_2\right).
$$
To characterize the optimal error, we actually need to slightly modify this standard definition of hereditary discrepancy. For this reason we introduce the \emph{modified hereditary discrepancy} and modified hereditary $\ell_2$ discrepancy $\hdisc^*(Q,n)$ and $\hdisc_2^*(Q,n)$ defined, respectively, by replacing $$\| \QS \vec{x} \|_{\infty} \quad \text{with} \quad \| \QS \vec{x} \|_{\infty} + \bigg|\sum_{i \in S} x_i\bigg|$$ and replacing $$\| \QS \vec{x} \|_2 \quad \text{with} \quad \| \QS \vec{x} \|_2 + \bigg|\sum_{i \in S} x_i\bigg|.$$ These quantities are usually equal to $\hdisc(Q,n)$ and $\hdisc(Q,n)$ up to constants.

In the small dataset regime, the optimal error for a \emph{worst-case} set of queries $\qmat$ is given by the private multiplicative weights algorithm of Hardt an Rothblum~\cite{HardtR10} and its subsequent analysis~\cite{GuptaRU12,HardtLM12}.  Their algorithm $M_{\queries}^\textit{PMW}$ (which also works for online queries) has error
$$
\ex{}{\| \queries(x) - M_{\queries}(x) \|_{\infty}} = \alpha_{\textit{PMW}} = \tilde{O}\left(\frac{\sqrt{\log(|U|)} \log m}{n} \right)^{1/2}
$$
and this is known to be optimal for some workload of queries $\qmat$ by the lower bounds of Bun, Ullman, and Vadhan~\cite{BunUV14}.  Note that the dependence on $n$ is  proportional to $n^{-1/2}$ and the error grows polylogarithmically in the number of queries and the universe size for any set of queries.  In the offline setting, Nikolov, Talwar, and Zhang~\cite{NikolovTZ13} showed that we can achieve better bounds for the weaker \emph{$\ell_2$ error guarantee} for nice workloads of queries that have small hereditary discrepancy.  Specifically their mechanism has error
$$
\ex{}{\frac{ \| \queries(x) - M_{\queries}^{\mathit{NTZ}}(x) \|_{2}}{\sqrt{m}}} = \alpha_{\mathit{NTZ}} = O\left(\frac{\hdisc_2^*(Q,n)}{n}\cdot \mathrm{polylog}(|U|,m)\right).
$$

Moreover, a slight modification of the results of Muthukrishnan and Nikolov~\cite{MuthukrishnanN12} shows that for every differentially private algorithm $M_{\queries}$ there exists a dataset $x \in U^n$ such that 
$$
\ex{}{\frac{\| \queries(x) - M_{\queries}(x) \|_{2}}{\sqrt{m}}} = \alpha_{\mathit{DISC}} = \Omega\left(\frac{\hdisc_2^*(Q,n)}{n\log n} \right).
$$

We give the first online differentially private algorithm that competes with this refined discrepancy bound.  Moreover, our algorithm is the first that gives any algorithm that competes with the discrepancy lower bound for the more stringent $\ell_\infty$ error guarantee.
\begin{thm} [Informal, Some Parameters Omitted] \label{thm:intro-smalln}
There exists a differentially private interactive algorithm that takes a dataset $x \in [U]^n$ such that if a data analyst asks a set of adaptively chosen Boolean statistical queries $\mathcal{Q} = (q_1,\dots,q_k)$, and the algorithm answers each query with $a = (a_1,\dots,a_k)$, then with probability at least $1 - o_n(1)$,
$$
\| Q(x) - a \|_{\infty} \leq \alpha = \tilde{O}\left( \frac{\sqrt{\hdisc^*(Q,n) \log(|U|)}\log m}{n} \right)^{2/3} = \alpha_{\mathit{DISC}} \cdot \tilde{O}\left(\frac{\hdisc^*(Q,n)^2}{n \log(|U|) \log(m)^2}\right)^{1/3},
$$
\end{thm}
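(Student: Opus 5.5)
The plan is to adapt private multiplicative weights (PMW) with the sparse vector technique, and to replace its worst-case potential bound with one governed by $\hdisc^*(Q,n)$.

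\textbf{Algorithm.} The mechanism keeps a public synthetic distribution $D_t$ over $U$, starting from uniform. For each incoming query $q_t$, the sparse vector test checks whether $|q_t(x) - q_t(D_t)|$ exceeds a threshold of order $\alpha$.
\begin{itemize}
\item If not, it answers $q_t(D_t)$.
\item If so, it releases a Laplace-noised value of $q_t(x)$ and performs a multiplicative weights update of $D_t$ toward agreement on $q_t$.
\end{itemize}
Privacy follows from the standard analysis: sparse vector pays only for the $T$ ``above threshold'' rounds, and advanced composition over those rounds costs $\sqrt{T\log(1/\delta)}$. This argument is insensitive to adaptivity. Accuracy also follows from the standard argument, given a bound on $T$. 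Setting $\alpha \approx \sqrt{T\log(1/\delta)}\log m/(\eps n)$ then makes both the sparse vector noise and the Laplace noise small relative to the threshold, with probability $1-o_n(1)$.

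\textbf{Bounding $T$.} In classical PMW, each update decreases $\mathrm{KL}(h/n\,\|\,D_t)$ by $\Omega(\alpha^2)$, and the initial divergence is at most $\log|U|$. This gives $T = O(\log|U|/\alpha^2)$ and hence the $n^{-1/2}$ rate. To get the discrepancy-sensitive rate, I would change the benchmark. Instead of comparing $D_t$ to the empirical histogram, compare it to the best \emph{sparse-support surrogate} whose existence follows from hereditary discrepancy.

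Repeated halving with low discrepancy colorings gives this surrogate. Apply $\hdisc^*(Q,n)$ to the support of $x$, at most $n$ points, and keep the color class with the smaller count. The term $|\sum_{i\in S} x_i|$ in the modified definition guarantees the two halves have nearly equal mass, so renormalizing is cheap. Each halving step perturbs every query answer, in normalized form, by about $\hdisc^*/(\text{current size})$. These errors form a geometric series, so after halving down to size $k$ we get a multiset $y$ of $k$ points with
\[
\|Q(x)-Q(y)\|_\infty \lesssim \hdisc^*(Q,n)/k .
\]
Running the KL potential against $y$ instead of $h/n$, with initial divergence still at most $\log|U|$, each update makes progress $\Omega(\alpha^2)$ as long as $\hdisc^*/k \ll \alpha$. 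This yields $T = O(\log|U|/\alpha^2)$ with no dependence on $n$ beyond the surrogate's error.

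The step I expect to be the main obstacle is getting the $\hdisc^*$ dependence to enter the exponent as stated. It is not enough to compare against a single fixed surrogate. I would instead exploit that $y$ is supported on $k$ points, so the noise can be calibrated to the surrogate's granularity $1/k$ rather than $1/n$. To do this, I would bound the per-round progress relative to the surrogate, and the surrogate's error, in terms of $\sqrt{\hdisc^*\log|U|}$. The idea is to balance three things, the discrepancy error $\hdisc^*/k$, the number of updates, and the privacy noise $\sqrt{T}\log m/(\eps n)$, by choosing $k$ optimally.

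Setting these terms equal and solving for $\alpha$ gives the $2/3$ exponent in Theorem~\ref{thm:intro-smalln}. Checking that the algebra lands exactly at
\[
\bigl(\sqrt{\hdisc^*\log|U|}\,\log m/n\bigr)^{2/3}
\]
is the step most likely to need adjustment. A fallback is to run PMW over a domain of $k$-point multisets, paying $k\log|U|$ instead of $\log|U|$ in the potential, and then re-optimize $k$.
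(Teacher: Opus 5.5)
Your surrogate step is fine. Repeated halving, with colorings kept balanced by the all-ones row, does give a dataset $y$ of size $k$ with $\|Q(x)-Q(y)\|_\infty \lesssim \hdisc^*(Q,n)/k$. This is the paper's sparse-approximation theorem, which the paper gets in one shot by rounding $\frac{t}{n}h$ with the Lov\'asz--Spencer--Vesztergombi linear-discrepancy bound applied to $Q$ with the all-ones row appended.

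The gap is in the online mechanism, which is where the $2/3$ exponent actually comes from. Your main route is PMW over $U$ with the KL potential measured against $y$. It still gives $T = O(\log|U|/\alpha^2)$, so $\hdisc^*$ never enters and you recover only the PMW rate; you note this yourself. Your proposed repair, calibrating noise to the surrogate's granularity $1/k$, cannot work. The noise is dictated by the sensitivity of $q(x)$ to changing one element of $x$, which is $1/n$ no matter what surrogate exists. A small surrogate can help only by reducing the number of expensive rounds.

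Your fallback, MW over $U^k$, does exploit this, but it keeps $\Omega(\alpha^2)$ progress per update, so $T = O(k\log|U|/\alpha^2)$. Take $k \approx \hdisc^*/\alpha$ and $\alpha \approx \sqrt{T\log(1/\delta)}\,\log m/(\eps n)$. Then $\alpha^{5/2} \approx \sqrt{\hdisc^*\log|U|}\,\log m/(\eps n)$, i.e.\ exponent $2/5$, which for $\hdisc^* = O(1)$ is even worse than plain PMW. To reach $2/3$, the number of hard rounds must be $O(k\log|U|)$, with no dependence on $\alpha$.

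The missing idea is the halving argument of the Median Mechanism of Roth and Roughgarden, which is what the paper uses. Keep the set $\cY = U^{\le s}$ of all datasets of size at most $s$. For each query, let $\nu$ be the median of $\{q(y)\}_{y\in\cY}$, and use AboveThreshold to test whether $|q(x)-\nu|$ is small. If it is, answer $\nu$. If not, release $q(x)+\Lap(\sigma)$ and discard every $y$ on the wrong side of $\nu$.

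Because $\nu$ is a median, each hard round at least halves $\cY$, so there are at most $\log_2|U^{\le s}| \approx s\log|U|$ hard rounds. A surrogate within $\tau/2$ of $x$ on all queries is never discarded. Advanced composition over these AboveThreshold instances gives error about $\sqrt{s\log|U|\log(1/\delta)}\,\log(m/\beta)/(\eps n)$. Balancing this against $\hdisc^*(Q,n)/s$ yields exactly the stated bound.

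You also cannot choose $k$ ``optimally'' up front, because $\hdisc^*(Q,n)$ depends on queries not yet asked. The paper runs the inner mechanism with $s = 1,2,4,\dots$ and moves to $2s$ whenever $\cY$ becomes empty. On the good event, an empty $\cY$ means no size-$s$ surrogate fits the queries so far. The paper splits $\eps,\delta,\beta$ over the $\log_2 n$ phases.
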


To interpret the guarantee of our algorithm is it useful to focus only on the dependence on $n$ in the extreme cases for $\hdisc(Q,n)$.  In the worst case where $\hdisc(Q,n) \approx \sqrt{n}$ is approximately maximal, the optimal error is $\tilde{O}(n^{-1/2})$ and our algorithm achieves the same asymptotic guarantee.  However, in the case of easy workloads where $\hdisc(Q,n) \approx 1$ is approximately minimal, then the optimal error is $\tilde{O}(n^{-1})$, and our algorithm's error improves to $\tilde{O}(n^{-2/3})$.  One way to state the error $\alpha$ for our algorithm is that $\alpha \approx \alpha_{\mathit{PMW}}^{2/3} \alpha_{\mathit{NTZ}}^{1/3}$.  Since $\alpha_{\mathit{NTZ}}$ nearly matches the lower bound $\alpha_{\mathit{DISC}}$, which holds even for offline algorithms that know $Q$, we can say that on a logarithmic scale we get at least one third of the way from a worst-case optimal algorithm to one that can adapt as well as an offline algorithm.

\paragraph{Techniques.} The algorithm we use to prove Theorem~\ref{thm:intro-smalln} is based on the Median Mechanism of Roth and Roughgarden~\cite{RothR10}, which shows how to answer any set of online queries with error that is determined by the size of the smallest cover of the set of possible answers.  To obtain worst-case bounds, they use the fact that for any dataset and any workload of $m$ queries over domain $U$, a random subsample of size $s= O(\log m / \alpha^2)$ will give accurate answers with probability $\frac12$, which implies a cover of size $U^s$.  We give bounds that adapt to the difficulty of the workload by observing that low hereditary discrepancy implies even smaller covers.  Specifically, for any dataset and any workload $Q$, there exists a small dataset of size $s \approx O(\frac{1}{\alpha} \hdisc(Q,\frac{1}{\alpha^2}))$ that gives accurate answers up to error $\alpha$.  This immediately yields smaller covers for sets with small hereditary discrepancy, which we then plug into the median mechanism machinery.  Proofs for this section are contained in Section~\ref{sect:jon}.

\paragraph{A Remark on Discrepancy vs.\ Factorization Norms.} Let us note here that, also in the large dataset setting, the first results that showed factorization mechanisms achieve optimal error for offline queries also used the connection between hereditary discrepancy and differential privacy~\cite{NikolovTZ13}. In the large dataset setting, the relevant discrepancy quantity of interest is $\hdisc(Q)$, which is the maximum of $\hdisc(Q,n)$ over all $n$. It turns out that the $\gamma_2$ norm and hereditary discrepancy are closely related, and, in fact, $\gamma_2(Q)$ and $\hdisc(Q)$ are equal up to logarithmic (in $m$) factors for all $m\times N$ matrices $Q$~\cite{disc-gamma2}. Similarly to $\hdisc(Q,n)$, one can define a matrix factorization quantity $\gamma_2(\qmat, n)$ as the maximum $\gamma_2$ norm of any submatrix of $Q$ with $n$ columns. Thus, we could equally well state our results in terms of $\gamma_2(\qmat, n)$, which would be superficially more similar to the large dataset case.  However, the techniques we use in the small dataset case are more closely related to hereditary discrepancy than factorization so we state these results in terms of hereditary discrepancy.

\paragraph{A Remark on $\ell_\infty$ vs.\ $\ell_2$ Error.} Our results exhibit a gap in competitive ratio, where the error of our mechanism does not match the best offline lower bound.  We remark that, our techniques naturally give a guarantee on the $\ell_\infty$ error, as opposed to the weaker $\ell_2$ error guarantee.  In fact, if we insist on $\ell_\infty$ error, then we don't know any algorithm that approaches the discrepancy lower bound, even in the easier offline setting.  Thus, if we want a better competitive guarantee in the online setting, we would either need to improve the best offline mechanisms for $\ell_\infty$ error or would need to find a different set of techniques that allow us to benefit from the weaker $\ell_2$ guarantee.


\subsection{Online Discrepancy Minimization}

Given the close connections between private query release, factorization norms, and hereditary discrepancy, one can reasonably ask if our techniques have implications for discrepancy theory as well. The most basic notion in discrepancy theory is the discrepancy of an $m\times N$ matrix $A$, defined by $\disc(A) = \min_{x \in \{-1,+1\}^N} \|Ax\|_{\infty}$. The hereditary discrepancy $\hdisc(A)$, introduced in the previous subsection, and equal to the maximum discrepancy of any submatrix of $A$, is a more robust notion that is often better behaved. As alluded to above, the $\gamma_2$ norm and hereditary discrepancy of any matrix are equal up to logarithmic factors: Matou\v{s}ek, Nikolov, and Talwar~\cite{disc-gamma2} showed that for any $m\times N$ matrix $A$,
\begin{align*}
    \gamma_2(A) = O(\log m)\cdot \hdisc(A); \hspace{2em}
    \hdisc(A) = O(\sqrt{\log m}) \cdot \gamma_2(A).
\end{align*}
Moreover, Bansal, Dadush, Garg, and Lovett~\cite{BDGL} showed that there is a polynomial time algorithm that, on input the matrix $A$, computes an $x \in \{-1,+1\}^N$ which achieves $\|Ax\|_\infty = O(\sqrt{\log m}) \cdot \gamma_2(A)$. A natural question, analogous to the questions we study in online private query release, is whether such an $x$ can also be computed online, when the matrix $A$ is revealed gradually. It turns out that our techniques allow us to do so, as we explain in more detail below. 

An online variant of matrix discrepancy was first formulated by Spencer~\cite{Spencer77}, and studied further by B\'ar\'any~\cite{Barany79}. In this online discrepancy minimization problem, at each time step a \emph{column} $a_t \in \R^m$ of the matrix $A$ arrives, and the algorithm must immediately decide the value of $x_t \in \{-1,+1\}$ so as to minimize $\max_{t=1}^N \|x_1 a_1 + \ldots + x_t a_t\|_\infty$ (where the value of $N$ is not known to the algorithm). Spencer and B\'ar\'any considered the case in which the columns of $A$ are determined by an adaptive adversary, i.e., $a_t$ can depend on the choice of $x_1, \ldots, x_{t-1}$. In this adaptive setting the discrepancy achievable by any online algorithm may be much larger than the offline discrepancy of $A$, as already observed by Spencer~\cite{Spencer77}. To see this, consider an adversary that always chooses $a_t$ to be a vector of unit $\ell_2$ norm, orthogonal to $x_1 a_1 + \ldots + x_{t-1}a_{t-1}$. Then, at time $t$, regardless of the choice of $x_1, \ldots, x_t \in \{-1,+1\}$, we have
\[
\|x_1 a_1 + \ldots + x_t a_t\|_\infty \ge \frac{1}{\sqrt{m}}\|x_1 a_1 + \ldots + x_t a_t\|_2 = \sqrt{\frac{t}{m}}.
\]
By contrast, Banaszczyk showed that, for any matrix $A$ whose columns have $\ell_2$ norm at most $1$, $\hdisc(A) \lesssim \sqrt{\log m}$, regardless of the number of columns of $A$~\cite{Bana98} (this bound was recently improved by Bansal and Jiang~\cite{BansalJiang25}). Because of examples like this, recent work on the online discrepancy problem has focused on stochastic adversaries, i.e., each $a_t$ is drawn independently from a fixed distribution~\cite{BansalSpencer20}, and the more general setting of oblivious adversaries, i.e., the columns $a_t$ of $A$ are chosen ahead of time, before $x_1, \ldots, x_N$~\cite{alweiss2020discrepancy}. In this paper we also only consider oblivious adversaries.  

As an example of a result in this online discrepancy model, Kulkarni, Reis, and Rothvoss~\cite{KulkarniRR24} showed that, if all columns of $A$ have $\ell_2$ norm at most $1$, then there is an online (but inefficient) algorithm achieving, with high probability,
\[
\max_{t = 1}^N\|x_1 a_1 + \ldots + x_t a_t\|_\infty = O(\sqrt{\log(mN)}).
\]
Their result improved on a bound of $O(\log(mN))$ achieved by Alweiss, Liu, and Sawhney with an efficient algorithm~\cite{alweiss2020discrepancy}. There is, however, no known algorithm that achieves a bound on $\max_{t=1}^N\|x_1 a_1 + \ldots + x_t a_t\|_\infty$ that is competitive against $\hdisc(A)$ or against $\gamma_2(A)$. We show that a variant of our online matrix factorization algorithm, combined with the results of Kulkarni, Reis, and Rothvoss, does give such an algorithm. 
\begin{restatable}{theorem}{thmmaindisc}
    \label{thm:disc-main}
    There is an online algorithm such that, for any matrix $A$ whose columns $a_1, \ldots, a_N\in \R^m$ arrive online in the oblivious model, the algorithm computes online a sequence of signs $x_1, \ldots, x_N \in \{-1,+1\}$ such that, with probability at least $1-\frac{1}{N}$,
    \[
    \max_{t=1}^N \|x_1 a_1 + \ldots + x_t a_t\|_\infty = O\left(  \gamma_2(A)\log(m)^3\sqrt{\log(mN)}\right).
    \]
    In particular, the signs $x_1, \ldots, x_N$ computed by the algorithm also satisfy
    \[
    \max_{t=1}^N \|x_1 a_1 + \ldots + x_t a_t\|_\infty = O\left( \hdisc(A) \log(m)^4\sqrt{\log(mN)}\right)
    \]
    with probability at least $1-\frac1N$.
\end{restatable}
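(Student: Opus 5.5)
We reduce online discrepancy for $A$ to online discrepancy for a matrix with unit-norm columns, via an online factorization of $A$, and then apply Kulkarni, Reis, and Rothvoss (KRR). Here the columns $a_t$ arrive, so we factor $A = LR$ column by column; equivalently, we run the online factorization algorithm on $A^\top$, whose rows arrive. At time $t$ the algorithm maintains $L_t$, possibly appending columns to $L$ (rows of $R^\top$), and outputs a new column $r_t$ with $L_t r_t = a_t$. We require the maintained factors to satisfy
\[
\|L\|_{2\to\infty}\le \gamma_2(A)\cdot\mathrm{polylog}(m), \qquad \|r_t\|_2\le 1 .
\]
The first step is to adapt the online factorization algorithm (the row-arrival version stated for query release) so that it is competitive in terms of $m$, the fixed dimension of the columns, rather than the unknown number $N$ of arrivals. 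This is why the bound carries $\log(m)^3$ in place of $\log|U|$. I would also make the normalization exact: rescale so that every $r_t$ has $\ell_2$ norm at most $1$ and all the loss sits in the row norms of $L$.

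The key constraint is that earlier columns of $L$ must never change. When the algorithm appends new columns to $L$, we pad the old $r_s$ with zeros, so that $a_s = L r_s$ continues to hold for all $s\le t$. With this in place,
\[
x_1a_1+\dots+x_ta_t = L(x_1r_1+\dots+x_tr_t).
\]
Each row $\ell_i$ of $L$ then gives
\[
|(\textstyle\sum_{s\le t} x_s a_s)_i| = |\langle \ell_i, \sum_{s\le t} x_s r_s\rangle|.
\]
So it suffices to keep the prefix sums $\sum_s x_s r_s$ small in the $m$ fixed directions $\ell_i/\|\ell_i\|$, which are known in advance only in aggregate.

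The second step is to run KRR on the vectors $v_t := (\langle \ell_i, r_t\rangle / \|\ell_i\|)_{i\le m} \in \R^m$. Here I use the oblivious, inefficient version, or Alweiss--Liu--Sawhney (ALS) if efficiency is wanted. The issue is that $\|v_t\|_2$ need not be at most $1$: it equals $\|D^{-1}Lr_t\|_2$ with $D$ the diagonal matrix of row norms of $L$. I would handle this in one of two ways. The first option is to feed KRR the vectors $r_t$ directly in the growing dimension, and query the discrepancy only along the test directions $\ell_i/\|\ell_i\|$; KRR's guarantee is really a sub-Gaussian prefix-sum guarantee in every fixed direction, union-bounded over $mN$ events. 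The second option is to note that $L$ may be taken to have rank at most $m$, so $r_t$ effectively lives in an $m$-dimensional space and lifts isometrically. In either case, each prefix sum is $O(\sqrt{\log(mN)})$-sub-Gaussian along each test direction. Multiplying by $\|\ell_i\|\le \gamma_2(A)\log(m)^3$ gives the first bound with probability $1-1/N$. Obliviousness is needed here: the factorization depends only on $A$ and not on the signs, so the $r_t$ are oblivious inputs to KRR. The second bound then follows from $\gamma_2(A)=O(\log m)\,\hdisc(A)$ (Matoušek--Nikolov--Talwar).

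The main obstacle is the interaction between the growing factorization and KRR. KRR needs a fixed ambient dimension and fixed test vectors, but here $L$ gains columns over time and the test directions $\ell_i$ are revealed gradually. I would first establish that KRR's guarantee holds for arbitrary fixed test vectors in a possibly infinite-dimensional Hilbert space, with only the number of test vectors and the time horizon entering through a union bound. I would then verify that the final $\ell_i$ are determined obliviously by $A$, so the union bound over them is legitimate.
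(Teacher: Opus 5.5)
Your Option~1 is exactly the paper's proof (Lemma~\ref{lm:disc-to-fact}). The column-arrival case of Theorem~\ref{thm:fact-main} already supplies $L_tr_t=a_t$ with $\|r_t\|_2\le 1$ and $\|L_t\|_{2\to\infty}=O(\gamma_2(A)\log(m)^3)$, so no adaptation is needed: applied to $A^T$ the columns are indexed by $[m]$, and the column-arrival doubling of Lemma~\ref{lm:fact-doubling} costs only a constant factor. Lemma~\ref{lm:bana-online} is then run on the zero-padded $r_t$, making each $u_t=\sum_{s\le t}x_sr_s$ $10$-subgaussian. The identity $\sum_{s\le t}x_sa_s=L_tu_t$, a union bound over the $m$ rows and $N$ times, and $\gamma_2(A)=O(\log m)\,\hdisc(A)$ finish it.

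Some corrections. The prefix sums are $O(1)$-subgaussian, and $\sqrt{\log(mN)}$ enters only through the union bound; $O(\sqrt{\log(mN)})$-subgaussian prefix sums, taken literally, would only give $O(\log(mN))$. Your Option~2 fails online, because the row space of the growing $L_t$ changes and earlier $r_s$ cannot be embedded in a fixed $m$-dimensional space in advance. Neither Option~2 nor a Hilbert-space extension of KRR is needed: by obliviousness, all padded $r_t$ lie in one fixed finite-dimensional space, and the KRR guarantee holds in every direction.
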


Like Theorem~\ref{thm:main-priv}, Theorem~\ref{thm:disc-main} is also based on an online algorithm for factoring matrices, but in the column arrival model. The proof can be found in Section~\ref{sec:gen-fact}, and we give an overview of our online matrix factorization techniques and how they are applied in Section~\ref{sec:intro-fact} below.

The algorithm guaranteed by Theorem~\ref{thm:disc-main} does not run in polynomial time because it relies on the inefficient algorithm of Kulkarni, Reis, and Rothvoss. Nevertheless, by using the results of Alweiss, Liu, and Sawhney instead, we can get a polynomial time algorithm achieving the guarantee 
\begin{equation}\label{eq:disc-main-efficient}
    \max_{t=1}^N \|x_1 a_1 + \ldots + x_t a_t\|_\infty = O\left(  \gamma_2(A) \log(m)^3\log(mN)\right)
\end{equation}
with probability $1-\frac1N$. 

Theorem~\ref{thm:disc-main} is an online version of a ``pseudoapproximation'' algorithm of Bansal, which, on input an $m\times N$ matrix $A$, finds a coloring $x \in \{-1,+1\}^N$ in polynomial time that satisfies $\|Ax\|_\infty = O(\hdisc(A)\log(mN))$ with high probability. This type of guarantee cannot be improved to a true polynomial time approximation, i.e., to $\|Ax\|_\infty \le C\disc(A)$ for any finite $C$, unless $\mathsf{P} = \mathsf{NP}$, because deciding if $\disc(A) = 0$ is an $\mathsf{NP}$-complete problem~\cite{dischard}.  Similarly, in the online setting we cannot hope for an algorithm (even an inefficient one) that is competitive against $\disc(A)$. For example, consider the single row matrix $A = (1, 1, 2)$. We have $\disc(A_2) = 0$, and this discrepancy is achieved only if $x_1 \neq x_2$; we also have $\disc(A_3) = \disc(A) = 0$, and this discrepancy is achieved only if $x_1 = x_2\neq x_3$. Clearly no online algorithm can achieve discrepancy $0$ both at times $t = 2$ and $t=3$. It is not clear, however, if an online algorithm could compete against the prefix discrepancy of $A$, i.e., against $\min_{x \in \{-1,+1\}^N} \max_{t=1}^N \|x_1 a_1 + \ldots + x_t a_t\|_\infty$.

\subsection{Online Factorization} \label{sec:intro-fact}

To prove Theorems~\ref{thm:main-priv}~and~\ref{thm:disc-main}, we formulate an online variant of the matrix factorization optimization problem. We formulate both a \emph{row arrival} model, suitable for applications to online query release, and a \emph{column arrival} model, suitable for applications to online discrepancy minimization. Let us introduce the row arrival model first.
At each time step a new row $q_t$ of an $m\times N$ matrix $Q$ arrives. We treat $q_t$ as a row vector. An \emph{online matrix factorization algorithm} receives this online sequence of rows, and maintains two growing matrices $L_t$ and $R_t$ such that $Q_t = L_t R_t$, where $Q_t$ is the submatrix of $Q$ consisting of the first $t$ rows. Initially, we set $R_0$ to be some arbitrary, possibly empty, matrix. At time step $t$, after receiving $q_t$, the algorithm computes $R_t$, with the requirement that the rows of $R_{t-1}$ are a subset of the rows of $R_t$. Together with $R_t$, the algorithm also computes a row vector $\ell_t$ so that $\ell_t R_t= q_t$. Note that the dimensions of $\ell_t$ equals the number of rows of $R_t$, so the vectors $\ell_1, \ldots, \ell_t$ grow in dimension. The matrix $L_t$ then consists of the rows $\ell_1, \ldots, \ell_t$, suitably padded with $0$ entries so that $L_t R_t = Q_t$. We require that, for every $t \in [m]$, $\|R_t\|_{1\to 2} \le 1$, i.e., that each column of $R_t$ has $\ell_2$ norm at most $1$. We say that the online factorization algorithm achieves \emph{factorization value} $\gamma$ for the matrix $Q$ if $\max_{t = 1}^m \|\ell_t\|_2 = \|L_t\|_{2\to \infty} \le \gamma$. Note that the value of $m$ is not known to the algorithm.

The column arrival online factorization model can be seen as ``transpose'' of the row arrival model. Here, at each time step a new column $a_t$ of the matrix $A$ (seen as column vector in $\R^m$) arrives, and we maintain growing matrices $L_t$ and $R_t$ such that $L_t R_t = A_t$ at all times $t$, where $A_t$ is the matrix with columns $a_1, \ldots, a_t$. The roles of $L_t$ and $R_t$ are swapped in the column arrival model. In particular, $L_0$ is an arbitrary, potentially empty, initial left matrix, and at time step $t$ we add columns to $L_{t-1}$ to form $L_t$, and compute a single column $r_t$ such that $L_t r_t = a_t$; then $R_t$ is the matrix with columns $r_1, \ldots, r_t$, suitably padded with $0$'s so that they are in the same dimension. Like the row arrival model, we still require that $\|R_t\|_{1\to 2} \le 1$ at all times $t$, and we define the value of the factorization as $\gamma = \|L_N\|_{2\to\infty}$, which is also an upper bound on $\|L_t\|_{2\to\infty}$ for all $t$.

\subsubsection{Main Result and Applications}

Our main result for online factorization is an algorithm that factors any matrix arriving online with factorization value that matches the best possible offline factorization, i.e., the $\gamma_2$ norm, up to logarithmic factors. Our techniques work both in the column and row arrival model with only minor modifications: in fact, our results in both models can be reduced to the same decision version of the online factorization problem, as explained in Section~\ref{sec:gen-fact}.

\begin{restatable}{theorem}{factmain} \label{thm:fact-main}
For any $c>0$, there is an online factorization algorithm in the row arrival model such that, for any $m\times N$ matrix $Q$ the algorithm achieves factorization value 
\[
\gamma = O\left(  \gamma_2(Q)\log(\gamma_2(Q)/\|q_1\|_\infty)^{\frac12 + c}\log(N)^3\right).
\]
Moreover, there is an online factorization algorithm  in the column arrival model such that, for any $m\times N$ matrix $A$, the algorithm achieves factorization value
\[
\gamma = O\left(  \gamma_2(A)\log(m)^3\right).
\]
\end{restatable}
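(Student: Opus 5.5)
The plan is to reduce both models to one online \emph{decision} problem and then wrap it in a guess-and-double scheme. Given a guess $\Gamma$ for $\gamma_2$, the decision algorithm either succeeds in the current arrival with factorization value $O(\Gamma\log(N)^3)$ (respectively $O(\Gamma\log(m)^3)$), or it certifies $\gamma_2(Q_t) > \Gamma$.

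For the row arrival model, the key structural fact is the dual form of $\gamma_2$:
\[
\gamma_2(Q) = \min \left\{ \max_i \|q_i\|_{W^{-1}} : W \succeq 0,\ \mathrm{diag}\text{-type constraint } \max_{u} W_{uu}\le 1 \right\},
\]
up to the standard reformulation in which $R$ corresponds to $W^{1/2}$ with columns of norm at most $1$. Equivalently, $\gamma_2(Q)\le\Gamma$ iff there is an ellipsoid $E=\{y: y^\top W^{-1}y\le 1\}$ whose "coordinate widths" are bounded by $1$, i.e. $\mathrm{diag}(W)\le 1$, and which contains all $q_i/\Gamma$.

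The online algorithm therefore maintains a growing set of rows of $R$, i.e. a PSD matrix $W_t=R_t^\top R_t$ that only increases in the Loewner order. When $q_t$ arrives and $q_t W_t^{+} q_t^\top \le \gamma^2$ already holds, we output $\ell_t$ via the pseudoinverse, with $\|\ell_t\|_2^2 = q_tW_t^+q_t^\top$. Otherwise we append rows to $R$ in the direction of $q_t$, scaled by $1/\gamma$, which adds roughly $q_t^\top q_t/\gamma^2$ to $W$.

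Adding rows breaks the constraint $\|R\|_{1\to2}\le 1$. To keep it, we use a multiplicative-weights / potential argument in the spirit of online covering. Each update increases $\log\det$ of $W$ restricted to the span of the arrivals, together with a regularizer, by a constant, while the offline optimum $W^*$ bounds the total possible growth. The proposed potential is $\Phi=\log\det(W_t+\lambda W^*)$-type, or more concretely $\mathrm{tr}$ of the $\log$. Each "bad" arrival satisfies $q_tW_t^{+}q_t^\top>\gamma^2\ge(\log N)^{3}\,q_t{W^*}^{-1}q_t^\top$, which forces a large multiplicative increase of $W_t$ relative to $W^*$ in the direction $q_t$. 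Summed over scales, the number of such increases per direction is $O(\log N)$, coming from the bounded condition number, since entries lie in $[1/\mathrm{poly},1]$ after rounding. Hence $\mathrm{diag}(W_t)\le O(\log N)^{c'}\,\mathrm{diag}(W^*)$ after rescaling. One $\log N$ factor is paid for the number of doubling scales, one for converting the $\log\det$ bound into a per-coordinate diagonal bound (via a dyadic decomposition of the universe/coordinates), and one for the column normalization.

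If the diagonal bound is violated, we conclude $\gamma_2(Q_t)>\Gamma$ and move to the next guess. The guessing scheme starts at $\Gamma_0=\|q_1\|_\infty\le\gamma_2(Q)$. On failure we restart with $\Gamma_{k+1}=2\Gamma_k$, keeping the old rows of $R$ but scaling the new phase's budget. To keep $\|R\|_{1\to2}\le 1$ across phases, phase $k$ is given column-norm budget $\eta_k^2$ with $\sum_k\eta_k^2\le 1$, for example $\eta_k^2\propto 1/(k^{1+2c})$. Phase $k$ then achieves value $\Gamma_k\log(N)^3/\eta_k$. Since the final phase has $k\approx\log(\gamma_2(Q)/\|q_1\|_\infty)$, this gives exactly the factor $\log(\gamma_2/\|q_1\|_\infty)^{1/2+c}$. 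Earlier phases have smaller $\Gamma_k$, so their $\ell_t$ are fine.

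For the column arrival model, we apply the same decision procedure to the transpose, with the roles of $L$ and $R$ swapped. The guessing overhead disappears because the quantity normalized is the row norm of $L$, which involves $m$ rows. Starting from the exact scale of the first column and paying only a constant per phase, we can rescale the previous $L$ columns rather than allocating a budget, since $L$'s row norms are what is measured and earlier $r_t$ can be left fixed. The factors $\log m$ then replace $\log N$ throughout.

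The main obstacle is the potential argument bounding $\mathrm{diag}(W_t)$ competitively with only polylogarithmic loss when $N$ is unknown and $W^*$ is unknown. If the direct $\log\det$ argument fails to give coordinatewise control, I would first fall back on an online-covering LP/SDP primal–dual analysis in the style of Buchbinder–Naor, with one dual constraint per coordinate.
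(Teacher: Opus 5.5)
Your doubling wrapper is fine and matches the paper's Lemma~\ref{lm:fact-doubling}: phase budgets proportional to $k^{-(1+2c)}$ in the row model, and in the column model the squared row norms of the per-phase blocks of $L$ form a geometric series. Note that you may not literally rescale previously output columns of $L$, since $L_t r_s = a_s$ must keep holding for the already fixed $r_s$.

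\textbf{The gap.} The decision procedure is the whole content of the theorem, and your update rule provably fails. You append multiples of $q_t$ to $R$. Take $q_t = e_1 + e_{t+1}$ for $t=1,\ldots,N-1$; then $\gamma_2(Q)\le\sqrt2$ via $L=Q$, $R=I$. These rows are linearly independent. So if every row of $R$ has the form $\alpha_{s,j}q_s$, then $\ell_t R = q_t$ forces $\sum_j \ell_{t,(t,j)}\alpha_{t,j} = 1$. By Cauchy--Schwarz, $\sum_j \alpha_{t,j}^2 \ge 1/\|\ell_t\|_2^2 \ge 1/\gamma^2$. The first column of $R$ then has squared norm at least $(N-1)/\gamma^2$, so $\|R\|_{1\to2}\le 1$ forces $\gamma\ge\sqrt{N-1}$. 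That is a competitive ratio of $\Omega(\sqrt N)$, and for large $N$ your procedure would falsely assert $\gamma_2(Q_t)>\Gamma$.

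The same example shows why your potential argument cannot close. Only $N-1$ updates occur, so any volumetric ($\log\det$) bound on the number of updates is respected. Yet a single diagonal entry of $W_t$ grows linearly: $\log\det$ gives no coordinatewise control of $\mathrm{diag}(W_t)$. There is also no assumption that entries lie in $[1/\mathrm{poly},1]$ to bound the number of scales; $Q$ is an arbitrary real matrix. Your fallback to Buchbinder--Naor-style online covering does not apply off the shelf either. The variable here is a PSD matrix rather than a vector of scalars, and the matrix exponential is not monotone, so the usual multiplicative update does not give $X_t\succeq X_{t-1}$.

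\textbf{What is missing.} The paper supplies two ideas.
\begin{itemize}
\item \emph{Average factorization by online primal--dual.} It relaxes $\|R_t\|_{1\to2}\le1$ to the average constraint $\|R_t\|_F\le\sqrt N$. It then solves the SDP $\min\{\tr(X): X\succeq q_i^Tq_i\}$ online with $X(y)=\gamma^2 I+2\gamma\sqrt N\sqrt{\sum_i y_i q_i^Tq_i}$, raising $y_t$ until $q_t$ is covered or $\tr(X)=3\gamma^2N$. Monotonicity $X_t\succeq X_{t-1}$ comes from operator monotonicity of the square root. On failure, integrating $\partial\tr(X)/\partial y_s>2\gamma^2N$ gives $\sum_i y_i<1$, so weak duality certifies $\gamma_F(Q_t^T)>\gamma\sqrt N$ and hence $\gamma_2(Q_t)>\gamma$. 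The identity term and the square root spread mass instead of committing to the direction $q_t$, which is exactly what your rule lacks.
\item \emph{Reduction from average to max-column norm.} It converts the average guarantee into a column-norm guarantee by an online version of the Muthukrishnan--Nikolov argument. Columns whose squared norm in $R$ exceeds $2$ are kicked into the next of $O(\log N)$ zones; by Markov, at most half are kicked out per zone. This requires making the average algorithm support insertions (a binary, Bentley--Saxe-style decomposition) and one-time deletions.
\end{itemize}
The three $\log N$ factors come from insertions, deletions, and zones, not from the sources you list.
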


Note that the guarantee in the row arrival model is slightly worse as the competitive ratio bound includes the factor $\log(\gamma_2(Q)/\|q_1\|_\infty)^{\frac12 + c}$. In Theorem~\ref{thm:fact-lb} we show that this factor cannot be improved significantly. 

Theorem~\ref{thm:fact-main} is proved in Section~\ref{sec:gen-fact}, where we also explain how it implies Theorems~\ref{thm:main-priv}~and~\ref{thm:disc-main}. Both applications are relatively straightforward and follow along the same lines as offline applications of matrix factorization norms to private query release, and to discrepancy minimization, respectively. Before giving an overview of the the techniques behind the proof of Theorem~\ref{thm:fact-main}, we briefly describe how the applications follow from it. 

To prove Theorem~\ref{thm:main-priv}, we use the online factorization guaranteed by Theorem~\ref{thm:fact-main} similarly to how the offline factorization is used in standard factorization mechanisms. When query $q_t$ arrives, the online factorization adds some rows to $R_t$, and the private algorithm releases answers to the statistical queries corresponding to these rows, perturbed by independent Gaussian noise. This is just an instance of the Gaussian mechanism applied to the online queries encoded by $R_t$. The requirement that $\|R_t\|_{1\to 2} \le 1$ guarantees that the sensitivity of these online queries is at most $\frac2n$. The private algorithm can then reconstruct an answer to $q_t$ by taking the dot product of $\ell_t$ with the vector of noisy values computed so far. The privacy and accuracy analysis is similar to the offline setting.

The derivation of Theorem~\ref{thm:disc-main} from Theorem~\ref{thm:fact-main} relies on a result of Kulkarni, Reis, and Rothvoss~\cite{KulkarniRR24}, showing that, for an oblivious online sequence of vectors $v_1, \ldots, v_N$, each of $\ell_2$ norm at most $1$, one can compute random signs $x_1, \ldots, x_N \in \{-1,+1\}$ online so that, for each $t \in [N]$, the random vector $x_1 v_1 + \ldots + x_tv_t$ is $O(1)$-subgaussian in every direction. We apply this theorem to the sequence of vectors $r_1, \ldots, r_N$ computed by the online factorization algorithm. By the subgaussian guarantee, each coordinate of $x_1 a_1 + \ldots + x_t a_t = L_t (x_1 r_1 + \ldots + x_t r_t)$ is $O(\gamma)$-subgaussian, where $\gamma$ is the value of the factorization. A union bound now implies that $\|x_1 a_1 + \ldots + x_t a_t\|_\infty$ is $O(\gamma \sqrt{\log(mN)})$ with probability at least $1-\frac{1}{N^2}$. 

\subsubsection{Techniques}

To prove Theorem~\ref{thm:fact-main}, we first give an algorithm that solves the online \emph{average} factorization problem, where we ask the Frobenius norm of the $R_t$ matrix to be bounded by $\sqrt{N}$ for evert $t \in [m]$. That is, rather than requiring that each column of $R_t$ has $\ell_2$ norm at most $1$, we only require the average squared $\ell_2$ norm of $R_t$ is at most $1$. In the offline setting, this corresponds to minimizing $\|L\|_{2\to\infty}\frac{\|R\|_{F}}{\sqrt{N}}$ over all factorizations $Q = LR$. The quantity $\min\{\|L\|_{2\to\infty}\|R\|_{F}: Q = LR\}$ is equivalent to the factorization norm $\gamma_F(Q^T)$ defined by Edmonds et al.~\cite{EdmondsNU20}. The first step in designing our online average factorization algorithm is to represent $\gamma_F(Q^T)$ as a semidefinite program (SDP):
\begin{equation}\label{eq:sdp-gammaF}
\gamma_F(Q^T) = \min\{\tr(X): X\succeq q_t^T q_t \ \forall t \in [m], X\succeq 0\},
\end{equation}
where $q_t$ is the row vector equal to the $t$-th row of $Q$, $X$ ranges over $N\times N$ positive semidefinite matrices, and $X\succeq q_t^T q_t$ means that $X-q_t^T q_t$ is positive semidefinite (PSD). We prove \eqref{eq:sdp-gammaF} after Lemma \ref{lm:fact-psd}, but let us sketch here why the equality holds. The PSD matrix $X$ represents $R^T R$, so $\tr(X) = \|R\|_F^2$, and the constraints $\forall t \in [m]: X\succeq q_t^T q_t$ ensure that there exists a matrix $L$ with $Q= LR$ such that $\|L\|_{2 \to \infty} \le 1$ (see Lemma~\ref{lm:fact-psd}). The right hand side of \eqref{eq:sdp-gammaF} then corresponds to minimizing $\|R\|_{F}^2$ over factorizations $Q = LR$ with $\|L\|_{2\to\infty} \le 1$. It is easy to see that this problem is equivalent to computing a factorization achieving $\gamma_F(Q^T)$. 

Suppose we have guessed that $\gamma = \gamma_2(Q)$: we can remove this assumption using a standard doubling search for $\gamma_2(Q)$. The online factorization $Q_t = L_t R_t$ computed by our online average factorization algorithm then needs to satisfy $\|R_t\|_F \le \sqrt{N}$ and $\|L_t\|_{2\to\infty}\le C \gamma$ for some competitive ratio $C$. In particular, this would certify that $\gamma_F(Q_t^T) \le C\gamma \sqrt{N}$, so it is natural to approach this problem by solving the SDP in \eqref{eq:sdp-gammaF} online. Notice that the requirement that the rows of the matrix $R_t$ contain the rows of $R_{t-1}$ is equivalent to requiring that $R_t^T R_t \succeq R_{t-1}^T R_{t-1}$. Motivated by this observation, we pose the following online version of the SDP in \eqref{eq:sdp-gammaF}: on online input the rows $q_1, \ldots, q_m$ of $Q$ (as row vectors), compute PSD matrices $X_0 \preceq X_1 \preceq\ldots \preceq X_m$, where $X_0$ is an initial matrix, and $X_t$ is computed upon receiving $q_t$; each matrix $X_t$ should satisfy $X_t \succeq \frac{1}{C^2\gamma^2} q_t^T q_t$, as well as $\tr(X_t) \le N$. Analogously to the offline case, $X_t$ represents $R_t^T R_t$; the constraint $X_t \succeq \frac{1}{C^2\gamma^2}q_t^T q_t$ guarantees that there is a row vector $\ell_t$ such that $\ell_t R_t = q_t$ and $\|\ell_t\|_2 \le C\gamma$; finally, the constraint $\tr(X_t) \le N$ ensures that $\|R_t\|_F \le \sqrt{N}.$ We would like our online algorithm to either output $X_t$ at time step $t$, or provide a certificate that $\gamma_F(Q_t^T) > \gamma\sqrt{N}$ (at which point we can double our guess $\gamma$ for $\gamma_2(Q)$, and restart the online average factorization algorithm). Let us call this problem the online average factorization SDP. 

This online semidefinite programming problem is reminiscent of the well-studied online covering linear programming (LP) problem~\cite{onlineLP}. In an online covering LP, the goal is to compute a sequence of vectors $0 \le x_1 \le x_2 \le \ldots \le x_m$ in $\R^N$ (where the inequality holds coordinate-wise), minimizing $\sum_i (x_m)_i$, and subject to the constraints $a_t x_t \ge b_t$, where the non-negative row vector $a_t$ and the non-negative real number $b_t$ arrive online at time step $t$. The standard way to generalize covering LPs to semidefinite programs is to have the constraints be of the form $\sum_i (x_t)_i A_{t,i} \succeq B_t$, where $A_{t,i}$ and $B_t$ are PSD matrices. In the online variant of covering SDPs, studied by Elad, Kale, and Naor~\cite{onlineSDP}, the matrices $A_{t,1}, \ldots, A_{t,N}, B_t$ defining the $t$-th constraint arrive online at time step $t$. There appears, however, to be no way to formulate the SDP in \eqref{eq:sdp-gammaF} as a covering SDP, since in a standard covering SDP the variables are scalars, whereas in the average factorization SDP the variable is a PSD matrix.

One may still hope that, even though the online average factorization SDP is not an instance of the online covering SDP problem, existing techniques extend to it.
Unfortunately, attempting to adapt the online primal-dual algorithms used for online covering LPs and SDPs~\cite{onlineLP,onlineSDP} to SDPs, like the online average factorization SDP, where the variables are matrix valued, runs into significant roadblocks. In particular, the online primal-dual algorithms for covering LPs and SDPs compute the LP solutions $x_t$ as exponentials of positive linear combinations of dual variables, where dual variables increase over time. The analysis then crucially uses the simple fact that the exponential function is non-decreasing in order to prove that $x_1 \le x_2\le \ldots \le x_m$. It is natural to formulate an analogous algorithm for online SDPs with matrix-valued variables using the matrix exponential function, but this strategy fails because the matrix exponential is not monotone, i.e., $Y' \succeq Y$ does not guarantee that $\exp(Y') \succeq \exp(Y)$. 

Nevertheless, we are able to develop a simple online primal-dual algorithm for our SDP formulation of online average factorization achieving constant competitiveness ratio. Our starting point is the following dual formulation of $\gamma_F(Q^T)$: denoting, as before, $q_1, \ldots, q_m$ the rows of $Q$ as row vectors, 
\begin{equation}\label{eq:sdp-gammaF-dual}
    \gamma_F(Q^T) = \max \left\{\tr\left(\sqrt{\sum_{t=1}^m y_t q_t^T q_t}\right): y_1, \ldots, y_m \ge 0, \sum_{t = 1}^m y_t = 1\right\}.
\end{equation}
At time $t$, our algorithm maintains the dual variables $y_1, \ldots, y_t$ for the rows that have arrived so far, and a primal solution $X_t = X(y_1, \ldots, y_t) = \zeta I + \eta \sqrt{\sum_{i = 1}^t y_i q_i^T q_i}$ for some carefully chosen constants $\zeta$ and $\eta$. When a new row $q_t$ arrives, the algorithm increases $y_t$ continuously until either the constraint  $X_t \succeq \frac{1}{C^2\gamma^2} q_t^T q_t$ becomes satisfied, or $\tr(X_t)$  becomes equal to $N$, in which case the algorithm terminates and declares that $\gamma_F(Q^T) > \gamma \sqrt{N}$. The inequality $X_t \succeq X_{t-1}$ follows from the well-known fact that the matrix square root is monotone (in the PSD sense) on SDP matrices. In our analysis (in Section~\ref{sec:online-average}) we show that, for the right choice of $\eta$, $\zeta$, and the competitiveness ratio $C = O(1)$, if the algorithm ever terminates and asserts $\gamma_F(Q^T) > \gamma \sqrt{N}$, then $\sum_{i} y_i < 1$ and $$\tr\left(\sqrt{\sum_{t=1}^m y_iq_t^T q_t}\right) = \gamma\sqrt{N},$$ so the dual solution indeed verifies the assertion $\gamma_F(Q^T) > \gamma\sqrt{N}$. In particular, we can take $C$ to be arbitrarily close to $\sqrt{2}$. This online algorithm bears some similarities to algorithms by Eghbali, Saunderson, and Fazel~\cite{ESF-18} for a class of problems that includes the dual on the right hand side of~\eqref{eq:sdp-gammaF-dual}.

Having developed an $O(1)$-competitive algorithm for online average factorization, the final step in the proof of Theorem~\ref{thm:fact-main} is to build a reduction from online factorization to online average factorization. In the offline setup, \cite{MuthukrishnanN12} proposed a simple reduction based on removing columns of $Q$ for which the $\ell_2$ norms of the corresponding columns in the right matrix of the average factorization are too large, and factoring these columns of $Q$ recursively. In more detail, suppose that for any submatrix $Q_J$ of $Q$ consisting of the columns of $Q$ indexed by the set $J$, we have a factorization $L R = Q_J$ with $\|L\|_{2\to\infty} \|R\|_{F} \le \gamma \sqrt{|J|}$. We claim that this implies that $\gamma_2(Q) \lesssim \gamma \sqrt{\log(2N)}$. To build a factorization satisfying this bound, we start with the factorization $L' R' = Q$ such that $\|L'\|_{2\to\infty} \le \gamma$ and $\|R'\|_F = \sqrt{N}$,\footnote{Such a factorization exists because we can always replace $L'$ by $cL'$ and $R'$ by $\frac1c R'$ where $c = \frac{\sqrt{N}}{\|R'\|_{F}}$} we let $J$ be the set of indices of columns in $R'$ with $\ell_2$ norm at least $\sqrt{{2}}$, and recursively factor $Q_J$. Note that, by Markov's inequality, $|J| \le \frac{N}{2}$, so the recursive calls will eventually terminate. Suppose that the recursively found factorization is $L'' R'' = Q_{J}$, where $\|L''\|_{2\to\infty} \le C\gamma \sqrt{(1 +  \log_2(|J|))}$ for some constant $C$ and $\|R''\|_{1\to 2} \le \sqrt{2}$. Then we can set 
\[
L := 
\begin{pmatrix}
    L' & L''
\end{pmatrix}; \hspace{3em}
R := 
\begin{pmatrix}
    R'_{[N]\setminus J} & 0\\
    0 & R''
\end{pmatrix}.
\]
Here we assume that we have reordered the columns of $Q$ so that the columns indexed by $J$ come last. It is easy to see that $LR = Q$, $\|R\|_{1\to 2} \le \sqrt{2}$, and 
\[
\|L\|_{2\to\infty}^2 \le \|L'\|_{2\to\infty}^2 + \|L''\|_{2\to\infty}^2
\le \gamma^2 + C\gamma^2 (1 +  \log_2(|J|)) 
\le C\gamma^2(1 +  \log_2(N)), 
\]
where the last inequality holds for large enough $C$ because $|J| \le \frac{N}{2}$.

To extend this argument to the online setting, we need to extend the base online average factorization algorithm to also support insertion and one-time deletion of elements in its domain. These extensions allow us to dynamically remove columns of $Q_t$ for which the $\ell_2$ norm of the corresponding column of $R_t$ has become too large, and to insert them into a new instance of online average factorization. The new instance itself may remove columns that have become problematic, and insert them into yet another instance, etc. To carry out this strategy, we give transformations that take an online average factorization algorithm and augment it to handle insertion and one-time deletion, using classical methods for transforming static data structures to dynamic data structures~\cite{BentleySaxe}. Each transformation incurs a $O(\log N)$ factor in the competitiveness ratio, and our online version of the reduction from \cite{MuthukrishnanN12} contributes another $O(\log N)$ factor. 

With more information about the matrix being factored, we can further improve the reduction to online average factorization. As an example, we consider Boolean matrices with bounded VC dimension, and, more generally, matrices with polynomially bounded shatter function. The shatter function $\pi_Q$ of an $m\times N$ Boolean matrix $Q$ (i.e., a matrix with entries in $\{0,1\}$) is defined on any non-negative integer $s \le N$ so that $\pi_Q(s)$ equals the maximum number of \emph{distinct} rows in any submatrix of $Q$ with $s$ columns. The VC dimension of $Q$ is then the largest integer $d$ so that $\pi_Q(d) = 2^d$, i.e., some submatrix of $Q$ with $d$ columns contains all possible $2^d$ rows. We say that a matrix $Q$ has shatter function exponent $d$ if $\pi_Q(s) = O(s^d)$. By the Sauer-Shellah lemma~\cite[Theorem 8.3.16]{Vershynin}, the VC dimension of any Boolean matrix $Q$ bounds its shatter function exponent, but sometimes the shatter function exponent can be smaller. For example, if each column of $Q$ is indexed by a set $U$ of $N$ points in $\R^d$, and each row of $Q$ is the indicator vector of the set $U \cap H$ for some halfspace $H$ in $\R^d$, then the shatter function exponent of $Q$ is $d$ but the VC dimension is $d+1$ as long as $U$ contains $d+1$ affinely independent points. 

We have the following result for matrices with bounded shatter function exponent, proved in the Appendix in Section~\ref{sec:vcdim}.
\begin{restatable}{theorem}{factvcdim} 
\label{thm:fact-vcdim}
There is an online factorization algorithm in the row arrival model such that, for any $m\times N$ binary matrix $Q$ with shatter function exponent $d$, the algorithm achieves factorization value 
\[
\gamma = O\left(\min\left\{m^{\frac{1-1/d}{2}}\log^3{N}, N^{\frac{1-1/d}{2}}\log^{1/d}{N}\right\}\right).
\]
\end{restatable}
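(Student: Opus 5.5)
The theorem has two bounds, and I would prove them separately. Each bound comes from a different online factorization. At time $t$ the algorithm outputs whichever of the two currently has the smaller $\|\ell_t\|_2$, after rescaling so that both right matrices have column norms at most $1$. Stacking the two right matrices (each scaled by $1/\sqrt2$) keeps $\|R_t\|_{1\to 2}\le 1$ and costs only a constant factor. So it suffices to give one algorithm with value $O(m^{\frac{1-1/d}{2}}\log^3 N)$ and one with value $O(N^{\frac{1-1/d}{2}}\log^{1/d}N)$.

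\textbf{First bound.} I would reuse the machinery behind Theorem~\ref{thm:fact-main}: the online average factorization algorithm with insertions and one-time deletions, followed by the online version of the \cite{MuthukrishnanN12} reduction. That pipeline costs $O(\log^3 N)$ in the competitive ratio. The $\log(\gamma_2(Q)/\|q_1\|_\infty)^{\frac12+c}$ factor came only from the doubling search over the unknown value $\gamma$. Here I would avoid that search by supplying an a priori target at every time step: $\gamma(t)=C\,t^{\frac{1-1/d}{2}}$. This target is nondecreasing in $t$ and is known online, since $t$ is known. The certificate the reduction needs is that for every $t$ and every column set $J$,
\[
\gamma_F(Q_{t,J}^T)\lesssim t^{\frac{1-1/d}{2}}\sqrt{|J|}.
\]
I would derive this from the known bound that $\gamma_2$ (equivalently, up to logs, $\hdisc$) of a $0/1$ matrix with $t$ rows and shatter exponent $d$ is $O(t^{\frac12-\frac1{2d}})$. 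This follows from the Matou\v{s}ek-type packing bound applied to $Q^T$: the dual shatter function of rows restricted to $J$ is controlled by the shatter function. I would then check that the average algorithm tolerates a target that grows over time. Its dual certificate argument from Section~\ref{sec:online-average} only uses $\gamma$ at the moment the algorithm halts, so increasing $\gamma$ only loosens earlier constraints.

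\textbf{Second bound.} I would use an online chaining construction via Haussler's packing lemma. For scales $\delta_k=N2^{-k}$, $k=0,\dots,K$, the algorithm greedily maintains a net $\mathcal{N}_k$ of previously seen rows that are pairwise at Hamming distance more than $\delta_k$. When a row $q$ arrives, define $\pi_k(q)$ as follows: if some net element at scale $k$ is within distance $\delta_k$ of $q$, $\pi_k(q)$ is that element; otherwise $q$ is inserted into $\mathcal{N}_k$ and $\pi_k(q)=q$. Greedy nets are $\delta_k$-separated, so Haussler gives $|\mathcal N_k|=O(2^{kd})$ no matter how many rows arrive. The row then telescopes:
\[
q=\pi_0(q)+\sum_{k=1}^K\bigl(\pi_k(q)-\pi_{k-1}(q)\bigr)+\bigl(q-\pi_K(q)\bigr).
\]
Each link vector is appended to $R$ once, when first created, with weight $w_k$. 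Each link has support $O(\delta_k)$, and there are $O(2^{kd})$ links at level $k$. The residual $q-\pi_K(q)$ has support at most $\delta_K$ and is written in terms of identity rows placed in $R$ at the start. This gives
\[
\|\ell\|_2^2\lesssim \sum_k w_k^{-2}+\delta_K .
\]

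The main obstacle is bounding column norms of $R$. Each link has small support, but a single column may lie in the support of many level-$k$ links. The average load per column is $2^{kd}\delta_k/N=2^{k(d-1)}$. The worst-case load can be as large as $2^{kd}$, and that crude bound only yields the exponent $N^{d/(d+1)}$. My first attempt would be to prove an online load-balancing statement: links at level $k$ arising from a packing in a shatter-exponent-$d$ system cover each point $O(2^{k(d-1)}\cdot\mathrm{polylog})$ times. I expect this to need the Chazelle–Welzl / Matou\v{s}ek spanning-path ideas, which may not be achievable online. Failing that, I would use the deletion mechanism: heavily loaded columns are moved to a fresh instance, as in the \cite{MuthukrishnanN12} reduction. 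With load $2^{k(d-1)}$, choosing $w_k^2=2^{-k(d-1)}/K$ and $2^{K}\approx (N/\log N)^{1/d}$ balances the terms to $N^{\frac{1-1/d}{2}}$. Only the last few levels dominate the sum, which is where I would hope the $\log^{1/d}N$ rather than $\log N$ comes from.
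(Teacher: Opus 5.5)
Your chaining construction for the second bound is essentially the paper's Lemma~\ref{lm:net_universe}: greedy Haussler nets, truncation, and identity rows for the residual $q-\pi_K(q)$. But it only controls $\|R\|_F$. The passage to $\|R\|_{1\to 2}=O(1)$ is exactly where the claimed $\log^{1/d}N$ is won or lost, and you leave that step as a hope. Moving heavy columns to fresh instances online needs the insertion and deletion transformations. Used as black boxes (Lemmas~\ref{lm:insertable}, \ref{lm:revocable}, \ref{lm:simple_reduction}), their analysis loses a factor of order $\log^3 N$. That gives only $N^{\frac{1-1/d}{2}}\polylog(N)$, whatever $w_k$ and $K$ you pick. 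Three ideas are missing.
\begin{enumerate}
\item Keep the identity rows out of the deletion transformation. Each is supported on one column, so when $x$ is deleted you just zero the matching entry of all future $\ell_t$ instead of replicating rows $O(\log N)$ times. This matters because $\ell^{(I)}_t$ carries essentially the whole target value and cannot absorb another logarithm.
\item Make the chaining part cheap enough to prepay the replication: $\|L'_t\|_{2\to\infty}^2\lesssim N^{1-1/d}\log^{-1+2/d}N$ together with $\|R'_t\|_F^2\lesssim N/\log(4N)$. The paper gets this by truncating at $2^{dK}\approx N/\log^2 N$ with level weights decaying polynomially in $K-k$. Your uniform $1/K$ weights are too lossy: they force a lower truncation, which pushes the identity residual above $N^{1-1/d}\log^{2/d}N$.
\item Use that the value on a universe of size $2^k$ is only about $2^{k(1-1/d)/2}$. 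Scale an insertion instance of size $2^k$ by $\beta_k=(1+\lfloor\log_2N\rfloor-k)^{1.5}$, and let the column-norm budget of zone $\zeta$ decay like $(1+\zeta)^{-2}$. Then the squared column norms an element accumulates sum to $\sum_{\zeta}\sum_{k'\ge\zeta}(1+k')^{-3}=O(1)$, while the $\ell$-side sums $\sum_{k}(1+\lfloor\log_2N\rfloor-k)^3 2^{k(1-1/d)}$ are geometric for $d\ge 2$.
\end{enumerate}
With these, no online load-balancing or spanning-path argument is needed.

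For the first bound, your argument does not justify the time-varying target $\gamma(t)=Ct^{\frac{1-1/d}{2}}$, for two reasons. First, the certificate in Lemma~\ref{lm:online_F_transpose_factorization} integrates $\partial\,\mathrm{tr}\sqrt{\sum_i y_iq_i^Tq_i}/\partial y_s>\gamma_s\sqrt N$ over all past steps, not just the halting one. So at a halt at time $t$ you only get $\sum_s\gamma_sy_s<\gamma_t$, not $\sum_s y_s<1$. Second, the conversion $R_t^TR_t=X_t/(C^2\gamma_t^2)$ of Lemma~\ref{lm:online-ellipsoid} stops being monotone in $t$. With $A=\sum_i y_iq_i^Tq_i$ we have $X/\gamma^2=I+\frac{2\sqrt N}{\gamma}\sqrt{A}$, which decreases as $\gamma$ grows. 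Tracking a growing benchmark online is the same issue that drives Theorem~\ref{thm:fact-lb}.

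The paper instead feeds the explicit chaining average factorization of Lemma~\ref{lm:net_queries}, whose parameters are tuned to $m$, through the black-box reduction of Lemma~\ref{lm:simple_reduction}. Your pipeline with the fixed target $\gamma=Cm^{\frac{1-1/d}{2}}$ would work equally well.

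Finally, the bound $\gamma_F(Q_{t,J}^T)\lesssim t^{\frac{1-1/d}{2}}\sqrt{|J|}$ comes from primal chaining with Haussler packing on the rows, as in Lemma~\ref{lm:net_queries} (at most $t$ nodes per level). It does not come from the dual shatter function, whose exponent can be exponentially larger than $d$.
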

The optimal bound on $\gamma_2(Q)$ for $m\times N$ Boolean matrices $Q$ in terms of the shatter function exponent $d$, and the dimensions $m$ and $N$, is $O(\min\{m^{\frac{1-1/d}{2}}, N^{\frac{1-1/d}{2}}\})$~\cite{MuthukrishnanN12}. Thus, Theorem~\ref{thm:fact-vcdim} is only a factor $\log^{1/d}{N}$ away from the optimal offline bound when $m$ is large enough.

There are many natural examples of matrices with bounded VC dimension and bounded shatter function exponent coming from geometry, such as the matrix of indicator vectors of halfspaces restricted to a finite universe $U \subseteq \R^d$ mentioned above, 
As a corollary to Theorem~\ref{thm:fact-vcdim}, we get corresponding bounds for online private query release of halfspace counting queries, or, more generally counting queries with VC dimension $d$ or shatter function exponent $d$. We don't state these results formally, but note they follow in the same way that Theorem~\ref{thm:main-priv} follows from Theorem~\ref{thm:fact-main}.

\section{Preliminaries} \label{sec:prelims}

\subsection{Notational Conventions}
Given $n \in \N$, we use $[n] := \set{1,\dots,n}$ to denote the set of the first $n$ natural numbers.  Given a finite set $S$ and $k \in \N$, we use $\binom{S}{k} := \set{ T \subseteq S : |T| = k}$ and $\binom{S}{\leq k} := \set{ T \subseteq S : |T| \leq k}$ to denote the family of subsets of $S$ of size exactly $k$ and size at most $k$, respectively.  For a vector $v \in \R^d$, we use the standard notation for the $\ell_0$, $\ell_1$, $\ell_2$, and $\ell_\infty$ norms:  $$\| v \|_{0} := \card{\set{i : |v_i| > 0}}~~~~~~\| v \|_1 := |v_1| + \dots + |v_d|~~~~~~\| v \|_2 := \sqrt{|v_1|^2 + \dots + |v_d|^2}~~~~~~~\| v \|_{\infty} := \max_{i \in [d]} |v_{i}|.$$ Given $d \in \N$, we use $\Delta(d)$ to denote the probability simplex over the domain $[d]$. 
We use $X \succeq 0$ to denote that $X$ is a positive semidefinite (PSD) matrix and $X \succeq Y$ to denote that $X - Y \succeq 0$. The notation $X \preceq Y$ is equivalent to $Y \succeq X$.

For two quantities $A$ and $B$ that may depend on other parameters, we use the notation $A\lesssim B$ when there is an absolute constant $C$ such that $A \le CB$; the notation $A \gtrsim B$ is equivalent to $B \lesssim A$.

\subsection{Matrix Factorization Norms}

Factorization norms were introduced in functional analysis to study linear operators that factor through Hilbert space. Here, we focus on a special case that is most common in applications in computer science, and give the definitions in terms of matrices. We define the $\gamma_2$ factorization norm of an $m\times N$ matrix $A$ as
\[
\gamma_2(A) = \min\{\|L\|_{2\to\infty} \|R\|_{1\to 2}: LR = A\}.
\]
Recall that $\|R\|_{1\to 2}$ is the maximum $\ell_2$ norm of a column of $R$, and $\|L\|_{2\to\infty}$ is the maximum $\ell_2$ norm of a row of $L$. The $\gamma_2$ norm satisfies many nice properties, e.g.:
\begin{itemize}
    \item $\gamma_2(A) = \gamma_2(A^T)$;
    \item $\gamma_2(B) \le \gamma_2(A)$ for any submatrix $B$ of $A$;
    \item $\gamma_2(A)$ is a norm on matrices;
    \item $\gamma_2(A)$ can be approximated within any degree of accuracy in polynomial time given $A$ as input.
\end{itemize}
We refer to~\cite{disc-gamma2} for proofs of these facts. 

Motivated by applications, mostly to differential privacy, related factorization norms have been defined. Here we also consider the $\gamma_F$ norm~\cite{EdmondsNU20}, defined as
\[
\gamma_F(A) = \min\{\|L\|_{F} \|R\|_{1\to 2}: LR = A\}.
\]
Note that $\gamma_F(A) \le \sqrt{m} \gamma_2(A)$ for any $m\times N$ matrix $A$. One can show that $\gamma_F(A)$ is also a norm, and can be efficiently approximated in polynomial time, although we do not need these facts.

\subsection{Private Interactive Data Analysis}

We need to define privacy and accuracy for algorithms interacting with a private analyst. We do so in terms of a meta algorithm that outputs the transcript of the interaction between the algorithm $M$ and an \emph{interactive data analyst} $A$.

We first consider a non-adaptive analyst $A$ that determines the sequence of queries given to $M$ independently of the answers $M$ gives. Without loss of generality, we can assume that $A$ decides the sequence of queries before the interaction starts, as shown in Figure~\ref{fig:transcript-oblivious}.

\begin{figure}[ht!]
\begin{framed}
\begin{algorithmic}
\INDSTATE[0]{{\bf Input:} dataset $x \in U^n$}
\INDSTATE[0]{$A$ chooses queries $q_1,\ldots, q_m$}
\INDSTATE[0]{{\bf For} $j = 1,\dots,m$:}
\INDSTATE[1]{Give $q_{j}$ to $M(x)$ and receive an answer $a_j$}
\INDSTATE[0]{{\bf Output:} $(\mathcal{Q} = (q_1,\dots,q_m),a = (a_1,\dots,a_m))$.}
\end{algorithmic}
\end{framed}
\vspace{-6mm}
\caption{The Meta-Algorithm $\mathsf{Trans}_{m}^{M \prot A}(x)$ for non-adaptive analysts\label{fig:transcript-oblivious}}
\end{figure}

We also consider adaptive analysts $A$, that determine the next query to be given to $M$ based on the answers to previous queries, as shown in Figure~\ref{fig:transcript-adaptive}

\begin{figure}[ht!]
\begin{framed}
\begin{algorithmic}
\INDSTATE[0]{{\bf Input:} dataset $x \in U^n$}
\INDSTATE[0]{{\bf For} $j = 1,\dots,m$:}
\INDSTATE[1]{Give $a_{j-1}$ to $A$ and receive a query $q_j$}
\INDSTATE[1]{Give $q_{j}$ to $M(x)$ and receive an answer $a_j$}
\INDSTATE[0]{{\bf Output:} $(\mathcal{Q} = (q_1,\dots,q_m),a = (a_1,\dots,a_m))$.}
\end{algorithmic}
\end{framed}
\vspace{-6mm}
\caption{The Meta-Algorithm $\mathsf{Trans}_{m}^{M \prot A}(x)$ for adaptive analysts \label{fig:transcript-adaptive}}
\end{figure}

\begin{defn}[Privacy of Interactive Algorithms]
An interactive algorithm $M$ is \emph{$(\eps, \delta)$-differentially private} (for non-adaptive/adaptive analysts) if for every dataset $x \in U^n$, and every (resp.~non-adaptive/adaptive) analyst $A$, the algorithm $\mathsf{Trans}_{m}^{M \prot A}$ is $(\eps, \delta)$-differentially private.
\end{defn}

Similarly, we can define accuracy for interactive algorithms via the same meta-algorithm
\begin{defn}[Accuracy of Interactive Algorithms]
An interactive algorithm $M$ is \emph{$(\alpha,\beta)$-accurate for $m$ queries} and non-adaptive/adaptive analysts if for every dataset $x \in U^n$, and every (resp.~non-adaptive/adaptive) analyst $A$,
$$
\pr{(\mathcal{Q},a) \gets \mathsf{Trans}_{k}^{M \prot A}(x)}{ \| \mathcal{Q}(x) - a \|_{\infty} \leq \alpha} \geq 1-\beta.
$$
\end{defn}

We remark that in the definition of accuracy, typically we will think of $\beta, m, n, N$, as fixed and express the accuracy $\alpha$ as a function of those parameters.

\section{Online Matrix Factorization and Applications}

\subsection{The Online Matrix Factorization Model}

The row and column arrival models of online matrix factorization were defined in the introduction. 
We remark here that, by transposing the input matrix, and also switching the roles of $L_t$ and $R_t$ and transposing them, we can see that the column arrival model is equivalent to a modification of the row arrival model where we require that $\|L_t\|_{2\to\infty} \le 1$ at all time steps $t$, and we define the factorization value as $\|R_t\|_{1\to 2}$.

Since our online factorization algorithms are deterministic, we can assume, without loss of generality, that the new rows/columns that arrive are chosen adaptively. Recall, however, that our applications to online private query release, and online discrepancy minimization require oblivious adversaries.

It will be convenient to assume that our online factorization algorithms are given the factorization value $\gamma$ ahead of time. This gives us a type of ``decision version'' of the online factorization problem. More precisely, we define the $(\gamma, C)$-bounded online factorization problem as follows. The algorithm receives the values $\gamma$ and $C$ at time step $0$, and decides on the initial matrix $R_0$. Then, at each time step $t \in [m]$, the $t$-th row $q_t$ of $Q$ arrives, similarly to the row arrival model above. Then, at each time step $t \ge 1$ we require that either
\begin{itemize}
    \item the algorithm extends $R_{t-1}$ to $R_t$ by adding new rows, and extends $L_{t-1}$ to $L_t$ by adding one additional row $\ell_t$ so that $\ell_t R_t = q_t$; moreover, it holds that $\|R_t\|_{1\to 2} \le 1$, and $\|L_t\|_{2\to\infty} \le C\gamma$, or

    \item the algorithm asserts $\gamma_2(Q_t) > \gamma$.
\end{itemize}
Note that we can also equivalently require that, in the first case, $\|L_t\|_{2\to\infty} \le 1$ and $\|R_t\|_{1\to 2} \le C\gamma$, since we can replace $L_t$ by $\frac{1}{C\gamma}L_t$ and $R_t$ by $C\gamma R_t$. This is possible since $\gamma$ and $C$ are known to the algorithm. Because of this observation, the row and column arrival models are equivalent in the decision version.

The standard doubling trick allows us to use algorithms solving the $(\gamma, C)$-bounded online factorization problem to get online factorization with factorization value $\tilde{O}(C\gamma_2(Q))$, where $\tilde{O}$ hides at most a logarithmic factor. This is captured by the following lemma.



\begin{lemma}\label{lm:fact-doubling}
    Suppose that we have an online algorithm that solves the $(\gamma, C)$-bounded online factorization problem for a matrix $Q$ and for some fixed $C$ and every $\gamma > 0$. Then, for any $c > 0$, we have an online factorization algorithm for $Q$ in the row arrival model with factorization value $O(C\gamma_2(Q)\log(\gamma_2(Q)/\|q_1\|_\infty)^{\frac12 + c})$. Moreover, we have an online factorization algorithm for $Q^T$ in the column arrival model with factorization value $O(C)$.
\end{lemma}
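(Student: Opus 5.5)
The plan is a doubling search over guesses $\gamma_k$ for $\gamma_2(Q)$. We run a fresh instance of the $(\gamma_k,C)$-bounded algorithm for each guess, and move to the next guess whenever the current instance asserts $\gamma_2(\cdot)>\gamma_k$. The outputs of all instances are combined into one factorization by stacking the $R$-blocks and placing the corresponding $L$-blocks side by side with zero padding.

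\textbf{Setup.} Rows that are identically zero receive $\ell_t=0$, so we may assume $q_1\neq 0$. Since every entry of $Q$ is at most $\gamma_2(Q)$ in absolute value, we have $\gamma_2(Q)\ge\|q_1\|_\infty$. Set $\gamma_k=2^k\|q_1\|_\infty$ for $k=0,1,2,\dots$.

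\textbf{The row arrival construction.} Instance $k$ is started at the time step at which instance $k-1$ fails. It is fed the current row and then all later rows. Several instances may be started at the same time step, and the current row is always handled by the surviving instance.

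Instance $k$ only sees a submatrix of $Q$. Its assertion $\gamma_2(\cdot)>\gamma_k$ is therefore valid for that submatrix, and by monotonicity of $\gamma_2$ under taking submatrices it also gives $\gamma_2(Q)>\gamma_k$. Hence the last instance ever started has index
\[
k^*\le \log_2(\gamma_2(Q)/\|q_1\|_\infty)+1,
\]
and it never fails, because $\gamma_{k^*}\ge\gamma_2(Q)$ at that point. This gives $\gamma_{k^*}\le 2\gamma_2(Q)$.

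To combine the instances, let instance $k$ produce $\ell^{(k)}_t,R^{(k)}$. We append $w_kR^{(k)}$ to the global $R$ and use the row $(0,\dots,0,\ell^{(k)}_t/w_k,0,\dots)$ as $\ell_t$. The global $R$ then stacks all $R^{(k)}$, so each of its columns has squared norm at most $\sum_k w_k^2$. We choose $w_k^2=c'(k+1)^{-(1+2c)}$ with $c'$ small enough that $\sum_k w_k^2\le 1$, which ensures $\|R_t\|_{1\to2}\le1$ at all times.

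\textbf{The row arrival bound.} Each $\ell_t$ comes from a single instance $k\le k^*$, so
\[
\|\ell_t\|_2\le C\gamma_k/w_k\lesssim C\gamma_{k^*}(k^*+1)^{1/2+c}.
\]
Substituting $\gamma_{k^*}\le 2\gamma_2(Q)$ and the bound on $k^*$ gives $O(C\gamma_2(Q)\log(\gamma_2(Q)/\|q_1\|_\infty)^{1/2+c})$. Here the bound on $k^*$ is read as $O(\log(\gamma_2(Q)/\|q_1\|_\infty))$, treating the ratio as bounded below by a constant larger than $1$ (otherwise the factor is $O(1)$).

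\textbf{Column arrival for $Q^T$.} Here the roles swap. Each new column $r_t$ of $R$ belongs to exactly one instance, while the rows of $L$ are shared by all instances, since $L$ has the fixed row set indexed by $[m]$. The columns added by instance $k$ have row norms at most $C\gamma_k$, so we use no weights at all. Every row of the combined $L$ then has squared norm at most
\[
\sum_{k\le k^*}C^2\gamma_k^2\lesssim C^2\gamma_{k^*}^2
\]
by the geometric sum, and every $r_t$ has norm at most $1$. This gives value $O(C\gamma_2(Q))$ with no logarithmic loss. I read the claimed ``$O(C)$'' as this bound relative to $\gamma_2$, i.e.\ after normalizing $\gamma_2$.

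\textbf{Main obstacle.} The delicate point is the row arrival case. There the stacked $R$-blocks add up in every column, and the final index $k^*$ is unknown in advance. So we cannot give the earlier instances geometrically decaying weights, which would need $\gamma_k/\gamma_{k^*}$. This forces a summable but only polynomially decaying weight sequence, and that is the source of the $\log^{1/2+c}$ factor. Beyond that, the step to check carefully is that restarting an instance mid-stream, on a suffix of rows, is legitimate: failure certificates must concern submatrices of $Q$, and earlier rows must stay correctly factored because their $\ell_t$ and $R$-blocks are never modified.
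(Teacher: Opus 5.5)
Your proposal is correct and follows the paper's proof essentially step for step. It uses doubling guesses starting from $\|q_1\|_\infty$. In the row model it stacks the instances' right matrices with summable weights $w_k^2\propto (k+1)^{-(1+2c)}$, which are the paper's $1/f(\phi)$. In the column model it uses plain concatenation and a geometric sum, and there the paper's own proof likewise yields $O(C\gamma_2(Q))$ rather than a literal $O(C)$.

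One small wording fix: the last instance need not satisfy $\gamma_{k^*}\ge\gamma_2(Q)$. The right reasons are that an instance with $\gamma_k\ge\gamma_2(Q)$ can never fail, so the process terminates, and that $\gamma_{k^*}\le 2\gamma_2(Q)$ follows from the failure of instance $k^*-1$ (or from $\gamma_0\le\gamma_2(Q)$ if $k^*=0$).
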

\begin{proof}
    We first prove the result for the row arrival model, and then describe how to modify the proof for the column arrival model. 
    
    Let us assume that there are no all-zero rows in $Q$, since we can set $\ell_t = 0$ for any $q_t = 0$.
    Let $f(x) = x^{1+2c}\cdot \sum_{y = 1}^\infty \frac{1}{y^{1+2c}}$. Initially, the algorithm is in phase $\phi = 1$. 
    On receiving the first row $q_1$, we set $\gamma^{(1)} = \|q_1\|_\infty = \gamma_2(Q_1)$. We start executing the algorithm for the $(\gamma^{(1)}, C)$-bounded online factorization problem to get the matrices $L^{(1)}_t$ and $R^{(1)}_t$. We set $L_t = \sqrt{f(1)}L^{(1)}_t$ and $R_t = \frac{1}{\sqrt{f(1)}}R^{(1)}_t$. At the first time step $t$ when the algorithm asserts that $\gamma_2(Q_t) > \gamma^{(1)}$, we start the next phase. We increase $\phi$ to $2$, and set $\gamma^{(2)} = 2\gamma^{(1)}$. We then start executing the algorithm for the $(\gamma^{(2)}, C)$-bounded online factorization problem. Suppose that $L^{(1)}$ is the final value of the matrix $L^{(1)}_t$ computed in phase $1$, and $R^{(1)}$ is the final value of $R^{(1)}_t$ computed in phase $1$, and that $L^{(2)}_t$ and $R^{(2)}_t$ are the matrices computed by time step $t$ in phase $2$. We set 
    \[
    L_t = \begin{pmatrix}
        \sqrt{f(1)} L^{(1)} & 0\\
        0 & \sqrt{f(2)} L^{(2)}_t
    \end{pmatrix};\hspace{1em}
    R_t = \begin{pmatrix}
        \frac{1}{\sqrt{f(1)}}R^{(1)}\\
        \frac{1}{\sqrt{f(2)}}R^{(2)}_t
    \end{pmatrix}.
    \]
    In general, we end phase $\phi-1$ and start phase $\phi$ at the first time step $t$ when the algorithm for the bounded online factorization problem asserts $\gamma_2(Q_t) > \gamma^{(\phi-1)}$. We then set $\gamma^{(\phi)} = 2\gamma^{(\phi-1)}$. We start executing the algorithm for the $(\gamma^{(\phi)}, C)$-bounded online factorization problem to get the matrices $L^{(\phi)}_t$ and $R^{(\phi)}_t$. We then add the rows of $R^{(\phi)}_t$ to $R_t$, scaled by $\frac{1}{\sqrt{f(\phi)}}$, and add the rows $L^{(\phi)}_t$ to $L_t$, scaled by $\sqrt{f(\phi)}$, and padded with $0$'s so that the rows of $L^{(\phi)}_t$ and the old rows in $L_t$ have disjoint supports; this guarantees that $L_t R_t = Q_t$.
    
    The fact that, at every step $t$, $\|R_t\|_{1\to 2} \le 1$ follows from the choice of $f(x)$ which guarantees that $\sum_{\phi=1}^\infty \frac{1}{f(\phi)} = 1$. The factorization value is bounded by $C\sqrt{f(\phi)}\gamma^{(\phi)}$ for the final phase $\phi$ reached by the algorithm. Observe that if at time step $t$ the algorithm is in phase $\phi$, then $\gamma_2(Q_t) > \gamma^{(\phi-1)} = \frac12 \gamma^{(\phi)}$. Moreover, $\gamma^{(\phi)} \ge 2^{\phi-1}\gamma^{(1)} = 2^{\phi-1} \|q_1\|_\infty$, since we double $\gamma^{(\phi)}$ in each phase. Therefore, the final phase $\phi$ satisfies $\phi < 2 + \log_2(\gamma_2(Q)/\|q_1\|_\infty)$. Since $f(\phi) \lesssim \phi^{1 + 2c}$, we have
    \(
    \sqrt{f(\phi)}\gamma^{(\phi)}
    \lesssim 
    \gamma_2(Q) (1+\log_2(\gamma_2(Q)/\|q_1\|_\infty))^{\frac12 + c},
    \)
    and this gives the bound on the factorization value that we need.

    The proof of the lemma in the column arrival model is slightly simpler. As observed above, we can assume by rescaling that we have an online factorization algorithm in the column arrival model solving the $(\gamma, C)$-bounded online factorization problem for $Q^T$. I.e., in step $t$ a new column $q_t^T$ of $Q^T$ arrives, and the algorithm extends $R_{t-1}$ with a single new column $r_t$ such that $\|r_t\|_2 \le 1$, and also extends $L_{t-1}$ to $L_t$ by adding new columns, so that $\|L_t\|_{2\to\infty} \le C\gamma$, and $L_t r_t = q_t^T$. If the algorithm fails to do so, it declares $\gamma_2(Q_t^T) = \gamma_2(Q_t) > \gamma$. We now use this algorithm very similarly to the proof for the row arrival model, but without the rescaling. Initially, we are in phase $\phi=1$, and set $\gamma_{(1)} = \|q_1\|_{\infty}$. We start executing the $(\gamma^{(1)}, C)$-bounded online factorization algorithm above to compute matrices $L_t^{(1)}$ and $R_t^{(1)}$ until the algorithm asserts $\gamma_2(Q_t^T) > \gamma^{(1)}$ in some time step $t$. Then we increase the phase number to $\phi=2$, set $\gamma^{(2)} = 2\gamma^{(1)}$, and start executing the $(\gamma^{(2)}, C)$-bounded online factorization algorithm to compute matrices $L_t^{(1)}$ and $R_t^{(1)}$. With $L^{(1)}$ and $R^{(1)}$ denoting the final matrices computed in phase $1$, we set 
    \[
    L_t = \begin{pmatrix}
        L^{(1)} & L^{(2)}_t
    \end{pmatrix};\hspace{1em}
    R_t = \begin{pmatrix}
        R^{(1)} & 0\\
        0 & R^{(2)}_t
    \end{pmatrix}.
    \]
    We continue in this manner: phase $\phi-1$ ends when the algorithm declares $\gamma_2(Q_t^T) > \gamma^{(\phi-1)}$, and then we set $\gamma^{(\phi)} = 2\gamma^{(\phi-1)}$, start phase $\phi$ by running the $(\gamma^{(\phi)}, C)$-bounded online factorization algorithm  to compute matrices $L_t^{(\phi)}$ and $R_t^{(\phi)}$. We form $L_t$ and $R_t$ by adding the columns of $L^{(\phi)}_t$ to $L_t$, and adding the columns of $R^{(\phi)}_t$ to $R_t$ so that  $R^{(\phi)}_t$ forms a new diagonal block in $R_t$. It is easy to see that we have $L_t R_t = Q^T_t$.

    Now the fact that, at every step $t$, $\|R_t\|_{1\to 2} \le 1$ is trivial. Let us fix a time step $t$, and suppose the algorithm is in phase $\phi$ at this step. Then, as above, $\gamma_2(Q_t^T) > \frac12 \gamma^{(\phi)}$, and, since  we double $\gamma^{(\phi)}$ in each phase, we have
    \[
    \|L_t\|_{2\to\infty}^2 \le C^2 \sum_{\psi = 1}^\phi (\gamma^{(\psi)})^2
    = C^2 (\gamma^{(\phi)})^2 \sum_{\psi = 1}^\phi 4^{\psi- \phi}
    \le \frac{4}{3} C^2 (\gamma^{(\phi)})^2 < \frac83 C^2\gamma_2(Q_t^T)^2.\qedhere
    \]
 \end{proof}

Note that in the proof of Lemma~\ref{lm:fact-doubling} we can replace $f(x)$ with $f(x) = \tilde{f}(x)\sum_{y=1}^\infty \frac{1}{\tilde{f}(y)}$ for any $\tilde{f}(x)$ for which the sum $\sum_{x=1}^\infty \frac{1}{\tilde{f}(x)}$ converges. For example, we can choose \[\tilde{f}(x) = x \log(x) \log \log(x) \ldots \log^{(i)}(x)^{1+c},\] where $i$ is an integer, $c > 0$, and $\log^{(i)}(x)$ is the $i$-th iterated logarithm. This allows a corresponding improvement in the logarithmic term of Lemma~\ref{lm:fact-doubling}.

The proof of our main online factorization result, Theorem~\ref{thm:fact-main}, which appears in Section~\ref{sec:gen-fact}, goes via Lemma~\ref{lm:fact-doubling}. Before we give this proof, we describe our two main applications of Theorem~\ref{thm:fact-main}. For convenience we restate Theorem~\ref{thm:fact-main}.

\factmain*

\subsection{Application: Online Private Query Release}

In this subsection, we discuss the application of our factorization result to online private query release, and prove Theorem~\ref{thm:main-priv}. As a preliminary, let us first define the Gaussian mechanism ~\cite{DinurNissim03,DworkN04,DworkKMMN06OurDataOurselves}. The mechanism returns the true query answers perturbed by unbiased spherical Gaussian noise, scaled proportionally to the $\ell_2$ sensitivity of the queries. Furthermore, the mechanism handles online queries, even adaptively chosen ones, with the same privacy guarantee as with offline queries. 

\begin{lemma}[The Gaussian mechanism]
    \label{lem:gaussian}
    Let $\mathcal{Q} = (q_1, \ldots, q_m)$ be a sequence of statistical queries arriving online, determined by a potentially adaptive data analyst. Suppose that the $\ell_2$ sensitivity of $\mathcal{Q}$ satisfies the upper bound $\Delta \mathcal{Q} \eqdef \max_{x \sim x'} \|\mathcal{Q}(x) - \mathcal{Q}(x')\|_2\leq \bar{\Delta}$, and let 
    \(
    \sigma^2 \eqdef \frac{2\bar{\Delta}^2(\varepsilon + \ln (1/\delta))}{\varepsilon^2}.
    \)
    The mechanism that, upon receiving $q_t$, outputs $q_t(x) + z_t$, where $z_t$ is an independent sample from $\mathcal{N}\left(0, \sigma^2\right)$, is $(\varepsilon,\delta)$-differentially private for adaptive and non-adaptive analysts.
\end{lemma}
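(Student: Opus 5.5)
The plan is to reduce Lemma~\ref{lem:gaussian} to the standard offline Gaussian mechanism together with adaptive composition. I will use the fact that the noise is independent across rounds, so the whole transcript can be seen as a sequence of one-dimensional Gaussian releases.

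Fix neighboring datasets $x\sim x'$ and an analyst $A$. Without loss of generality $A$ is deterministic: a randomized analyst is a mixture of deterministic ones, and mixtures preserve $(\varepsilon,\delta)$-DP. The transcript is then determined by the answer vector $a=(a_1,\dots,a_m)$, and $q_t=q_t(a_1,\dots,a_{t-1})$. I will write the output density as a product of conditionals, $p_x(a)=\prod_t \phi_\sigma(a_t-q_t(x))$. The privacy loss is therefore
\[
\mathcal{L}(a)=\ln\frac{p_x(a)}{p_{x'}(a)}=\sum_{t=1}^m \frac{(a_t-q_t(x'))^2-(a_t-q_t(x))^2}{2\sigma^2}.
\]
Setting $d_t=q_t(x)-q_t(x')$ and $a_t=q_t(x)+z_t$, this becomes $\mathcal{L}=\sum_t (d_t z_t/\sigma^2 + d_t^2/(2\sigma^2))$.

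The key issue is that $d_t$ depends on earlier answers, so $\sum_t d_t z_t$ is not a fixed Gaussian. It is, however, a martingale whose increments are conditionally $\mathcal{N}(0,d_t^2\sigma^2)$. The sensitivity hypothesis means that along any transcript the query sequence satisfies $\sum_t d_t^2\le\bar\Delta^2$. This is how I interpret $\Delta\mathcal{Q}\le\bar\Delta$ for adaptive analysts: the bound holds for every realized query sequence. Given that, the exponential supermartingale gives
\[
\mathbb{E}\exp\Big(\lambda\sum_t d_tz_t-\tfrac{\lambda^2\sigma^2}{2}\sum_t d_t^2\Big)\le 1 .
\]
Hence $\mathcal{L}$ is stochastically dominated in the moment-generating sense by a Gaussian with mean $\mu=\bar\Delta^2/(2\sigma^2)$ and variance $\bar\Delta^2/\sigma^2$. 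A Chernoff bound then yields
\[
\Pr[\mathcal{L}>\varepsilon]\le \exp\!\big(-(\varepsilon-\mu)^2\sigma^2/(2\bar\Delta^2)\big).
\]

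The remaining step is the calculation with $\sigma^2=2\bar\Delta^2(\varepsilon+\ln(1/\delta))/\varepsilon^2$. This gives $\mu=\varepsilon^2/(4(\varepsilon+\ln(1/\delta)))\le\varepsilon/4$, so $\varepsilon-\mu\ge 3\varepsilon/4$. The exponent is then at least $\tfrac{9}{16}\varepsilon^2\cdot(\varepsilon+\ln(1/\delta))/\varepsilon^2\ge \tfrac{9}{16}\ln(1/\delta)$. This does not quite reach $\ln(1/\delta)$, so I would tighten the argument. Instead of a plain Chernoff bound, I would bound the hockey-stick divergence directly, $\Pr_x[S]-e^\varepsilon\Pr_{x'}[S]\le \Pr[\mathcal{L}>\varepsilon]$, and optimize $\lambda$ more carefully. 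Alternatively, I would use the additive slack $\varepsilon$ in $\varepsilon+\ln(1/\delta)$, since $(3/4)^2\cdot 2=9/8>1$ if the variance is accounted for correctly. Getting the constant right is mechanical rather than conceptual, and I expect it to be the main obstacle. The non-adaptive case is a special case of the adaptive one, and it also follows directly from the offline Gaussian mechanism applied to the fixed vector $\mathcal{Q}(x)$.
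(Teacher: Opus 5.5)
Your martingale argument is sound, but the proof is not finished: as written the last step only gives $\Pr[\mathcal{L}>\varepsilon]\le\delta^{9/16}$, and the first remedy you suggest would not close it. Your second remedy, the additive slack, does close it, and the required $\delta$ follows by exact algebra with the constant as stated.

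The loss comes entirely from replacing $\mu$ by its crude bound $\varepsilon/4$ before squaring. Optimizing $\lambda$ cannot help, because $\exp(-u^2/(2v))$ is already the optimal Chernoff bound for variance proxy $v$. Your proxy $v=\bar\Delta^2/\sigma^2$ is also correct, so there is no factor of $2$ to recover.

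Write $a=\varepsilon+\ln(1/\delta)$, so that $\sigma^2/(2\bar\Delta^2)=a/\varepsilon^2$ and $\mu=\varepsilon^2/(4a)$. The bound $\mu\le\varepsilon/4$ is needed only to ensure $\varepsilon-\mu>0$, so that the one-sided tail bound applies. Expanding exactly,
\[
\frac{(\varepsilon-\mu)^2\sigma^2}{2\bar\Delta^2}=a\Big(1-\frac{\varepsilon}{4a}\Big)^2=a-\frac{\varepsilon}{2}+\frac{\varepsilon^2}{16a}\ \ge\ \ln(1/\delta)+\frac{\varepsilon}{2}.
\]
Hence $\Pr[\mathcal{L}>\varepsilon]\le\delta e^{-\varepsilon/2}\le\delta$. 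The standard conversion $\Pr_x[S]\le e^{\varepsilon}\Pr_{x'}[S]+\Pr_x[\mathcal{L}>\varepsilon]$ finishes the proof; since the neighbour relation is symmetric, it covers both directions. This is your ``additive slack'' idea made precise: the $\varepsilon$ in $\varepsilon+\ln(1/\delta)$ pays exactly for the cross term $-2\varepsilon\mu\cdot a/\varepsilon^2=-\varepsilon/2$.

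For comparison, the paper does not prove this lemma; it quotes the Gaussian mechanism from the literature. Your route is a good one to have, because it is what actually justifies the adaptive part of the claim. It uses a privacy-loss martingale with predictable increments $d_t$, the exponential martingale identity, and the pathwise bound $\sum_t d_t^2\le\bar\Delta^2$.

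This matters for Lemma~\ref{lm:priv-to-fact}. There the rows added to $R_t$, and hence the per-round sensitivities, depend on earlier answers. Only the total $\|R_m(h-h')\|_2$ is controlled, though on every realization. Composing Gaussian mechanisms with per-round parameters fixed in advance would not give the stated $\sigma$. Your reading of the sensitivity hypothesis as a pathwise bound is exactly the one that application uses. Your reductions to a deterministic analyst and to a transcript determined by the answer vector are fine.
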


In the offline setting, a factorization $LR = Q$ of the query matrix $Q$ of a set of statistical queries $\mathcal Q$ achieving value $\gamma = \|L\|_{2\to\infty}\|R\|_{1\to 2}$ can be used to give a query release algorithm which is $(\alpha,\beta)$-accurate for $\alpha = O\left(\frac{\gamma\sqrt{\log m}}{ n}\right)$ (ignoring the dependence on the privacy parameters and on $\beta$)~\cite{EdmondsNU20}. This is done by releasing the answers $\frac{1}{n}Rh + z$ for $z \sim \mathcal{N}\left(0, \sigma^2 I\right)$ to the ``strategy queries'' defined by $R$, and post-processing them to get the final output $\frac{1}{n}L(Rh + z) = \frac1n Qh + Lz$. We need to set $\sigma^2 = O\left(\frac{\|R\|_{1\to 2}^2}{n^2}\right)$ since the sensitivity of the queries defined by $R$ is $\frac{2\|R\|_{1\to 2}}{n}$, and then we can observe that the maximum variance of any coordinate of $Lz$ is $\|L\|_{2\to\infty}^2 \sigma^2$, from which the bound on the error $\alpha$ follows by a standard concentration argument. 

In the online setting we can use an online factorization algorithm achieving factorization value $\gamma$  in a similar way to get similar error bounds. This is captured by the following lemma.

\begin{lemma}\label{lm:priv-to-fact}
    Let $\mathcal{Q} = (q_1, \ldots, q_m)$ be a sequence of statistical queries arriving online, chosen by a non-adaptive data analyst. If there is an online factorization algorithm for the query matrix $Q$ of of $\mathcal Q$ in the row arrival model achieving factorization value $\gamma$, then, for any $\delta \le e^{-\varepsilon}$, there is an interactive algorithm that is $(\varepsilon,\delta)$-differentially private, and $(\alpha,\beta)$-accurate for any $\beta > 0$, and $\alpha = O\left(\frac{\gamma\sqrt{\log(m + 1/\beta)\log(1/\delta)}}{\varepsilon n}\right)$.
\end{lemma}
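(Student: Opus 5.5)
The plan is to run the online factorization algorithm on the incoming query rows and use its right factor as a growing, online family of strategy queries, answered by the Gaussian mechanism of Lemma~\ref{lem:gaussian}. When $q_t$ arrives, the factorization algorithm appends some new rows to $R_{t-1}$, forming $R_t$, and outputs $\ell_t$ with $\ell_t R_t = q_t$ and $\|\ell_t\|_2 \le \gamma$. For each new row $r$ of $R_t$, the mechanism releases the noisy strategy answer $\frac1n r h + z_r$, with $z_r \sim \mathcal N(0,\sigma^2)$ drawn independently. It then answers $a_t = \ell_t \cdot v_t$, where $v_t$ is the vector of all noisy strategy answers released so far. Because $L_tR_t = Q_t$, this gives $a_t = q_t(x) + \ell_t z$, where $z$ is the noise vector restricted to the rows of $R_t$.

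\textbf{Privacy.} The rows of $R_t$ are produced deterministically from the query rows. Hence the sequence of strategy queries is itself an online, possibly adaptively chosen, sequence of statistical queries, with the factorization algorithm viewed as part of the analyst. The answers $a_t$ are post-processing of the strategy answers together with the public $\ell_t$.

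The sensitivity bound comes from $\|R_t\|_{1\to2}\le 1$ for every $t$. Changing one data point from $u$ to $u'$ changes the strategy answer vector by $\frac1n(R e_u - R e_{u'})$. Its $\ell_2$ norm is at most $\frac2n$, uniformly over however many strategy rows are eventually produced. Lemma~\ref{lem:gaussian} with $\bar\Delta = 2/n$ and $\sigma^2 = \frac{8(\varepsilon+\ln(1/\delta))}{\varepsilon^2n^2}$ therefore gives $(\varepsilon,\delta)$-privacy. Since $\delta\le e^{-\varepsilon}$, we have $\varepsilon \le \ln(1/\delta)$, so $\sigma = O\!\left(\frac{\sqrt{\log(1/\delta)}}{\varepsilon n}\right)$.

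\textbf{Accuracy.} The analyst is non-adaptive, so $Q$ is fixed before the noise is drawn. The factorization algorithm is deterministic, so each $\ell_t$ is also a fixed vector, independent of $z$. Each error $\ell_t z$ is then a centered Gaussian with variance $\sigma^2\|\ell_t\|_2^2 \le \sigma^2\gamma^2$. A Gaussian tail bound gives $\Pr[|\ell_t z| > \sigma\gamma\sqrt{2\ln(2m/\beta)}] \le \beta/m$. A union bound over $t\in[m]$ then yields $\max_t |a_t - q_t(x)| = O(\gamma\sigma\sqrt{\log(m/\beta)})$ with probability $1-\beta$. Finally, $\log(m/\beta)\le 2\log(m+1/\beta)$, which gives the stated $\alpha$.

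The only delicate point is checking that independence is preserved. This holds because the $\ell_t$ depend only on $Q$, which non-adaptivity fixes in advance, and not on the noise. For adaptive analysts this step is exactly what fails, which is why the statement covers accuracy only for non-adaptive analysts.
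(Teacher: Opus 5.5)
Your proposal is correct and follows essentially the same route as the paper's proof. You answer the rows of $R_t$ with the online Gaussian mechanism, using sensitivity $2/n$ from $\|R_t\|_{1\to 2}\le 1$ and $\sigma^2 = \frac{8(\varepsilon+\ln(1/\delta))}{\varepsilon^2 n^2}$, and post-process with $\ell_t$. Accuracy then follows from Gaussian concentration plus a union bound, using that under non-adaptivity the $\ell_t$ are independent of the noise.
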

\begin{proof}
    Let $Q$ be the query matrix of the queries, and where $q_t$ is also the $t$-th row of this matrix. Suppose that the online factorization algorithm computes the factorization $L_tR_t = Q_t$, and, in particular, computes a row $\ell_t$ at time step $t$ such that $\ell_t R_t = q_t$.
    
    Consider the online sequence $\mathcal{R}$ of statistical queries defined by the rows of $R_m$. I.e., each row $r$ of $R_m$ defines the query $r(x) \eqdef \frac{\langle r,h\rangle}{n}$, where $x$ is the private dataset, $n$ is its size, and $h$ is its histogram; $\mathcal{R}$ consists of all such queries, in the order in which rows are added to $R_t$. Thus the vector of true answers to the queries in $\mathcal{R}$ is $R_m h$. Since any two histograms $h$ and $h'$, corresponding, respectively, to neighboring datasets $x$ and $x'$, satisfy $\|h-h'\|_1 \le \frac2n$, the $\ell_2$ sensitivity of $\mathcal{R}$ is at most $\frac2n$ because $\|R_m (h - h')\|_2 \le \norm{R_m}_{1\to 2}\|h-h'\|_{1} \le \frac{2}{n}$. At every time step $t$, we use the Gaussian mechanism to answer the private queries $r(x)$ for each row $r$ that is in $R_t$ but not in $R_{t-1}$. Thus, by time step $t$, the algorithm has computed $R_t h + \mathcal{N}(0,\sigma^2 I)$ where $\sigma^2 \eqdef \frac{8(\varepsilon + \ln (1/\delta))}{\varepsilon^2n^2}$ is such that the Gaussian mechanism satisfies $(\varepsilon,\delta)$-differential privacy (by Lemma~\ref{lem:gaussian}). Then our mechanism outputs $\ell_t (R_t h + \mathcal{N}(0,\sigma^2 I))$ as the answer to $q_t$, which is a post-processing of $R_t h + \mathcal{N}(0,\sigma^2 I)$.

    The privacy guarantee follows by the privacy of the (online) Gaussian mechanism, and holds even for adaptive data analysts. For the accuracy guarantee, note that for non-adaptive data analysts the queries $(q_1, \ldots, q_m)$, and, therefore, the row vectors $\ell_1, \ldots, \ell_m$, are independent of the Gaussian noise. The accuracy claim then follows from a standard Gaussian concentration bound and the union bound.
\end{proof}

Theorem~\ref{thm:main-priv} (restated below) now follows as a corollary from Theorem~\ref{thm:fact-main} and Lemma~\ref{lm:priv-to-fact}.
\thmmainpriv*


Edmonds, Nikolov, and Ullman~\cite{EdmondsNU20} have shown that any (offline)  $(\varepsilon,\delta)$-differentially private algorithm, and for any large enough $n$ and small enough constant $\varepsilon$, $\delta$, and $\beta$, can only be $(\alpha,\beta)$-accurate for
\[
\alpha = \Omega\left(\frac{\gamma_2(Q)}{\varepsilon n}\right).
\]
Thus, the error bound in Theorem~\ref{thm:main-priv} is competitive against an optimal private query release algorithm up to a competitive ratio polynomial in $\log(mU/(\beta\delta))$. 

\subsection{Application: Online Discrepancy Minimization}



In this subsection, we discuss the application of our online factorization algorithm to online discrepancy minimization, and prove Theorem~\ref{thm:disc-main}. Towards this goal, let us first recall the definition of subgaussian random variables.

\begin{definition}
    A real-valued random variable $y$ is $\sigma$-subgaussian (for $\sigma > 0$) if $\E[\exp({y^2/\sigma^2})] \le 2$. A random variable $z$ taking values in $\R^d$ is $\sigma$-subgaussian if, for each $\theta \in \R^d$ with $\|\theta\|_2 =1$, the random variable $\langle z, \theta \rangle$ is $\sigma$-subgaussian.
\end{definition}

We recall that a real-valued $\sigma$-subgaussian random variable $y$ satisfies, for any $t \ge 0$, the tail bound $\Pr[|y| \ge t] \le 2e^{-t^2/\sigma^2}$~\cite[Proposition 2.6.1]{Vershynin}. As an easy consequence, a $\sigma$-subgaussian random vector $z$ satisfies 
\begin{equation}\label{eq:subgaussian-tail}
    \forall \theta \in \R^d \setminus \{0\}: 
    \Pr[|\langle z, \theta \rangle| \ge t] \le \exp\left({-\frac{t^2}{\sigma^2 \|\theta\|_2^2}}\right).
\end{equation}

A basic result in discrepancy theory, due to Banaszczyk~\cite{Bana98} shows that for any vectors $v_1, \ldots, v_N$, each of $\ell_2$ norm at most $1$, there is a distribution of random signs $x_1, \ldots, x_N \in \{-1, +1\}$ so that the random vector $\sum_{i=1}^N x_i v_i$ is $O(1)$-subgaussian.\footnote{This re-formulation of Banaszczyk's theorem is due to Dadush, Garg, Lovett, and Nikolov~\cite{DGLN}.} This result can be used to show that any $m\times N$ matrix $A$ with factorization $\gamma = \|L\|_{2\to \infty} \|R\|_{1\to 2}$ has discrepancy at most $\disc(A) = O(\gamma\sqrt{\log m})$. Notice first that we can assume that $\|R\|_{1\to 2} = 1$ and $\|L\|_{2\to\infty} = \gamma$ by appropriately rescaling $L$ and $R$. Then, since the columns of $R$ are vectors of $\ell_2$ norm at most $1$, Banaszczyk's theorem implies that we can find random signs $x \in \{-1, +1\}^N$ so that $Rx$ is $O(1)$-subgaussian. Since every row of $L$ is a vector of $\ell_2$ norm at most $\gamma$, the discrepancy bound follows with high probability by \eqref{eq:subgaussian-tail} and a union bound.

To adapt this argument to the online setting, we first need an online version of Banaszczyk's theorem. The next lemma, due to Kulkarni, Reis, and Rothvoss~\cite{KulkarniRR24}, is sufficient for our purposes.
\begin{lemma}\label{lm:bana-online}
    There is an online algorithm such that for any sequence of vectors $v_1, \ldots, v_N$, each of $\ell_2$ norm at most $1$, and decided by an oblivious adversary, the algorithm chooses online random signs $x_1, \ldots, x_N$ such that, for all $t \in [N]$, the random vector $x_1 v_1 + \ldots + x_t v_t$ is $10$-subgaussian.
\end{lemma}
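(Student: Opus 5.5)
This lemma is imported from Kulkarni, Reis, and Rothvoss~\cite{KulkarniRR24}, so in the paper we would simply cite it. To reprove it, I would run an induction on $t$ with a distributional invariant that is stronger than subgaussianity and is preserved under one step of the process. The oblivious-adversary assumption matters here. The vectors $v_1,\ldots,v_N$ are fixed in advance, so the law $\mu_t$ of $W_t \eqdef x_1v_1+\ldots+x_tv_t$ is a well-defined measure. At step $t$ the algorithm is free to choose $x_t$ as a randomized function of $W_{t-1}$ and $v_t$ only. The task is then a pure statement about measures: given $\mu_{t-1}$ satisfying the invariant and a vector $v_t$ with $\|v_t\|_2\le 1$, find a Markov kernel $K(w,\cdot)$ supported on $\{w+v_t,\, w-v_t\}$ such that $\mu_t=\mu_{t-1}K$ again satisfies the invariant.

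For the invariant I would use a smoothed Gaussian domination rather than subgaussianity itself, since the latter is not obviously stable under adding $\pm v_t$. The form I have in mind is: for every $\theta$, $\E[\exp(\lambda\langle W_t,\theta\rangle)]\le \exp(\sigma^2\lambda^2\|\theta\|_2^2/2)$ for all $\lambda\in\R$, with a fixed $\sigma$ independent of $t$. This implies $O(\sigma)$-subgaussianity in the sense of the definition above, by standard equivalences. Two natural sign rules would keep it:
\begin{itemize}
\item the self-balancing rule, $\Pr[x_t=+1\mid W_{t-1}=w]=\tfrac12-\tfrac{\langle w,v_t\rangle}{2c}$, clipped to $[0,1]$;
\item an existential kernel, obtained from a minimax argument.
\end{itemize}

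For the existential route I would argue by duality. The set of reachable laws $\{\mu_{t-1}K\}$ is convex in $K$. The invariant is an intersection of convex constraints: the moment-generating-function bounds are linear in the measure. So by a Hahn--Banach or minimax argument, the only obstruction would be a single exponential test function $w\mapsto\exp(\lambda\langle w,\theta\rangle)$, or a mixture of such functions, that every kernel violates. For a fixed test function $g$, the best kernel picks, for each $w$, the sign minimizing $g(w\pm v_t)$. This gives the pointwise bound $\min\{g(w+v),g(w-v)\}\le g(w)\cosh(\lambda\langle\theta,v\rangle)\le g(w)e^{\lambda^2\langle\theta,v\rangle^2/2}$, so that single constraint is satisfied.

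The main obstacle is that this naive step makes the variance proxy grow additively, to $\sigma^2+1$ per step. A bounded $\sigma$ needs a restoring drift, i.e.\ one must exploit $\langle w,v_t\rangle$ to pull $W$ back toward the origin. A mixture of test functions also cannot be beaten pointwise by a single deterministic sign. I would therefore first try to establish the invariant only for $|\lambda|$ up to some cutoff together with a Gaussian tail-mass condition, and show that the clipped self-balancing rule contracts the moment generating function in the direction $v_t$. Failing that, I would fall back on the KRR construction, which is inefficient and which I would cite as a black box, exactly as the paper does.
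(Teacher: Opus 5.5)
Your bottom line, treating the lemma as a black box imported from Kulkarni, Reis, and Rothvoss, is exactly what the paper does; the paper gives no proof of its own. The self-contained route you sketch, however, does not close, and neither branch can deliver the statement as written. The clipped rule $\Pr[x_t=+1\mid W_{t-1}=w]=\tfrac12-\tfrac{\langle w,v_t\rangle}{2c}$ is precisely the Alweiss--Liu--Sawhney self-balancing walk. The paper mentions that walk only for the weaker $O(\sqrt{\log N})$-subgaussian guarantee, and this limitation is intrinsic to the rule. Already for $v_1=v_2=\cdots=v$, the unclipped dynamics of $y_t=\langle W_t,v\rangle$ satisfy $\E[y_t^2]=(1-2/c)\,\E[y_{t-1}^2]+1$, so the variance settles near $c/2$ and forces $\sigma\gtrsim\sqrt{c}$. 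A dimension-free constant like $10$ therefore requires $c=O(1)$. In that regime you can no longer union-bound away clipping over $N$ steps, and the martingale analysis of the walk no longer applies.

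The invariant/minimax branch has the gap you half-identified, and that gap is the entire content of the lemma. An induction on ``$\mu_{t-1}$ is $\sigma$-subgaussian in the MGF sense'' needs this class of laws to be \emph{viable}: from \emph{every} member, and for every $v$ with $\|v\|_2\le1$, some kernel must keep the law in the class. Your duality correctly reduces this to beating mixtures of exponential test functions, which cannot be done pointwise. For example, $\cosh(\lambda\langle w\pm v,\theta\rangle)>\cosh(\lambda\langle w,\theta\rangle)$ for both signs whenever $\lambda\neq0$ and $\langle w,\theta\rangle=0\neq\langle v,\theta\rangle$.

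A smaller correction: for a \emph{single} exponential test function the best sign gives $g(w)e^{-|\lambda\langle\theta,v\rangle|}\le g(w)$. The $\cosh$ factor and the growth $\sigma^2\to\sigma^2+1$ describe a uniformly random sign, not the minimax choice.

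Nothing in your sketch supplies the missing viability step, and nothing in the paper or in Kulkarni--Reis--Rothvoss establishes that the plain MGF class is viable. Their proof avoids step-by-step invariants altogether. They generalize Banaszczyk's prefix-balancing theorem to trees, which handles all possible continuations of the sequence at once. They then use a cloning argument to obtain a distribution over colorings rather than a single coloring. The resulting laws of $W_t$ are special, and there is no claim that an arbitrary $10$-subgaussian law can be advanced one step. For this paper, the citation is the proof; your sketch is a research direction, not an argument.
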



Our reduction from online discrepancy minimization to online factorization is now given by the next lemma. 

\begin{lemma}\label{lm:disc-to-fact}
    Suppose that the columns $a_1, \ldots, a_N$ of an $m\times N$ matrix $A$ arrive online in the oblivious model. If there is an online factorization algorithm for $A$ in the column arrival model achieving factorization value $\gamma$, then there is an online algorithm that computes a sequence of signs $x_1, \ldots, x_N \in \{-1,+1\}$ such that, with probability at least $1-\frac{1}{N}$, 
    \[
    \max_{t=1}^N \|x_1 a_1 + \ldots + x_t a_t\|_\infty = O(\gamma \sqrt{\log mN}).
    \]
\end{lemma}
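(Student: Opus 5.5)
We run the online factorization algorithm in the column arrival model on the columns $a_1,\ldots,a_N$. It maintains $L_t$ and produces a single new column $r_t$ at each step, with $L_t r_t = a_t$, $\|r_t\|_2 \le \|R_t\|_{1\to 2}\le 1$, and $\|L_t\|_{2\to\infty}\le \|L_N\|_{2\to\infty}\le\gamma$. We then feed the sequence $r_1,\ldots,r_N$ into the online algorithm of Lemma~\ref{lm:bana-online} and output the signs $x_1,\ldots,x_N$ it produces.

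The first point to check is that Lemma~\ref{lm:bana-online} applies, which requires the vectors $r_t$ to be chosen by an oblivious adversary. The columns $a_t$ are fixed in advance, and the factorization algorithm is deterministic and never sees the signs. Hence the whole sequence $r_1,\ldots,r_N$ is a deterministic function of $A$, and so is oblivious.

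One technicality is that the $r_t$ live in growing dimensions. Since $L_t$ only gains columns over time, we pad every $r_t$ with zeros to the final dimension $k$ (the number of columns of $L_N$). Then $L_N r_t = L_t r_t = a_t$, because the added columns of $L_N$ are multiplied by zero entries. If $N$ is unknown, we note that padding with zeros does not affect any inner product, so we can view all vectors in a fixed large ambient space. Subgaussianity of the partial sums in that space is unaffected by this embedding.

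Now fix $t$ and a row index $i\in[m]$, and let $\lambda_i$ be the $i$-th row of $L_N$, so $\|\lambda_i\|_2\le\gamma$. Then
\[
(x_1a_1+\cdots+x_ta_t)_i=\langle \lambda_i^T, x_1r_1+\cdots+x_tr_t\rangle .
\]
By Lemma~\ref{lm:bana-online}, the vector $x_1r_1+\cdots+x_tr_t$ is $10$-subgaussian. Applying \eqref{eq:subgaussian-tail} with $\theta=\lambda_i^T$ gives
\[
\Pr\bigl[|(x_1a_1+\cdots+x_ta_t)_i|\ge s\bigr]\le \exp\bigl(-s^2/(100\gamma^2)\bigr).
\]
If $\lambda_i=0$, the coordinate is identically zero and there is nothing to bound.

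We choose $s=10\gamma\sqrt{\ln(mN^2)}=O(\gamma\sqrt{\log mN})$. Then each of these $mN$ events has probability at most $1/(mN^2)$. A union bound over all $i\in[m]$ and $t\in[N]$ gives total failure probability at most $1/N$. On the complementary event, $\max_t\|x_1a_1+\cdots+x_ta_t\|_\infty\le s$, which is the claimed bound.

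The only delicate step is confirming obliviousness and the consistent embedding of the $r_t$; the rest is a direct tail bound plus a union bound.
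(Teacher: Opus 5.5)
Your proof is correct and takes essentially the same route as the paper. You run the column-arrival factorization, feed the right-matrix columns $r_1,\ldots,r_N$ to the online algorithm of Lemma~\ref{lm:bana-online}, write $x_1a_1+\cdots+x_ta_t$ as the left matrix applied to the $10$-subgaussian prefix sum, and finish with \eqref{eq:subgaussian-tail} and a union bound over $i\in[m]$, $t\in[N]$. Your explicit checks that the $r_t$ form an oblivious sequence and that the threshold is $\Theta(\gamma\sqrt{\ln(mN)})$ are exactly what the argument needs; the paper's text writes $\gamma\ln(2mN)$ where $\gamma\sqrt{\ln(2mN)}$ is clearly intended.
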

\begin{proof}
    Suppose that the online factorization algorithm computes the factorization $L_t R_t = A_t$, where $A_t$ is the matrix with columns $a_1, \ldots, a_t$. In particular, at time $t$, the factorization algorithm computes a column vector $r_t$ so that $L_t r_t = a_t$, $\|L_t\|_{2\to\infty} \le \gamma$, and $\|r_t\|_2 \le 1$. Below, when writing $L_t r_i$ for $i \le t$, we assume $r_i$ has been padded with $0$ entries so that the matrix vector product is well defined. 
    
    We use the algorithm from Lemma~\ref{lm:bana-online} with $r_1, \ldots, r_N$, in the order in which they are added to $R_N$, to compute random signs $x_1, \ldots, x_N \in \{-1,+1\}$ so that the random vector $u_t \eqdef \sum_{i = 1}^t x_i v_i$ is $10$-subgaussian for all $t$. Notice that 
    \[
    x_1 a_1 + \ldots + x_t a_t
    = 
    L_t (x_1 r_1 + \ldots + x_t r_t) = L_t u_t.
    \]
    Then, by \eqref{eq:subgaussian-tail} and union bound, we have that
    \[
    \Pr[\|x_1 a_1 + \ldots + x_t a_t\|_\infty \ge t]
    = 
    \Pr[\|L_t u_t\|_\infty \ge t]
    \le 2m e^{-t^2/(100\gamma^2)}.
    \]
    Setting $t$ to be a large enough multiple of $\gamma \ln(2mN)$ and applying a union bound over all $t \in [N]$ gives the result.
\end{proof}

As noted in the Introduction, while the algorithm in Lemma~\ref{lm:bana-online} is not efficient, another online algorithm (the Self-Balancing Walk) of Alweiss, Liu, and Sawnhey~\cite{alweiss2020discrepancy} runs in polynomial time, and gives the weaker guarantee that, conditional on an event that happens with probability $1-\frac{1}{2N}$, $x_1 v_1 + \ldots + x_t v_t$ is $O(\sqrt{\log N})$-subgassuan for each $t$. Using this algorithm instead in the proof of Lemma~\ref{lm:disc-to-fact} would give a polynomial time algorithm achieving the discrepancy bound 
\[
    \max_{t=1}^N \|x_1 a_1 + \ldots + x_t a_t\|_\infty = O(\gamma \log mN).
\]

Theorem~\ref{thm:disc-main} (restated below) now follows from Lemma~\ref{lm:disc-to-fact} and Theorem~\ref{thm:disc-main}. 

\thmmaindisc*

The slightly weaker bound \eqref{eq:disc-main-efficient} achievable by a polynomial time algorithm follows by using the efficient variant of Lemma~\ref{lm:disc-to-fact} based on the Self-Balancing Walk.




\section{General Online Factorization Algorithms}
\label{sec:gen-fact}

In this section we prove Theorem~\ref{thm:fact-main}. 
By Lemma~\ref{lm:fact-doubling}, Theorem~\ref{thm:fact-main} is implied by the following lemma.

\begin{lemma}\label{lm:fact-main}
    There is an online algorithm that solves the $(\gamma, C)$-bounded online factorization problem for any $m\times N$ matrix $Q$ with $C = O\left(  \gamma_2(Q)\log(N)^3\right)$.
\end{lemma}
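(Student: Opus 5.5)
Note that the statement as written has $C = O(\gamma_2(Q)\log(N)^3)$. From the introduction, the intended content is clearly $C = O(\log(N)^3)$, so that the value is $O(\gamma\log^3 N)$ whenever $\gamma_2(Q_t)\le\gamma$. I will aim for that. The plan follows the three-layer structure sketched in the introduction: an $O(1)$-competitive algorithm for online \emph{average} factorization; a dynamization layer supporting column insertion and one-time deletion; and an online version of the recursive reduction of \cite{MuthukrishnanN12} from $\gamma_2$ to average factorization. Each of the last two layers costs $O(\log N)$ factors.

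\textbf{Step 1 (online average factorization).} I first prove the $\gamma_F$ SDP characterization \eqref{eq:sdp-gammaF} and its dual \eqref{eq:sdp-gammaF-dual}. The primal direction comes from a factorization lemma: $X \succeq q^Tq$ with $X = R^TR$ gives an $\ell$ with $\ell R = q$ and $\|\ell\|_2\le 1$. The dual direction comes from weak duality, via $\tr(XY)$ with $Y$ built from $\sqrt{\sum y_t q_t^Tq_t}$. The algorithm maintains
\[
X_t=\zeta I+\eta\sqrt{\textstyle\sum_{i\le t} y_i q_i^Tq_i}.
\]
On arrival of $q_t$, it raises $y_t$ continuously until either $X_t\succeq q_t^Tq_t/(C_0\gamma)^2$ or $\tr X_t = N$. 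Monotonicity $X_{t-1}\preceq X_t$ follows from operator monotonicity of the square root. In the failure case I must show $\sum y_i<1$. The key inequality bounds the increase of $\tr\sqrt{\cdot}$ when $y_t$ grows while $q_t^T q_t$ is not yet dominated: the derivative of $\tr\sqrt{M+sqq^T}$ in $s$ is $\tfrac12 q M^{-1/2}q^T$. While the constraint fails, this quantity is large relative to $1/(\eta\gamma^2)$, so trace growth pays for dual mass. Choosing $\zeta,\eta$ so that $\zeta N$ is a constant fraction of $N$ gives $C_0=O(1)$. I then extend this to a domain of columns (coordinates) that is only partially known, so that the trace budget scales with the current number of active columns.

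\textbf{Step 2 (insertion and deletion).} Using Bentley--Saxe \cite{BentleySaxe}, I keep $O(\log N)$ instances of dyadic sizes. Inserted columns go into a new small instance, and instances of equal size are merged by restarting an instance on their union. Since previously emitted rows of $R$ cannot be removed, each restart appends fresh rows. Bounding the total $\ell_2^2$ mass across $O(\log N)$ levels, and the $L$-row norms by summing over levels, costs $O(\log N)$. Deletions are handled by lazily ignoring deleted columns and rebuilding when half of an instance is deleted, at a further $O(\log N)$. The union-of-instances factorization is certified feasible because $\gamma_F$ of any column subset $J$ is at most $\gamma_2\sqrt{|J|}$, so an instance failing certifies $\gamma_2(Q_t)>\gamma$.

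\textbf{Step 3 (online Muthukrishnan--Nikolov reduction).} I run levels $j=0,1,\dots,\log_2 N$. Level $0$ holds all columns. Whenever a column's squared norm in a level's $R$ exceeds $2$ (scaled by the trace budget), it is deleted there and inserted into level $j+1$. By Markov, level $j$ ever receives at most $N/2^j$ columns. I then concatenate the $L$'s and place the $R$'s block-diagonally as in the displayed offline construction. This makes $\|R_t\|_{1\to2}=O(1)$ and multiplies $L$ by $O(\sqrt{\log N})\le O(\log N)$. Rescaling gives the bound $C=O(\log^3 N)$.

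\textbf{Main obstacle.} I expect the main difficulty to be Steps 1 and 2 together: making the average-factorization guarantee robust when columns of the domain change. In particular, the Markov argument in Step 3 needs, at each level, a bound on the cumulative Frobenius mass of all rows ever emitted, including those from rebuilt instances. I would first try to bound this by charging each rebuild's trace budget $|J|$ to the insertions that triggered it, giving $O(\log N)$ total overhead per column.
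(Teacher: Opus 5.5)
Your Step 1, the insertion half of Step 2 (Bentley--Saxe, with instances of equal size disjoint), and Step 3 match the paper. You are also right that the intended bound is $C=O(\log^3 N)$.

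\textbf{The gap: deletions.} Step 3 depends on them. After a column $x$ is kicked out of a zone, that zone must factor the rows $q_t|_{U'_t}$ with $x\notin U'_t$. Rows of $R$ can only be appended. And your bound $\|R_t\|_{1\to2}=O(1)$ needs the norm of column $x$ in that zone's output to stay frozen, up to constants, at its value before the kick-out. ``Lazily ignoring deleted columns'' does not deliver this under any reading:
\begin{itemize}
\item If the instance keeps receiving the full rows $q_t|_V$, then $\ell_tR$ reproduces $q_t$ on $x$, which the next zone is also factoring. Column $x$ also keeps gaining mass.
\item If you only restrict the newly emitted rows of $R$ to the surviving columns, the earlier rows still carry mass on $x$, so $\ell_tR$ is wrong there.
\item If you feed the instance truncated rows $q_t|_{V\setminus D_t}$, two things break. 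First, the instance sees a staircase truncation of $Q_t$, not a submatrix, so your certification argument via column subsets no longer applies. Zeroing entries in a staircase pattern can increase $\gamma_2$: the all-ones matrix has $\gamma_2=1$, but its triangular part has $\gamma_2=\Theta(\log n)$. Second, the square root mixes coordinates, so the $(x,x)$ entry of $\sqrt{M+y_tq_t^Tq_t}$ can increase even when $q_t(x)=0$. The emitted rows therefore keep adding mass to the deleted column.
\end{itemize}
Rebuilding once half an instance is deleted only bounds the number of rebuilds. It says nothing about correctness between rebuilds, and your charging argument covers only the Frobenius accounting.

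\textbf{The missing idea} is the paper's Lemma~\ref{lm:revocable}. The base instances always run on the insertion-only universe $U_t=\bigcup_{s\le t}U'_s$. So they only see genuine submatrices of $Q$, and their failures are valid certificates. For each row $r$ they emit at time $t_0$, you append $\bar r=r|_{U'_{t_0}}$. As deletion batches $\Delta_1,\Delta_2,\ldots$ occur, you append correction rows $-\bar r|_{S_i}$. Here $S_i$ is the union of the last $f(i)$ batches, and $f(i)$ is the largest power of $2$ dividing $i$. This gives three things:
\begin{itemize}
\item Correctness: $r|_{U'_t}=\bar r-\sum_{k:\,\mathrm{bin}(a)_k=1}\bar r|_{S_{2^k\lfloor a/2^k\rfloor}}$, so $q_t|_{U'_t}$ is a combination of appended rows.
\item Cost: each coordinate lies in at most $\log_2(2N)$ of the $S_i$, so after rescaling this costs an $O(\log N)$ factor. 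This is your second logarithm.
\item Frozen norms: after its deletion, column $x$ receives mass only from $O(\log N)$ rescaled copies of rows emitted before it. Hence $s_{R'_t}(x)\le s_{R_{t^{\mathrm{del}}(x)}}(x)$, which is what makes $\|R_{\zeta,t}\|_{1\to2}^2\le 2$ in the kick-out argument.
\end{itemize}
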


Our approach is to first give an algorithm that achieves $\|R_t\|_{F} \le \sqrt{N}$ rather than the stronger guarantee $\|R_t\|_{1\to 2}\le 1$. I.e., the algorithm only guarantees that the average squared $\ell_2$ norm of columns of $R_t$ is bounded by $1$, rather than the maximum squared $\ell_2$ norm. Using this algorithm as a base, we develop several reductions that allows us to solve the bounded online factorization problem. 

\subsection{Online Average Factorization}\label{sec:online-average}

Let us define the $(\gamma,C)$-bounded online \emph{average} factorization problem the same way we defined the $(\gamma,C)$-bounded online factorization problem, with the modification that, at every time step $t$, $R_t$ must satisfy $\|R_t\|_F \le \sqrt{N}$.

Note that $\min\{\|L\|_{2\to\infty} \|R\|_F: LR = Q\} = \gamma_F(Q^T)$, and that $\gamma_2(Q) \ge \frac{1}{\sqrt{N}}\gamma_F(Q^T)$ for any $m\times N$ matrix $Q$. We will give an algorithm for the online average factorization problem that will either extend $R_{t-1}$ to $R_t$ and $L_{t-1}$ to $L_t$ so that $\|L_t\|_{2\to\infty} \le \gamma$ and $\|R_t\|_{F} \le \sqrt{N}$, or will certify the stronger inequality that $\gamma_F(Q^T) > \gamma\sqrt{N}$ rather than just $\gamma_2(Q) > \gamma$ as the definition of problem requires. This algorithm is the first step in proving Lemma~\ref{lm:fact-main}. 

It will be convenient to reformulate the online average factorization problem as a problem of solving a semi-definite program (SDP) online. This is done by the following lemma. 
\begin{lemma}\label{lm:online-ellipsoid}
    Suppose the rows of an $m\times N$ query matrix $Q$ arrive online, and there is an online algorithm that outputs a sequence of positive semidefinite (PSD) matrices $X_0, X_1, X_2, \ldots$ that, at each time step $t\ge 1$, either reports that there is no positive semidefinite matrix $X$ such that $\tr(X) \le \gamma^2 N$ and $X\succeq q_i^T q_i$ for all $i \in [t]$, or else outputs a new PSD matrix $X_t$ in the sequence such that
    \begin{itemize}
        \item $X_{t} \succeq X_{t-1}$;
        \item $ X_t \succeq  q_t^T q_t$;
        \item $\tr(X_t)\le C^2 \gamma^2 N$.
    \end{itemize}
    Then there exists an online factorization mechanism that solves the $(\gamma,C)$-bounded online \emph{average} factorization problem for $Q$.
\end{lemma}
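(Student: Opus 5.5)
The plan is to convert the PSD sequence $X_0 \preceq X_1 \preceq \cdots$ into growing matrices $R_t$ with $R_t^T R_t = X_t/(C^2\gamma^2)$, read off each $\ell_t$ from the constraint $X_t \succeq q_t^T q_t$, and translate failure reports into the required certificate. The two tools are a PSD-difference decomposition and a standard range/factorization lemma (presumably the paper's Lemma~\ref{lm:fact-psd}).

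\textbf{Building $R_t$.} Set $R_0$ to be any matrix with $R_0^T R_0 = X_0/(C^2\gamma^2)$, for instance the symmetric square root $\frac{1}{C\gamma}X_0^{1/2}$. At time $t$, when the SDP algorithm outputs $X_t$, the difference $\Delta_t = X_t - X_{t-1}$ is PSD by the first bullet. Write $\Delta_t = B_t^T B_t$ (e.g.\ $B_t = \Delta_t^{1/2}$) and append the rows of $\frac{1}{C\gamma}B_t$ to $R_{t-1}$ to form $R_t$. By induction, $R_t^T R_t = \frac{1}{C^2\gamma^2}X_t$. Hence $\|R_t\|_F^2 = \tr(X_t)/(C^2\gamma^2) \le N$ by the third bullet, and the rows of $R_{t-1}$ remain rows of $R_t$, as the model requires.

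\textbf{Building $\ell_t$.} This is the only step needing care. By the second bullet, $R_t^T R_t \succeq \frac{1}{C^2\gamma^2} q_t^T q_t$. The fact to invoke is that if $R^T R \succeq v^T v$ for a row vector $v$, then some $\ell$ with $\|\ell\|_2 \le 1$ satisfies $\ell R = v$.

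The argument I would give is as follows. The inequality implies $\ker R \subseteq \ker v$, so $v$ lies in the row space of $R$. Take $\ell = v R^{+}$, where $R^+$ is the pseudoinverse, so that $\ell R = v$. Then
\[
\|\ell\|_2^2 = v (R^T R)^{+} v^T = \sup_{w} \frac{(v w)^2}{w^T R^T R w} \le 1,
\]
where the supremum runs over $w$ in the row space of $R$. The bound holds because $(vw)^2 \le w^T R^T R w$, which is exactly the PSD inequality.

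Applying this with $v = \frac{1}{C\gamma}q_t$ and rescaling gives $\ell_t$ with $\ell_t R_t = q_t$ and $\|\ell_t\|_2 \le C\gamma$. Earlier rows $\ell_i$ are padded with zeros on the newly added coordinates, so $L_t R_t = Q_t$ still holds and $\|L_t\|_{2\to\infty} \le C\gamma$.

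\textbf{Failure case.} If at time $t$ the SDP algorithm reports that no PSD $X$ satisfies $\tr(X) \le \gamma^2 N$ and $X \succeq q_i^T q_i$ for all $i \le t$, we argue by contrapositive. Suppose $\gamma_F(Q_t^T) \le \gamma\sqrt N$, so there is a factorization $LR = Q_t$ with $\|L\|_{2\to\infty} \le 1$ and $\|R\|_F \le \gamma\sqrt{N}$, obtained by rescaling. Take $X = R^T R$. Then $\tr(X) \le \gamma^2 N$, and for each $i$,
\[
q_i^T q_i = R^T \ell_i^T \ell_i R \preceq \|\ell_i\|_2^2\, R^T R \preceq X.
\]
So $X$ is feasible, contradicting the report. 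Therefore the report certifies $\gamma_F(Q_t^T) > \gamma\sqrt N$. Since $\gamma_2(Q_t) \ge \gamma_F(Q_t^T)/\sqrt N$, this also certifies $\gamma_2(Q_t) > \gamma$, as the problem definition requires.
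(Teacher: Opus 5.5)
Your proposal is correct and follows essentially the same route as the paper. The paper also maintains the invariant $R_t^T R_t = \frac{1}{C^2\gamma^2}X_t$ by appending a factor of $X_t - X_{t-1}$, takes $\ell_t = q_t R_t^+$ via Lemma~\ref{lm:fact-psd} (you re-derive that bound with the variational formula for $v(R^TR)^+v^T$ instead of the paper's projection argument), and handles the failure case through \eqref{eq:fact-psd} together with $\gamma_2(Q_t) \ge \gamma_F(Q_t^T)/\sqrt{N}$.
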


Before we prove Lemma~\ref{lm:online-ellipsoid}, we need another lemma.

\begin{lemma}\label{lm:fact-psd}
    For any real number $\gamma >0$, $m\times N$ matrix $Q$ with rows $q_1, \ldots, q_m$ (as row vectors), and $k\times N$ matrix $R$, there exists a factorization $Q = LR$ with $\|L\|_{2\to \infty} \le \gamma$ if and only if $\gamma^2 R^TR \succeq q_i^T q_i$ for every $i\ \in [m]$. Moreover, if the latter property is satisfied, then we can take $L=QR^+$ for the Moore-Penrose pseudoinverse $R^+$ of $R$.
\end{lemma}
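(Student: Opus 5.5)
The plan is to reduce the statement to a row-by-row claim. Since $\|L\|_{2\to\infty}\le\gamma$ only constrains each row separately, and $Q=LR$ means $\ell_i R = q_i$ for each $i$, it suffices to show the following. For a single row vector $q$, there exists $\ell$ with $\ell R = q$ and $\|\ell\|_2\le\gamma$ if and only if $\gamma^2R^TR\succeq q^Tq$. Moreover, in that case $\ell=qR^+$ works.

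For the forward direction, assume $q=\ell R$ with $\|\ell\|_2\le \gamma$. For any $v\in\R^N$, Cauchy--Schwarz gives
\[
v^Tq^Tqv=(\ell Rv)^2\le \|\ell\|_2^2\|Rv\|_2^2\le \gamma^2 v^TR^TRv,
\]
which is exactly $\gamma^2R^TR\succeq q^Tq$.

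For the converse, assume $\gamma^2R^TR\succeq q^Tq$. The first step is to show that $q$ lies in the row space of $R$. If $v\in\ker R$, then $0\le \gamma^2\|Rv\|^2-(qv)^2=-(qv)^2$, so $qv=0$. Thus $q\perp\ker R$, i.e. $q^T\in\mathrm{range}(R^T)$. Since $R^+R$ is the orthogonal projection onto $\mathrm{range}(R^T)$, we get $qR^+R=q$. So $\ell\eqdef qR^+$ satisfies $\ell R=q$, and applying this to every row shows $QR^+R=Q$.

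The remaining step, and the only nontrivial one, is bounding $\|qR^+\|_2$. Write $\ell=qR^+$. It lies in the row space of $(R^+)^T$, which is $\mathrm{range}(R)$, so $\ell^T=Rw$ for some $w$. We can also replace $w$ by its projection onto $\mathrm{range}(R^T)$ without changing $Rw$. Then
\[
\|\ell\|_2^2=\ell Rw=qw\le \gamma\|Rw\|_2=\gamma\|\ell\|_2,
\]
where the inequality uses the PSD hypothesis in the form $|qw|\le\gamma\|Rw\|_2$. Dividing by $\|\ell\|_2$ (the case $\ell=0$ is trivial) gives $\|\ell\|_2\le\gamma$.

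The main obstacle is handling the pseudoinverse cleanly when $R$ is not full rank. The key facts needed are that $R^+R$ projects onto $\mathrm{range}(R^T)$ and that $\mathrm{range}((R^+)^T)=\mathrm{range}(R)$. Both are standard properties of the Moore--Penrose inverse, so no case split on the rank of $R$ is required.
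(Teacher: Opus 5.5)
Your proof is correct, and its overall structure matches the paper's. The forward direction is the same inequality: the paper writes it as $\gamma^2 I \succeq \ell_i^T\ell_i$ followed by congruence with $R$, which is your Cauchy--Schwarz bound in Loewner form. The argument that $q_i \perp \ker R$, hence $q_iR^+R = q_i$, is also identical.

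The real difference is in the norm bound for $\ell = qR^+$. The paper gets it from one Loewner chain, $\gamma^2 I \succeq \gamma^2 RR^+ = \gamma^2 (R^+)^T R^T R R^+ \succeq (R^+)^T q^T q R^+ = \ell^T \ell$. This uses only that $RR^+$ is an orthogonal projection, plus congruence by $R^+$. It never uses $\ell R = q$, so it is independent of the row-space step.

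You instead combine $\ell^T \in \mathrm{range}((R^+)^T) = \mathrm{range}(R)$ with exactness $\ell R = q$ to get the self-bounding inequality $\|\ell\|_2^2 = qw \le \gamma\|Rw\|_2 = \gamma\|\ell\|_2$. So in your version the exactness step must come first, as you arrange it. You rely on the scalar form of the hypothesis and a range identity for the pseudoinverse rather than on matrix-inequality manipulations.

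What your route buys is that it exposes the duality behind the lemma. For $\ell^T\in\mathrm{range}(R)$ one has $\|\ell\|_2 = \sup_{Rw\neq 0} qw/\|Rw\|_2$, and the PSD hypothesis says exactly that this supremum is at most $\gamma$. The paper's route is shorter and avoids reasoning about ranges of $R^+$ altogether. Finally, your step of projecting $w$ onto $\mathrm{range}(R^T)$ is unnecessary but harmless.
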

\begin{proof}
    First, suppose there is a factorization $Q = LR$ with $\|L\|_{2\to \infty} \le \gamma$. Let $\ell_i$ be the $i$-th row of $L$ (taken as a row vector). We have $\ell_i R = q_i$. Moreover, $\gamma^2 I \succeq \ell_i^T \ell_i$ because $\ell_i^T \ell_i$ is a rank $1$ matrix with its only nonzero eigenvalue equal to $\|\ell_i\|_2^2 \le \gamma^2$. Therefore, $\gamma^2 R^T R \succeq R^T (\ell_i^T \ell_i) R = q_i^T q_i$. 
    
    Next, we show that if $\gamma^2 R^T R\succeq q_i q_i^T$ for every $i \in [m]$, then there exists an $L$ with $Q = LR$ and $\|L\|_{2 \to \infty} \le \gamma$, and, moreover, this holds for $L = QR^+$. Since $R R^+$ is an orthogonal projection matrix (onto the column span of $R$), we have $R R^+ \preceq I$.  Together with the assumption $\gamma^2 R^T R\succeq q_i^T q_i$, this implies
    \[
    \gamma^2 I \succeq \gamma^2 (R^+)^T R^T R R^+ \succeq (R^+)^Tq_i^T q_i R^+ = (q_i R^+)^T (q_i R^+) = \ell_i^T \ell_i.
    \]
    Therefore, $\|\ell_i\|_2^2 \le \gamma^2$. 
    
    It remains to verify that $LR = QR^+ R = Q$. Notice that $\gamma^2 R^T R \succeq q_i^T q_i$ implies that $q_i$ is in the rowspan of $R$ (otherwise there is a vector $v$ such that $Rv = 0$ and $q_i v \neq 0$, which implies $0 = v^T R^T Rv < (q_i v)^2$). Since $R^+ R$ is an orthogonal projection onto the rowspan of $R$, we then have $q_i R^+ R = q_i$. Because this  holds for every $i$, we have shown $QR^+ R = Q$.
\end{proof}

Notice that Lemma~\ref{lm:fact-psd} immediately implies that 
\begin{equation}\label{eq:fact-psd}
\gamma_F(Q^T) = \min\{\|L\|_{2\to\infty}\|R\|_F: LR = Q\}
= \min\{\sqrt{\tr(X)}: X\succeq q_i^T q_i\ \forall i \in [m], X \succeq 0\},
\end{equation}
where $X$ ranges over $N\times N$ matrices.
Indeed, by rescaling $L$ and $R$, we can always assume that the optimal factorization $LR = Q$ satisfies $\|L\|_{2\to\infty} = 1$ and $\|R\|_{F} = \gamma_F(Q^T)$. By Lemma~\ref{lm:fact-psd}, such a factorization exists if and only if there is a $k\times N$ matrix $R$ (for some $k$) such that $R^T R\succeq q_i^T q_i$ for all $i \in [m]$, and $\gamma_F(Q^T) = \|R\|_F = \sqrt{\tr(R^T R)}$. The change of variables $X = R^T R$ now gives \eqref{eq:fact-psd}. The proof of Lemma~\ref{lm:online-ellipsoid} translates this proof to the online setting.

\begin{proof}[Proof of Lemma~\ref{lm:online-ellipsoid}]
    Note first that if the algorithm asserts, at any time step, that $\min\{\tr(X): X\succeq q_i^T q_i\ \forall i \in [t], X \succeq 0\} > \gamma^2 N$, then, by \eqref{eq:fact-psd}, $\gamma_2(Q_t) \ge \frac{1}{\sqrt{N}}\gamma_F(Q_t^T)> \gamma$.

    Let us now consider the steps when the algorithm outputs a new $X_t$.
    We will construct the matrices $R_t$ online so that the invariant $R_t^T R_t = \frac{1}{C^2\gamma^2} X_t$ is maintained.
    We set $R_0$ to be such that $R_0^T R_0 =  \frac{1}{C^2\gamma^2} X_0$, which is possible because $X_0 \succeq 0$. Since $X_{t}-X_{t-1}\succeq 0$, we can write $M_{t}^T M_{t} = \frac{1}{C^2\gamma^2}(X_{t} - X_{t-1})$ for some matrix $M_{t}$. We form $R_{t}$ by adding to $R_{t-1}$ the rows of $M_{t}$. Then it is clear that $R_{t}^T R_{t} = R_{t-1}^T R_{t-1}  + M_t^T M_t = \frac{1}{C^2\gamma^2}X_{t}$, and our invariant is maintained. 

    Notice that, since $X_t \succeq X_s$ for any $s \le t$, and since we assume that $X_{s} \succeq q_{s}^Tq_{s}$ holds for every time step $s \le t$, we also have that $C^2\gamma^2R_t^T R_t = X_{t} \succeq q_{t}^Tq_{t}$. Then, by Lemma~\ref{lm:fact-psd} there is a matrix $L$ so that $LR_t = Q_t$ and $\|L\|_{2\to\infty} \le C\gamma$. Taking $\ell_{t}$ to be the last row of $L$ gives $\ell_{t} R_{t} = q_{t}$ and $\|\ell_t\|_2 \le C\gamma$. (More explicitly, we can take $\ell_{t} = q_{t} R_{t}^+$, where $R_{t}^+$ is the pseudoinverse of $R_{t}$.) Finally, $\|R_{t}\|_{F} = \sqrt{\tr(R_t^T R_t)} = \frac{\sqrt{\tr(X_t)}}{C\gamma} \le \sqrt{N}$. 
\end{proof}

Using the reformulation in Lemma~\ref{lm:online-ellipsoid}, we give an online factorization algorithm to solve the $(\gamma,C)$-bounded average factorization problem. We use the semidefinite program (SDP) \eqref{eq:fact-psd} and also derive its dual, and use both to derive a simple primal-dual algorithm.

\begin{lemma}\label{lm:online_F_transpose_factorization}
For any $\gamma > 0$, there is an online factorization algorithm that solves the $(\gamma, \sqrt{3})$-bounded online average factorization problem for any matrix $Q$.
\end{lemma}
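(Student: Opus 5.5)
The plan is to invoke Lemma~\ref{lm:online-ellipsoid} with $C=\sqrt3$. So it suffices to produce PSD matrices $X_0\preceq X_1\preceq\cdots$ with $X_t\succeq q_t^Tq_t$ and $\tr(X_t)\le 3\gamma^2N$, or else to certify that no PSD $X$ with $\tr(X)\le\gamma^2N$ satisfies $X\succeq q_i^Tq_i$ for all $i\le t$. The algorithm is primal-dual, as sketched in the introduction. It keeps dual variables $y_i\ge0$, sets $Y_t=\sum_{i\le t}y_iq_i^Tq_i$, and takes
\[
X_t=\zeta I+\eta\, Y_t^{1/2},\qquad \zeta=\gamma^2,\quad \eta=\gamma\sqrt N .
\]
We start with $X_0=\zeta I$. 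When $q_t$ arrives, we raise $y_t$ continuously from $0$. We stop and output $X_t$ as soon as $X_t\succeq q_t^Tq_t$, or we halt and report infeasibility if $\tr(X_t)$ reaches $3\gamma^2N$. Monotonicity $X_t\succeq X_{t-1}$ follows from operator monotonicity of the square root. The trace bound holds by the stopping rule. The process terminates, because once $y_t\ge\|q_t\|_2^2/\eta^2$ we have $Y_t^{1/2}\succeq (y_tq_t^Tq_t)^{1/2}=\sqrt{y_t}\,q_t^Tq_t/\|q_t\|_2$, which already gives $X_t\succeq q_t^Tq_t$.

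\textbf{Weak duality.} First I would show that for any $y\ge0$ with $s=\sum_iy_i>0$ and any feasible $X$ (meaning $X\succeq q_i^Tq_i$ for all $i$), we have $\tr(X)\ge \tr\bigl((Y/s)^{1/2}\bigr)^2$. By continuity we may replace $X$ by $X+\epsilon I$ and so assume $X\succ0$. The constraint $X\succeq q_i^Tq_i$ gives $q_iX^{-1}q_i^T\le1$, hence $\tr(X^{-1}Y)\le s$. Cauchy--Schwarz for the trace inner product, applied to $X^{1/2}$ and $X^{-1/2}Y^{1/2}$, then gives
\[
\tr(Y^{1/2})\le\sqrt{\tr X}\,\sqrt{\tr(X^{-1}Y)}\le\sqrt{s\,\tr X}.
\]

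\textbf{Charging dual increase to trace growth.} This is the main step, and the one I expect to be the obstacle. While $y_t$ increases and the constraint for $q_t$ is still violated, we have $q_tX_t^{-1}q_t^T>1$. Since $X_t\succeq\eta Y_t^{1/2}$, this yields $q_tY_t^{-1/2}q_t^T\ge \eta\, q_tX_t^{-1}q_t^T>\eta$, reading the left side as $+\infty$ when $q_t$ is not in the range of $Y_t$. The derivative of the potential is
\[
\frac{d}{dy_t}\tr(Y_t^{1/2})=\tfrac12\,q_tY_t^{-1/2}q_t^T,
\]
which is therefore $>\eta/2$. The delicate point is the singular case. I would handle it either by adding $\epsilon I$ to $Y$ and letting $\epsilon\to0$, or by using concavity of $\tr(\cdot^{1/2})$ and comparing finite increments directly. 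Integrating over all rows gives $\sum_iy_i<\frac2\eta\tr(Y^{1/2})$, with strict inequality as soon as any $y_i>0$.

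\textbf{Infeasibility certificate.} If we halt, then $\tr(X)=3\gamma^2N$, so $\tr(Y^{1/2})=2\gamma^2N/\eta=2\gamma\sqrt N$, and $s=\sum y_i<\frac{2}{\eta}\cdot2\gamma\sqrt N=4$. Weak duality then gives, for every feasible $X$,
\[
\tr(X)\ge\frac{\tr(Y^{1/2})^2}{s}>\frac{4\gamma^2N}{4}=\gamma^2N .
\]
So no feasible $X$ has trace at most $\gamma^2N$, which is exactly the report that Lemma~\ref{lm:online-ellipsoid} requires. Combined with that lemma, this yields the $(\gamma,\sqrt3)$-bounded online average factorization algorithm.
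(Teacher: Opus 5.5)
Your proposal is correct and follows essentially the same primal-dual argument as the paper: it uses the same iterate $X=\gamma^2 I+\eta\sqrt{Y}$ and the same stopping rule, charges $\sum_i y_i$ to the growth of $\tr(\sqrt{Y})$ via $Y^{-1/2}\succeq \eta X^{-1}$, and concludes with Cauchy--Schwarz weak duality. The differences are cosmetic. The paper takes $\eta=2\gamma\sqrt N$ so that $\sum_i y_i<1$ and uses the normalized dual, while your $\eta=\gamma\sqrt N$ gives $\sum_i y_i<4$ together with the scale-invariant bound $\tr(X)\ge \tr(\sqrt{Y})^2/\sum_i y_i$, which yields the same conclusion; you also justify termination and flag the singular-$Y$ case, both of which the paper glosses over.
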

\begin{proof}
We give an online algorithm for the reformulation in Lemma~\ref{lm:online-ellipsoid}. 
Let us define a function 
\[
X(y_1, y_2,\ldots,y_m)=\gamma^2 I+2{\gamma}\sqrt{N}\sqrt{\sum_{i=1}^my_iq_i^T q_i},
\]
where $\sqrt{\sum_{i=1}^m y_iq_i^T q_i}$ is the positive semidefinite matrix square root. While $X$ is a function of $y_1, \ldots, y_m$, we will make sure that at time step $t$, $y_{t+1} = \ldots = y_m = 0$. We will write $X(y_1, \ldots, y_t)$ for $X(y_1, \ldots, y_m)$ at time step $t$ in order to emphasize that, at step $t$, $X$ is only a function of $y_1, \ldots y_t$. Our initial matrix is $X_0 = \gamma^2 I$.  We run the following algorithm upon arrival of $q_t$:
\begin{figure}[H]\label{fig:alg-avg}
    \begin{framed}
    \begin{algorithmic}
    \INDSTATE[0]Receive $q_t$
    \INDSTATE[0]Initialize $y_t=0$
    \INDSTATE[0]\textbf{If} $X(y_1,\ldots,y_t)\succeq q_t^Tq_t$ is not true:
    \INDSTATE[1]Increase $y_t$ until $X(y_1,\ldots,y_t)\succeq q_t^Tq_t$ or $\tr(X(y_1,\ldots,y_t))=3{\gamma}^2N$.
    \INDSTATE[0]\textbf{If} $X(y_1,\ldots,y_t)\not\succeq q_t^Tq_t$:
    \INDSTATE[1]Assert that $\min\{\tr(X): X\succeq q_i^T q_i \ \forall i \in [t], X \succeq 0\} > \gamma^2 N$, and terminate.
    \INDSTATE[0]\textbf{Otherwise}: Output $X_t\leftarrow X(y_1,\ldots,y_t)$.
    \end{algorithmic}
    \end{framed}
    \vspace{-5mm}\caption{Online SDP algorithm} 
\end{figure}

From the definition of the algorithm, it is clear that when $X_t$ is output, the conditions $\tr(X_t) \le 3\gamma^2 N$ and $X_t \succeq q_t^T q_t$ are satisfied. To show that $X_t \succeq X_{t-1}$, recall the classical fact that the square root function is operator monotone on $[0,\infty)$, i.e., $A\succeq B \implies \sqrt{A}\succeq \sqrt{B}$ for positive semidefinite matrices $A$, $B$ (Proposition V.1.8~in~\cite{Bhatia}). Now $X_t \succeq X_{t-1}$ follows from the observation that the matrix $\sum_i y_i q_i^T q_i$ is monotonically increasing (in the positive semidefinite order) because the $y_i$ values only increase.

Let us now consider a dual formulation of $\gamma_F(Q^T)$:
\begin{align}
\gamma_F(Q_t^T)
&=
\min\{\sqrt{\tr(X)}: X\succeq q_i^T q_i\ \forall i \in [t], X \succeq 0\}\notag\\
&=
\max\left\{\left[\tr\left(\sqrt{\sum_{i=1}^t y_iq_i^Tq_i}\right)\right]:\sum_{i=1}^ty_i\leq 1, y_1, \ldots, y_t \ge 0\right\}, \label{eq:sdp-duality}
\end{align}
where $\sqrt{\sum_{i=1}^t y_iq_i^Tq_i}$ is again the positive semidefinite square root. In fact we only need the $\ge$ direction of the second equality, which we prove at the end, for completeness. The proof of equality can be found, e.g., in~\cite{NikolovT24}.

It remains to show that when the algorithm asserts that $\min\{\tr(X): X\succeq q_i^T q_i\ \forall i \in [m], X \succeq 0\} > \gamma^2 N$, this assertion is correct. This happens when we are done increasing $y_t$ but $X(y_1, \ldots y_t) \not\succeq q_t^T q_t$. Therefore, it must be that $\tr(X(y_1,\ldots, y_t)) = 3\gamma^2 N$. 
By the definition of $X(y_1, \ldots, y_t)$, we then have that 
\[
\tr\left(\sqrt{\sum_{i=1}^t y_iq_i^Tq_i}\right) = \gamma \sqrt{N}.
\]
If we can show that $\sum_{i=1}^t y_i < 1$, \eqref{eq:sdp-duality} would imply that the assertion that $\min\{\tr(X): X\succeq q_i^T q_i\ \forall i \in [m], X \succeq 0\} > \gamma^2 N$ is true. Consider the time step when $y_s$ is being increased for some $s \le t$. We have that
\begin{align*}
    \frac{\partial \tr(X(y_1, \ldots, y_s))}{\partial y_s}&={\gamma\sqrt{N}}\,q_s\left(\sum_{i=1}^s y_iq_i^Tq_i\right)^{-1/2}q_s^T\\
    &\ge 2{\gamma}^2 N\, q_sX(y_1, \ldots, y_s)^{-1}q_s^T,
\end{align*}
where the inequality follows since $\left(\sum_{i=1}^s y_iq_i^Tq_i\right)^{-1/2} \succeq 2\gamma \sqrt{N}X(y_1, \ldots, y_s)^{-1}$.
While $y_s$ is being increased we have $X(y_1, \ldots, y_s) \not\succeq q_s^T q_s$. We claim this is equivalent to $q_s^TX(y_1, \ldots, y_s)^{-1}q_s > 1$. To see this, note that, for $X = X(y_1, \ldots, y_s)$, 
\[
X \succeq q_s^T q_s \iff I \succeq X^{-1/2} q_s^T q_s X^{-1/2} 
\iff q_s X^{-1} q_s^T \le 1.
\]
Therefore, we know that $\frac{\partial \tr(X(y_1, \ldots, y_s))}{\partial y_s} > 2\gamma^2 N$. Integrating this inequality, we have that, for any $s \in [t]$, at the time we stop updating $y_s$, we have
\[
\tr(X(y_1,\ldots, y_s))- \tr(X(y_1,\ldots, y_{s-1})) > 2\gamma^2 N y_s,
\]
and, $\tr(X(y_1)) - \gamma^2 N > 2\gamma^2 N y_1$. Summing these inequalities together, we get that 
\[
2\gamma^2 N = \tr(X(y_1, \ldots, y_t)) - \gamma^2 N > 2\gamma^2 N \sum_{i=1}^t y_i,
\]
and $\sum_{i=1}^t y_i < 1$ as we wanted to show.

To finish the proof, we give a short derivation of the $\ge$ direction of \eqref{eq:sdp-duality}, i.e., of weak duality. Take any $y_1,y_2,\ldots, y_t \ge 0$ s.t. $\sum_{i=1}^t y_i\leq 1$, and also any positive definite $X$ such that $X \succeq q_i^T q_i$ for all $i \in [t]$. As we saw above, this implies $q_i X^{-1} q_i^T \le 1$. Let $D$ be a $t\times t$ diagonal matrix with entries $D_{i,i} = \sqrt{y_i}$. We have
\begin{align*}
    \tr\left(\sqrt{\sum_{i=1}^t y_iq_i^Tq_i}\right)^2
    &=\norm{DQ_t}_{tr}^2\\
    &=\norm{DQ_t\sqrt{X}^{-1}\sqrt{X}}_{tr}^2
    \le \|\sqrt{X}\|_F^2 \|DQ_t\sqrt{X}^{-1}\|_F^2
    = \tr(X)\sum_{i=1}^t y_iq_iX^{-1}q_i^T
    \leq \tr(X).
\end{align*}
Above, $\|\cdot\|_{tr}$ is the trace norm, i.e., the sum of the singular values, and the inequality is a matrix variant of the Cauchy-Schwarz inequality (Corollary~IV.2.6~of~\cite{Bhatia} with the trace norm and $p=q=2$). The bound for all positive semidefinite $X$ follows by continuity. 
%
%
%
%
%
\end{proof}
\begin{rem}
    One may implement the continuous increment of $y_t$ via binary search to achieve an arbitrarily close approximation of the algorithm in Figure~\ref{fig:alg-avg}. 
\end{rem}

\subsection{Reduction to Average Factorization}

In this section we build our reduction from online factorization to average online factorization. As mentioned in the introduction, the reduction is based on an idea of Muthukrishnan and Nikolov~\cite{MuthukrishnanN12}: to get a factorization certifying a bound on the $\gamma_2(Q)$ norm from an average factorization $Q = LR$ where $R$ is bounded only in Frobenius norm, we can remove from $R$ the columns whose $\ell_2$ norm is larger than $\sqrt{2} \|R\|_F/\sqrt{N}$, and factor only the corresponding columns of $Q$; the other columns of $Q$ can then be factored recursively. 
Here, we implement the same approach in an online setting. To do so, we introduce two transformations that allow the algorithm to handle insertion and one-time deletion of input matrix columns. Then we use this ability to ``kick out'' columns whose current norm in the right matrix of the factorization is too large, and we handle these deleted columns recursively.

We start with some necessary definitions.
\begin{definition}
    Given a matrix $R$ with columns indexed by a set $U$, we define, for any $x \in U$, $s_R(x)$ to be the $\ell_2$ norm of the column indexed by $x$.
\end{definition}

Our reduction will rely on the ability to insert and delete columns of a matrix whose rows arrive online.  For us, deleting a column at a time step $t$ will mean that we zero out the entries of this column in rows that arrive at step $t$ or later. Inserting a column will have the conventional meaning of adding a column to the matrix. Next we define some useful notation that allows us to refer to these operations.  
\begin{definition}
    For an $m \times N$ matrix $Q$ with columns indexed by $U$, and a sequence $\mathcal{U} = (U_0, \ldots, U_m)$ where $U_0,\ldots,U_m\subseteq U$, define $Q|_{\mathcal{U}}$ to be the matrix obtained by replacing all entries in the $i$-th row of $Q$ that do not correspond to elements of $U_i$ by $0$. Moreover, if $U_0\subseteq U_1\ldots\subseteq U_m\subseteq U$, we say this sequence is insertion-only.

    For a single set $U' \subseteq U$, let $Q|_{U'}$ be the matrix obtained by replacing all entries in columns not corresponding to elements of $U'$ by $0$.
\end{definition}
We now introduce a blackbox transformation that augments and online average factorization algorithm so that it can also support insertion of elements. To this end, we define one more variant of the online average factorization problem. In the $(\gamma,C)$-bounded online average factorization problem \emph{with insertions}, the algorithm receives, at each time step $t$, both the row $q_t$ of $Q$, and also a subset $U_t$ of the column indices of $Q$, such that $\mathcal{U} = (U_0, U_1, U_2, \ldots)$ is insertion-only. At each time step $t$, the algorithm should either assert that $\gamma_2(Q_t) > \gamma$, or compute an online factorization $L_t R_t = Q_t|_{U_t}$ such that $\norm{L_t}_{2\to\infty} \le C \gamma$ and $\norm{R_t}_{F} \le \sqrt{|U_t|}$. To simplify some of the future arguments, we also require that all rows in $R_t$ are supported on $U_t$. 

\begin{lemma}[Handling insertions] \label{lm:insertable}
    Let $Q$ be an $m\times N$ matrix with columns indexed by $U$. Suppose that for any fixed $U' \subseteq U$ there is an online factorization algorithm that solves the $(\gamma,C)$-bounded online average factorization problem for $Q|_{U'}$.
    Then there exists an online factorization algorithm
    that solves the $(\gamma,C')$-bounded online average factorization problem with insertions for $Q$, with $C' \le C\log_2(2N)$.
\end{lemma}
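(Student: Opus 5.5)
We will use a Bentley--Saxe style logarithmic method. Elements of $U$ that have been inserted are grouped into at most $\lceil\log_2(2N)\rceil$ buckets of geometrically growing sizes. We run one independent copy of the given $(\gamma,C)$-bounded algorithm on each bucket, restricted to the columns of that bucket.

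Concretely, we maintain buckets $B_0,B_1,\ldots$, where bucket $B_j$ is either empty or holds a set of size in $[2^{j-1},2^j]$ (or exactly $2^j$ in the binary-counter version). When new elements arrive at time $t$, i.e.\ $U_t\setminus U_{t-1}$, we add them one at a time like a binary counter. Inserting an element creates a bucket of size $1$. Whenever two buckets of the same level $j$ exist, we merge them into a single bucket of level $j+1$. Since every bucket has size at most $|U_t|\le N$, only levels $j\le \log_2 N$ occur, so at most $\log_2(2N)$ buckets are active.

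Each time a bucket $B$ is (re)created, we start a \emph{fresh} instance $\mathcal{A}_B$ of the assumed algorithm for $Q|_B$. This is allowed because the hypothesis holds for every fixed $U'$. The instance's initial matrix $R_0^{(B)}$ is supported on $B$. From time $t$ onward it is fed the rows $q_s|_B$.

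The old instances of merged buckets are frozen. Their rows of $R$ stay in the global $R_t$, since rows can never be removed, but they receive zero coefficients in all future rows of $L$. The global $R_t$ is the vertical stack of all right matrices ever produced, each scaled by a factor $\alpha$ chosen below. The new row is
\[
\ell_t=\textstyle\sum_{B\text{ active}}\alpha^{-1}\ell_t^{(B)}
\]
placed on the corresponding blocks. Since the active buckets partition $U_t$, we get $\ell_tR_t=\sum_B q_t|_B=q_t|_{U_t}$. If any active instance asserts $\gamma_2(Q_s|_B)>\gamma$ for the rows it has seen, we assert $\gamma_2(Q_t)>\gamma$. This is valid because $\gamma_2$ of a submatrix (with some columns zeroed) is at most $\gamma_2(Q_t)$. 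Every row of $R_t$ is supported inside some bucket, hence inside $U_t$, as required.

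\textbf{Norm bounds.} For the row of $L$ we get
\[
\|\ell_t\|_2\le \alpha^{-1}\Big(\sum_B\|\ell^{(B)}_t\|_2^2\Big)^{1/2}\le \alpha^{-1}C\gamma\sqrt{\log_2(2N)}.
\]
The main obstacle is bounding $\|R_t\|_F^2$, since frozen instances still contribute. Each instance on bucket $B$ contributes at most $|B|$, both during its lifetime and afterwards, because its Frobenius norm is bounded by $\sqrt{|B|}$ at every step and it stops growing once frozen.

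Charge each instance to its elements. An element $x$ belongs over time to buckets of strictly increasing levels, so it lies in at most $\log_2(2N)$ distinct bucket instances. Hence the total contribution is
\[
\sum_{\text{instances}}|B|\le |U_t|\log_2(2N).
\]
Setting $\alpha=1/\sqrt{\log_2(2N)}$ gives $\|R_t\|_F\le\sqrt{|U_t|}$ and $\|\ell_t\|_2\le C\gamma\log_2(2N)$, i.e.\ $C'\le C\log_2(2N)$.

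Two details remain. First, the sizes: with the binary-counter rule the levels are exact powers of two, so the counts above hold. Second, $N$ is known, so $\alpha$ can be fixed in advance.
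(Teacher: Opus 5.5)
Your proposal is correct and follows essentially the same Bentley--Saxe argument as the paper. The common structure is: power-of-two instances rebuilt on merges, frozen instances kept in $R_t$ with zero coefficients in future rows of $L$, at most $\log_2(2N)$ active instances, and the bound $\|R_t\|_F^2\le |U_t|\log_2(2N)$ followed by rescaling by $\sqrt{\log_2(2N)}$. Your per-element charging (each element lies in at most one instance per level) restates the paper's property that instances of the same size are disjoint. Processing insertions one at a time, which may create transient instances within a step, does not affect that bound.
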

\begin{proof}
    We will simultaneously run several instances of the given algorithm, each with a matrix $Q|_{V}$, for some $V \subseteq U$. If an instance is running on $Q|_{V}$, we say that $V$ is the universe for this instance, and say the size of the instance is $\abs{V}$. We say an instance is \emph{active} once it is initialized, until at some point we may stop executing it permanently, at which point it becomes inactive. If any instance asserts that $\gamma_2(Q_t|_V) > \gamma$, we assert that $\gamma_2(Q_t) > \gamma$, which is valid since $\gamma_2(Q_t) \ge \gamma_2(Q_t|_V)$ for any $V \subseteq U$.
    
    Let $\mathrm{bin}(x)_k$ be the $k$-th least significant bit in the binary representation of a positive integer $x$, where $\mathrm{bin}(x)_0$ is the least significant bit. We will ensure the following key properties:
    \begin{enumerate}
        \item All instances that are active at time $t$ have disjoint universes, and the union of the universes is $U_t$. Moreover, all instances (active and inactive) have universes contained in $U_t$.
        \item The sizes of the instances are powers of 2, and all instances of the same size are disjoint.
        \item At time $t$, for every $k$, there exists an active instance with size $2^k$ if and only if $\mathrm{bin}(\abs{U_t})_k=1$.
    \end{enumerate}
      
    Elements are partitioned among active instances as follows. Let us order $U_t$ so that the elements in $U_1$ come before those in $U_2 \setminus U_1$, which come before those in $U_3\setminus U_2$, etc. For every bit $k$ such that $\mathrm{bin}(|U_t|)_k = 1$, we have an active instance of size $2^k$ with  universe containing the $\left(2^{k+1} \cdot \lfloor \,\abs{U_t}/2^{k+1}\rfloor+1\right)$-th to the $\left(2^{k+1} \cdot \lfloor \,\abs{U_t}/2^{k+1}\rfloor+2^k\right)$-th elements in the order. For example, if $U_t$ is $10=(1010)_2$, then the first 8 element will be in an active instance, and the last 2 will be in another.


    At time $t$, we modify the instances as follows. Let $k^*$ be the largest bit $k$ such that $\mathrm{bin}(|U_{t}|)_k \neq \mathrm{bin}(|U_{t-1}|)_k$. It must be that $\mathrm{bin}(|U_{t-1}|)_{k^*} = 0$ and $\mathrm{bin}(|U_{t}|)_{k^*} = 1$ as $|U_t| \ge |U_{t-1}|$. We do the following:
    \begin{itemize}
        \item For every bit $k < k^*, \mathrm{bin}(U_{t-1})_k=1$, we turn its corresponding instance inactive.
        \item For every bit $k\leq k^*, \mathrm{bin}(U_{t})_k=1$, we initiate a new active instance with size $2^k$, holding the elements as defined above.
    \end{itemize}
    Properties 1.~and 3.~clearly hold by how the above process is defined. To see property 2.~holds, observe that, for any element at any time, either the active instance it belongs to remains active, or it becomes inactive and the element is included in an active and strictly larger instance. In particular, all active instances of size $2^k$ for $k > k^*$ remain active, and we create a new active instance of size $2^{k^*}$. The latter new active instance must include all elements in all instances that were active at time $t-1$ and had sizes $2^k$ for $k < k^*$, because the total size of these instances is at most $2^{k^*}-1$.

    Below is a simple example where a single element is being inserted at each time, where blue boxes are active instances, and grey boxes are inactive instances.
       \begin{figure}[H]
    \centering
    \includegraphics[width=13cm]{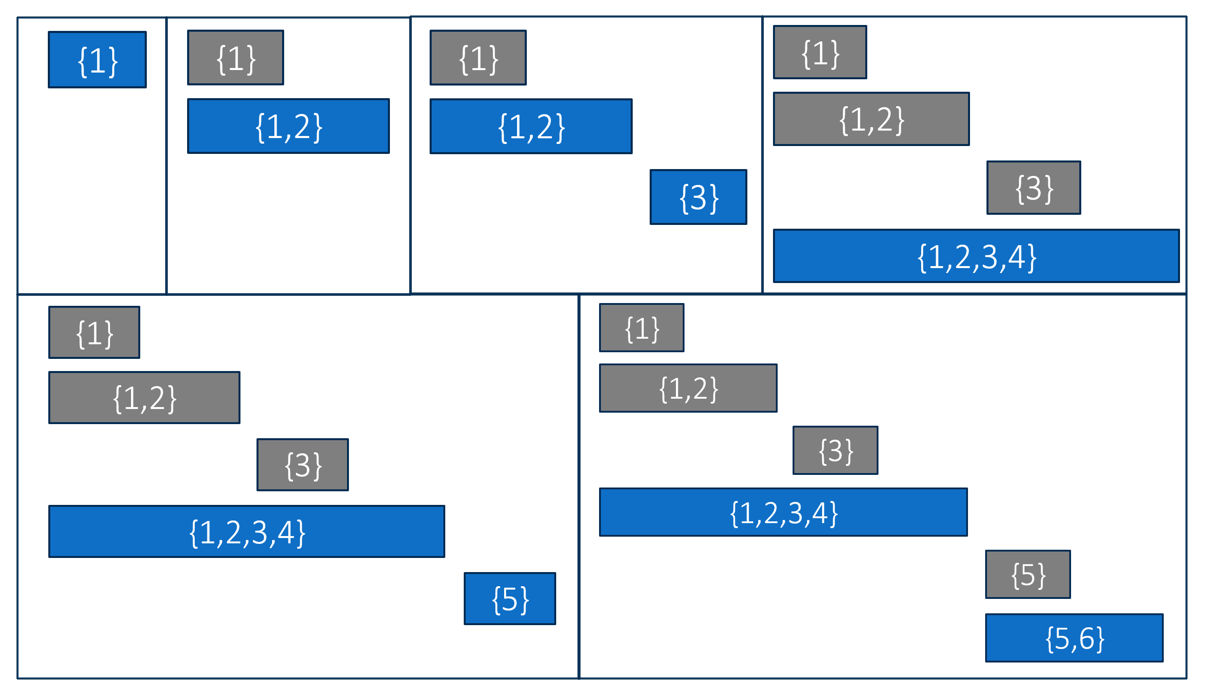}
    \caption{A simple example of insertion}
\end{figure}

    Then, for each active instance $\mathcal{I}$ with universe $V_{\mathcal{I}}$, on arrival of the row $q_t$ and ${U_t}$, we send the row $q_t|_{V_{\mathcal{I}}}$ as input to $\mathcal{I}$, and we combine the outputs as follows:
    \begin{figure}[H]
    \begin{framed}
    \begin{algorithmic}
    \INDSTATE[0]Initialize $\ell'_t$, $R'_t$ to be empty matrices.
    \INDSTATE[0]\textbf{for each} instance $\mathcal{I}$
    \INDSTATE[1] Obtain $l_{\mathcal{I},t}$ and $R_{\mathcal{I},t}$
    \INDSTATE[1] $\ell'_t\leftarrow\begin{pmatrix}
\ell'_t & \mathbf{1}\{\mathcal{I}\text{ is active}\}\ell_{\mathcal{I},t}
\end{pmatrix}$
    \INDSTATE[1]$R_t'\leftarrow\begin{pmatrix}
 {R_t'}\\
 R_{\mathcal{I},t}\\
 
\end{pmatrix}$
    
    \end{algorithmic}
    \end{framed}
    \vspace{-5mm}\caption{Combining Instances}
    \end{figure}

    Here we assume that in all instances $\mathcal{I}$, the right matrix $R_{\mathcal{I}, t}$ of the factorization has $|U|$ columns, and those that are not inside $V_{\mathcal{I}}$ are set to be all zeros. By how $\ell'_t$ is constructed, inactive instances do not contribute to $\ell'_t R'$, thus we have
    \begin{align*}
        \ell'_tR'_t
        =\sum_{\text{every active instance }\mathcal{I}}\ell_{\mathcal{I},t}R_{\mathcal{I},t}
        =\sum_{\text{every active instance }\mathcal{I}}q_{t}|_{V_\mathcal{I}}
        =q_t|_{U_t}.
    \end{align*}
    Here we used property 1 of the instances (they are disjoint and their union equals $U_t$).
    The above is true for any $t$, which implies $L'_tR'_t=Q_t|_{\mathcal{U}_t}$.
    
    Notice that $R'_t$ is constructed by concatenating all individual $R_{\mathcal{I},t}$'s of every instance $\mathcal{I}$, no matter if $\mathcal{I}$ is active or not. Since each instance outputs an online factorization, our algorithm can only add new rows to $R'_{t}$ as $t$ increases. Moreover, since the universe of each instance is contained in $U_t$, we have that all rows of $R'_t$ are supported on $U_t$. Aside from the bounds on $\|L'_t\|_{2\to \infty}$ and $\|R'_t\|_F$, this concludes the proof of correctness of our construction.
    
    We now proceed to prove the bounds on the matrix norms. It is straightforward to check that, because at any point in time we have at most $\log_2(2|U_t|) \le \log_2(2N)$ active instances, each with $\|L_{\mathcal{I},t}\|_{2\to\infty} \le C\gamma$, we have
    \(
    \|L'_t\|_{2\to \infty} \le C\gamma\sqrt{\log_2(2N)}.
    \)
     For the bound on $\|R'_t\|_F$, observe that we may bound the number of instances of size $2^k$, counting both active and inactive ones, by $2^{-k}\abs{U_t}$, since they are all disjoint by property 2. This gives
     \begin{align*}
        \norm{R'_t}^2_F&=\sum_{\text{instance }\mathcal{I}}\norm{R_{\mathcal{I},t}}^2_F
        \leq \sum_{k=0}^{\floor{\log_2{\abs{U_t}}}}2^{-k}\abs{U_t}*2^k
        \leq \abs{U_t}\log_2({2\abs{U_t}}) \le |U_t|\log_2(2N).
    \end{align*}
    To finish the proof, we let our online factorization algorithm output the matrices $R_t = \frac{1}{\sqrt{\log_2(2N)}}R'_t$ and $L_t = \sqrt{\log_2(2N)}L'_t$.
\end{proof}

We now introduce our second blackbox transformation that allows one time deletion of columns of $Q_t$, where deletion means we zero out elements in the deleted column from all new rows of $Q_t$. We use a similar approach as the previous proof. First, we give one more definition.
\begin{definition}
    For a set $U$, we say that a sequence of subsets $\mathcal{U}=(U_0,\ldots,U_m)$ where $U_0, \ldots, U_m\subseteq U$ is semi-dynamic if for any time $i<j<k$, $$U_i\cap U_k\subseteq U_j.$$
    In other words, the set of times $t$ such that $x \in U_t$ forms a contiguous interval for each $x \in U$.
\end{definition}
This captures the setup where any element can be added to the universe at anytime, but it cannot return once being deleted. 

\begin{lemma}[Handling semi-dynamic inputs] \label{lm:revocable}
Let $Q$ be an $m\times N$ matrix with columns indexed by $U$. 
Suppose there is an online factorization algorithm that solves the $(\gamma,C)$-bounded online factorization with insertions problem for $Q$ and an insertion only sequence $\mathcal{U} = (U_0, U_1, \ldots)$, and at time $t$ computes an online factorization $L_t R_t = Q_t|_{\mathcal{U}}$. Then there exists another factorization algorithm that at time $t$, in addition to $q_t$, receives as input $U'_t\subseteq U_t$, where $\mathcal{U}'_t = (U'_0,U'_1, \ldots)$ is semi-dynamic such that $U_t = \bigcup_{s \le t } U'_s$, and either
\begin{itemize}
    \item asserts that $\gamma_2(Q_t) > \gamma$, or
    \item computes a factorization $L'_t R'_t = Q_t|_{\mathcal{U}'_t}$ such that 
    \begin{align*}       \norm{L'_t}_{2\rightarrow\infty}\leq\norm{L_t}_{2\rightarrow\infty}{\log{4N}};\ \ \ \ \norm{R'_t}_{F}\leq \norm{R_t}_{F}.
    \end{align*}
\end{itemize}
Furthermore, let $t^{\text{del}}(x)$ to be the smallest value of $t$ such that $x \in U'_{t-1}$ and $x \not \in U'_{t}$, or $\infty$ if there is no such $t$. Then for any $x$, 
    $s_{R_t'}(x)\leq s_{R_{t^{\text{del}}(x)}}(x).$
\end{lemma}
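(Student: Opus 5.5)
The plan is to build $R'_t$ only out of restricted and rescaled copies of the rows of $R_t$ that the given insertion-only algorithm produces, and to read off $L'_t$ from $\ell_t$. I will run the given algorithm on the insertion-only sequence $U_t=\bigcup_{s\le t}U'_s$, and assert $\gamma_2(Q_t)>\gamma$ exactly when it does. Since $q_t|_{U'_t}$ is $q_t|_{U_t}$ with the already-deleted coordinates zeroed out, writing $\ell_t R_t=\sum_r \ell_t(r)\, r$ over rows $r$ of $R_t$ gives
\[
q_t|_{U'_t}=\sum_r \ell_t(r)\, r|_{U'_t}.
\]
This holds because the rows of $R_t$ are supported on $U_t$. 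The task is therefore to make the restricted rows $r|_{U'_t}$ available as linear combinations of rows of $R'_t$. Three constraints apply: rows may only be appended, every row of $R'_t$ must be frozen at creation, and the Frobenius norm may only be spent once per entry up to the rescaling.

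To organize this I will index time by \emph{deletion epochs}. A new epoch starts whenever some element leaves $U'$, so there are at most $N$ epochs. I then use a Bentley--Saxe/binary-counter decomposition over epochs, as in Lemma~\ref{lm:insertable}. At the end of epoch $e$, for every $k$ with $2^k\mid e$, I append to $R'$ the block of rows of $R$ created during the last $2^k$ epochs, restricted to $U'$ at that moment and scaled by $1/\sqrt{\log_2(4N)}$. At any time, the rows of $R_t$ created before the current epoch are covered by the $O(\log N)$ maximal dyadic blocks of the binary expansion of the current epoch count. Rows created in the current epoch are appended immediately, restricted to the current $U'$; no deletion has happened since, so they are correct as they stand. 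Each $r|_{U'_t}$ is then a $\sqrt{\log_2(4N)}$-multiple of exactly one row of $R'_t$. Setting $\ell'_t$ accordingly, each $\ell'_t$ is a rescaled copy of $\ell_t$, so $\|\ell'_t\|_2\le \sqrt{\log_2(4N)}\,\|\ell_t\|_2$. Each original row appears in at most $\log_2(4N)$ blocks, and each copy is scaled by $1/\sqrt{\log_2(4N)}$, so $\|R'_t\|_F\le\|R_t\|_F$. Any remaining $\sqrt{\log}$ slack goes into the stated $\log 4N$ factor on $L'$.

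The column-norm claim $s_{R'_t}(x)\le s_{R_{t^{\mathrm{del}}(x)}}(x)$ follows from the restriction rule. After $x$ is deleted, every newly appended row is restricted to a set not containing $x$. So the $x$-column of $R'_t$ consists only of copies of entries of rows of $R_{t^{\mathrm{del}}(x)}$, each row appearing at most $\log_2(4N)$ times and scaled by $1/\sqrt{\log_2(4N)}$. The same counting as for the Frobenius bound gives the claim.

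The main obstacle is the covering step. A dyadic block built at the end of epoch $e$ was restricted to $U'$ at epoch $e$, but an element may be deleted later, so that block is no longer a valid copy of $r|_{U'_t}$. The decomposition must therefore be arranged so that every block used at time $t$ was built after the last deletion affecting it. I would first try reversing the roles: build at each epoch the blocks that end \emph{now}, i.e., suffixes of lengths $2^k$, recomputed when the counter rolls over. I would then verify, as in the proof of Lemma~\ref{lm:insertable}, that the retired blocks are simply given coefficient $0$ in $\ell'_t$, and that each original row still lies in only $O(\log N)$ blocks in total. If this fails, the fallback is to absorb an extra $\sqrt{\log N}$ into $L'$, which the stated $\log(4N)$ bound seems designed to allow.
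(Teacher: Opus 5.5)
Your proposal has a genuine gap, and it is exactly the step you flag as the main obstacle. Your scheme keeps every row of $R'$ as a restriction of a row $r$ of $R$ to a \emph{surviving} set (the current $U'$ when the row is appended), and asks that $r|_{U'_t}$ be a multiple of a single row of $R'_t$. Such a row goes stale at the next deletion that touches $\mathrm{supp}(r)$. Since rows are frozen and future deletions are unknown, a fresh copy of $r$ is needed after every such deletion. Grouping rows by creation epoch does not change this, whether as Bentley--Saxe blocks or as recomputed suffix blocks: staleness is caused by deletions \emph{after} a block is built, so the oldest rows must be recopied at essentially every epoch. Concretely, take $r$ with full support and delete elements one at a time. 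You then need copies of $r$ restricted to sets of sizes $N-1, N-2, \dots$, whose total squared norm is of order $N\|r\|_2^2$. Keeping $\|R'_t\|_F\le\|R_t\|_F$ would force scaling $R'$ down and $L'$ up by about $\sqrt{N}$. No polylogarithmic slack, in particular your fallback of an extra $\sqrt{\log N}$ on $L'$, repairs this.

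The missing idea is to correct by \emph{subtraction}, using restrictions to \emph{deleted} sets. These never go stale, because deleted elements never return: the set of deleted elements is insertion-only, so the binary-counter decomposition should be run over deletion batches, not over creation epochs. When $r$ is created at time $t_0$, the paper appends $\bar r = r|_{U'_{t_0}}$ once. Let $t_1<\dots<t_a$ be the later deletion times, with batches $\Delta_1,\dots,\Delta_a$. At time $t_i$ the paper appends $-\bar r|_{S_i}$, where $S_i=\Delta_{i-f(i)+1}\cup\dots\cup\Delta_i$ and $f(i)$ is the largest power of two dividing $i$. The binary expansion of $a$ writes $U'_{t_0}\setminus U'_t$ as a disjoint union of the sets $S_{2^k\lfloor a/2^k\rfloor}$ over the set bits $k$ of $a$. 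Hence $r|_{U'_t}=\bar r-\sum_k \bar r|_{S_{2^k\lfloor a/2^k\rfloor}}$. Moreover, each element lies in at most $\log_2(2N)$ of the $S_i$, so each entry of $\bar r$ is replicated at most $\log_2(4N)$ times. After scaling $R'$ by $1/\sqrt{\log_2(4N)}$, this gives $\|R'_t\|_F\le\|R_t\|_F$. It also gives the column-norm claim, since every nonzero entry in column $x$ is a copy of an entry of a row created before $t^{\mathrm{del}}(x)$. Finally, each coefficient of $\ell_t$ is now spread over up to $\log_2(4N)$ rows and scaled by $\sqrt{\log_2(4N)}$, giving $\|\ell'_t\|_2\le\log_2(4N)\,\|\ell_t\|_2$. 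The full $\log 4N$ factor in the statement is exactly this cost, not slack left over from a $\sqrt{\log}$ bound.
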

\begin{proof}
    The algorithm maintains two matrices $\bar{R}$ and $R^*$ online, both starting empty. Then the matrix $R'$ is a function of $\bar{R}$ and $R^*$, and, in fact, contains the union of the rows of the two matrices, rescaled to satisfy the bound on $\|R'\|_F$ and on the individual $\ell_2$ norms of columns. We run the online algorithm for online average factorization with insertions on the insertion-only sequence $\mathcal{U}$. At time $t$, after the algorithm has processed $q_t$ and $U_t$, for each row $r$ it adds to $R_{t-1}$ to form $R_t$, we append to $\bar{R}_{t-1}$ a row $\bar{r}=r|_{U'_t}$ to form $\bar{R}_t$. If the algorithm, instead of adding rows to $R_{t-1}$, asserts that $\gamma_2(Q_t)>\gamma$, then we make the same assertion and terminate. 

    Next we describe what rows we add to $R^*_t$. 
    Fix any row $\bar r$ of $\bar{R}_t$, and let $t_0$ be the time when $\bar r$ was first added to $\bar{R}_t$. 
     Define $t_1,\ldots, t_a$ to be the times $t_0 < t'\le t$ when $U'_{t'-1} \setminus U'_{t'} \neq \emptyset$, i.e., when some elements were deleted. Let $\Delta_1,\ldots, \Delta_a$ be the sets $\Delta_i = \{x \in \uni: t^{\text{del}}(x) = t_i\}$ of elements deleted at time $t_i$. 
    Define sets $S_1,\ldots,S_a$ by 
    \[S_i:=\bigcup_{j=i+1-f(i)}^i \Delta_j,\]
    where $f(i)$ is the largest power of 2 that divides $i$ (thus the function value is at least 1). I.e., $S_i$ consists of all elements deleted at the last $f(i)$ time steps when a deletion occurred, up to time step $t_i$. We need the following two key properties of this construction.
    \begin{enumerate}
        \item We have 
        \begin{equation}\label{eq:delete-decomp}
        U'_{t_0} \setminus U'_t =  \bigcup_{i = 1}^a \Delta_a =  \bigcup_{k: \text{ bin}(a)_k=1} S_{2^k \cdot \lfloor a/2^k \rfloor},
        \end{equation}
        and, moreover, the sets $S_{2^k \cdot \lfloor a/2^k \rfloor}$, where $k$ ranges over indices s.t.~$\text{bin}(a)_k=1$, are disjoint.  The first equality follows immediately from the definition of the $\Delta_i$ because $U'_{t_0} \setminus U'_t$ consists of elements in $U'_{t_0}$ that were deleted by time $t$. The second equality is easy to see from the binary expansion of $a$.

        \item Any element $x$ of $U'_{t_0}$ is contained in at most $\log_2(2N)$ of the sets $S_i$. Suppose that $t^\text{del}(x) = t_j$, and note that $x \in S_i$ if and only if $i = 2^k \cdot \lceil j/2^k \rceil$ for some integer $k \ge 0$. There are at most $1 + \log_2 a$ such choices of $i$, so $x$ is contained in at most $\log_2(2a) \le \log_2(2N)$ of the $S_i$, where we use the observation that $a \le N$ since at each time step $t_i$ some element of $U$ gets deleted, and no element is deleted twice.  
    \end{enumerate}
    We now define $R^*_t$ to contain $-\bar{r}|_{S_i}$ for all $i \in [a]$. Note that this operation is performed for every row $\bar r$ in $\bar{R}_t$ independently, i.e., each row has its own set of times $t_i$ and corresponding $S_i$ sets. It is easy to check that $R^*_t$ thus defined can be maintained online.

    The union of the rows of $\frac{1}{\sqrt{\log_2(4N)}}\bar{R}_t$ and $\frac{1}{\sqrt{\log_2(4N)}}R^*_t$ forms the matrix $R'_t$. To see that one can compute the rows of $Q_t|_{\mathcal{U}_t}$ as linear combinations of the rows of $R'$, let ${r}$ be some row in ${R}_t$, first added at time $t_0$, and take the corresponding row $\bar{r} = r|_{U'_{t_0}}$ of $\bar{R}_t$; defining $t_1, \ldots, t_a$ and $S_1, \ldots, S_a$ as above, by \eqref{eq:delete-decomp} we have  
    \[{r}|_{U'_{t}}= \bar{r}-\sum_{\text{bin}(a)_k=1}\bar{r}|_{S_{2^k \cdot \lfloor a/2^k\rfloor}}.\]
    Here we used the fact that $U'_t \setminus U'_{t_0}$ contains elements $x$ of the universe that were inserted after time $t_0$, i.e., $x \in U_t \setminus U_{t_0}$, and, by the definition of online factorization with insertions, $r$ is supported only on $U_{t_0}$.
    Now, to construct $\ell'_t$ on receiving $q_t$, we take the row $\ell_t$ in the factorization $L_t R_t = Q|_{\mathcal{U}}$, and replace each coordinate of it that corresponds to some row $r$ of $R_t$, with a coordinate (of the same value) corresponding to the row $\bar{r}$ in $\bar{R}_t$, and coordinates (also of the same value) corresponding to  $-\bar{r}|_{S_{2^k \cdot \lfloor a/2^k\rfloor}}$ for each $k$ such that $\mathrm{bin}(a)_k = 1$. Then we scale the resulting row by $\sqrt{\log_2(4N)}$ to get $\ell'_t$.
    
    Since each coordinate of $\ell_t$ gets replaced by at most $1+\log_2(2N)$ copies of it, scaled up by $\sqrt{\log_2(4N)}$,  to form $\ell'_t$, we get the bound on $\norm{L'_t}_{2\to\infty}$. For each row $\bar{r}$ of $\bar{R}$, each coordinate in $\bar{r}$ is replicated at most $1 + \log_2(2N)$ times in $R'_t$, possibly negated, and scaled down by $\sqrt{\log_2(4N)}$. This gives the bound on $\norm{R'_t}_F$.

    By how our algorithm is defined, it is clear that 
    once $t \ge t^{\text{del}}(x)$, any rows added into $\bar{R}_t$, can only have 0 in the column indexed by $x$. Since $R^*_t$ only contains copies of rows in $\bar{R}_t$ with some entries zeroed out, this proves the final claim.
\end{proof}
The transformation in Lemma~\ref{lm:revocable} guarantees that once a column is deleted (its index $x$ stops appearing in $U'_t$), the norm of the corresponding column of $R'_t$ is bounded from above for all remaining time steps.

With the two transformations above we can handle any semi-dynamic sequence. We are now ready to implement an online version of the reduction from factorization to average factorization.


\begin{lemma}[Kicking elements out]\label{lm:simple_reduction}
Let $Q$ be an $m\times N$ matrix with columns indexed by $U$. Suppose that for any fixed $U'$ there is an online factorization algorithm that solves the $(\gamma,C)$-bounded online average factorization problem for $Q|_{U'}$. Then there exists another online factorization algorithm that solves the $(\gamma,C')$-bounded online factorization problem with $C' \le C\sqrt{2}\log(4N)^3$.
\end{lemma}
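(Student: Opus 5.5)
We will build a hierarchy of $K = \lceil\log_2(2N)\rceil$ levels of instances, implementing online the offline recursion of \cite{MuthukrishnanN12}. First, apply Lemma~\ref{lm:insertable} to the assumed algorithm for $Q|_{U'}$. This gives a $(\gamma, C\log_2(2N))$-bounded online average factorization algorithm with insertions. Then apply Lemma~\ref{lm:revocable} to get an algorithm that handles semi-dynamic sequences. Its left matrix has an extra factor of $\log(4N)$, its Frobenius norm is bounded by $\sqrt{|U_t|}$, where $U_t$ is the union of all elements ever inserted, and each column's $\ell_2$ norm is frozen once that column is deleted. We run one such semi-dynamic instance $\mathcal{A}_j$ for each level $j=0,1,\dots,K$.

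Elements are managed as follows. Level $0$ is given the full universe $U$ from the start, so $U^{(0)}_t = U$ before any deletions. At time $t$ we feed $q_t$ to every level $j$ and let $R^{(j)}_t$ be the current right matrix of $\mathcal{A}_j$. Whenever an element $x$ that is currently live at level $j$ has $s_{R^{(j)}_t}(x) > \sqrt{2}$, we delete it from level $j$ and insert it into level $j+1$. The order of operations matters here: the decision uses the norm \emph{after} processing row $t$, so the deletion takes effect starting at row $t+1$. To keep the factorization of $q_t$ valid, we instead perform the kick before $\mathcal{A}_j$ outputs $\ell_t$, and rerun the step if necessary. Each element $x$ thus has a contiguous live interval at each level, so the sequences are semi-dynamic. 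By the last property in Lemma~\ref{lm:revocable}, after deletion the column norm of $x$ at level $j$ stays at most its value at deletion time. That value is at most $\sqrt{2}$ plus the increase caused by the one row in which it crossed, which we must control.

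Combining the levels works the same way as in the offline argument. We set $R_t$ to be the vertical concatenation of all $R^{(j)}_t$ and $\ell_t=(\ell^{(0)}_t,\dots,\ell^{(K)}_t)$. Then $\ell_t R_t=\sum_j q_t|_{U^{(j)}_t}=q_t$, because the live sets at time $t$ partition $U$. The key counting bound is that the set $U^{(j)}$ of elements ever inserted into level $j$ has size at most $N/2^j$. Indeed, every element kicked out of level $j$ has column norm above $\sqrt{2}$ in $R^{(j)}$, while $\|R^{(j)}_t\|_F^2 \le |U^{(j)}|$; Markov's inequality then gives $|U^{(j+1)}| \le |U^{(j)}|/2$. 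Consequently levels beyond $\log_2 N$ are empty, no element is ever kicked out of the last level, and every column of every $R^{(j)}$ has norm $O(1)$. Since each element lives in at most $K+1$ levels, $\|R_t\|_{1\to2}^2 = O(\log N)$. Also $\|L_t\|_{2\to\infty}^2 \le \sum_j (C\gamma\log_2(2N)\log(4N))^2 \lesssim C^2\gamma^2\log(4N)^5$. We then rescale $R$ down by $\sqrt{2\log(4N)}$ and $L$ up by the same factor, which gives $C' \lesssim C\sqrt2\log(4N)^3$. If any level asserts $\gamma_2(Q_t|_{V})>\gamma$, we propagate the assertion, which is valid by monotonicity of $\gamma_2$ under restriction.

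The main obstacle is the timing of a kick. A column's norm can jump arbitrarily high within a single step, since a new row of $R^{(j)}$ added at time $t$ may be large on $x$. To handle this, we would process time $t$ at level $j$ tentatively. If some live columns now exceed $\sqrt2$, we mark them deleted \emph{for row $t$ itself}, so that $\mathcal{A}_j$ factors $q_t|_{U^{(j)}_t}$ with those elements excluded. The excessive mass then lands in $\bar r$ but is cancelled by the $R^*$ correction rows, using the fact that in Lemma~\ref{lm:revocable} rows are restricted to the live set. Alternatively, we accept a single overshoot and bound it through the Frobenius bound, which only uses $\|R^{(j)}\|_F^2\le|U^{(j)}|$ and is unaffected by the overshoot. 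If a cleaner route is needed, we would rely only on the Frobenius-based Markov count together with the frozen-norm property, and bound the deleted column's norm by the norm just before the crossing row plus that row's restriction, which is zeroed out at level $j$ by the deletion.
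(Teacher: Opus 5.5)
Your skeleton matches the paper's: zones, Lemma~\ref{lm:insertable} followed by Lemma~\ref{lm:revocable} inside each zone, kicking at squared column norm $2$, Markov halving, then concatenation and rescaling. However, the per-level claim ``every column of every $R^{(j)}$ has norm $O(1)$'' does not follow from what you wrote.

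The first problem is which matrix you monitor. You test $s_{R^{(j)}_t}(x)>\sqrt2$ on the \emph{output} matrix of the semi-dynamic instance $\mathcal{A}_j$, and you read the last property of Lemma~\ref{lm:revocable} as saying this norm is frozen at deletion. The lemma says something different. It bounds the output column norm by the column norm of the \emph{insertion-only} matrix $R$ that $\mathcal{A}_j$ runs internally, and these two norms are not the same:
\begin{itemize}
\item While $x$ is live, no correction row $-\bar r|_{S_i}$ involves $x$, so $s_{R'}(x)^2=s_R(x)^2/\log_2(4N)$.
\item After $x$ is deleted, every later deletion event $i$ with $x\in S_i$ adds another copy of $-\bar r_x$, up to $\log_2(2N)$ of them. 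So the output norm of a deleted column keeps growing, up to $s_R(x)$.
\end{itemize}
With your threshold $\sqrt2$ on the output matrix, a kicked column can end with squared norm as large as $2\log_2(4N)$. You then only get $\|R_t\|_{1\to2}^2=O(\log^2N)$ and $C'=O(C\log^{7/2}N)$. The quantity to threshold is the insertion-only matrix $R^{\mathrm{ins}}_{j,t}$. Markov should also be applied to it, using that its column norms are monotone and $\|R^{\mathrm{ins}}_{j,t}\|_F^2\le|U^{\mathrm{ins}}_{j,t}|$.

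The second problem is timing. You correctly identify the obstacle: one new row can put squared mass as large as $|U^{(j)}_t|$ on a single column. But you never commit to a resolution, and two of your suggestions are wrong:
\begin{itemize}
\item The Frobenius bound cannot absorb an overshoot, since it only gives $s(x)^2\le|U^{(j)}_t|$.
\item The crossing row's mass does not ``land in $\bar r$ and get cancelled by $R^*$''. In Lemma~\ref{lm:revocable}, a row created at step $t$ is restricted to $U'_t$ from the outset. In any case, cancellation in $\ell_tR_t$ cannot reduce a column norm, which sums the squared entries of $\bar r$ and of the correction rows.
\end{itemize}

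The paper's proof fixes both problems with one ordering inside each step $t$ and zone $\zeta$:
\begin{itemize}
\item The insertion-only algorithm, whose universe never loses $x$, processes $q_t$ first.
\item Every live $x$ with $s_{R^{\mathrm{ins}}_{\zeta,t}}(x)^2>2$ is removed from $U_{\zeta,t}$.
\item Only then does the wrapper of Lemma~\ref{lm:revocable} process step $t$. The rows created at the crossing step therefore vanish on $x$, and the final column norm of $x$ is at most its insertion-only norm just before crossing, i.e.\ at most $\sqrt2$.
\item Kicked elements are inserted into zone $\zeta+1$ in the same step, with zones processed in increasing order. So the live sets at step $t$ still partition $U$ and $q_t$ is fully factored.
\end{itemize}
Your suggestion to ``mark them deleted for row $t$ itself'' is reaching for this. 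It has to be stated this way, with the decision taken on $R^{\mathrm{ins}}$, for $C'\le C\sqrt2\log(4N)^3$ to follow.
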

\begin{proof}
To describe our algorithm, we define several \textbf{zones}, numbered from $0$ to $K = \lfloor\log_2(N)\rfloor$. A zone with number $\zeta$ corresponds to a semi-dynamic factorization problem with sequence of subsets $\mathcal{U}_\zeta = (U_{\zeta,0}, U_{\zeta,1}, \ldots)$ of $U$, decided by our algorithm. We ensure that, at each time step $t$, the sets $U_{\zeta,t}$ over different $\zeta$ are disjoint, and that their union is $U$. At the beginning of the algorithm, we have $U_{0,0}=U$, and $U_{\zeta,0} = \emptyset$ for all $\zeta > 0$. In each zone, we apply the transformation in Lemma~\ref{lm:insertable} to the given algorithm to obtain an insertion-only factorization algorithm with a sequence $\mathcal{U}^{\text{ins}}_\zeta = (U_{\zeta,0}^{\text{ins}} = U_{\zeta,0}, U^{\text{ins}}_{\zeta,1},\ldots)$. Then we apply Lemma \ref{lm:revocable} to the insertion only algorithm to obtain a semi-dynamic algorithm. \textbf{Both} algorithms will be run for each zone, with the factorization computed by the insertion-only algorithm deciding the set $U_{\zeta,t}$ for the semi-dynamic algorithm. If the insertion-only algorithm asserts that $\gamma(Q_t)>\gamma$, then we terminate and make the same assertion. Note that in the proof of Lemma~\ref{lm:revocable} we only assert $\gamma(Q_t)>\gamma$ only if the insertion-only algorithm does that.

A key process in this algorithm is ``kicking out'' elements from one zone to the next. If  $x \in U_{\zeta,t}$ is such that $s_{R^{\text{ins}}_{\zeta,t}}(x)^2 > 2$, where $R^{\text{ins}}_{\zeta,t}$ is the matrix  computed by the insertion-only algorithm, then we remove $x$ from $U_{\zeta,t}$ and insert it into $U_{\zeta+1,t}$ (and $U^{\text{ins}}_{\zeta+1,t}$). The precise algorithm is described below:
\begin{figure}[H]
    \begin{framed}
    \begin{algorithmic}
    \INDSTATE[0]Receive input $q_t$.
    \INDSTATE[0]Initialize $S\leftarrow \emptyset$, a buffer for kicking out elements.
    \INDSTATE[0]Initialize $\ell'_{t}, R'_{t}$ to be empty.
    \INDSTATE[0]\textbf{For each} zone $\zeta=0,1, \ldots, K$  \textbf{do}:
    \INDSTATE[1]$U^{ins}_{\zeta, t} \leftarrow  U^{ins}_{\zeta, t-1}\cup S$
    \INDSTATE[1]$U_{\zeta, t} \leftarrow  U_{\zeta, t-1}\cup S$
    \INDSTATE[1]$S\leftarrow \emptyset$
    
    \INDSTATE[1]
    \INDSTATE[1]Obtain $R^{ins}_{\zeta, t}$ from the algorithm in Lemma~\ref{lm:insertable} on input $q_t$ and $U^{ins}_{\zeta, t}$.
    \INDSTATE[1]If the algorithm asserts that $\gamma_2(Q_t) > \gamma$, we make this assertion too, and terminate.
    \INDSTATE[1]\textbf{For} each $x \in U_{\zeta,t}$ s.t. $s_{R^{\text{ins}}_{\zeta,t}}(x)^2 > 2$:
    \INDSTATE[2]$U_{\zeta, t} \leftarrow U_{\zeta,t}\setminus \{x\}$ 
    \INDSTATE[2]$S\leftarrow S \cup \{x\}$.
    \INDSTATE[1]
    \INDSTATE[1]Run the algorithm in Lemma~\ref{lm:revocable} on $q_t$ and $U_{\zeta, t}$ to obtain $\ell_{\zeta,t}R_{\zeta, t}=q_t|_{U_{\zeta, t}}$
    \INDSTATE[1] $\ell'_t\leftarrow\begin{pmatrix}
\ell'_t & \ell_{\zeta,t}
\end{pmatrix}$
    \INDSTATE[1]$R_t'\leftarrow\begin{pmatrix}
R_t'\\ R_{\zeta,t}
\end{pmatrix}$
    \INDSTATE[0]\textbf{End}
    \end{algorithmic}
    \end{framed}
    \vspace{-5mm}\caption{Algorithm within each zone}
    \label{fig:kick-out}
    \end{figure}

 By Lemma \ref{lm:insertable}, for any zone $\zeta$, we have
\[
\sum_{x \in U^{\text{ins}}_{\zeta,t}}s_{R^{ins}_{\zeta, t}}(x)^2 = \norm{R^{ins}_{\zeta, t}}^2_F\leq |U^{\text{ins}}_{\zeta,t}|.
\]
Therefore, by Markov's inequality, less than half of the elements of $U^{\text{ins}}_{\zeta,t}$ will be ``kicked out'', i.e., inserted into $U^{\text{ins}}_{\zeta+1,t}$. Thus, at any time step $t$, we have $|U^{\text{ins}}_{\zeta,t}| < 2^{-\zeta} N$, and, for the final zone $\zeta = K = \lfloor\log_2(N)\rfloor$ there will be no kicked out elements.

It is clear that the sets $U_{\zeta,t}$ over different $\zeta$ are disjoint, because an element is inserted into $U_{\zeta+1,t}$ only after being deleted from $U_{\zeta,t}$. By the argument above, we also have that $\bigcup_{\zeta = 0}^KU_{\zeta,t} = U$. Without loss of generality, we can again assume that all $R^{\text{ins}}_{\zeta, t}$ and $R_{\zeta, t}$ matrices have $N$ columns indexed by $U$ by padding with all-zeroes columns. Since the factorization $L'_t R'_t$ we compute equals $\sum_{\zeta = 0}^K Q_t|_{U_{\zeta,t}}$, this implies that $L'_t R'_t = Q_t$, i.e., the factorization is correct.


Since, for any zone $\zeta$, we delete any $x$ at the first time $t$ for which $s_{R^{\text{ins}}_{\zeta,t}}(x)^2 > 2$, we have that
 \begin{align*}
     \norm{R_{\zeta,t}}^2_{1\rightarrow2}&= \max_x{s^2_{R_{\zeta,t}}(x)}
     \leq \max_x{s^2_{R^{ins}_{\zeta,t^{del}(x)}}(x)}
     \leq 2,
 \end{align*}
where the first inequality is by Lemma \ref{lm:revocable}.

Now we are ready to finish the proof. For any $t$, we have
\begin{align*}
    \norm{R'_t}_{1\rightarrow2}^2&\leq \sum_{\zeta=0}^K\norm{R'_{\zeta,t}}^2_{1\rightarrow2}\leq 2\log_2{(2N)}\\
    \norm{L'_t}_{2\rightarrow\infty}^2
    &\leq \sum_{\zeta=0}^K\norm{L'_{\zeta,t}}^2_{2\rightarrow\infty}
   \leq C^2 \gamma^2 \log_2{(4\abs{U})}^5,
\end{align*}
where we used that $1+K \le \log_2(2N)$, and, in the second line, we used Lemmas~\ref{lm:insertable}~and~\ref{lm:revocable} to get that $\norm{L'_{\zeta,t}}_{2\rightarrow\infty} \le C\gamma \log(4N)^2$. Scaling $R'_t$ by $\frac{1}{\sqrt{2\log(2N)}}$ and $L'_t$ by $\sqrt{2\log(2N)}$ finishes the proof.
\end{proof}

Lemma~\ref{lm:fact-main} is now implied by Lemmas~\ref{lm:online_F_transpose_factorization}~and~\ref{lm:simple_reduction} with $C = \sqrt{6}\log(4N)^3$. As mentioned above, Lemmas~\ref{lm:fact-doubling}~and~\ref{lm:fact-main} imply Theorem~\ref{thm:fact-main}. Note that for each new row of $Q$ that arrives, the reduction requires amortized $\polylog(N)$ time on top of invoking the online average factorization algorithm's update procedure amortized $\polylog(N)$ times. Thus the overall algorithm requires amortized $\polylog(N)$ time to process each query assuming the average factorization algorithm is implemented using binary search.

\section{Lower Bounds}\label{section:lowerbounds}

In this section we prove two related lower bounds for two of the online problems we have studied. First we show a lower bound for online factorization algorithms in the row arrival model. The lower bound implies that the factor $\log_2(\gamma_2(Q)/\|q_1\|_\infty)^{\frac12 + c}$ in Theorem~\ref{thm:fact-main} cannot be improved significantly. Whether the other logarithmic factors  can be improved is an open question. Next we use the same lower bound construction to show a similar lower bound for online query release. Note that both lower bounds are for oblivious (i.e., non-adaptive) adversaries.

The main idea behind the lower bound for online factorization is to exploit the fact that a competitive online factorization needs to achieve value competitive against $\gamma_2(Q_t)$ for all intermediate time steps $t$. In particular, if $C$ is the competitiveness ratio, then we must have that $\|\ell_t\|_2 \le C \gamma_2(Q_t)$, which can be a much stronger requirement than $\|\ell_t\|_2 \le C \gamma_2(Q)$. We exploit this to show that, for an appropriate choice of $Q$, this forces $C = \Omega(\sqrt{\log(\gamma_2(Q)/\|q_1\|_\infty)})$.

\begin{theorem}\label{thm:fact-lb}
    For all positive integers $N$, there exists an $N\times N$ matrix $Q$ with $\|q_1\|_\infty = 1$, and $\log_2(\gamma_2(Q)) = \Theta(N)$, such that, for any online factorization algorithm for $Q$ in the row arrival model, there exists a time step $t$ for which the factorization value achieved by the algorithm by step $t$ is at least $\sqrt{\frac12 \log_2(\gamma_2(Q))}\cdot\gamma_2(Q_t)$.
\end{theorem}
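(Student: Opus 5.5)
The plan is to exploit that the online algorithm must be competitive against $\gamma_2(Q_t)$ at \emph{every} prefix, while the column budget $\|R_t\|_{1\to 2}\le 1$ is shared across all times. I would build $Q$ so that one distinguished column, say column $1$, is needed at every geometric scale $2^t$, $t=1,\dots,N$, each time along a ``fresh'' direction in the row space of $R$. Competitiveness at time $t$ then forces the algorithm to put squared mass about $1/(C^2\cdot \mathrm{poly})$ on column $1$ in new rows of $R$. Summing over the $N$ scales against the budget $s_{R_t}(1)^2\le 1$ gives $C^2\gtrsim N\approx\log_2\gamma_2(Q)$.

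\textbf{Step 1 (construction).} I would take $Q$ to be $N\times N$ with rows of geometrically increasing magnitude, for instance $q_1=e_1$ and $q_t = 2^{t-1}(e_1 + e_t)$, possibly with additional sign patterns on the off-diagonal columns. The purpose of the extra columns is to make every earlier row of $R$ useless for reproducing $q_t$. With this choice $\|q_1\|_\infty=1$, and using only the facts about $\gamma_2$ listed in the preliminaries (submatrix monotonicity, and $\gamma_2$ being at most the largest column norm when $L=I$), I would check that $\gamma_2(Q_t)=\Theta(2^{t})$. In particular $\log_2\gamma_2(Q)=\Theta(N)$.

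\textbf{Step 2 (one-step lower bound).} Write $r_1^{(t)}$ for column $1$ of $R_t$. Since $\ell_t r_1^{(t)} = q_t(1)=2^{t-1}$ and $\|\ell_t\|_2\le C\gamma_2(Q_t)\approx C2^{t}$, Cauchy--Schwarz shows that the projection of $r_1^{(t)}$ onto $\ell_t/\|\ell_t\|$ has length at least $\Omega(1/C)$. The key claim is that $\ell_t$ must be essentially supported on rows added at time $t$. The other entries of $q_t$ are tuned (which is the reason for the extra columns) so that any reuse of an old row $r$ with coefficient $c$ produces an error in some other column $j<t$, and this error can only be cancelled at cost exceeding the norm budget. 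A lower bound of this kind, in the spirit of Lemma~\ref{lm:fact-psd}, reads $\gamma^2R^TR\succeq q_t^Tq_t$ restricted to the complement of the span of old rows. Concretely, I would show $\sum_{\text{rows } r \text{ added at } t} r(1)^2 \ge \frac{1}{2C^2}$.

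\textbf{Step 3 (summing).} Rows are never removed, so $1\ge s_{R_N}(1)^2=\sum_t\sum_{\text{new rows at }t} r(1)^2\ge \frac{N}{2C^2}$. This gives $C\ge\sqrt{N/2}$, which is $\sqrt{\frac12\log_2\gamma_2(Q)}$ once Step 1 is normalized so that $\log_2\gamma_2(Q)\le N$. Finally, I would pick the time $t$ that witnesses the failure.

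The main obstacle is Step 2. Old rows of $R$ can in principle be reused with large coefficients, since their small column-$1$ entries are amplified by $2^t$. The construction must make such reuse costly. My first attempt would choose the entries of $q_t$ in columns $2,\dots,t$ so that $q_t$ is orthogonal in an appropriate weighted sense to the earlier rows $q_s$. I would then argue via the PSD formulation $X_t\succeq q_t^Tq_t$, comparing the increments $X_t-X_{t-1}$ to derive a trace lower bound on column $1$ of the increment. If this fails, I would fall back to a recursive, Hadamard-like block construction.
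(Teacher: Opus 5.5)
There is a genuine gap, and it is in Step~2, as you suspected. The mechanism itself is false, not just unproved. You need that every competitive algorithm must add, at every step $t$, new rows with $\sum r(1)^2\ge\frac{1}{2C^2}$. The trivial algorithm $R_0=I$, $\ell_t=q_t$ never adds a row, yet its ratio is at most $\sqrt N$, because $\|q_t\|_2\le\sqrt N\|q_t\|_\infty\le\sqrt N\gamma_2(Q_t)$. More generally, column~1 can be served forever by one fixed row $e_1$ in $R_0$, adding only $q_t(1)^2\le\gamma_2(Q_t)^2$ to $\|\ell_t\|_2^2$. So charging against the budget $s_R(1)^2\le 1$ of a single column, and tracking only freshly added rows, is the wrong accounting. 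Tuning extra columns so that reuse of old rows is costly cannot prevent the algorithm from pre-loading rows meant to be reused. Your concrete Step~1 matrix already fails this way. Take $R_0=(e_1)$, add the row $e_t$ at time $t$, and put $\ell_t=2^{t-1}(1,1)$ on these two rows. Then $\|R_t\|_{1\to2}=1$ and $\|\ell_t\|_2=\sqrt2\,2^{t-1}\le\sqrt2\,\gamma_2(Q_t)$. This is a $\sqrt2$-competitive online factorization, so that $Q$ cannot witness the theorem.

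The missing idea is to compare the algorithm's final output with an offline factorization of a fixed hard matrix. No per-step or per-column bookkeeping is needed. Take Hadamard rows $h_t$ and set $q_t=2^{(t-1)/2}h_t$, so $\|q_1\|_\infty=1$ and $\gamma_2(Q_t)\le\|Q_t\|_{1\to2}=\sqrt{2^t-1}\le 2^{t/2}$. Prefix-competitiveness with ratio $C$ gives $\|\ell_t\|_2\le C2^{t/2}$. Let $R_N$ be the final right matrix, so $\|R_N\|_{1\to2}\le 1$. The rescaled rows $\tilde\ell_t=2^{-(t-1)/2}\ell_t$, padded with zeros, satisfy $\tilde\ell_tR_N=h_t$ and $\|\tilde\ell_t\|_2\le\sqrt2C$. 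This is an offline factorization of $H$, so $\sqrt2C\ge\gamma_2(H)\ge\sqrt N$, where the last inequality uses \eqref{eq:sdp-duality} with $y_t=1/N$. Old rows may be reused arbitrarily here. The geometric growth of the scales is what converts prefix-competitiveness into a uniform bound on the rescaled rows. The hardness comes from the total budget over all $N$ columns (through $\gamma_F(H^T)$), not from any single column. Since $\log_2\gamma_2(Q)\le N/2$, this gives $C\ge\sqrt{N/2}\ge\sqrt{\log_2\gamma_2(Q)}$.
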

\begin{proof}
    We will prove the theorem for any $N$ that is a (non-negative) power of $2$, and will argue that the online factorization algorithm achieves value at least $\sqrt{\log_2(\gamma_2(Q))}\gamma_2(Q_t)$. The statement for all $N$ follows by padding $Q$ with all-zeros rows and columns until the next power of $2$.

    For $N$ a power of $2$, let $H$ be a Hadamard matrix, and let $h_t$ be the $t$-th row of $H$. Define $Q$ so that $q_t = 2^{(t-1)/2} h_t$. We have $\gamma_2(Q_t) \le \|Q_t\|_{1\to 2} = \sqrt{2^t-1} \le 2^{t/2}$. In the other direction, $\gamma_2(Q_t) \ge \|q_t\|_\infty = 2^{(t-1)/2}$. Together, these inequalities give $\gamma_2(Q_t) = \Theta(2^{t/2})$.
    
    Suppose that there is an online factorization algorithm that achieves value $C\gamma_2(Q_t) \le C 2^{t/2}$ by time step $t$ for every $t \in [N]$. Take $R = R_N$ to be the final right matrix computed by the algorithm, and define a matrix $\tilde{L}$ with rows $\tilde{\ell}_t = 2^{-(t-1)/2}\ell_t$, suitably padded with zeros so that the multiplication $\tilde{\ell}_t R$ is well-defined. By assumption, $\|\tilde{\ell}_t\|_2 \le 2^{-(t-1)/2}\cdot C2^{t/2} = \sqrt{2}C$, so $\norm{\tilde{L}}_{2\to\infty} \le \sqrt{2}C$. Notice also that $\tilde{\ell}_t R = h_t$, so the factorization $\tilde{L}R = H$ certifies that $\gamma_2(H) \le \sqrt{2}{C}$. On the other hand, it is easy to see that $\gamma_2(H) \ge \sqrt{N}$. For example, using \eqref{eq:sdp-duality} with $y_1 = \ldots = y_N = \frac{1}{N}$, we have
    \[
    \gamma_2(H) \ge \frac{1}{\sqrt{N}}\gamma_F(H^T)
    \ge \frac{1}{N} \tr\left(\sqrt{H^T H}\right)
    = \frac1N \tr(\sqrt{N} I)= \sqrt{N}.
    \]
    Therefore, $C \ge \sqrt{\frac{N}{2}} \ge \sqrt{\log_2\gamma_2(Q)}$.
\end{proof}

Essentially the same construction gives a similar lower bound for interactive private algorithms answering online queries.
\begin{theorem}\label{thm:priv-lb}
    For all positive integers $N$, there exists a sequence of statistical queries $\mathcal{Q} = (q_1, \ldots, q_N)$ such that the following holds. Suppose that there exists an $(\varepsilon,\delta)$-differentially private interactive algorithm that receives $\mathcal Q$ online, and produces answers $a_1, \ldots, a_N$, so that, for every dataset $x$ of size $n$, with probability at least $\frac23$, 
    \[
    |q_t(x) - a_t| \le \frac{C \gamma_2(Q_t)}{\varepsilon n}.
    \]
    Then, for all large enough $n$ it must hold that $C=\Omega\left({\sqrt{\log_2(\gamma_2(Q))}}\right)$. Moreover, the query matrix $Q$ of $\mathcal{Q}$ satisfies $\|q_t\|_\infty =1$, and $\log_2(\gamma_2(Q)) = \Theta({N})$.
\end{theorem}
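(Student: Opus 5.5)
We reuse the Hadamard construction from the proof of Theorem~\ref{thm:fact-lb}. Take $N$ a power of $2$ (otherwise pad with all-zero columns and rows) and let $H$ be the $N\times N$ Hadamard matrix with rows $h_t\in\{\pm1\}^N$. Define statistical queries on a universe $U$ of size $N$ by $q_t=h_t$, so that $\|q_t\|_\infty=1$ and the rows are not rescaled. The scaling that drove the factorization lower bound will instead come from the accuracy requirement. The requirement must be tight for early $t$ and loose for late $t$, which a fixed matrix with unit rows cannot provide, since $\gamma_2$ of a prefix of $H$ is about $\sqrt{t}$. So we modify the queries: $q_t=h_t$ on a core block, plus a ``padding'' part chosen so that $\gamma_2(Q_t)\approx 2^{t/2}$. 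Concretely, we append to the universe extra columns on which $q_t$ is $\pm 1$ on fresh coordinates whose prefix structure forces $\gamma_2(Q_t)=\Theta(2^{t/2})$. One option is to append copies of a larger Hadamard block in which query $t$ contributes $2^{t}$ fresh coordinates, making $\log_2\gamma_2(Q)=\Theta(N)$ as claimed.

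The core of the argument is a reduction from accuracy to privacy with weighted error. At time $t$ the algorithm answers $q_t$ with error at most $C\gamma_2(Q_t)/(\varepsilon n)\lesssim C2^{t/2}/(\varepsilon n)$. Restrict attention to datasets supported on the core block, where $q_t(x)=h_t(x)$. Rescaling answer $t$ by $2^{-(t-1)/2}$ gives a private algorithm answering the reweighted queries $2^{-(t-1)/2}h_t$ with uniform error $O(C/(\varepsilon n))$. This does not directly contradict a lower bound for $H$, because the lower bound must be against the unweighted $H$.

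I would therefore invert the order of the analogy. Let $\tilde q_t=2^{(t-1)/2}h_t$ as in Theorem~\ref{thm:fact-lb}; these are not bounded by $1$, so divide all of them by $2^{(N-1)/2}$. Then $\|q_t\|_\infty=2^{(t-N)/2}$. The claim ``$\|q_t\|_\infty=1$'' would then have to come from also adding one fixed coordinate of value $1$ to each query, via a column on which every query is $1$. This changes $\gamma_2(Q_t)$ by at most an additive $1$ and does not affect answers on the core. With this normalization, $\gamma_2(Q_t)=\Theta(2^{(t-N)/2})+O(1)$. The accuracy guarantee gives error $C\gamma_2(Q_t)/(\varepsilon n)$ on $q_t$. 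Rescaling answer $t$ by $2^{(N-t)/2}$ yields answers to $h_t$ with error $O(C\sqrt2/(\varepsilon n))$, at least for the terms where the Hadamard part dominates. The additive $O(1)$ is the main obstacle: it must be kept below the geometric part. I would handle this by placing the all-ones column with weight $2^{-N}$ rather than $1$, or by simply taking $\|q_1\|_\infty$ normalization as in Theorem~\ref{thm:fact-lb}, namely $q_t=2^{(t-1)/2}h_t/\text{(max)}$.

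Post-processing preserves $(\varepsilon,\delta)$-privacy, so we obtain an offline $(\varepsilon,\delta)$-private algorithm answering all of $H$ with error $O(C/(\varepsilon n))$ with probability $2/3$ per query. A union bound is not needed if we use the $\ell_2$/average-error version of the Edmonds--Nikolov--Ullman lower bound; standard amplification by repetition costs only constants. That lower bound says that for large enough $n$ such error must be $\Omega(\gamma_2(H)/(\varepsilon n))$, and more precisely $\Omega(\gamma_F(H^T)/(\sqrt N\varepsilon n))=\Omega(\sqrt N/(\varepsilon n))$ by the computation in Theorem~\ref{thm:fact-lb}. Hence $C=\Omega(\sqrt N)=\Omega(\sqrt{\log_2\gamma_2(Q)})$. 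The delicate points are the normalization, which must keep $\gamma_2(Q_t)=\Theta(2^{t/2})$ up to a common scale, and invoking a version of the ENU lower bound that tolerates a constant per-query failure probability. For the latter I would use their average-squared-error lower bound, which is governed by $\gamma_F$.
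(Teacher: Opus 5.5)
The skeleton of your final argument (rescale answer $t$, post-process, invoke a Hadamard lower bound) is the paper's, but the construction is never pinned down, and the variants you try fail. The ``$\|q_t\|_\infty=1$'' in the statement has to be read as $\|q_1\|_\infty=1$, as in Theorem~\ref{thm:fact-lb}. If every entry of the $N$-row matrix is bounded by $1$, then $\gamma_2(Q_t)\le\|I\|_{2\to\infty}\|Q_t\|_{1\to 2}\le\sqrt{t}$, so neither $\gamma_2(Q_t)=\Theta(2^{t/2})$ nor $\log_2\gamma_2(Q)=\Theta(N)$ is achievable. This rules out your padding with fresh $\pm1$ coordinates, since no number of extra columns helps. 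The all-ones column at weight $1$ makes $\gamma_2(Q_t)=\Theta(1)$ for every $t$, which destroys the geometric growth you need. The downscaled family $q_t=2^{(t-N)/2}h_t$ does reduce correctly to $H$ by homogeneity of $\gamma_2$. However, it has $\gamma_2(Q)=\Theta(1)$ and $\|q_1\|_\infty=2^{-(N-1)/2}$, so it fails the ``moreover'' clause and makes the stated conclusion $C=\Omega(\sqrt{\log_2\gamma_2(Q)})$ empty; the meaningful quantity is $\log(\gamma_2(Q)/\|q_1\|_\infty)$. The right choice is simply $q_t=2^{(t-1)/2}h_t$, unmodified. Then $\|q_1\|_\infty=1$, $2^{(t-1)/2}\le\gamma_2(Q_t)\le\|Q_t\|_{1\to 2}\le 2^{t/2}$, $\log_2\gamma_2(Q)=\Theta(N)$, and $2^{-(t-1)/2}a_t$ answers $h_t$ with error at most $\sqrt2C/(\varepsilon n)$.

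The error-probability step also fails as you state it. You read the hypothesis as success probability $2/3$ per query, and claim that the $\ell_2$ ($\gamma_F$) lower bound of Edmonds--Nikolov--Ullman plus constant-fold repetition suffices. It does not. If each answer is accurate only with probability $p<1$, even after a constant number of repetitions and medians, a $(1-p)$ fraction of answers can be off by $\Theta(1)$ even after clipping to $[-1,1]$. The mean squared error is then $\Theta(1-p)$, which for large $n$ is far above $N/(\varepsilon n)^2$, so no $\ell_2$ lower bound is contradicted. Pushing $1-p$ below $N/(\varepsilon n)^2$ takes $\Omega(\log(\varepsilon n/\sqrt N))$ repetitions, i.e., privacy loss growing with $n$. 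The paper instead reads the hypothesis as holding simultaneously for all $t$ with probability $2/3$, i.e., $(\alpha,\frac13)$-accuracy in the sense of Section~\ref{sec:prelims}. Then the post-processed answers are $(\sqrt2C/(\varepsilon n),\frac13)$-accurate for $\mathcal{H}$, and the Dwork--Yekhanin bound $\alpha=\Omega(\sqrt N/(\varepsilon n))$ for Hadamard queries applies directly, with no amplification. With these two corrections your argument becomes exactly the paper's.
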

\begin{proof}
    We let $\mathcal Q$ be the queries with the query matrix $Q$ defined in the proof of Theorem~\ref{thm:fact-lb}. As in Theorem~\ref{thm:fact-lb}, we may assume $N$ is a power of $2$. By assumption, we have that
    \[
    |h_t(x) - 2^{-(t-1)/2}a_t| \le \frac{\sqrt{2}C}{\varepsilon n},
    \]
    where $\mathcal{H} = (h_1, \ldots, h_N)$ are the statistical queries with query matrix the $N\times N$ Hadamard matrix $H$. Note that, by the postprocessing property of differential privacy, the answers $a_1, \ldots, 2^{-(N-1)/2}a_N$ to $\mathcal{H}$ are $(\varepsilon,\delta)$-differentially private. However, for all large enough $n$, no $(\varepsilon,\delta)$-differentially private algorithm can be $(\alpha,\frac13)$-accurate for $\mathcal H$ unless $\alpha = \Omega\left(\frac{\sqrt{N}}{\varepsilon n}\right)$~\cite{DworkY08}, which shows that $C = \Omega(\sqrt{N}) = \Omega(\sqrt{\log(\gamma_2(Q))}).$
\end{proof}

This lower bound implies that the factor $\log_2(\gamma_2(Q)/\|q_t\|_\infty)^{\frac12 + c}$ in Theorem~\ref{thm:main-priv} also cannot be improved significantly.
\section{Competitive Online Query Release for Small Datasets}\label{sect:jon}

In this section we introduce our online competitive algorithm for small datasets, which is described in Section~\ref{sec:intro-smalln}.  First we present the main discrepancy theoretic tools we need, then the main tools from differential privacy, and then the algorithm and its analysis. 

\subsection{Hereditary Discrepancy Tools}

First we recall the definition and basic properties of \emph{hereditary discrepancy}.  Let $\mat{Q}$ be a $m \times N$ matrix.  Given a subset of columns $S \subseteq [N]$, we use $\QS$ to denote the $m \times |S|$ submatrix of $\mat{Q}$ consisting of all columns of $\mat{Q}$ in the set $S$.

\begin{defn}[Hereditary Discrepancy] \label{def:hdisc}
Given a $m \times N$ matrix $\mat{Q}$ and a parameter $1 \leq w \leq N$, we define the \emph{restricted hereditary discrepancy}
$$
\hdisc(\mat{Q}, w) := \max_{S \subseteq [N] : |S| \leq w} \left(\min_{\vec{x} \in \{\pm 1\}^{|S|}} \| \QS \vec{x} \|_{\infty}\right).
$$ 
\end{defn}

Hereditary discrepancy is closely related to the rounding of a certain linear program.  Specifically, given any real-valued vector $\vec{y} \in [0,1]^N$, we can round $\vec{y}$ to a Boolean vector $\vec{x} \in \zo^N$ such that the coordinates of $Q(\vec{y}-\vec{x})$ is bounded by the hereditary discrepancy of $Q$.  Specifically, we have the following classical result.

\begin{thm} [\cite{LovaszSV86}] \label{thm:ldisc2hdisc}
For every $m \times N$ matrix $\mat{Q}$ and parameter $1 \leq w \leq N$,
$$
\max_{\vec{y} \in [0,1]^N : \|\vec{y}\|_0 \leq w} \left(\min_{\vec{x} \in \{0,1\}^N : \| \vec{x} \|_0 \leq w} \| \mat{Q}\vec{y} - \mat{Q}\vec{x} \|_\infty\right) \leq \hdisc(\mat{Q},w).
$$
\end{thm}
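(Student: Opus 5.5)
This is the classical Lovász--Spencer--Vesztergombi bit-by-bit rounding argument. I would run it with a restriction on supports so that every coloring step only involves a set of at most $w$ columns. Fix $\vec{y} \in [0,1]^N$ with $\|\vec{y}\|_0 \le w$, and let $S_0 = \mathrm{supp}(\vec{y})$, so $|S_0| \le w$. Every subset $S \subseteq S_0$ then has $|S| \le w$, hence $\disc(\QS) \le \hdisc(\mat{Q},w)$. Write $H = \hdisc(\mat{Q},w)$.

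\textbf{Step 1 (finite binary expansion).} First assume every coordinate of $\vec{y}$ has a finite binary expansion with $k$ bits, i.e.\ $\vec{y} \in 2^{-k}\mathbb{Z}^N$. Set $\vec{y}^{(k)} = \vec{y}$. For $j = k, k-1, \ldots, 1$, let $S_j$ be the set of coordinates whose $j$-th bit in $\vec{y}^{(j)}$ equals $1$; note $S_j \subseteq S_0$. Choose a coloring $\vec{x}^{(j)} \in \{\pm 1\}^{S_j}$ with $\|\mat{Q}_{S_j}\vec{x}^{(j)}\|_\infty \le H$. Define
\[
\vec{y}^{(j-1)} = \vec{y}^{(j)} + 2^{-j}\vec{x}^{(j)},
\]
extending $\vec{x}^{(j)}$ by zeros outside $S_j$. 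On $S_j$ the $j$-th bit is $1$, so adding $\pm 2^{-j}$ clears it, possibly with a carry. Hence $\vec{y}^{(j-1)} \in 2^{-(j-1)}\mathbb{Z}^N$.

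\textbf{Step 2 (range and support).} Check that coordinates stay in $[0,1]$ and supports never grow. If a coordinate lies in $[0,1]$ and its $j$-th bit is $1$, it cannot equal $0$ or $1$, so it lies strictly between $2^{-j}$-multiples; adding $\pm 2^{-j}$ therefore keeps it in $[0,1]$. A coordinate that is $0$ is never touched, so the support stays inside $S_0$. At the end, $\vec{x} := \vec{y}^{(0)} \in \{0,1\}^N$ with $\|\vec{x}\|_0 \le |S_0| \le w$.

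\textbf{Step 3 (error telescoping).} By the triangle inequality,
\[
\|\mat{Q}\vec{y} - \mat{Q}\vec{x}\|_\infty \le \sum_{j=1}^k 2^{-j}\|\mat{Q}_{S_j}\vec{x}^{(j)}\|_\infty \le H\sum_{j \ge 1} 2^{-j} < H.
\]

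\textbf{Step 4 (general $\vec{y}$).} For arbitrary $\vec{y}$, truncate each coordinate to $k$ bits to get $\vec{y}'$. This keeps the support inside $S_0$ and gives $\|\mat{Q}(\vec{y}-\vec{y}')\|_\infty \le 2^{-k}\max_i\|q_i\|_1$. Apply Steps 1--3 to $\vec{y}'$. The resulting $\vec{x}$ has error at most $H(1-2^{-k}) + 2^{-k}\max_i\|q_i\|_1$.

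This bound is not exactly $\le H$ unless $H \ge \max_i\|q_i\|_1$, which is the main obstacle. I would handle it by compactness: there are only finitely many candidates $\vec{x} \in \{0,1\}^N$ with $\|\vec{x}\|_0 \le w$. For each $k$, the argument produces a candidate with error at most $H + 2^{-k}\max_i\|q_i\|_1$. Some single candidate is therefore produced for infinitely many $k$, and its error must be at most $H$.

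A cleaner alternative is to round exactly instead of truncating. Repeatedly take the fractional coordinates $F$ of the current vector and use a coloring of $\mat{Q}_F$ to move them by a common scale along $\pm$ directions. The careful bookkeeping of the scales in that approach is where the work would lie, so I prefer the compactness argument above.
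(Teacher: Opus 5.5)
Your proof is correct and is essentially the classical Lov\'asz--Spencer--Vesztergombi bit-by-bit rounding. The paper does not reprove this theorem but only cites it from \cite{LovaszSV86}, and your observation that every rounding set $S_j$ stays inside $\mathrm{supp}(\vec{y})$ gives the restricted version: it lets you use $\hdisc(\mat{Q},w)$ at each step and guarantees $\|\vec{x}\|_0 \le w$; the pigeonhole step over the finitely many candidate $\vec{x}$ then correctly removes the $2^{-k}\max_i\|q_i\|_1$ truncation error. One cosmetic point: Step~3 yields $H(1-2^{-k}) \le H$, not the strict inequality $<H$ you wrote, which fails when $H=0$.
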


We also define the following \emph{modified hereditary discrepancy} that we denote $\hdisc^*(\mat{Q},w)$: For an $m \times N$ matrix $\mat{Q}$, we let $\mat{Q}^{*}$ denote the $(m+1) \times N$ matrix formed by appending the vector $(1,\dots,1) \in \R^N$ as the $(m+1)$-st row of $\mat{Q}$.  Observe that for any vector $\vec{x} \in [0,1]^N$ we have
\begin{equation*}
\mat{Q}^{*} \vec{x} =
\begin{bmatrix}
\mat{Q} \vec{x}~ \\
~~\| \vec{x} \|_1~
\end{bmatrix}
\end{equation*}

\begin{defn}[Modified Hereditary Discrepancy]  Given a $m \times N$ matrix $\mat{Q}$ and a parameter $1 \leq w \leq U$, we define the \emph{modified restricted hereditary discrepancy}
$$
\hdisc^*(Q,w) := \hdisc(Q^{*},w)
$$
\end{defn}

The next lemma will be useful for simplifying the statement of our bounds, and is immediate from the definition of (modified) hereditary discrepancy.
\begin{fact} \label{lem:sublineardisc} For every $Q \in [0,1]^{m \times N}$ and every $c,w \in \N$, $\hdisc^*(Q,cw) \leq c \cdot \hdisc^*(Q,w)$.
\end{fact}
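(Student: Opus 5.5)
To prove $\hdisc^*(Q,cw) \le c\cdot \hdisc^*(Q,w)$, I will fix an arbitrary column set $S\subseteq[N]$ with $|S|\le cw$ and construct a coloring $x\in\{\pm1\}^S$ with $\|Q^*_S x\|_\infty \le c\cdot \hdisc(Q^*,w)$. Since $\hdisc^*(Q,\cdot)=\hdisc(Q^*,\cdot)$ by definition, taking the maximum over $S$ then gives the claim.

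\textbf{Step 1 (partition).} Partition $S$ arbitrarily into $c$ blocks $S_1,\dots,S_c$, each of size at most $w$. This is possible because $|S|\le cw$; some blocks may be empty.

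\textbf{Step 2 (color each block).} By definition of $\hdisc(Q^*,w)$, each block $S_j$ admits a coloring $x^{(j)}\in\{\pm1\}^{S_j}$ with
\[
\bigl\|Q^*_{S_j}x^{(j)}\bigr\|_\infty\le \hdisc(Q^*,w).
\]
For an empty block the contribution is simply zero.

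\textbf{Step 3 (concatenate and apply the triangle inequality).} Let $x$ be the concatenation of the $x^{(j)}$. Then $Q^*_S x=\sum_{j=1}^c Q^*_{S_j}x^{(j)}$, so
\[
\|Q^*_S x\|_\infty\le\sum_{j=1}^c\bigl\|Q^*_{S_j}x^{(j)}\bigr\|_\infty\le c\cdot\hdisc(Q^*,w).
\]
Note that the extra all-ones row of $Q^*$ is itself one of the rows of $Q^*$, so the triangle inequality covers the $|\sum_i x_i|$ term automatically; this is why the argument works equally for the modified discrepancy.

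I see no real obstacle. The only points needing care are that the partition is possible, which holds since $c,w\in\N$, and that if $cw>N$ we take $|S|\le N$, which the same argument handles.
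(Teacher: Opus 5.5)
Your proposal is correct and is essentially the paper's approach: the paper gives no written proof and only calls the fact ``immediate from the definition'' of modified hereditary discrepancy. Your argument (partition $S$ into $c$ blocks of size at most $w$, color each block within $\hdisc(Q^*,w)$, concatenate, and apply the triangle inequality row by row, including the all-ones row) is exactly the routine verification that remark leaves implicit.
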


\begin{rem}
Since duplicating rows does not increase the hereditary discrepancy, for any matrix $\mat{Q}$ that contains the row $(1,\dots,1)$, $\hdisc^*(Q,w) = \hdisc(Q,w)$.  Further, for every $\mat{Q} \in [0,1]^{m \times N}$ and every $w \in \N$, $\hdisc^*(Q,w) \geq 1$.
\end{rem}

The connection between hereditary discrepancy and differential privacy was first made in the work of Muthukrishnan and Nikolov~\cite{MuthukrishnanN12}.  Here we state their result (rescaled for relative error and reparameterized) that we will use to establish the competitive property of our algorithm.
\begin{thm}[Modification of \cite{MuthukrishnanN12} for Normalized Error] \label{lem:hdisc2opt}
Let $Q$ be the  $m \times N$ query matrix of a set of $m$ statistical queries over universe $U$ of size $N$, and let $n \in \N$ be a dataset size.  For every $(\frac18,\frac18)$-differentially private algorithm $M_Q \from U^n \to [0,1]^m$, there exists a dataset $x \in U^n$ such that
$$
\ex{}{\| Q(x) - M_Q(x) \|_{\infty} } \geq \alpha = \Omega\left( \frac{\hdisc^*(Q,n)}{n \log n} \right).
$$
Moreover, there is no $(\frac18,\frac18)$-differentially private algorithm $M_Q \from U^n \to [0,1]^m$ that is $(\alpha, \frac18)$-accurate for $Q$.
\end{thm}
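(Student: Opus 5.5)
The natural route is the standard packing (reconstruction-attack) lower bound based on discrepancy, as in Muthukrishnan and Nikolov, rescaled to normalized error. First fix $S\subseteq U$ with $|S|\le n$ attaining $\hdisc^*(Q,n)=\hdisc(Q^*,n)$. By Theorem~\ref{thm:ldisc2hdisc}, or rather its converse use via the determinant/LP lower bound, it is more convenient to work with the linear-discrepancy relative. The key fact we will use is that for a submatrix $Q^*_S$ with discrepancy $D$, any two 0/1 vectors $\vec y,\vec x$ on $S$ with $\|Q^*_S(\vec y-\vec x)\|_\infty<D/2$ are forced to be close. The cleanest route: since $\hdisc(Q^*_S)$ is attained on some $S'\subseteq S$, we get $\disc(Q^*_{S'})= D$. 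Hence for every $\vec z\in\{-1,0,1\}^{S'}$ of full support, $\|Q^*_{S'}\vec z\|_\infty\ge D$. We replace $S$ by $S'$.

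Next we define datasets. For each subset $T\subseteq S$ we build a dataset $x_T\in U^n$ whose histogram is supported on $S$: roughly $n/|S|$ copies of each element of $T$, and the remaining mass placed on a fixed default element (or spread evenly). The appended all-ones row of $Q^*$ is exactly what handles the mismatch in total count between different $T$; this is why the modified discrepancy appears. Normalized answers satisfy $Q(x_T)=\frac{k}{n}Q\mathbf 1_T+\text{const}$, with $k\approx n/|S|$ copies per element. If a mechanism is $\alpha$-accurate in $\ell_\infty$ with $\alpha\ll k D/n$, then from the answers (together with the known size $|T|$, or the extra ones-row) one can decode $T$ up to a set whose discrepancy-witness fails. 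Concretely, the decoder outputs any $\hat T$ consistent with the answers. Then $\mathbf 1_T-\mathbf 1_{\hat T}$ restricted to its support $W$ is a $\pm1$ vector with $\|Q^*_W z\|\lesssim n\alpha/k$. Hereditarity forces this to be $<D$ unless $W$ is small; more precisely, iterating over sub-supports gives that $|W|$ must be a small fraction of $|S|$.

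Third, we contradict privacy. The datasets $x_T$ and $x_{T'}$ differ in about $k|T\triangle T'|$ entries, so group privacy with $(\frac18,\frac18)$ is only meaningful for $k=O(1)$ or $O(\log n)$ groups. This is the source of the $\log n$ loss. The standard reconstruction argument (Dinur--Nissim style, or the Fano/packing version) shows that an $(\varepsilon,\delta)$-private mechanism cannot let one reconstruct a $1-c$ fraction of a uniformly random $T$ from datasets with $k$ copies per element unless $k\gtrsim 1/\varepsilon$ up to log factors. Choosing the per-element multiplicity appropriately, and using a dyadic decomposition over scales to handle $|S|$ versus $n$ (which is where $\log n$ enters), yields $\alpha=\Omega(D/(n\log n))$ in expectation. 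The "moreover" claim then follows by Markov-type reasoning: $(\alpha,\frac18)$-accuracy already suffices for the decoding step.

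The main obstacle is the decoding step. We need that small $\ell_\infty$ error implies reconstruction of most of $T$, using only hereditary discrepancy rather than a spectral bound. For this I would use the robust form: if $\|Q^*_S z\|_\infty<D$ for $z\in\{0,\pm1\}^S$, then $z$ cannot have full support on any $S'$ realizing $D$. Making this an $\Omega(|S|)$-sized error set requires passing to the "largest discrepancy-achieving subset at each scale" and accepting the $\log n$ factor via the dyadic averaging.
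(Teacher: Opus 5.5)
There is a genuine gap in your decoding step. The ``key fact'' you rely on is false: it is not true that if $S'$ attains the discrepancy $D$ of $Q^*$, then any two $0/1$ vectors $y,x$ with $\|Q^*_{S'}(y-x)\|_\infty<D/2$ must be close. A discrepancy lower bound on $S'$ only rules out low-discrepancy $\pm1$ vectors with \emph{full} support on $S'$. Hereditarity gives only \emph{upper} bounds on the discrepancy of subsets, never lower bounds. So no amount of ``iterating over sub-supports'' can force the support $W$ of $\mathbf 1_T-\mathbf 1_{\hat T}$ to be small.

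Here is a concrete counterexample. Let $Q$ be the single all-ones row and let $|S'|$ be odd, so $\disc(Q^*_{S'})=1=\hdisc(Q^*,n)$. For disjoint $A,B\subseteq S'$ with $|A|=|B|=(|S'|-1)/2$, we have $Q^*(\mathbf 1_A-\mathbf 1_B)=0$. Thus two subsets that disagree on all but one element of $S'$ give identical answers.

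What is missing is \emph{partial} discrepancy. You must choose $S$ with $|S|\le n$ maximizing
\[
D_p(S):=\min\bigl\{\|Q^*_S z\|_\infty : z\in\{-1,0,1\}^S,\ \|z\|_1\ge |S|/10\bigr\}.
\]
Then privacy says a random $T\subseteq S$ cannot be reconstructed on more than a $9/10$ fraction with probability close to $1$. Combined with accuracy $\alpha\ll D_p(S)/n$, this gives the contradiction. The $\log n$ enters only through the standard conversion $\hdisc(Q^*,n)\lesssim \log n\cdot\max_{|S|\le n}D_p(S)$. You get it by repeatedly applying a partial coloring to the still-uncolored set; that set shrinks by a factor $9/10$ per round, so $O(\log n)$ rounds suffice. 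This is exactly the paper's route. It imports the Muthukrishnan--Nikolov lower bound stated in terms of hereditary partial discrepancy and then applies this partial-to-full conversion. Your closing remark about ``the largest discrepancy-achieving subset at each scale'' gestures at this, but the partial-coloring notion and the choice of its maximizer are what actually make the argument work.

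Your account of where the $\log n$ comes from is also off, and your dataset construction would break privacy. With $k\approx n/|S|$ copies of each element of $T$, the datasets $x_T$ and $x_{T'}$ differ in $k|T\triangle T'|$ entries. With $\varepsilon=\delta=\frac18$, group privacy becomes vacuous already for a small constant $k$, so neither $k\approx n/|S|$ nor $k\approx\log n$ is usable.

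The fix is to take one copy of each element of $T$ and pad with $n-|T|$ copies of a fixed element $u_0$. Adding or removing one element of $T$ then changes one entry of the dataset. Normalized error $\alpha$ on $U^n$ becomes unnormalized error $n\alpha$ on the $T$-part; this is the paper's padding step, and no dyadic decomposition over scales is needed. Your remark that the appended all-ones row absorbs the count mismatch $|T|-|\hat T|$ created by the padding is reasonable. It is in fact a point the paper's sketch glosses over.
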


For completeness, we will sketch a proof of this result.
\begin{proof}[Proof Sketch]
    The result of Muthukrishnan and Nikolov is stated in terms of \emph{unnormalized error} and \emph{restricted hereditary partial discrepancy}.  Specifically, they show that for every set of queries $\mathcal{Q}$ with query matrix $Q$, and every $n$, and every differentially private mechanism $M$, there exists a dataset $x \in U^t$ of size $t \leq n$, with histogram $h_x$, such that
    $$
        \ex{}{\| Qh_x - M(x)\|_\infty} \geq C \max_{S \subseteq [N] \atop |S| \leq n} \min_{z \in \{-1,0,1\}^|S| \atop \|z\|_1 \geq |S|/10} \|Qz\|_\infty.
    $$
    for some constant $C > 0$.  By a standard argument relating hereditary partial discrepancy to hereditary discrepancy, the right hand side is at least equal to the hereditary discrepancy up to an $O(\log n)$ factor.  Thus
    $$
        \ex{}{\| Qh_x - M(x)\|_\infty} \geq \frac{C \hdisc(Q,n)}{\log n}
    $$
    for some constant $C$.  Rescaling by $t$ we get that the normalized error satisfies
    $$
        \ex{}{\left\| \frac{1}{t} Q h_x - \frac{1}{t} M(x)\right\|_\infty} \geq \Omega\left(\frac{\hdisc(Q,n)}{t \log n} \right).
    $$
    Thus, for every mechanism, there is some dataset size $t$ where the normalized error is at least $\frac{C \cdot \hdisc(Q,n)}{t \log n}$.  If there is a mechanism for datasets of size $n$ with normalized error smaller than $\alpha = \frac{C \cdot \hdisc(Q,n)}{n \log n}$ then there is a mechanism for datasets of size $t \leq n$ with normalized error smaller than $n\alpha / t = \frac{C \cdot \hdisc(Q,n)}{t \log n}$.  To obtain this mechanism, we pad the dataset with $n-t$ copies of any fixed element of the domain.  Then upon receiving the answers with error smaller than $\alpha$, we can subtract off the answers of the queries of the padding elements and rescale by $(n/t)$ to obtain answers that are accurate with error smaller than $n\alpha / t$, which gives a contradiction.
\end{proof}

\subsubsection{Sparse Approximations}

The key property of hereditary discrepancy that we will use is the following: If $\mathcal{Q}$ is a set of Boolean statistical queries whose corresponding matrix has low hereditary discrepancy, and $x$ is a dataset of size $n$, then we can find a smaller dataset $y$, whose size is related to the hereditary discrepancy of $Q$ that is a good approximation of $x$ with respect to the queries $Q$.  

As a starting point, for \emph{any} set of queries $\mathcal{Q}$, we can approximate $x$ by a small dataset via subsampling the elements of the dataset.  Specifically, using a standard application of Hoeffding's inequality and a union bound, we have the following lemma.
\begin{lem} \label{lem:sampling}
For every $\alpha \in (0,1), m \in \N$, every set of $m$ statistical queries $\mathcal{Q}$ over $U$, and every dataset $x \in U^*$, there exists a dataset $y \in U^s$ for $s = \frac{\ln m}{2\alpha^2}$ such that
$
\| \mathcal{Q}(x) - \mathcal{Q}(y) \|_{\infty} \leq \alpha.
$
\end{lem}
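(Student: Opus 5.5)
The approach is the standard probabilistic-method argument: build $y$ by sampling $s$ elements of $x$ independently and uniformly with replacement. Then show that, with positive probability, every query in $\mathcal{Q}$ is approximated within $\alpha$. Since $y$ is a dataset in $U^s$, this proves the existence claim of Lemma~\ref{lem:sampling}. Throughout I will use the implicit convention of this section that statistical queries take values in $[0,1]$, as for the Boolean queries considered here.

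\textbf{Step 1 (one query).} Write $x=(x_1,\dots,x_n)$ and draw indices $i_1,\dots,i_s$ i.i.d.\ uniformly from $[n]$; set $y=(x_{i_1},\dots,x_{i_s})$. Fix a query $q\in\mathcal{Q}$. Then $q(y)=\frac1s\sum_{j=1}^s q(x_{i_j})$ is an average of $s$ i.i.d.\ $[0,1]$-valued random variables, each with mean $\frac1n\sum_i q(x_i)=q(x)$. Hoeffding's inequality therefore gives
\[
\Pr\bigl[|q(y)-q(x)|>\alpha\bigr]\le 2\exp(-2s\alpha^2).
\]

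\textbf{Step 2 (all queries).} A union bound over the $m$ queries bounds the probability that some query is off by more than $\alpha$ by $2m\exp(-2s\alpha^2)$. Any outcome outside this event yields the desired $y$, so it suffices to make this quantity strictly less than $1$.

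\textbf{Main obstacle.} With $s$ exactly $\frac{\ln m}{2\alpha^2}$, the bound in Step~2 equals $2$, so the plain union bound misses by the factor $2$ from the two-sided Hoeffding tail. The statement must also be read with $s$ rounded to an integer (for $m=1$ it literally gives $s=0$, and no empty dataset can approximate even a single query). My plan is to treat $s=\frac{\ln m}{2\alpha^2}$ as the paper does for this ``standard application'': up to this integer rounding and the absolute constant inside the logarithm.

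First I would try to close the gap without changing the stated form of $s$, by replacing the union bound with a bound on $\mathbb{E}\max_{q}|q(y)-q(x)|$. Each deviation is $\frac{1}{2\sqrt{s}}$-subgaussian, which gives $\mathbb{E}\max\le\sqrt{\ln(2m)/(2s)}$. Taking $s=\frac{\ln m}{2\alpha^2}$ makes this $\alpha\sqrt{\ln(2m)/\ln m}$, which exceeds $\alpha$ only by a factor tending to $1$. Since some outcome attains at most the expectation, this yields $y$ with error $\alpha(1+o(1))$.

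If an exact $\alpha$ is required, I would absorb the leftover factor into the rounding of $s$: the lemma is only ever used, via the Median Mechanism cover argument, up to constant factors in $s$. The combinatorial content is exactly Steps~1 and~2, and everything after that is bookkeeping of constants.
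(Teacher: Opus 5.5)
Your argument is the paper's own: the paper justifies this lemma only as ``a standard application of Hoeffding's inequality and a union bound'' over a random subsample of $x$, so your Steps~1--2 are the entire intended proof. The constant problem you flag is real and sits in the statement itself. The union bound needs $s > \ln(2m)/(2\alpha^2)$, and the case $m=1$ degenerates to $s=0$. Integer rounding alone does not absorb the extra factor, but your expected-maximum bound with $s=\lceil \ln(2m)/(2\alpha^2)\rceil$ gives error at most $\alpha$ exactly. That corrected constant suffices for the lemma's only later use, in Corollary~\ref{thm:sparseapprox}, where constants do not matter.
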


The key lemma we will use is a refined version of the above statement where $s \approx \frac{\log m}{\alpha} \cdot \hdisc(Q,\frac{1}{\alpha^2})$ and $Q$ is the query matrix of $\mathcal{Q}$.  Note that, if we focus on the dependence on $\alpha$, then for a worst-case set of queries where $\hdisc(Q,n) \approx \sqrt{n}$, the bound we get on the size of the small dataset is $s \approx \frac{1}{\alpha^2}$, which is similar to what we get from the subsampling argument.  But for queries with lower hereditary discrepancy the bound can be much smaller.  

Our lemma will follow from the following more general statement.

\begin{lem} \label{lem:sparsification}
Let $\alpha \in (0,1)$ be a parameter.  Let $\mat{Q} \in [0,1]^{m \times N}$ be a matrix and let ${x} \in [0,1]^N$ be a vector with $\|x\|_0 \leq w$ such that $\hdisc^*(Q,w) \leq \frac{\alpha w}{4}$.  Then there exists a Boolean vector $\vec{y} \in \zo^N$ such that
$$
\|y\|_1 := s \leq \frac{5 \cdot \hdisc^*\left(Q, w\right)}{\alpha}
~~~\textrm{and}~~~
\left\| \mat{Q}\frac{\vec{x}}{w} - \mat{Q}\frac{\vec{y}}{s} \right\|_{\infty} \leq \alpha.
$$
\end{lem}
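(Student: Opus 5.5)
The plan is to scale the fractional vector $x$ down so that its total mass is about $s_0 := \lceil \hdisc^*(Q,w)/\alpha\rceil$. Then I round it with Theorem~\ref{thm:ldisc2hdisc} applied to the modified matrix $Q^*$. The appended all-ones row controls the size $\|y\|_1$ of the rounded vector, so the normalization error can be bounded.

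Write $D := \hdisc^*(Q,w)$ and let $\mu := \|x\|_1 \le w$. Assume $\mu>0$. Put $z := \lambda x$ with $\lambda := \min\{1, 2D/(\alpha \mu)\}$. Then $z \in [0,1]^N$, $\|z\|_0 \le w$, and $\|z\|_1 = \min\{\mu, 2D/\alpha\}$. Apply Theorem~\ref{thm:ldisc2hdisc} to $Q^*$ with parameter $w$. This gives $y \in \{0,1\}^N$ with $\|Q^* z - Q^* y\|_\infty \le D$. In particular $\|Qz - Qy\|_\infty \le D$, and from the last row, $|\|z\|_1 - s| \le D$ where $s=\|y\|_1$.

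\textbf{Case $\lambda<1$.} Here $\|z\|_1 = 2D/\alpha$, so $s \le 2D/\alpha + D \le 3D/\alpha$, which is within the claimed $5D/\alpha$. Also $s \ge 2D/\alpha - D \ge D/\alpha > 0$ since $\alpha<1$. Now bound
\[
\Big\| \tfrac{Qz}{\|z\|_1} - \tfrac{Qy}{s}\Big\|_\infty \le \tfrac{\|Qz-Qy\|_\infty}{s} + \|Qz\|_\infty\Big|\tfrac1{\|z\|_1}-\tfrac1s\Big|.
\]
Entries of $Q$ lie in $[0,1]$ and $z\ge 0$, so $\|Qz\|_\infty \le \|z\|_1$. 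The second term is therefore at most $|s-\|z\|_1|/s \le D/s$, and the total is at most $2D/s \le 2\alpha$.

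\textbf{Case $\lambda=1$.} Here $z=x$ and $\mu \le 2D/\alpha$, so $s\le \mu + D \le 3D/\alpha$, and the same computation gives error $\le 2D/s$. I need $s$ to be large enough. If $\mu$ is small, I can instead scale $x$ up, but not beyond the box $[0,1]^N$.

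\textbf{Main obstacle: normalization.} The target is $Qx/w$, not $Qx/\|x\|_1$. I intend to read the statement with the histogram convention, where $\|x\|_1 = w$ is the dataset size. Then $Qx/w = Qz/\|z\|_1$, and in the case $\lambda=1$ we have $\mu = w \ge 4D/\alpha > 2D/\alpha$ by the hypothesis $\hdisc^*(Q,w)\le \alpha w/4$, which forces $\lambda<1$. That hypothesis is exactly what rules out the bad case.

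\textbf{Constants.} The calculation above gives error $2\alpha$. To get error $\alpha$, I will use target mass $4D/\alpha$ instead of $2D/\alpha$. This needs $4D/\alpha \le w$, which is again the hypothesis. Then $s \in [3D/\alpha, 5D/\alpha]$ and the error is $2D/s \le 2\alpha/3 \le \alpha$. This choice matches the constant $5$ in the statement, so I will carry out the argument with target $4D/\alpha$.
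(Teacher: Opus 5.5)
Your proof is correct and follows essentially the paper's argument. The paper also scales $x$ to mass $t = 4H/\alpha$ with $H=\hdisc^*(Q,w)$, rounds $\frac{t}{w}x$ via Theorem~\ref{thm:ldisc2hdisc} applied to $Q^*$ so that the all-ones row gives $|t-s|\le H$, and then bounds the normalization error. It gets $4H/t=\alpha$, where your slightly tighter accounting gives $2H/s\le 2\alpha/3$. You are also right that the statement implicitly needs $\|x\|_1=w$: the paper uses this silently when it writes $\|\frac{t}{w}x\|_1=t$. Without it the claim fails, e.g.\ for $Q$ the all-ones row and $x=\epsilon e_1$, where $Qy/s=1$ for every nonzero Boolean $y$. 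The case $\|x\|_1=w$ is exactly the histogram setting in which the lemma is applied.
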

\begin{proof}[Proof of Lemma \ref{lem:sparsification}]
As a shorthand, let $H := \hdisc^*\left(Q, w\right)$.  Let $2H \leq t \leq w$ be a parameter.  We consider the vector $\frac{t}{w} \vec{x} \in \{0, \frac{t}{w}\}^U \subseteq [0,1]^U$.  Observe that $\left\| \frac{t}{w}x \right\|_1 = t$.
By Theorem \ref{thm:ldisc2hdisc}, and the definition of $\hdisc^*$, there exists a Boolean vector $y \in \zo^{N}$ such that
\begin{equation}
\left\| Q \frac{t\vec{x}}{w} - Q \vec{y} \right\|_{\infty} \leq H \text{ and }\label{eq:sparseapprox1} \left| t - \| y \|_1 \right| \leq H 
\end{equation}
As a shorthand, let $s := \|\vec{y}\|_1$ so that \eqref{eq:sparseapprox1} states $| t - s | \leq H$.  We can now calculate
\begin{align*}
\left\| \mat{Q}\frac{\vec{x}}{w} - \mat{Q}\frac{\vec{y}}{s} \right\|_{\infty} 
\leq{} &\left\| \mat{Q}\frac{\vec{x}}{w} - \mat{Q}\frac{\vec{x}}{w}\frac{t}{s} \right\|_{\infty} 
+ \left\| \mat{Q}\frac{\vec{x}}{w}\frac{t}{s} -  \mat{Q}\frac{\vec{y}}{s}\right\|_{\infty} \\
\leq{} &\left|1-\frac{t}{s}\right|
+ \left\| \mat{Q}\frac{\vec{x}}{w}\frac{t}{s} -  \mat{Q}\frac{\vec{y}}{s}\right\|_{\infty} \tag{see below} \\
\leq{} &\left|1-\frac{t}{s}\right| + \frac{H}{s} \tag{by \ref{eq:sparseapprox1}} \\
\leq{} &\left|1-\frac{t}{t - H}\right| + \frac{H}{t - H} \tag{by \ref{eq:sparseapprox1}} \\
\leq{} &\left|1-\frac{t + 2H}{t}\right| + \frac{2H}{t} \tag{$t \geq 2H$} \\
\leq{} &\frac{2 H}{t} + \frac{2H}{t} \leq \frac{4H}{t}.
\end{align*}
The highlighted line comes from the fact that for every row $q_i$ in $Q$, we have 
$$
\bigg| \Big\langle q_i, \frac{\vec{x}}{w} - \frac{\vec{x}t}{ws} \Big\rangle \bigg| \leq \bigg|1- \frac{t}{s} \bigg| \cdot \bigg|\Big\langle q_i, \frac{\vec{x}}{w} \Big\rangle \bigg| \leq \bigg|1-\frac{t}{s} \bigg|,
$$
where the last inequality holds because $q_i \in [0,1]^{N}$ and $\|\vec{x}\|_1 = w$.

Now we can complete the proof by setting $t:= \frac{4H}{\alpha}$.  Observe that this choice of $t$ satisfies the constraint $t \geq 2H$, and by our assumption that $\hdisc^*(Q,w) \leq \frac{\alpha w}{4}$ we will also satisfy our constraint that $t \leq w$.  Finally, the bound on $s = \|y\|_1$ follows from the fact that $| t - s | \leq H$.
\end{proof}

\begin{rem}
Since $\hdisc^*(Q,w) \leq \sqrt{2 w \ln m}$ for every matrix $Q$, the condition $\hdisc^*(Q,w) \leq \frac{\alpha w}{4}$ in the above lemma will hold whenever $w \geq \frac{32 \ln m}{\alpha^2}$.
\end{rem}

We are now ready to state the first theorem about sparse approximations of datasets.
\begin{thm}[Sparse Approximations] \label{thm:sparseapproxn}
For every $\alpha \in (0,1)$ and every set of $m$ statistical queries $\cQ$ over universe $U$ of size $N$, with query matrix $Q \in [0,1]^{m \times N}$, and every $n \in \N$ there exists
$$
s = \frac{5}{\alpha} \cdot \hdisc^*(Q,n) = O\left(\frac{1}{\alpha} \cdot \hdisc^*(Q,n)\right)
$$
such that for every dataset $x \in U^n$ whose elements are all distinct, there exists a dataset $y \in U^*$ of size at most $s$ such that
$$\| \cQ(x) - \cQ(y) \|_\infty \leq \alpha.$$
\end{thm}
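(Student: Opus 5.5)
We derive Theorem~\ref{thm:sparseapproxn} directly from Lemma~\ref{lem:sparsification}, applied with $w = n$ to the indicator vector of the dataset. Since the elements of $x$ are distinct, its histogram $h_x$ lies in $\zo^N$ and has $\|h_x\|_0 = \|h_x\|_1 = n$. Moreover $\cQ(x) = Q h_x / n$. So we set $x' := h_x \in [0,1]^N$ with $w = n$.

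\textbf{Main case.} Suppose $\hdisc^*(Q,n) \le \alpha n/4$. Then Lemma~\ref{lem:sparsification} gives a Boolean vector $y' \in \zo^N$ with $s' := \|y'\|_1 \le \frac{5}{\alpha}\hdisc^*(Q,n)$ and
\[
\left\| Q\frac{h_x}{n} - Q\frac{y'}{s'}\right\|_\infty \le \alpha.
\]
Let $y$ be the dataset consisting of the $s'$ elements of $U$ in the support of $y'$, each taken once. Its histogram is $y'$, so $\cQ(y) = Q y'/s'$, and $y$ has size $s' \le s$. This gives the claim.

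We should also check that $s' \ge 1$, so that $\cQ(y)$ is well defined. In the proof of the lemma, $t = 4H/\alpha$ and $|t - s'| \le H$, so $s' \ge t - H \ge 3H > 0$. Here $H \ge 1$ by the remark that $\hdisc^*(Q,w)\ge 1$.

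\textbf{Remaining case.} Suppose $\hdisc^*(Q,n) > \alpha n/4$. Then $s = \frac{5}{\alpha}\hdisc^*(Q,n) > \frac{5n}{4} \ge n$. We simply take $y = x$, which has size $n \le s$ and error $0$.

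The only point needing care is the hypothesis check $\hdisc^*(Q,n)\le \alpha n/4$ in Lemma~\ref{lem:sparsification}, and the case split above disposes of it. Everything else is a direct translation between datasets with distinct elements and their Boolean histograms.
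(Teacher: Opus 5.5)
Your proposal is correct and follows the paper's proof: apply Lemma~\ref{lem:sparsification} with $w=n$ to the Boolean histogram of $x$, then take $y$ to be the support of the rounded vector. Your case split on whether $\hdisc^*(Q,n)\le \alpha n/4$ (taking $y=x$ otherwise, since then $s>n$) and your check that $s'\ge 3H\ge 1$ fill in two details the paper's proof leaves unstated, namely the lemma's hypothesis and the well-definedness of $\cQ(y)$.
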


\begin{proof}[Proof of Theorem \ref{thm:sparseapproxn}]
Fix a dataset $x \in U^n$ with unique rows, $\alpha \in (0,1)$, and a set of $m$ statistical queries $\cQ$ over $U$.  We will show how to construct the desired approximation $y \in U^s$.

Let $\vec{h} \in \zo^U$ be the histogram of the dataset $x$ and observe that $\| \vec{h} \|_1 = n$.  By Lemma~\ref{lem:sparsification}, there exists a vector $\vec{h}' \in \zo^{U}$ such that
$$
\| \vec{h'} \|_1 = s \leq \frac{5}{\alpha} \cdot \hdisc^*(Q,n),
$$
and
\begin{equation*}
\left\| \mat{Q}\frac{\vec{h}}{n} - \mat{Q}\frac{\vec{h'}}{s} \right\|_{\infty} \leq \alpha.
\end{equation*}
Since $\vec{h'} \in \zo^{U}$ and $\| \vec{y} \|_1 = s$, $\vec{h'}$ is the histogram representation of some dataset $y \in U^s$.  Moreover, since $\cQ(x) = Q\frac{\vec{h}}{n}$ and $\cQ(y) = Q\frac{\vec{h'}}{s}$, we have
$
\| \cQ(x) - \cQ(y) \|_\infty \leq \alpha.
$
as desired.
\end{proof}

\begin{rem}
The restriction that $x = (x_1,\dots,x_n) \in U^n$ have unique rows can be removed by appending the number $i$ as a tag to each row $x_i$.  Doing so increases the universe size from $N$ to $Nn$.
\end{rem}

We also prove a corollary of the above theorem in which the size of the approximation does not depend on $n$, which is sometimes easier to interpret, though not strictly necessary for our later results.

\begin{cor}[Dataset-Size-Independent Sparse Approximations] \label{thm:sparseapprox}
For every $\alpha \in (0,1)$ and every set of $m$ statistical queries $\cQ$ over universe $U$ of size $N$, with query matrix $Q \in [0,1]^{m \times N}$, there exists
$$
s = \frac{5}{\alpha} \cdot \hdisc^*\left(Q,\frac{32 \ln m}{\alpha^2}\right) = O\left(\frac{\log m}{\alpha}\cdot \hdisc^*\left(Q, \frac{1}{\alpha^2}\right)\right)
$$
such that for every dataset $x \in [U]^*$ whose rows are all distinct, there exists a dataset $y \in [U]^*$ of size at most $s$ such that
$$\| \cQ(x) - \cQ(y) \|_\infty \leq \frac98 \alpha.$$
\end{cor}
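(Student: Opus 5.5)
The plan is a two-stage approximation: first shrink $x$ to a subsample of size about $w = \frac{32\ln m}{\alpha^2}$ using Lemma~\ref{lem:sampling}, then apply Lemma~\ref{lem:sparsification} to that subsample with this $w$.

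\emph{Step 1 (subsampling).} Fix a dataset $x$ with distinct rows. Apply Lemma~\ref{lem:sampling} with accuracy parameter $\alpha/8$. This gives a dataset $z$ of size $s_0 = \frac{32\ln m}{\alpha^2}$ with $\|\cQ(x)-\cQ(z)\|_\infty \le \alpha/8$. The remaining question is whether $z$ is a $0/1$ histogram.
\begin{itemize}
\item If $z$ may repeat elements, re-draw the sample without replacement from the distinct rows of $x$. Hoeffding's inequality applies equally to sampling without replacement, so the same bound holds, and now $z$ has distinct rows.
\item If $|x| \le s_0$, skip this step and take $z = x$.
\end{itemize}
In either case the histogram $h_z \in \zo^U$ has $\|h_z\|_0 = w' \le w := \frac{32\ln m}{\alpha^2}$. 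We need $w'$ to be an integer; this can be arranged by rounding $w$ up.

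\emph{Step 2 (sparsification).} Apply Lemma~\ref{lem:sparsification} to $h_z$ with accuracy $\alpha$ and sparsity parameter $w'$. Its hypothesis $\hdisc^*(Q,w') \le \alpha w'/4$ needs justification. The remark after that lemma gives it whenever $w' \ge \frac{32\ln m}{\alpha^2}$, which holds when $w' = w$. In the case $z = x$ with $|x| < w$, first check the hypothesis directly. If it fails, then $|x| \le 4\hdisc^*(Q,|x|)/\alpha \le 5\hdisc^*(Q,w)/\alpha = s$, using monotonicity of $\hdisc^*$ in its size parameter. In that case $x$ itself is already small enough and we take $y = x$.

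Otherwise the lemma gives $y$ with $|y| \le \frac{5}{\alpha}\hdisc^*(Q,w') \le \frac{5}{\alpha}\hdisc^*(Q,w) = s$ and $\|\cQ(z)-\cQ(y)\|_\infty \le \alpha$. By the triangle inequality,
\[
\|\cQ(x)-\cQ(y)\|_\infty \le \tfrac{\alpha}{8} + \alpha = \tfrac98\alpha.
\]

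\emph{The $O(\cdot)$ form.} Fact~\ref{lem:sublineardisc} gives $\hdisc^*(Q,\frac{32\ln m}{\alpha^2}) \le O(\log m)\,\hdisc^*(Q,\frac1{\alpha^2})$, which yields the stated asymptotic bound.

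The main obstacle is the bookkeeping in Step 1. The subsample must have distinct elements and a size compatible with the remark's threshold. Handling this through sampling without replacement, together with the small-$|x|$ case split above, should resolve it.
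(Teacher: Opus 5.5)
Your proposal is correct and follows the same route as the paper's proof. Both arguments subsample to size $32\ln m/\alpha^2$ via Lemma~\ref{lem:sampling} at accuracy $\alpha/8$, then apply Lemma~\ref{lem:sparsification} (with the remark supplying its hypothesis), and finish with the triangle inequality to get $\frac98\alpha$. Your extra bookkeeping fills in details the paper's proof silently assumes: sampling without replacement so that the subsample's histogram really lies in $\{0,1\}^U$, and the separate treatment of the case $|x| < w$.
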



\begin{proof}[Proof of Theorem \ref{thm:sparseapprox}]
Fix a dataset $x \in U^*$ with unique rows, $\alpha \in (0,1)$, and a set of $m$ statistical queries $\cQ$ over $U$.  We will show how to construct the desired approximation $y \in U^s$.

First, by Lemma~\ref{lem:sampling}, if $w = 32 \alpha^{-2} \ln m$ then there exists a dataset $z \in U^s$ such that
\begin{equation} \label{eq:sparseapprox_1} 
\| \cQ(x) - \cQ(z) \|_\infty \leq \frac18 \alpha.
\end{equation}

Now let $\vec{h} \in \zo^N$ be the histogram of the dataset $z$ and observe that $\| \vec{h} \|_1 = w$.  By Lemma~\ref{lem:sparsification}, there exists a vector $\vec{h'} \in \zo^{N}$ such that
$$
\| \vec{h'} \|_1 = s := \frac{5}{\alpha} \cdot \hdisc^*\left( Q, \frac{32 \ln m}{\alpha^2} \right),
$$
and
\begin{equation} \label{eq:sparseapprox_2}
\left\| \mat{Q}\frac{\vec{h}}{w} - \mat{Q}\frac{\vec{h'}}{s} \right\|_{\infty} \leq \alpha.
\end{equation}
Since $\vec{h'} \in \zo^{N}$ and $\| \vec{h'} \|_1 = s$, $\vec{h'}$ is the histogram representation of some dataset $y \in U^s$.  Moreover, since $\cQ(z) = Q\frac{\vec{h}}{w}$ and $\cQ(y) = Q\frac{\vec{h'}}{s}$, by applying the triangle inequality to \eqref{eq:sparseapprox_1} and \eqref{eq:sparseapprox_2} we have
$$
\| \cQ(x) - \cQ(y) \|_\infty \leq \frac98 \alpha,
$$
as desired.  This completes the proof.
\end{proof}

\subsection{Technical Tools for Differential Privacy}

Before presenting a competitive differentially private algorithm, we will introduce some important properties of differential privacy and some useful technical tools.

\paragraph{Laplace Noise and AboveThreshold.} The two differentially private components in our algorithm are Laplacian noise and the AboveThreshold algorithm.

We use $\Lap(\sigma)$ to denote the \emph{(centered) Laplace distribution}, whose density function is $$\mu_{\Lap(\sigma)}(z) \propto e^{-|z|/\sigma}.$$  The Laplace distribution is useful for ensuring differential privacy, however this aspect will be fully encapsulated within the analysis of the AboveThreshold algorithm.  However, to ensure accuracy of our algorithms, we will need the following bound on the tails of the Laplace distribution.
\begin{lem}[Concentration of Laplace Distribution] \label{lem:laplacetails}
For every $\sigma, t \geq 0$, $\pr{}{|\Lap(\sigma)| > t\sigma} \leq e^{-t}.$
\end{lem}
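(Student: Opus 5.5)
The tail bound follows by direct integration of the density. First normalize: the density $\mu_{\Lap(\sigma)}(z) \propto e^{-|z|/\sigma}$ integrates to $2\sigma$ over $\R$, so $\mu_{\Lap(\sigma)}(z) = \frac{1}{2\sigma}e^{-|z|/\sigma}$. For $\sigma = 0$ the distribution is a point mass at $0$ and the claim is trivial, so assume $\sigma > 0$.

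Using the symmetry of the density about $0$,
\[
\pr{}{|\Lap(\sigma)| > t\sigma} = 2\int_{t\sigma}^{\infty} \frac{1}{2\sigma} e^{-z/\sigma}\, dz.
\]
This integral equals $\left[-e^{-z/\sigma}\right]_{t\sigma}^{\infty} = e^{-t}$. So the bound holds with equality for every $t \ge 0$.

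The only care needed is in getting the normalizing constant right, including the factor of $2$ from the two tails. There is no substantive obstacle.
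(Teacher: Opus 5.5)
Your computation is correct: with the normalized density $\frac{1}{2\sigma}e^{-|z|/\sigma}$, the two-sided tail is exactly $e^{-t}$, so the bound holds with equality, and treating $\sigma=0$ as a point mass at $0$ is fine. The paper states this lemma as a standard fact without proof, and integrating the density directly, as you do, is the standard justification.
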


Privacy of our algorithms will follow directly from the privacy of the following algorithm, which is called \emph{AboveThreshold}.  The algorithm was first formalized by Dwork and Roth~\cite{DworkR14}, although it is an abstraction of the earlier \emph{sparse vector technique}~\cite{DworkNRRV09,DworkNPR10,RothR10,HardtR10}.
\begin{figure}[h!]
\begin{framed}
\begin{algorithmic}
\INDSTATE[0]{{\bf Input:} dataset $x \in U^n$, adversary $\mathcal{B}$, threshold $\tau \in R$, privacy parameter $\eps \in (0,1)$}
\INDSTATE[0]{}
\INDSTATE[0]{Let $\sigma := 4/\eps n$, let $\hat\tau = \tau + \Lap(\sigma)$}
\INDSTATE[0]{{\bf Repeat:}}
\INDSTATE[1]{$\mathcal{B}$ chooses a $(1/n)$-sensitive function $f \from U^n \to \R$}
\INDSTATE[1]{If $f(x) + \Lap(\sigma) \leq \hat\tau$: {\bf output} $\bot$ and continue the loop}
\INDSTATE[1]{Otherwise, {\bf output} $f(x) + \Lap(\sigma)$ and halt}
\end{algorithmic}
\end{framed}
\vspace{-5mm}\caption{The AboveThreshold algorithm $\mathit{AT}_{\mathcal{B}, \tau, \eps}(x)$}
\end{figure}

\begin{thm}[Privacy of $\mathit{AT}$; see e.g.~\cite{DworkR14}] \label{thm:AboveThreshold}
For every algorithm $\mathcal{B}$ for the adversary, and every $\tau \in \R$, $\eps > 0$, the algorithm $\mathit{AT}_{\mathcal{B},\tau,\eps}(x)$ is $(\eps,0)$-differentially private.
\end{thm}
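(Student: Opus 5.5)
The claim is $(\eps,0)$-differential privacy of AboveThreshold. The plan is to follow the standard sparse-vector argument: condition on the noise draws and construct a coupling between runs on neighboring datasets $x \sim x'$. First, fix the adversary $\mathcal{B}$. Because $\mathcal{B}$'s next query is a deterministic function of the transcript so far (fix its random coins, and average over them at the end), the output of $\mathit{AT}$ is determined by the halting index $k$ together with the final released value. Every earlier output is $\bot$, so the query sequence $f_1, \ldots, f_k$ is fixed once we condition on the event that the first $k-1$ answers were $\bot$. It therefore suffices to show, for every $k$ and every measurable set $B \subseteq \R$, that
\[
\pr{}{\text{halt at } k \text{ with output in } B \text{ on } x} \le e^{\eps}\, \pr{}{\text{halt at } k \text{ with output in } B \text{ on } x'}.
\]

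To prove this, write $\nu_i$ for the query noise and $\rho$ for the threshold noise, so that $\hat\tau = \tau + \rho$. Fix $\nu_1, \ldots, \nu_{k-1}$ and define $g(x) = \max_{i<k}(f_i(x) + \nu_i)$. The event to bound is
\[
\{g(x) \le \tau + \rho\} \cap \{f_k(x) + \nu_k > \tau + \rho,\ f_k(x) + \nu_k \in B\}.
\]
Each $f_i$ is $(1/n)$-sensitive, so $|g(x) - g(x')| \le 1/n$. Now change variables in the integral over $(\rho, \nu_k)$, setting $\rho' = \rho + 1/n$ and $\nu_k' = \nu_k + f_k(x) - f_k(x')$. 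The first shift gives $g(x') \le g(x) + 1/n \le \tau + \rho'$, so the earlier answers stay $\bot$ on $x'$. The second shift makes the released value identical, $f_k(x') + \nu_k' = f_k(x) + \nu_k$, so it stays in $B$, and $f_k(x') + \nu_k' = f_k(x) + \nu_k > \tau + \rho = \tau + \rho' - 1/n$.

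This last inequality is the main obstacle. We need $f_k(x') + \nu_k' > \tau + \rho'$, but the threshold has moved up by $1/n$. So the shift of $\nu_k$ must be $\nu_k' = \nu_k + f_k(x) - f_k(x') + 1/n$, whose magnitude is at most $2/n$. With that choice the released value shifts by $1/n$, and the output value is no longer preserved exactly.

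There are two ways to handle this. One is to note that, as in Dwork--Roth, the released value $f(x) + \Lap(\sigma)$ is effectively a separate Laplace mechanism, accounted for with fresh noise. The other is to bound the density ratio directly. With $\sigma = 4/(\eps n)$, the Laplace density ratio under a shift of $1/n$ is $e^{\eps/4}$ for $\rho$, and under a shift of $2/n$ it is $e^{\eps/2}$ for $\nu_k$. If the release is treated as a separate mechanism on an independent draw, its shift of $1/n$ costs a further $e^{\eps/4}$. The total is $e^{\eps/4 + \eps/2 + \eps/4} = e^{\eps}$, and the change of variables is measure-preserving.

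Finally, integrate over $\nu_1, \ldots, \nu_{k-1}$, which are unchanged by the coupling, and average over $\mathcal{B}$'s coins. This gives the inequality for each $k$; summing over $k$ yields $(\eps,0)$-DP. The case where the algorithm never halts is handled by the same argument without $\nu_k$, at cost only $e^{\eps/4}$.
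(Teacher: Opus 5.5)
Your argument is correct. It is the standard Dwork--Roth coupling proof, which the paper cites rather than reproves: with $\sigma = 4/(\eps n)$, shifting the threshold noise by $1/n$, the halting query's noise by at most $2/n$, and the independent release noise by at most $1/n$ costs $e^{\eps/4}\cdot e^{\eps/2}\cdot e^{\eps/4}=e^{\eps}$. Two fixes to the write-up: set up the event with an independent fresh draw for the released value from the outset, since the second $\Lap(\sigma)$ in the pseudocode is a separate sample; and drop the suggestion that reusing $\nu_k$ and bounding the density directly is a second route, because with reused noise the privacy loss grows linearly in the number of preceding $\bot$ answers and that variant is not $\eps$-differentially private.
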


\paragraph{Composition of Differential Privacy.} Finally, we review the composition properties of differential privacy that allow us to build our algorithm modularly out of these components.

\begin{figure}[h!]
\begin{framed}
\begin{algorithmic}
\INDSTATE[0]{{\bf Input:} dataset $x \in U^n$, adversary $B$,  $T \in \N$.}
\INDSTATE[0]{{\bf For} $t = 1,\dots,T$:}
\INDSTATE[1]{$B$ chooses an $(\eps_0, \delta_0)$-differentially private algorithm $M_t \from U^n \to \cR$.}
\INDSTATE[1]{$v_t = M_t(x)$}
\INDSTATE[0]{{\bf Output} $(v_1,\dots,v_T)$}
\end{algorithmic}
\end{framed}
\vspace{-5mm}\caption{Adaptive composition game $\mathit{Comp}_{B,T,\eps_0,\delta_0}(x)$}
\end{figure}

\begin{thm}[Composition of Differential Privacy \cite{DworkMNS06, DworkRV10}] \label{thm:dpcomp}
For every adversary $B$, $T \in \N$, $0 \leq \eps_0, \delta_0 \leq 1$, the algorithm $\mathit{Comp}_{B,T,\eps_0,\delta_0}$ is:
\begin{enumerate} 
\item $(\eps, \delta)$-differentially private for $\eps = T \eps_0$ and $\delta = T \delta_0$, and 
\item for every $\delta_1 > 0$, $(\eps, \delta)$-differentially private for $\eps = \sqrt{6 T \ln(\frac{1}{\delta_1})}\,\eps_0$ and $\delta = T\delta_0 + \delta_1$.
\end{enumerate}
\end{thm}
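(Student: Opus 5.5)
We follow the standard route of Dwork, Rothblum, and Vadhan and argue about the joint output distribution of the game $\mathit{Comp}_{B,T,\eps_0,\delta_0}$ on a fixed pair of neighboring datasets $x\sim x'$. We may assume $B$ is deterministic, since a randomized adversary is a mixture of deterministic ones and both $(\eps,\delta)$ guarantees are preserved under mixtures. Then the $t$-th mechanism $M_t$ is a function of the prefix $v_{<t}=(v_1,\dots,v_{t-1})$. Write $P$ and $P'$ for the distributions of $(v_1,\dots,v_T)$ on $x$ and on $x'$, and $P_t(\cdot\mid v_{<t})$, $P'_t(\cdot\mid v_{<t})$ for the conditional distributions of the $t$-th output.

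The first ingredient is a decomposition lemma for approximate DP. If $M$ is $(\eps_0,\delta_0)$-DP, then for neighbors $x,x'$ there are distributions $\tilde P,\tilde P'$ with $\mathrm{TV}(M(x),\tilde P)\le\delta_0$ and $\mathrm{TV}(M(x'),\tilde P')\le\delta_0$ such that $e^{-\eps_0}\le d\tilde P/d\tilde P'\le e^{\eps_0}$ pointwise. This is the standard ``$\delta$-approximate max-divergence'' lemma from \cite{DworkRV10}, proved by trimming the mass where the likelihood ratio exceeds $e^{\eps_0}$. We apply it conditionally at every round $t$ and every prefix $v_{<t}$. A hybrid coupling argument then shows that $P$ and $P'$ lie within total variation $T\delta_0$ of product-structured distributions $\tilde P,\tilde P'$ whose per-round conditional likelihood ratios are all bounded in $[e^{-\eps_0},e^{\eps_0}]$. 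This accounts for the additive $T\delta_0$ in both parts of the theorem, so from here on we may work with pure per-round guarantees.

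For part 1 we multiply the per-round ratios along any output sequence. This gives $d\tilde P/d\tilde P'\le e^{T\eps_0}$ pointwise, hence $\tilde P(S)\le e^{T\eps_0}\tilde P'(S)$ for every event $S$. Transferring back to $P,P'$ through the total-variation bounds yields $(T\eps_0,T\delta_0)$-DP. A little care is needed to ensure the two $\delta_0$ slacks per round do not double; the cleanest fix is to do a direct induction over $t$ on the inequality $P(S)\le e^{t\eps_0}P'(S)+t\delta_0$, conditioning on the first output.

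For part 2 we consider the privacy loss $L=\ln(d\tilde P/d\tilde P')(v)=\sum_{t=1}^T L_t$ with $v\sim\tilde P$, where $L_t$ is the log-ratio of the round-$t$ conditionals. Each $L_t$ satisfies $|L_t|\le\eps_0$. Moreover, its conditional expectation given $v_{<t}$ is a KL divergence between distributions with max-divergence at most $\eps_0$, and by the standard estimate this is at most $\eps_0(e^{\eps_0}-1)$. Hence $L-\sum_t\E[L_t\mid v_{<t}]$ is a martingale with increments bounded by $2\eps_0$. Azuma's inequality gives $\Pr[L>T\eps_0(e^{\eps_0}-1)+\eps_0\sqrt{2T\ln(1/\delta_1)}]\le\delta_1$. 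Splitting any event $S$ into the part where $L$ is below this threshold and its complement gives $\tilde P(S)\le e^{\eps}\tilde P'(S)+\delta_1$ with $\eps=\eps_0\sqrt{2T\ln(1/\delta_1)}+T\eps_0(e^{\eps_0}-1)$. Combining with the $T\delta_0$ slack gives $\delta=T\delta_0+\delta_1$.

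The main obstacle is reconciling this $\eps$ with the clean form $\sqrt{6T\ln(1/\delta_1)}\,\eps_0$ stated in the theorem. The drift term $T\eps_0(e^{\eps_0}-1)\lesssim T\eps_0^2$ is dominated by $\eps_0\sqrt{T\ln(1/\delta_1)}$ only when $\eps_0\sqrt{T}\lesssim\sqrt{\ln(1/\delta_1)}$. In the complementary regime the claimed $\eps$ exceeds $T\eps_0$ (up to the constant $\sqrt6$ versus $\sqrt2$), so part 1 already implies the bound. We therefore finish with a case split on whether $\eps_0\le\sqrt{\ln(1/\delta_1)/T}$, using $e^{\eps_0}-1\le 2\eps_0$ for $\eps_0\le 1$ to choose the constants. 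We also need to handle the decomposition lemma carefully so that it respects adaptivity, with the trimmed distributions chosen measurably as functions of the prefix.
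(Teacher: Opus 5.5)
\textbf{The final case split fails, and part~2 is false as literally stated.} The paper gives no proof of this theorem; it is quoted from \cite{DworkMNS06, DworkRV10}, so your argument has to stand on its own. Part~1 by direct induction is fine. The decomposition-plus-Azuma argument is the standard route to $\eps'=\eps_0\sqrt{2T\ln(1/\delta_1)}+T\eps_0(e^{\eps_0}-1)$.

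The genuine gap is your final case split. You claim that when $\eps_0\sqrt{T}\gtrsim\sqrt{\ln(1/\delta_1)}$, the target $\sqrt{6T\ln(1/\delta_1)}\,\eps_0$ dominates $T\eps_0$. But that comparison holds if and only if $T\le 6\ln(1/\delta_1)$, which does not involve $\eps_0$ at all. So when $T>6\ln(1/\delta_1)$ and $\eps_0\gg\sqrt{\ln(1/\delta_1)/T}$, neither of your bounds reaches the target.

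Nothing can fill this hole, because part~2 fails in that regime. Take $\delta_0=0$, $\eps_0=1$, and let every $M_t$ be randomized response on the bit $\mathbf{1}[x_1=u]$. For neighbors with $x_1=u\neq x'_1$, the privacy loss $L=\ln(dP/dP')$ is a sum of $T$ i.i.d.\ $\pm1$ variables with mean $\tanh(1/2)\approx 0.46$. Set $\eps=\sqrt{6T\ln(1/\delta_1)}=o(T)$ and $S=\{L>\eps+1\}$. Then $P(S)\to 1$ as $T\to\infty$, while $e^{\eps}P'(S)=\E_P[e^{\eps-L}\mathbf{1}_S]\le e^{-1}P(S)$. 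Hence $P(S)-e^{\eps}P'(S)\to 1-e^{-1}$, which exceeds any fixed $\delta_1<1-e^{-1}$.

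The statement needs an extra hypothesis, for example $\eps_0\le\frac12\sqrt{\ln(1/\delta_1)/T}$, or it should simply be stated in the DRV form. Under that hypothesis $T\eps_0(e^{\eps_0}-1)\le 2T\eps_0^2\le \eps_0\sqrt{T\ln(1/\delta_1)}$. Your bound then gives $(1+\sqrt2)\,\eps_0\sqrt{T\ln(1/\delta_1)}\le\sqrt{6T\ln(1/\delta_1)}\,\eps_0$ with no case split. The paper uses part~2 only in Lemma~\ref{lem:Minnerdp}, with $\eps_0=\eps/\sqrt{6c\ln(1/\delta)}$ and $\eps\le 1$. That choice satisfies the hypothesis whenever $\ln(1/\delta)\ge 1$, so the application is unaffected.

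\textbf{Two smaller points.} First, perturbing both $P$ and $P'$ and then transferring back costs the slack twice:
$P(S)\le\tilde P(S)+T\delta_0\le e^{\eps}\tilde P'(S)+\delta_1+T\delta_0\le e^{\eps}P'(S)+(1+e^{\eps})T\delta_0+\delta_1$.
You noticed this for part~1 but then asserted $\delta=T\delta_0+\delta_1$ in part~2. The fix is to perturb only the $x$-side. Clip $P_t(\cdot\mid v_{<t})$ into the band $[e^{-\eps_0}P'_t,\,e^{\eps_0}P'_t]$ and renormalize inside the band; this moves at most $\delta_0$ mass. Then run the martingale argument for $\tilde P$ against $P'$ itself, paying $T\delta_0$ only once.

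Second, the constant in $\eps_0\sqrt{2T\ln(1/\delta_1)}$ requires the Hoeffding form of Azuma, where each centered increment lies in an interval of length $2\eps_0$. Your stated bound, ``increments bounded by $2\eps_0$'' in absolute value, only gives $\eps_0\sqrt{8T\ln(1/\delta_1)}$. That already overshoots the $\sqrt6$ you are aiming for.
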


\subsection{An Online Competitive Algorithm and its Analysis}
The algorithm consists of an inner algorithm based on the median mechanism of Roth and Roughgarden~\cite{RothR10} as well as an outer wrapper based on doubling search to find the right value of $s$ for the sparse approximation.  The outer algorithm is presented in Figure \ref{alg:M} and the inner algorithm is presented in \ref{alg:Minner}.
\begin{figure}[h!]
\begin{framed}
\begin{algorithmic}
\INDSTATE[0]{{\bf Input:} A dataset $x \in U^n$ over a universe of size $|U| = N$, the privacy parameters $\eps, \delta \in (0,1)$, the maximum number of rounds $m \in \N$, and a failure parameter $\beta > 0$.}
\INDSTATE[0]{}
\INDSTATE[0]{{\bf Set Parameters:} $\eps_0 := \dfrac{\eps}{\log_2 n}~~~\delta_0 := \dfrac{\delta}{\log_2 n}~~~\beta_0 := \dfrac{\beta}{\log_2 n}$}
\INDSTATE[0]{}
\INDSTATE[0]{Let $s = 1$}
\INDSTATE[0]{{\bf Repeat} until $s > n$ or $m$ queries have been exhausted:}
\INDSTATE[1]{Run $M^{\mathrm{inner}}_{s, \eps_0, \delta_0,m,\beta_0}(x)$ and {\bf output} its transcript}
\INDSTATE[1]{Let $s = 2s$}
\end{algorithmic}
\end{framed}
\vspace{-5mm}\caption{A competitive algorithm $M_{\eps,\delta,m,\beta}(x)$} \label{alg:M}
\end{figure}

\begin{figure}[h!]
\begin{framed}
\begin{algorithmic}
\INDSTATE[0]{{\bf Input:} A dataset $x \in U^n$ over a universe of size $|U| = N$, a size parameter $s \in \N$, privacy parameters $\eps, \delta \in (0,1)$, the maximum number of rounds $m \in \N$, and a failure parameter $\beta > 0$.}
\INDSTATE[0]{}
\INDSTATE[0]{{\bf Set Parameters:} $$c := \log_2\left( \textstyle\sum_{t=1}^{s} N^t \right)~~~~\eps_0 := \frac{\eps}{\sqrt{6 c \ln(1/\delta)}}
~~~~\sigma := \frac{4}{\eps_0 n}~~~~\tau := 4 \sigma \ln(\tfrac{3m}{\beta})$$}
\INDSTATE[0]{Let $\cY = U^{\leq s}$ be the set of all datasets of size at most $s$.}
\INDSTATE[0]{{\bf Repeat} until $\cY = \emptyset$ or $m$ queries have been exhausted: (outer loop)}
\INDSTATE[1]{Let $\hat{\tau} = \tau + \Lap(\sigma)$}
\INDSTATE[1]{{\bf Repeat:} (inner loop)}
\INDSTATE[2]{Get the next query $q$}
\INDSTATE[2]{Let $\nu$ be the median of $\{q(y)\}_{y \in \cY}$}
\INDSTATE[2]{If $|q(x) - \nu| + \Lap(\sigma) \leq \hat\tau$: {\bf output} $a = \nu$ and continue the inner loop}
\INDSTATE[2]{Otherwise: {\bf output} $a = q(x) + \Lap(\sigma)$ and break the inner loop}
\INDSTATE[1]{{\bf Update:}}
\INDSTATE[2]{If $a > \nu$, let $\cY = \set{y \in \cY : q(y) > \nu}$}
\INDSTATE[2]{Otherwise, let $\cY = \set{y \in \cY : q(y) < \nu}$}
\end{algorithmic}
\end{framed}
\vspace{-5mm}\caption{The subroutine  $M^{\mathrm{inner}}_{s,\eps,\delta,m,\beta}(x)$} \label{alg:Minner}
\end{figure}

\subsubsection{Analysis of the Inner Algorithm}

\begin{lem}[Privacy of $M^{\mathrm{inner}}$] \label{lem:Minnerdp}
For every $s, m \in \N$, $0 \leq \eps, \delta \leq 1$, and $\beta > 0$, the algorithm $M^{\mathrm{inner}}_{s, \eps, \delta, m,\beta}$ is $(\eps, \delta)$-differentially private.
\end{lem}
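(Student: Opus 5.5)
The plan is to view $M^{\mathrm{inner}}$ as an adaptive composition of at most $c$ runs of AboveThreshold, and then apply the advanced composition bound of Theorem~\ref{thm:dpcomp}(2).

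\textbf{Structure of one iteration.} Each iteration of the outer loop of $M^{\mathrm{inner}}$ is exactly one run of $\mathit{AT}_{\mathcal{B},\tau,\eps_0}(x)$, with noise scale $\sigma = 4/(\eps_0 n)$. The adversary $\mathcal{B}$ supplies, for each query $q$, the function $f(x) = |q(x) - \nu|$. Here $\nu$ is the median of $\{q(y)\}_{y\in\cY}$. The set $\cY$ depends only on the transcript so far, not directly on $x$, so $\nu$ is a fixed number given the past outputs. Since $q(x)$ is a statistical query with values in $[0,1]$, it is $(1/n)$-sensitive, and so is $|q(x)-\nu|$.

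The only deviation from AboveThreshold as stated is what is released at the halting query. There we output $q(x)+\Lap(\sigma)$ rather than $f(x)+\Lap(\sigma)$, and then use its sign relative to $\nu$ in the update. I would handle this in the least delicate way: charge the release of $q(x)+\Lap(\sigma)$ as a separate Laplace mechanism, which is $\eps_0/4$-DP. Together with the AboveThreshold run, each outer iteration is then $O(\eps_0)$-DP with $\delta=0$. Alternatively, one could argue that the standard AboveThreshold privacy proof goes through when the halting output is any $(1/n)$-sensitive value plus fresh $\Lap(\sigma)$; the constant $4$ in $\sigma$ leaves slack for this. I expect this to be the main fiddly point, mostly about constants, and the Laplace-mechanism accounting should settle it.

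\textbf{Number of iterations.} Each outer iteration ends with an update that keeps only those $y$ whose $q(y)$ lies strictly on one side of the median $\nu$. This at least halves $|\cY|$. Initially $|\cY| = |U^{\le s}| \le \sum_{t=1}^s N^t$ (up to the empty dataset), so after at most $c = \log_2(\sum_t N^t)$ iterations $\cY$ is empty and the loop stops. The bound holds whatever the data, so to apply composition with a fixed $T$ we can pad the run with dummy iterations up to exactly $c$.

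\textbf{Composition.} The whole algorithm is an adaptive composition game $\mathit{Comp}_{B,c,\eps_0,0}$, where the adversary picks each subsequent AboveThreshold instance based on the previous outputs, via $\cY$ and the query sequence. Theorem~\ref{thm:dpcomp}(2) with $\delta_1=\delta$ gives $(\sqrt{6c\ln(1/\delta)}\,\eps_0,\ \delta)$-DP, and this equals $(\eps,\delta)$ by the choice $\eps_0 = \eps/\sqrt{6c\ln(1/\delta)}$. If the extra Laplace release costs a constant factor in the per-round parameter, I would absorb it by adjusting the constant in $\sigma$, or simply note the slack.
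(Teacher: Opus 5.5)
Your plan matches the paper's proof. Each outer round is an AboveThreshold run with parameter $\eps_0$ on $f(x)=|q(x)-\nu|$. Releasing $\nu$ instead of $\bot$ costs nothing, because $\nu$ is determined by the past transcript. The strict-side update halves $\cY$, and Theorem~\ref{thm:dpcomp}(2) with $\delta_1=\delta$ and $\eps_0=\eps/\sqrt{6c\ln(1/\delta)}$ gives $(\eps,\delta)$.

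You are right that the halting release $q(x)+\Lap(\sigma)$ is not AboveThreshold's own output $f(x)+\Lap(\sigma)$; the paper's proof passes over this. However, of your two fixes only the second proves the lemma as stated.
\begin{itemize}
\item Charging a black-box $\eps_0$ for AboveThreshold plus $\eps_0/4$ for a separate Laplace release gives $5\eps_0/4$ per round, and hence only $(5\eps/4,\delta)$. Retuning $\sigma$ is not an option, since the algorithm is fixed by the statement.
\item The white-box accounting is exact. With $\sigma=4/(\eps_0 n)$ used for both the threshold and the comparison noise, shifting $\hat\tau$ by $1/n$ costs $\eps_0/4$. Shifting the halting comparison noise by $2/n$ costs $\eps_0/2$. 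Releasing any $(1/n)$-sensitive value with fresh $\Lap(\sigma)$ noise costs $\eps_0/4$. The total is exactly $\eps_0$.
\end{itemize}

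There is also a small slip that you share with the paper. A round can start with $|\cY|=1$, so the number of rounds is at most $\lfloor \log_2(\sum_{t=1}^s N^t)\rfloor+1$, which can exceed $c$. Redefining $c$ with this rounding fixes the constant.
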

\begin{proof}[Proof of Lemma \ref{lem:Minnerdp}]
The proof proceeds in two steps.  First, every iteration of the inner loop is an instance of the AboveThreshold algorithm (Theorem \ref{thm:AboveThreshold}) with threshold $\tau$ and parameter $\eps_0$, and therefore is $(\eps_0, 0)$-differentially private.  To verify that $M^{\mathrm{inner}}$ can be expressed as an instance of AboveThreshold, observe that in each round the function being tested against the threshold is $f(x) := |q(x) - \nu|$ where $q$ is a statistical query, which is $1/n$-sensitive.  Also observe that when $f(x) + \Lap(\sigma) \leq \hat\tau$, the algorithm outputs $a = \nu$, whereas AboveThreshold would output $\bot$.  However, conditioned on the outcome of previous iterations of the outer loop, $\nu$ is already determined and does not depend on the dataset $x$, so releasing $\nu$ instead of $\bot$ does not compromise privacy.  That is, it would not change the distribution of the above algorithm if we modified it to answer $a = \bot$ in place of $a = \nu$, and relied on the party choosing $q$ to determine $\nu$ himself.

Second, because $|\cY|$ is initially $\sum_{t=1}^{s} N^t$, and every iteration of the outer loop decreases the size of $\cY$ by a factor of at least $2$, there can be at most $c$ iterations of the outer loop.  Thus, the entire algorithm is an adaptive composition of at most $c$ instances of the $(\eps_0,0)$-differentially private AboveThreshold algorithm.  By Theorem \ref{thm:dpcomp}, and our choice of $\eps_0 = \eps / \sqrt{6 c \ln(1/\delta)}$, the algorithm satisfies $(\eps, \delta)$-differential privacy.  This completes the proof.
\end{proof}

\begin{lem}[Accuracy of $M^{\mathrm{inner}}$] \label{lem:Minneracc}
Let $\cQ = \set{q_1,q_2,\dots}$ be the sequence of queries received (which contains at most $m$ queries) in an execution of $M^{\mathrm{inner}}_{s,\eps,\delta,m,\beta}(x)$ and let $a = (a_1,a_2,\dots)$ be the corresponding answers, with probability at least $1-\beta$,
$$
\| \cQ(x) - a \|_{\infty} \leq \frac{3\tau}{2} = O\left( \frac{\sqrt{s \log N \log(\frac{1}{\delta})}\log(\frac{m}{\beta})}{\eps n} \right),
$$
and, if there exists $y \in U^s$ such that $\|\cQ(x) - \cQ(y) \|_{\infty} \leq \tau/2$, then the algorithm does not halt before exhausting the queries.
\end{lem}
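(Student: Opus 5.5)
The plan is the standard median-mechanism analysis. I will condition on a good event $E$ under which every Laplace sample drawn by $M^{\mathrm{inner}}$ is small. Then I will show deterministically that on $E$ every answer is within $3\tau/2$ of the truth, and that the outer loop can never empty $\cY$ while a good $y$ exists.

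\textbf{The good event.} At most $m$ queries are processed. Each query draws one fresh $\Lap(\sigma)$ sample in the comparison, plus possibly one more in the output. Each round of the outer loop draws one noisy threshold. So there are at most $3m$ Laplace samples. By Lemma~\ref{lem:laplacetails} with $t=\ln(3m/\beta)$ and a union bound, with probability at least $1-\beta$ every sample has absolute value at most $\sigma\ln(3m/\beta)=\tau/4$. Call this event $E$. On $E$ we have $\hat\tau\in[3\tau/4,5\tau/4]$.

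\textbf{Accuracy on $E$.} Consider a query $q$.
\begin{itemize}
\item If it is answered by $\nu$, then $|q(x)-\nu|\le \hat\tau+\tau/4\le 3\tau/2$.
\item Otherwise it is answered by $q(x)+\Lap(\sigma)$, with error at most $\tau/4$.
\end{itemize}
Plugging in $\sigma=4/(\eps_0 n)$, $\eps_0=\eps/\sqrt{6c\ln(1/\delta)}$ and $c=O(s\log N)$ gives the claimed $O(\cdot)$ expression.

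\textbf{Non-halting.} Suppose some $y^*\in U^s$ satisfies $\|\cQ(x)-\cQ(y^*)\|_\infty\le\tau/2$. I will show that on $E$ the element $y^*$ is never removed from $\cY$. An update happens only on a query that breaks the inner loop, so $|q(x)-\nu|+\Lap(\sigma)>\hat\tau$. On $E$ this gives $|q(x)-\nu|>\hat\tau-\tau/4\ge\tau/2$.

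Assume first that $q(x)>\nu+\tau/2$. Then $a=q(x)+\Lap(\sigma)\ge q(x)-\tau/4>\nu$, so the algorithm keeps $\{y:q(y)>\nu\}$. Also $q(y^*)\ge q(x)-\tau/2>\nu$, so $y^*$ is kept. The case $q(x)<\nu-\tau/2$ is symmetric.

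Hence $y^*$ always stays in $\cY$, so $\cY\neq\emptyset$ and the algorithm runs until the queries are exhausted. The same argument confirms that each update keeps only the elements on one side of the median, so $|\cY|$ at least halves per outer round.

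The step needing the most care is the margin bookkeeping. It requires $\hat\tau-\tau/4\ge\tau/2$ and the noisy answer lying on the correct side of $\nu$, which is exactly why $\tau$ is set to four times the Laplace tail bound. I will also keep track of the sample count $3m$ in the union bound, which fixes the $\ln(3m/\beta)$ in $\tau$.
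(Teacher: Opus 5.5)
Your proof is correct and follows the paper's argument essentially step for step: you condition on all (at most $3m$) Laplace samples being bounded by $\tau/4$, bound the error by $3\tau/2$ in the median-answer case and by $\tau/4$ in the noisy-answer case, and show the good $y$ survives every update because a breaking query forces $|q(x)-\nu|>\tau/2$. The differences are cosmetic: you split cases on the sign of $q(x)-\nu$ rather than on whether $a>\nu$, and you explicitly verify the $O(\cdot)$ form of $3\tau/2$ via $c=O(s\log N)$.
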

\begin{proof}[Proof of Lemma \ref{lem:Minneracc}]
We will show that the conclusion of the lemma holds conditioned on the event that every sample from $\Lap(\sigma)$ drawn in the execution of the algorithm is bounded by $\tau/4$ in absolute value.  Observe that there are at most $m$ queries issued in the execution of the algorithm and thus there are at most $3m$ samples drawn from $\Lap(\sigma)$.  By our choice of $\sigma, \tau$, tail bounds for the Laplace distribution (Lemma \ref{lem:laplacetails}), and a union bound, the probability that any of these draws exceeds $\tau/4$ in absolute value is at most $\beta$.

Now, suppose that the algorithm does not halt before answering $m$ queries.  Then every query $q(x)$ is answered with either $a = \nu$ or with $a = q(x) + \Lap(\sigma)$.  In the latter case, we have that $| q(x) - a | \leq \tau/4$.  In the former case, we have that $| q(x) - \nu | + \Lap(\sigma) \leq \hat\tau$, where $\hat\tau = \tau + \Lap(\sigma)$.  This condition implies that $|q(x) - \nu| \leq 3\tau/2$.

Thus, the only way that the algorithm can fail to answer each query to within $3\tau/2$ is if it halts early because $\cY = \emptyset$.  However we claim that if there is an element $y \in U^s$ such that $\|\cQ(x) - \cQ(y) \|_\infty \leq \tau/2$ then it can never be eliminated from $\cY$ during any update.  To see that this is the case, consider some update step and assume without loss of generality that $a > \nu$.  In order to perform such an update, we must have $q(x) > \nu + \tau/2$, which implies that $q(y) > \nu$.  Therefore $y$ will survive the update and remain in $\cY$.  An analogous argument applies to update steps in which $a \leq \nu$.

By the previous two arguments, if none of the samples from $\Lap(\sigma)$ is too large in magnitude, then if there is a suitable $y \in U^s$ then the algorithm will not stop early, and if the algorithm does not stop early it will answer every query accurately.  This completes the proof.
\end{proof}

\subsubsection{Competitive Analysis}
\begin{lem}[Privacy of $M$] \label{lem:Mdp}
For every $0 \leq \eps, \delta \leq 1$, $m \in \N$, and $\beta > 0$, the algorithm $M_{\eps,\delta,m,\beta}$ is $(\eps,\delta)$-differentially private.
\end{lem}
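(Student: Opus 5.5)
The plan is to view $M_{\eps,\delta,m,\beta}$ as an adaptive composition of invocations of $M^{\mathrm{inner}}$, each run with parameters $(\eps_0,\delta_0)$ where $\eps_0 = \eps/\log_2 n$ and $\delta_0 = \delta/\log_2 n$. We then apply the basic composition bound, part 1 of Theorem~\ref{thm:dpcomp}.

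First we count the invocations. The outer loop starts at $s=1$, doubles $s$ after each call, and stops once $s>n$. So the number of calls is at most the number of powers of two in $[1,n]$, namely $\lfloor \log_2 n\rfloor + 1$. This count is fixed in advance and does not depend on the data, which is what composition requires. A run may also terminate early because the $m$ queries are exhausted, but that only reduces the number of calls and is determined by the (public) query count.

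Second, each call $M^{\mathrm{inner}}_{s,\eps_0,\delta_0,m,\beta_0}$ is $(\eps_0,\delta_0)$-differentially private by Lemma~\ref{lem:Minnerdp}. This holds even when its queries are chosen adaptively based on earlier transcripts, since that lemma's proof already allows an adversarially chosen query sequence. We can therefore cast the whole algorithm as the game $\mathit{Comp}_{B,T,\eps_0,\delta_0}$, where the adversary $B$ is built from the analyst together with the schedule of $s$ values. The transition between calls happens when an inner call ends with $\cY=\emptyset$, and that event is a function of the previously released transcript. Hence the choice of the next mechanism is a post-processing of earlier outputs.

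Third, we apply Theorem~\ref{thm:dpcomp}(1), which gives $(T\eps_0, T\delta_0)$-privacy with $T \le \lfloor\log_2 n\rfloor+1$. The main obstacle is this off-by-one: $T\eps_0$ can equal $(\lfloor\log_2 n\rfloor+1)\eps/\log_2 n$, which slightly exceeds $\eps$. I would resolve it in one of two ways. One option is to note that the loop condition can be read as iterating over $s\in\{2^0,\dots,2^{\lceil\log_2 n\rceil -1}\}$. The other, cleaner, option is to interpret the normalization $\log_2 n$ as the number of rounds, i.e. to replace it by $\lceil \log_2 n\rceil+1$, which is a constant-factor change with no effect on the asymptotics. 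With $T\eps_0\le\eps$ and $T\delta_0\le\delta$, the lemma follows.
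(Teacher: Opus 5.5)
Your proposal matches the paper's proof. Both apply basic composition (part~1 of Theorem~\ref{thm:dpcomp}) over the logarithmically many invocations of $M^{\mathrm{inner}}$, each $(\eps_0,\delta_0)$-private by Lemma~\ref{lem:Minnerdp}, with the switch between invocations a post-processing of the released transcript.

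Your off-by-one observation is also correct. With $s$ starting at $1$ and the loop running while $s\le n$, there are $\lfloor\log_2 n\rfloor+1$ invocations, not the ``at most $\log_2 n$'' the paper asserts, so renormalizing $\eps_0,\delta_0$ by the true round count is the right repair. Your first option (reading the loop as $\lceil\log_2 n\rceil$ rounds) does not work when $n$ is not a power of two: then $\lceil\log_2 n\rceil = \lfloor\log_2 n\rfloor+1 > \log_2 n$, so nothing changes.
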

\begin{proof} [Proof of Lemma \ref{lem:Mdp}]
By Lemma \ref{lem:Minnerdp}, each iteration satisfies $(\eps_0, \delta_0)$-differential privacy.  The main loop of $M$ runs for at most $\log_2(n)$ iterations.  No other step of $M$ depends on the dataset $x$.  Thus the lemma follows by the composition property of differential privacy (Theorem \ref{thm:dpcomp}).
\end{proof}

\begin{lem}[Accuracy of $M$] \label{lem:Macc}
Let $\cQ = \set{q_1,\dots,q_m}$ be the sequence of queries received in an execution of $M_{\eps,\delta,m,\beta}(x)$ and let $a = (a_1,\dots,a_m)$ be the corresponding answers, then with probability at least $1-\beta$,
$$
\| \cQ(x) - a \|_{\infty} \leq \tilde{O}\left(\frac{\sqrt{ \hdisc^*(\cQ,n) \log N \log(\frac{1}{\delta})} \log(\frac{m}{\beta})}{\eps n}\right)^{2/3}
$$
\end{lem}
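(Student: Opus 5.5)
The plan is to combine the accuracy guarantee of the inner algorithm (Lemma~\ref{lem:Minneracc}) with the sparse approximation result (Theorem~\ref{thm:sparseapproxn}), using the doubling search over $s$ in $M$. Write $\tau(s)$ for the threshold used by $M^{\mathrm{inner}}_{s,\eps_0,\delta_0,m,\beta_0}$. Up to the $\log_2 n$ factors lost in $\eps_0,\delta_0,\beta_0$, which are absorbed into the $\tilde O$, Lemma~\ref{lem:Minneracc} gives
\[
\tau(s) \approx B\sqrt{s}, \qquad B := \frac{\sqrt{\log N\log(1/\delta)}\log(m/\beta)}{\eps n}.
\]
First, I would take a union bound over the at most $\log_2 n$ phases, each failing with probability at most $\beta_0$. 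On the good event, every answer given in any phase is within $\frac32\tau(s)\le\frac32\tau(s_{\max})$ of the truth, where $s_{\max}$ is the largest $s$ reached. So it suffices to bound $\tau(s_{\max})$.

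Second, I would find a value $s^*$ at which the phase is guaranteed not to halt. Fix the target $\alpha$ with $\alpha=\tau(s)/2$, which is possible once $s\gtrsim \frac{1}{\alpha}\hdisc^*(Q,n)$. By Theorem~\ref{thm:sparseapproxn}, applied with the remark on distinct rows, there is $y$ of size at most $5\hdisc^*(Q,n)/\alpha$ with $\|\cQ(x)-\cQ(y)\|_\infty\le\alpha$. I would then balance
\[
s = \frac{10\,\hdisc^*(Q,n)}{\tau(s)} \approx \frac{10\,\hdisc^*(Q,n)}{B\sqrt{s}},
\]
which gives $s^{3/2}\approx \hdisc^*/B$, i.e.\ $s^*\approx (\hdisc^*/B)^{2/3}$. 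Then
\[
\tau(s^*)\approx B\sqrt{s^*} = B^{2/3}(\hdisc^*)^{1/3} = \bigl(B\sqrt{\hdisc^*}\bigr)^{2/3},
\]
which matches the claimed bound $\tilde O\bigl(\sqrt{\hdisc^*\log N\log(1/\delta)}\,\log(m/\beta)/(\eps n)\bigr)^{2/3}$.

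Third, I would handle the monotonicity needed for the doubling search. For every $s\ge s^*$, $\tau(s)$ only grows, so the required size $10\hdisc^*/\tau(s)\le s$ still holds. The approximating $y$ has size at most $s$, and since the inner algorithm uses $U^{\le s}$, it lies in $\cY$. Lemma~\ref{lem:Minneracc} then applies: the phase whose $s$ is the first power of two at least $s^*$ does not halt, so $s_{\max}\le 2s^*$ and $\tau(s_{\max})\le\sqrt2\,\tau(s^*)$. Earlier phases that halt early are still accurate on the queries they did answer.

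The steps I expect to need the most care are the following.
\begin{itemize}
\item Lemma~\ref{lem:Minneracc} is stated for $y\in U^s$, but it is really needed for $y\in U^{\le s}$. The same proof works, because $\cY=U^{\le s}$.
\item Theorem~\ref{thm:sparseapproxn} requires the Boolean entries and the conditions of Lemma~\ref{lem:sparsification}, namely $\hdisc^*(Q,n)\le \alpha n/4$. If this condition fails, then $\alpha\gtrsim \hdisc^*/n$. Since $\tau(s)\gtrsim 1/n$ and $\hdisc^*\ge1$, I would check that the bound then holds trivially, or that the loop stops by $s>n$. In the latter case I would use $x$ itself, of size $n$, as the approximation, which works as long as $n$ is not exceeded.
\item The case where $s^*>n$ must be treated separately. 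There the claimed bound exceeds $\tau(n)$ up to constants, and I would verify this in the same way.
\end{itemize}
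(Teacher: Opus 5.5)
Your proposal is correct and follows the paper's proof essentially step for step. Both use a union bound over the $\log_2 n$ phases, the per-phase error $\frac32\tau(s)$ from Lemma~\ref{lem:Minneracc}, Theorem~\ref{thm:sparseapproxn} to show that the first phase with $s\ge 10\hdisc^*(Q,n)/\tau(s)$ does not halt, and balancing to get $s^*\approx(\hdisc^*(Q,n)/B)^{2/3}$ and $\tau(s^*)\approx(B\sqrt{\hdisc^*(Q,n)})^{2/3}$; your monotonicity and $U^{\le s}$ remarks are slightly more careful than the paper.

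In your side remarks, the inequality is reversed: failure of $\hdisc^*(Q,n)\le\alpha n/4$ means $\alpha<4\hdisc^*(Q,n)/n$. It cannot happen in a relevant phase anyway, since $s\le n$ and $s\ge 10\hdisc^*(Q,n)/\tau(s)$ give $\hdisc^*(Q,n)\le n\tau(s)/10<\alpha n/4$ for $\alpha=\tau(s)/2$. For the $s^*>n$ regime, which the paper leaves untreated, use Lemma~\ref{lem:sampling} rather than $x$ itself, because the doubling loop reaches a phase with $s\ge n$ only when $n$ is a power of two.
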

\begin{proof} [Proof of Lemma \ref{lem:Macc}]
By Lemma \ref{lem:Minneracc}, and a union bound over the $\log_2 n$ iterations of the main loop of $M$, we have that with probability at least $1-\beta$, we will have the following two conditions:
\begin{enumerate}
\item[(1)] In each invocation of $M^{\mathrm{inner}}_{s,\eps_0,\delta_0,m,\beta_0}(x)$ with sparsity parameter $s$, if the queries issued are $\cQ^{s} = \set{q^s_1,q^s_2,\dots}$ and the answers given are $a^s = (a^s_1,a^s_2,\dots)$, then
\begin{align*}
\| \cQ^s(x) - a^s \|_{\infty} 
\leq \alpha^{(s)} ={} &O\left( \frac{\sqrt{s \log N \log(\frac{1}{\delta_0})} \log(\frac{m}{\beta_0})}{\eps_0 n} \right) \\
={} &O\left( \frac{\sqrt{s \log N \log(\frac{\log n}{\delta})} \log(\frac{ m \log n}{\beta}) \log n}{\eps n} \right)
\end{align*}
\item[(2)] If $\cQ = \cQ^1 \cup \cQ^2 \cup \dots$ is the entire sequence of queries issued in the execution of $M_{\eps,\delta,m,\beta}(x)$, then if $s^*$ is such that there exists a dataset $y \in U^{s^*}$ such that
$$
\| \cQ(y) - \cQ(x) \|_{\infty} \leq \alpha^{(s^*)}/3 \Longrightarrow \|\cQ^{s^*}(y) - \cQ^{s^*}(x) \|_{\infty} \leq \alpha^{(s^*)}/3,
$$ 
the queries will be exhausted in the execution of $M^{\mathrm{inner}}$ with sparsity parameter $s^*$.  By part (1) above, this will imply
$$
\| \cQ(x) - a \|_{\infty} \leq \alpha^{(s^*)} = O\left( \frac{\sqrt{s^* \log N \log(\frac{\log n}{\delta})} \log(\frac{m \log n}{\beta}) \log n}{\eps n} \right).
$$
\end{enumerate}
Assume that these conditions both hold.  To complete the proof we determine a suitable parameter of $s^*$ in which the algorithm stops.  By Theorem \ref{thm:sparseapproxn}, if we let $H := \hdisc^*(Q,n)$, then for every $\alpha$, and some $s = O(\frac{H}{\alpha})$, there exists a dataset $y \in U^s$ such that $\| \cQ(y) - \cQ(x) \|_{\infty} \leq \alpha$.  Rearranging, for every $s$, for some $\gamma^{(s)} = O(\frac{H}{s})$, there exists $y \in U^s$ such that $\| \cQ(y) - \cQ(x) \|_{\infty} \leq \alpha^{(s)}$.  Therefore, by part (2) above, the algorithm will stop in some round $s^*$ such that $\gamma^{(s^*)} \leq \alpha^{(s^*)}/3$.  Solving this equation gives an upper bound on $s^*$ of
$$
s^* = \tilde{O}\left( \frac{H^2 \eps^2 n^2}{\log N \log(\frac{1}{\delta}) \log^2(\frac{m \log n }{\beta}) \log n} \right)^{1/3}
$$
which gives an upper bound on $\alpha^{s^*}$ of
$$
\alpha^{(s^*)} = \tilde{O}\left(\frac{\sqrt{ H \log N \log(\frac{1}{\delta})} \log(\frac{m \log n}{\beta}) \log n}{\eps n}\right)^{2/3}
$$
Observe that, since $H = \hdisc^*(\cQ,n) \geq 1$, whenever $s^*$ would be less than $1$, the bound on $\alpha^{(s^*)}$ will be at least $1$, which is trivial to achieve for statistical queries.  Thus, to minimize complexity of the argument, we will ignore these pathological parameter regimes.  This completes the proof.
\end{proof}

Combining Lemmas \ref{lem:Mdp} and \ref{lem:Macc} gives the following theorem.

\begin{thm}
There exists an $(\eps,\delta)$-differentially private interactive algorithm that takes a dataset $x \in U^n$ from a universe of size $N$ such that if a data analyst asks a set of adaptively chosen Boolean statistical queries $\mathcal{Q} = (q_1,\dots,q_m)$, and the algorithm answers each query with $a = (a_1,\dots,a_m)$, then with probability at least $1 - \beta$,
$$
\| Q(x) - a \|_{\infty} \leq \alpha = \tilde{O}\left(\frac{\sqrt{ \hdisc^*(Q,n) \log N \log(\frac{1}{\delta})} \log(\frac{m \log n}{\beta}) \log n}{\eps n}\right)^{2/3}.
$$
\end{thm}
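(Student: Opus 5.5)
The plan is to combine Lemma~\ref{lem:Mdp} and Lemma~\ref{lem:Macc} directly, checking only that their hypotheses cover the setting of the theorem.

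\emph{Privacy.} Lemma~\ref{lem:Mdp} shows that $M_{\eps,\delta,m,\beta}$ is $(\eps,\delta)$-differentially private. Its proof goes through Lemma~\ref{lem:Minnerdp}, which views each inner run as an adaptive composition of AboveThreshold instances (Theorem~\ref{thm:AboveThreshold}), combined via Theorem~\ref{thm:dpcomp}. Since Theorem~\ref{thm:dpcomp} is stated for an adversary that chooses each mechanism adaptively, I would check one point. The analyst's next query, the median $\nu$, and the current set $\cY$ are all functions of the previously released transcript. Hence each AboveThreshold instance is chosen adaptively from past outputs, and the transcript $\mathsf{Trans}^{M\prot A}_m$ is exactly the output of the composition game, up to post-processing. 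This gives privacy against adaptive analysts.

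\emph{Accuracy.} Lemma~\ref{lem:Macc} gives, with probability $1-\beta$, the bound
\[
\tilde O\left(\frac{\sqrt{\hdisc^*(Q,n)\log N\log(1/\delta)}\,\log(m\log n/\beta)\log n}{\eps n}\right)^{2/3},
\]
which is the stated bound. I would verify two things for adaptive queries.

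First, the Laplace tail event in Lemma~\ref{lem:Minneracc} is a union bound over at most $3m$ noise draws. Each draw is fresh and independent of the query it is used on, so the bound holds even though the queries are chosen adaptively.

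Second, the existence of a sparse $y\in U^{s}$ with $\|\cQ(x)-\cQ(y)\|_\infty\le\alpha$ comes from Theorem~\ref{thm:sparseapproxn}, applied to the realized query matrix $Q$. This is a deterministic statement that holds for every query sequence, so adaptivity does not affect it. Applying it to the realized sequence gives a survivor $y$ for the first $s^*$ at which the doubling reaches the balance point $H/s\approx\alpha^{(s)}$. The non-elimination argument in Lemma~\ref{lem:Minneracc} holds pathwise, conditioned on the noise event.

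The distinct-rows assumption of Theorem~\ref{thm:sparseapproxn} is handled by tagging, as in the remark following it. This changes $\log N$ only to $\log(Nn)$, which is absorbed in the $\tilde O$.

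\emph{Main obstacle.} The expected difficulty is the adaptivity in step two. In the inner run with the correct $s^*$, the queries, and hence $Q$, depend on the noise, so the comparator $y$ is a random object. My approach is to argue pathwise on the good-noise event. On that event, every $y$ that is $\tau/2$-accurate on the queries asked so far survives every update. Theorem~\ref{thm:sparseapproxn} guarantees that such a $y$ exists for whatever queries end up being asked, so the run cannot halt early. Finally, I would apply Fact~\ref{lem:sublineardisc} to absorb the constant-factor slack from doubling $s$, and note that when $s^*<1$ the bound exceeds $1$ and is trivial.
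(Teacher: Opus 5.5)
Your proposal is correct and follows the paper's proof, which simply combines Lemma~\ref{lem:Mdp} and Lemma~\ref{lem:Macc}; your pathwise treatment of adaptivity (fix the good Laplace event, then apply Theorem~\ref{thm:sparseapproxn} to the realized queries) makes explicit what the proof of Lemma~\ref{lem:Minneracc} already does implicitly. Two minor remarks: tagging need not change $\log N$ at all, because the sparse dataset over $U\times[n]$ projects to an element of $U^{\le s}$ with the same query answers; and Fact~\ref{lem:sublineardisc} is not needed for the doubling slack, which only costs a factor $\sqrt{2}$ in $\alpha^{(s)}$.
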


To interpret the above theorem, note that for every differentially private algorithm $M_{Q}$ there exists a dataset $x \in U^n$ such that 
$$
\ex{}{\| Q(x) - M_{Q}(x) \|_{\infty}} = \tilde{\Omega}\left(\frac{\hdisc^*(Q,n)}{n} \right) = \alpha \cdot \tilde{\Omega}\left(\frac{\hdisc^*(Q,n)^2}{n \log N \log^2 m}\right)^{1/3}.
$$


\addcontentsline{toc}{section}{References}
\bibliography{biblio}
\bibliographystyle{alpha}

\appendix
\section{Boolean Matrices with Bounded Shatter Function Exponent}
\label{sec:vcdim}


While Theorem~\ref{thm:fact-main} is competitive against the $\gamma_2$ norm, we can also show that if we have extra knowledge about the matrix, we may achieve even stronger guarantees. In this section we study the scenario where the matrix is Boolean (i.e., has entries in $\{0,1\}$), and the complexity of its rows is bounded. More precisely, we give an online factorization mechanism whose value is bounded in terms of the shatter function exponent of the matrix, and its dimensions. Recall that the shatter function exponent of a Boolean matrix $Q$ is the smallest $d$ so that, for any $s$, any submatrix of $Q$ with $s$ columns has at most $O(s^d)$ distinct rows. Our main result is given in Theorem~\ref{thm:fact-vcdim}, restated below.

\factvcdim*


Rather than use Lemma~\ref{lm:online_F_transpose_factorization}, our starting point for proving Theorem~\ref{thm:fact-vcdim} is an explicit online average factorization of matrices with bounded shatter function exponent. The factorization is an online version of an offline factorization by Muthukrishnan and Nikolov~\cite{MuthukrishnanN12}, which is, in turn, based on a discrepancy upper bound by Matou\v{s}ek~\cite{Mat-halfspaces}. Before we describe the online average factorization algorithm, we state the classical packing bound of Haussler, which is key to the analysis of the value of the factorization.

\begin{lemma}[\cite{Haussler}]\label{lm:packing}
If an $N$ column Boolean matrix $M$ has shatter function exponent $d$, and any pair of rows of $M$ have $\ell_2$ distance at least $\sqrt{\Delta}$, then the number of rows of $M$ is at most $O\left(\left(\abs{U}/\Delta\right)^d\right)$.
\end{lemma}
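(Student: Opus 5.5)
Since the rows are Boolean, an $\ell_2$ distance of at least $\sqrt{\Delta}$ between two rows means their Hamming distance is at least $\Delta$. So I would view the rows of $M$ as a family $\mathcal{P}$ of subsets of the column set $U$, with $|N|=|U|$, such that any two members differ in at least $\Delta$ elements. The plan is Haussler's probabilistic averaging argument, with the VC-dimension ingredients replaced by the shatter function bound $\pi_M(s)\le Cs^d$.

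Fix $s\approx c\,d\,|U|/\Delta$ for a suitable constant $c$, and let $S\subseteq U$ be a uniformly random $s$-subset. Consider the projection $\mathcal{P}|_S$. Give each projected set a weight equal to the number of members of $\mathcal{P}$ mapping to it, so the total weight is $|\mathcal{P}|$. Let $G_S$ be the unit-distance graph on $\mathcal{P}|_S$, where two projected sets are adjacent if they differ in exactly one element of $S$.

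\textbf{Step 1 (sparsity of unit-distance graphs).} Show that for every subset $V\subseteq\{0,1\}^S$ of size $k$ realized by $M$ (and for its subfamilies), the unit-distance graph has at most $O(d)\cdot k$ edges. Under bounded VC dimension this is Haussler's shifting lemma. Under shatter exponent $d$ I would argue by averaging. Remove a random point $x$ from $S$. Each edge in direction $x$ merges two vertices, so the number of edges in direction $x$ equals $|\mathcal{P}|_S|-|\mathcal{P}|_{S\setminus\{x\}}|$. Summing over $x$ and iterating this down a random ordering of $S$ gives a telescoping count. That count bounds the total number of edges of every subfamily's unit-distance graph by $\sum_j j\cdot(\text{growth of }\pi\text{ at }j)$. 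With $\pi(j)=O(j^d)$, this is $O(d)$ times the vertex count on average. This is the step I expect to be the main obstacle: the shatter function only controls worst-case projection sizes, not those of the specific subfamily. I would first try the version in Matou\v{s}ek's \emph{Geometric Discrepancy}, which bounds edges of unit-distance graphs of subfamilies using only the primal shatter function.

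\textbf{Step 2 (double counting).} Pick a further uniformly random point $x\in U\setminus S$ and set $S'=S\cup\{x\}$. For a projected set $B\in\mathcal{P}|_S$ of weight $w_B$, the members mapping to $B$ either all agree on $x$ or split into two nonempty classes. Let $b$ be the number of such splits across all $B$.
\begin{itemize}
\item The split classes correspond to edges of $G_{S'}$ in direction $x$. By symmetry and Step 1 (applied with weights, as in Haussler's proof), $\mathbb{E}[b]\le O(d/s)\cdot\mathbb{E}\,|\mathcal{P}|_{S'}|$.
\item Any two distinct members with the same projection $B$ differ on at least $\Delta$ elements outside $S$. Hence a random $x$ splits $B$ with probability at least about $\Delta/|U|$ whenever $w_B\ge2$.
\end{itemize}
Comparing the two gives, roughly, $(|\mathcal{P}|-\mathbb{E}|\mathcal{P}|_S|)\cdot\Delta/|U|\lesssim (d/s)\,\mathbb{E}|\mathcal{P}|_{S'}|$. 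I would make this precise by summing $\min(w_B,\ldots)$-type quantities as in Haussler's proof.

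\textbf{Step 3 (conclusion).} With $s=c\,d\,|U|/\Delta$ and $c$ large, the inequality from Step 2 forces $|\mathcal{P}|\le O(1)\cdot \max_{|S'|\le s+1}|\mathcal{P}|_{S'}|\le O(\pi_M(s+1))$. Since $\pi_M(s+1)=O(s^d)=O\big((|U|/\Delta)^d\big)$, treating $d$ as a constant, this yields the stated bound on the number of rows of $M$.
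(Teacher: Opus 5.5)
The paper does not prove this lemma; it is quoted from Haussler. So the benchmark is the standard Haussler-type argument. Your Steps~2 and~3 follow its skeleton: a random $s$-subset $S$, one extra random point $x$, and weighted double counting. The genuine gap is Step~1, and the route you sketch for it does not work.

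Even granting your telescoping over a random ordering, it only bounds edge counts by a quantity determined by $\pi_M$ (of order $d\,\pi_M(s)$). It cannot give a bound $O(1)\cdot k$ in terms of the number $k$ of vertices of a \emph{particular} subfamily. The shatter function only upper-bounds projection sizes; it says nothing about how few classes a given subfamily loses when a coordinate is deleted.

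Yet that per-subfamily bound is exactly what the weighted Step~2 needs. Write $\sum_{\{Q_1,Q_2\}}\min(w_{Q_1},w_{Q_2})=\sum_{j\ge 1}e(G_j)$, where $G_j$ is the subgraph of $G_{S'}$ induced by the projected sets of weight at least $j$. To bound this by $D\,|\mathcal{P}|=D\sum_j|V(G_j)|$ you need $e(G_j)\le D\,|V(G_j)|$ for every $j$, and $|V(G_j)|$ may be far below $\pi_M(s)$.

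Note also that the edge density cannot be $O(d)$ with an absolute constant. All subsets of a fixed $D$-element set satisfy $\pi_M(s)\le 2^D$ for every $s$, yet their unit-distance graph is the $D$-cube, of density $D/2$. So the constant must depend on the constant $C$ in $\pi_M(s)\le Cs^d$.

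The missing observation is that a polynomial shatter function bounds the VC dimension. If $d_0$ is the VC dimension of the rows, then $2^{d_0}=\pi_M(d_0)\le Cd_0^d$, so $d_0=O(d\log d+\log C)$ is a constant. VC dimension does not increase when passing to subfamilies or projections, so Haussler's shifting lemma gives $e(G_j)\le d_0|V(G_j)|$ for every $j$. By symmetry ($x$ is uniform in the uniform $(s+1)$-set $S'$), the expected weighted count in direction $x$ is at most $\frac{d_0}{s+1}|\mathcal{P}|$. The right-hand side is the total weight $|\mathcal{P}|$, not $\mathbb{E}|\mathcal{P}|_{S'}|$ as you wrote.

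Your lower bound, made precise as $\mathbb{E}_x\min(w_1,w_2)\ge \mathbb{E}_x[w_1w_2]/w_B\ge (w_B-1)\Delta/(2|U|)$, then gives $\frac{\Delta}{2|U|}\big(|\mathcal{P}|-\mathbb{E}|\mathcal{P}|_S|\big)\le\frac{d_0}{s+1}|\mathcal{P}|$. Taking $s\approx 4d_0|U|/\Delta$ yields $|\mathcal{P}|\le 2\,\mathbb{E}|\mathcal{P}|_S|\le 2\pi_M(s)=O\big((|U|/\Delta)^d\big)$. So the sample size should be governed by $d_0$, not $d$. The exponent $d$ enters only through this final use of $\pi_M$, which is the standard route for the shatter-function version of the packing lemma.
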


The next lemma is our basic construction of an online average factorization for matrices with bounded shatter function exponent. 

\begin{lemma}\label{lm:net_queries}
Let $Q$ be an $m\times N$ binary matrix with columns indexed by $U$ and with shatter function exponent $d$. There is an online average factorization algorithm that at time $t$ outputs the factorization $L_tR_t=Q_t$ such that
$$\norm{L_t}^2_{2\rightarrow\infty}=O(m^{1-1/d}) \indent \norm{R_t}^2_{F}\leq N.$$
\end{lemma}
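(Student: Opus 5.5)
The plan is to make online the chaining/net construction of Muthukrishnan and Nikolov, with Haussler's packing bound (Lemma~\ref{lm:packing}) controlling the sizes of the nets. Work at scales $\Delta_k = 2^k$ for $k=0,1,\ldots,K=\lceil\log_2 N\rceil$. For each $k$ the algorithm maintains a set $\mathcal{N}_k$ of rows seen so far, kept pairwise $\sqrt{\Delta_k}$-separated in $\ell_2$. Since two Boolean rows are at squared distance at most $N$, the coarsest level $\mathcal{N}_K$ contains a single row. Since distinct Boolean rows are at squared distance at least $1$, every distinct row that arrives lies in $\mathcal{N}_0$.

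When $q_t$ arrives, process the scales from coarse to fine. At scale $k$, if some $p\in\mathcal{N}_k$ satisfies $\|q_t-p\|_2^2<\Delta_k$, set $\pi_k(q_t)=p$. Otherwise insert $q_t$ into $\mathcal{N}_k$, set $\pi_k(q_t)=q_t$, and permanently fix its parent as $\mathrm{par}(q_t)=\pi_{k+1}(q_t)\in\mathcal{N}_{k+1}$. Parents of net points are thus frozen at insertion, so every point of $\mathcal{N}_k$ has an ancestor chain up to the root of $\mathcal{N}_K$. By the triangle inequality, $\|p-\mathrm{par}(p)\|_2^2 \le \Delta_{k+1}$. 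Whenever a point $p$ enters $\mathcal{N}_k$, append the strategy row $(p-\mathrm{par}(p))/w_k$ to $R_t$; append the root $r/w_K$ once. The weights $w_k$ are fixed in advance.

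Because $q_t\in\mathcal{N}_0$, the chain telescopes:
\[
q_t = r+\sum_{k<K}\bigl(p_k-\mathrm{par}(p_k)\bigr),
\]
where $p_0=q_t$, $p_{k+1}=\mathrm{par}(p_k)$, and $r=p_K$. Hence $\ell_t$ has entry $w_k$ on the row of each chain element and zeros elsewhere, giving
\[
\|\ell_t\|_2^2=\sum_k w_k^2 .
\]
Rows are only ever appended to $R_t$, so the factorization is online.

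For the Frobenius norm,
\[
\|R_t\|_F^2\le \sum_k |\mathcal{N}_k|\,\Delta_{k+1}/w_k^2 .
\]
Here $|\mathcal{N}_k|\le \min\{m,\,O((N/\Delta_k)^d)\}$: the first bound is trivial, and the second is Lemma~\ref{lm:packing}, since $\mathcal{N}_k$ is a $\sqrt{\Delta_k}$-separated submatrix of $Q$. Let $\Delta^*=N m^{-1/d}$ be the scale where the two bounds on $|\mathcal{N}_k|$ cross, and let $k^*$ be the corresponding level. Choose
\[
w_k^2 = c\, m^{1-1/d}\,2^{-|k-k^*|/4},
\]
with $c$ a suitable constant.
\begin{itemize}
\item Then $\sum_k w_k^2 = O(m^{1-1/d})$, which is the required bound on $\|L_t\|_{2\to\infty}^2$.
\item For $k\le k^*$ the Frobenius contribution is $m\Delta_k/w_k^2$, which is of order $N\,2^{k-k^*}2^{|k-k^*|/4}$ and decays geometrically.
\item For $k\ge k^*$ it is $(N/\Delta_k)^d\Delta_k/w_k^2 \approx N\,2^{-(d-1)(k-k^*)}2^{(k-k^*)/4}$. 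This also decays geometrically as long as $d-1>1/4$.
\end{itemize}
Summing gives $\|R_t\|_F^2\le N$ after choosing $c$ large enough; the constant can be absorbed into $L_t$ by rescaling.

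I expect the delicate points to be the following.
\begin{itemize}
\item Small $d$ close to $1$ (including $d=1$). There the coarse-scale sum is not geometric with the exponent above, so I would tune the exponent in $w_k$ to $\min\{1/4,(d-1)/2\}$, or accept a constant depending on $d$. For $d=1$ the claimed bound $m^{0}$ is $O(1)$ and needs separate care, possibly a $\log$ loss or the observation that $d=1$ nets are tiny.
\item Knowledge of $m$, which the weights use through $k^*$. Since $m$ enters only through $k^*$, if $m$ is not known I would run the construction with $m$ replaced by a guess and restart with a doubled guess, concatenating blocks as in Lemma~\ref{lm:fact-doubling}. The squared $L$-norms then grow geometrically in the guess, giving only a constant-factor loss.
\end{itemize}
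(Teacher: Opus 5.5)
Your proposal is correct and is essentially the paper's argument: online dyadic separated nets with parents fixed at insertion, strategy rows given by weighted parent--child differences, Haussler's packing bound on level sizes, and weights peaked at the crossover level $k^*$ (the paper's $h^*$). The paper differs only in details: it inserts fine-to-coarse so each new row creates a single difference edge, it counts the $m$ edges globally rather than using your per-level bound $\min\{m,(N/\Delta_k)^d\}$, and it uses the polynomial profile $\eps_i^2=m^{1-1/d}(1+|h^*-i|)^{-3}$, which needs no tuning for $d$ near $1$ but, like yours, requires $d>1$; it also simply uses $m$ in the weights.

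Two of your side remarks are off. First, restarting with a doubled guess of $m$ does not cost only a constant factor: in the row-arrival average model the blocks' Frobenius budgets add up, and the $\sqrt{f(\phi)}$ rescaling of Lemma~\ref{lm:fact-doubling} then loses a polylogarithmic factor. Second, for $d=1$ no fix can give $O(1)$: the lower-triangular all-ones (threshold) matrix has $d=1$ but $\gamma_F(Q^T)\ge \|Q\|_{tr}/\sqrt{N}=\Omega(\sqrt{N}\log N)$, which forces $\|L_t\|_{2\to\infty}^2=\Omega(\log^2 N)$.
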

\begin{proof}
    We construct a ``net'' data structure that stores row vectors from $Q$, as they arrive over time. We then use this data structure to construct our matrices $L_t$ and $R_t$.
    
    The data structure is partitioned into ``layers'', $\mathcal{L}_0,\ldots,\mathcal{L}_h$, each holding a subset of the vectors, where any two vectors in $\mathcal{L}_i$ are separated by at least $\sqrt{2^{-i}N}$ in $\ell_2$ distance. This means the top layer $\mathcal{L}_0={\varnothing}$ only holds one vector. Without loss of generality we assume it is the all-zeros vector (which we can always insert into $Q$ as the first vector). For the rest of the proof, we set $h=\ceil{\log{N}}$, which means the bottom layer $\mathcal{L}_h$ can hold all $2^{N}$ possible vectors without violating the distance assumption. For every vector $q\in \mathcal{L}_i$, $i > 0$, there is a parent vector $\rho_i(q) \in \mathcal{L}_{i-1}$ such that $\|q- \rho_i(q)\|_2\leq \sqrt{2^{-i+1}N}$. Thus, the data structure can be thought of as a tree, with the $0$ vector as its root, and the vectors $q_t$ as the leaves.

    Initially, the data structure only contains the all-zeros vector in $\mathcal{L}_0$. When a new row $q_t$ of $Q$ arrives, we do the following to maintain the data structure.
    \begin{figure}[H]
    \begin{framed}
    \begin{algorithmic}
    \INDSTATE[0]Receive input $q_t$.
    \INDSTATE[0]\textbf{For} $i$ from $h$ to 0 \textbf{do}:
    \INDSTATE[1]\textbf{If} there exist $q^*\in \mathcal{L}_i$ s.t. $\|q^*- q_t\|_2\leq \sqrt{2^{-i}N}$:
    \INDSTATE[2]$\rho_{i+1}({q_t})\leftarrow q^*$ //Choose the closest if there are multiple
    \INDSTATE[2]\textbf{Break}
    \INDSTATE[1]\textbf{Otherwise}:
    \INDSTATE[2]Insert a copy of $q_t$ into $\mathcal{L}_i$
    \INDSTATE[2]$\rho_{i+1}({q_t})\leftarrow q_t$
    \end{algorithmic}
    \end{framed}
    \vspace{-5mm}\caption{Inserting into data structure}
    \label{fig:net-pseudocode}
    \end{figure}
Clearly, the distance requirement we defined above will be satisfied. We use the notation $\rho(q_t) = \rho_{h+1}(q_t)$ for the vector in $\mathcal{L}_h$ that is closest to $q_t$. For this lemma, $\rho(q_t)$ will simply be a copy of $q_t$, since we set $h=\ceil{\log{N}}$ which means in the very first iteration the condition inside the ``if'' cannot be satisfied. We will consider a more general case later in this section.



Notice that during the insertion process, identical copies of the new vector can be inserted into several layers until, at a certain point, a close enough vector is found that triggers the ``if'' statement in the insertion algorithm. In the example above, children and parents that are the same vector are connect by blue edges, and otherwise they are connected by red edges. We name these red edges {\color{red} difference edges}. I.e., a difference edge is a pair $(q,\rho_i(q))$, where $q \in \mathcal{L}_i$ and $\rho_i(q) \neq q$.

After the data structure is updated, we update $R_t$ by processing the newly added difference edges as follows.
    \begin{figure}[H]
    \begin{framed}
    \begin{algorithmic}
    \INDSTATE[0]\textbf{For} $i$ from $h$ to 1 \textbf{do}:
    \INDSTATE[1]\textbf{For each} newly added {\color{red} difference edge} $(q,\rho_i(q))$ \textbf{do}:
    \INDSTATE[2]Add to $R_t$ a new row $\eps_i^{-1}(q-\rho_i(q))$
    \end{algorithmic}
    \end{framed}
    \vspace{-5mm}\caption{Updating $R_t$} 
    \end{figure}
That is, every difference edge will correspond to a row in $R_t$. Here $\eps_i$ is a parameter for every layer that we will set later. Note that at the beginning of the algorithm $R_0$ is an empty matrix.

The last step is to output a row $\ell_t$ so that $\ell_tR=q_t$.
\begin{figure}[H]
    \begin{framed}
    \begin{algorithmic}
    \INDSTATE[0]$\ell_t\leftarrow {0}$
    \INDSTATE[0]$currentNode\leftarrow \rho(q_t)$
    \INDSTATE[0]\textbf{For} $i$ from $h$ to 1 \textbf{do}:
    \INDSTATE[1]\textbf{If} $currentNode\neq \rho_i(currentNode)$:
    \INDSTATE[2]// This is a difference edge, there must be a row corresponding to it in $R_t$
    \INDSTATE[2]Set $j$ s.t.~the $j$-th row of $R_t$ is $\eps_i^{-1}(currentNode-\rho_i(currentNode))$
    \INDSTATE[2]$\ell_j\leftarrow \eps_i$
    \INDSTATE[2]$currentNode\leftarrow \rho_i(q_t)$
    \end{algorithmic}
    \end{framed}
    \vspace{-5mm}\caption{Computing $\ell_t$} 
    \end{figure}

To see the correctness of the factorization, observe that the sum of $q- \rho_i(q)$ over difference edges $(q, \rho_i(q))$ on the path from $q_t$ to the zero vector in $\mathcal{L}_0$ is just $q_t$ itself.

Now we prove the bounds on $\|L_t\|_{2\to\infty}$ and $\|R_t\|_F$, using Haussler's packing lemma (Lemma~\ref{lm:packing}). A direct consequence of the packing lemma is that $\abs{\mathcal{L}_i}\leq 2^{di}$. 
Let $h^*=\ceil{\frac{\log_2{m}}{d}}$, and $\eps_i=m^{\frac{d-1}{2d}}(1+\abs{h^*-i})^{-1.5}$.
By how our algorithm is defined, every new vector creates exactly one difference edge, so there at $m$ difference edges in total. The contribution of any difference edge $(q, \rho_i(q))$ to $\|R_t\|_F^2$ is bounded by  $\eps_i^{-2}\|q- \rho_i(q)\|_2^2 \le f(i)$ where $f(i)$ is defined as $\eps_i^{-2}2^{1-i}N$.  Observe that there exist some absolute constant $C$ s.t. for any $i>j$, $f(i)\leq Cf(j)$. Thus, up to constants, the total contribution of the $m$ difference edges to $\|R_t\|_F^2$ is maximized when they are all at levels as close as possible to level $0$. 
We have\footnote{Here and in the rest of the section we use the notation $A\lesssim B$ when there exists an absolute constant $C$ so that $A \le CB$.}
\begin{align*}
    \norm{R_t}^2_{F}\lesssim \sum_{i=1}^{h^*}{2^{di}\cdot 2^{1-i}N\eps_i^{-2}}
    &\lesssim Nm^{\frac{1-d}{d}}\sum_{i=1}^{h^*}{2^{(d-1)i}(1+h^*-i)^3}
    \lesssim Nm^{\frac{1-d}{d}}2^{(d-1)h^*}
    \lesssim N.
\end{align*}
Finally, to bound $\|\ell_t\|_2$, note that $\ell_t$ contains at most entry, equal to $\eps_i$, for each layer $i$, so we have
\begin{align*}
    \norm{\ell_t}_2^2\leq \sum_{i=1}^h \eps_i^2
    = m^{\frac{d-1}{d}}\sum_{i=1}^h (1+\abs{h^*-i})^{-3}
    \lesssim m^{\frac{d-1}{d}}.
\end{align*}
Note that up to here we only showed we can achieve $\norm{R_t}_{F}^2=O(N)$ rather than $\|R_t\|_F^2 \le {N}$,  so to finish the proof we merely divide $R_t$ by a big enough constant $C$, and multiply $L_t$ by the same $C$. 
\end{proof}
Using exactly the same reduction in Lemma \ref{lm:simple_reduction} gives us the following:
\begin{corollary}\label{col:upper_bound_queries}
    Let $Q$ be an $m\times N$ Boolean matrix with shatter function exponent $d$. There is an online factorization algorithm for $Q$ with value $O(m^{1-1/d}\log^6{N})$.
\end{corollary}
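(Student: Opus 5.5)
The corollary follows by feeding the explicit online average factorization of Lemma~\ref{lm:net_queries} into the reduction of Lemma~\ref{lm:simple_reduction}, in place of the SDP-based algorithm of Lemma~\ref{lm:online_F_transpose_factorization}. There is one mismatch to handle. Lemma~\ref{lm:simple_reduction} asks, for every fixed $U'\subseteq U$, for a $(\gamma,C)$-bounded online average factorization of $Q|_{U'}$ with $\|R_t\|_F\le\sqrt{N}$. But the recursive zones and the insertion structure of Lemma~\ref{lm:insertable} really require $\|R_t\|_F\le\sqrt{|U'|}$ on the restricted universe, because the Markov step ``fewer than half the elements are kicked out'' uses $\|R^{\mathrm{ins}}_{\zeta,t}\|_F^2\le|U^{\mathrm{ins}}_{\zeta,t}|$.

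First I check that Lemma~\ref{lm:net_queries} can be applied to $Q|_{U'}$ viewed as an $m\times|U'|$ matrix. Zeroing out columns, or deleting them, does not increase the shatter function: a submatrix of $Q|_{U'}$ on $s$ columns either is a submatrix of $Q$ or has some all-zero columns, so it has at most $\pi_Q(s)$ distinct rows. Hence $Q|_{U'}$ still has shatter function exponent $d$, and Haussler's Lemma~\ref{lm:packing} applies with $|U'|$ in place of $N$. I rerun the net construction with layer separations $\sqrt{2^{-i}|U'|}$. This gives an online factorization with $\|L_t\|_{2\to\infty}^2=O(m^{1-1/d})$ and $\|R_t\|_F^2\le|U'|$, and every row of $R_t$ is supported on $U'$. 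This is the exact per-instance guarantee that Lemmas~\ref{lm:insertable} and~\ref{lm:simple_reduction} use, with $\gamma$ playing no role: the bound is unconditional, so the instance never has to assert $\gamma_2(Q_t)>\gamma$.

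Next I track the losses through the reductions with $C\gamma$ replaced by $O(m^{(1-1/d)/2})$. The insertion transformation (Lemma~\ref{lm:insertable}) multiplies the left norm by a $\sqrt{\log_2(2N)}$ factor and scales. The semi-dynamic transformation (Lemma~\ref{lm:revocable}) multiplies it by $\log(4N)$. The zone recursion (Lemma~\ref{lm:simple_reduction}) contributes another $\sqrt{\log N}\cdot\sqrt{2\log(2N)}$ after the final rescaling so that $\|R\|_{1\to2}\le1$. In total the value is $O(m^{(1-1/d)/2}\log(4N)^3)$. Since the proof of Lemma~\ref{lm:simple_reduction} is stated for generic average-factorization subroutines, no further changes are needed, and no doubling via Lemma~\ref{lm:fact-doubling} is required because the bound is absolute.

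The stated bound $O(m^{1-1/d}\log^6 N)$ is exactly the square of this, that is, the bound on $\|L_t\|_{2\to\infty}^2\|R_t\|_{1\to2}^2$. So I will present the result as value $\gamma$ with $\gamma^2=O(m^{1-1/d}\log^6N)$, which is consistent with Theorem~\ref{thm:fact-vcdim}. The only delicate point I expect is the bookkeeping that the insertion-only instances are created on universes $V\subseteq U$ of size $2^k$, with Frobenius bound $\sqrt{|V|}$ rather than $\sqrt{N}$. This is handled by the remark above that the net construction is run with $|V|$ in place of $N$.
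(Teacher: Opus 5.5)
Your proposal is correct and is essentially the paper's argument. The paper's proof is the one-line observation that the explicit online average factorization of Lemma~\ref{lm:net_queries} can be plugged into the reduction of Lemma~\ref{lm:simple_reduction}. Your extra bookkeeping makes explicit details the paper leaves implicit: you run the net construction on $|U'|$ columns so that each instance satisfies $\|R_t\|_F^2\le|U'|$, which is valid because restricting columns does not increase the shatter function. You also note that the reduction actually gives value $O(m^{(1-1/d)/2}\log^3 N)$, which implies the stated bound and matches Theorem~\ref{thm:fact-vcdim}.
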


\paragraph{Optimized Reduction.} In the case of Boolean matrices with bounded shatter function exponent (or VC dimension), the analysis in Lemma \ref{lm:simple_reduction} can be improved when we aim for a bound on the factorization value that is polynomial in $N$ rather than $m$. On a high level, the improvement comes from adding the identity matrix to the right matrix in the online average factorization, and making sure the rows of the identity matrix are not repeated in the different zones in the reduction in Lemma~\ref{lm:simple_reduction}. 
Towards this end, we first state a slightly different version of Lemma~\ref{lm:net_queries}.
\begin{lemma}\label{lm:net_universe}
Let $Q$ be an $m\times N$ binary matrix with columns indexed by $U$ and with shatter function exponent $d$. There is an online factorization algorithm that at time $t$ outputs the right matrix
$R_t = \begin{pmatrix}
    I\\R'_t
\end{pmatrix}$
and the last row in the left matrix 
$L_t = \begin{pmatrix}
L^{(I)}_t & L'_t
\end{pmatrix}$, where 
\(L_t R_t
=
Q_t,
\)
$I$ is the $N\times N$ identity matrix, and 
\begin{align*}
    \norm{L'_t}^2_{2\rightarrow\infty}&=O(N^{1-1/d}\log^{-1+2/d}{N}), &\norm{R'_t}^2_{F}=O\left(\frac{N}{\log{4N}}\right),\\
    \norm{L^{(I)}_t}^2_{2\rightarrow\infty}&=O(N^{1-1/d}\log^{2/d}{N}).
\end{align*}
\end{lemma}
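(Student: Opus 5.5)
We reuse the layered net data structure from the proof of Lemma~\ref{lm:net_queries}, truncated at a shallower depth, and absorb the remaining fine-scale residual $q_t-\rho(q_t)$ into the identity block. Concretely, set $h=\lceil \log_2(N)/d'\rceil$ for a depth chosen so that $2^{dh}\approx N/\log N$, i.e.\ $h\approx \frac{1}{d}\log_2(N/\log N)$. We insert each arriving $q_t$ exactly as in Figure~\ref{fig:net-pseudocode}, but now $\rho(q_t)=\rho_{h+1}(q_t)$ is the closest vector of the bottom layer $\mathcal{L}_h$ within distance $\sqrt{2^{-h}N}$, or a fresh copy of $q_t$ if none exists. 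Difference edges $(q,\rho_i(q))$ with $1\le i\le h$ contribute rows $\eps_i^{-1}(q-\rho_i(q))$ to $R'_t$, exactly as before. We then write
\[
q_t=(q_t-\rho(q_t))\,I+\sum_{\text{diff.\ edges on path}}\eps_i\cdot \eps_i^{-1}(q-\rho_i(q)),
\]
so $\ell^{(I)}_t=q_t-\rho(q_t)$ and $\ell'_t$ has one entry $\eps_i$ per layer, as before. Correctness is then identical to Lemma~\ref{lm:net_queries}. Note that the $I$ block is fixed from time $0$, so online monotonicity of $R_t$ is automatic.

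For the bounds, Haussler's packing lemma (Lemma~\ref{lm:packing}) gives $|\mathcal{L}_i|\lesssim 2^{di}$. The identity part satisfies $\|\ell^{(I)}_t\|_2^2\le 2^{-h}N$. Since $2^h\approx (N/\log N)^{1/d}$, this gives $N^{1-1/d}\log^{1/d}N$. The target exponent $\log^{2/d}N$ indicates that $h$ should instead satisfy $2^{h}\approx N^{1/d}\log^{-2/d}N$, i.e.\ $2^{dh}\approx N/\log^2 N$. I would fix $h$ by balancing the three stated bounds and then check them.

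For $\|R'_t\|_F^2$, the number of difference edges at layer $i$ is at most $|\mathcal{L}_i|\lesssim 2^{di}$; there are no longer $m$ of them to worry about, since the layers themselves are capped. Each edge at layer $i$ costs at most $\eps_i^{-2}2^{1-i}N$. Choose
\[
\eps_i^2 = A\,(1+h-i)^{-3/2}\,2^{(d-1)i}\cdot(\text{normalizer}),
\]
i.e.\ geometric growth weighted toward the bottom layer $h$, mirroring the $h^*$ centering of Lemma~\ref{lm:net_queries} with $h^*=h$. Then $\sum_i 2^{di}2^{-i}N\eps_i^{-2}$ and $\sum_i\eps_i^2$ are each dominated by the $i=h$ term up to constants. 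This gives $\|R'_t\|_F^2\lesssim N\,2^{(d-1)h}/\eps_h^2$ and $\|\ell'_t\|_2^2\lesssim \eps_h^2$. Setting $\eps_h^2=2^{(d-1)h}\log(4N)$ yields $\|R'_t\|_F^2\lesssim N/\log(4N)$ and $\|\ell'_t\|_2^2\lesssim 2^{(d-1)h}\log N$. With $2^{dh}\approx N/\log^2N$, we get $2^{(d-1)h}\log N\approx N^{1-1/d}\log^{-2+2/d}N\cdot\log N=N^{1-1/d}\log^{-1+2/d}N$, matching the statement. The identity bound $2^{-h}N\approx N^{1-1/d}\log^{2/d}N$ matches as well.

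The main obstacle is verifying that the weighted sums are genuinely dominated by the top-index term uniformly in $d\ge1$. When $d$ is close to $1$ the geometric ratio $2^{d-1}$ is near $1$, which is why polynomial damping $(1+h-i)^{-3/2}$ is needed in $\eps_i$: it makes both $\sum_i \eps_i^2$ and $\sum_i \eps_i^{-2}2^{(d-1)i}$ converge. I would try the exponent split $(1+h-i)^{\pm 3/2}$ between the two sums, as in Lemma~\ref{lm:net_queries}. If the $d=1$ edge case still loses a $\log$, I would absorb it into the constant since $N^{1-1/d}=1$ there.
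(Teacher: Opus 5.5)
Your architecture matches the paper's. You truncate the net at depth $h$ with $2^{dh}\approx N/\log^2N$; your corrected choice agrees with the paper's $h=\lfloor(\log_2N-2\log_2\log_2N)/d\rfloor$. You put $I$ in $R_0$, take $\ell^{(I)}_t=q_t-\rho(q_t)$ with $\|\ell^{(I)}_t\|_2^2\le 2^{-h}N$, bound the difference edges in layer $i$ by $|\mathcal{L}_i|\lesssim 2^{di}$, and normalize so that $\eps_h^2\asymp 2^{(d-1)h}\log(4N)$.

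The gap is in how the weights $\eps_i$ vary across layers. With $\eps_i^2\propto 2^{(d-1)i}(1+h-i)^{-3/2}$, the factor $2^{(d-1)i}$ exactly cancels the geometric growth of the per-layer cost. This gives
\[
\sum_{i=1}^h 2^{di}\,2^{1-i}N\,\eps_i^{-2}\;\propto\;N\sum_{i=1}^h (1+h-i)^{3/2}=\Theta\bigl(Nh^{5/2}\bigr),
\]
a sum dominated by the shallow layers, not by $i=h$. So your claim that both sums are dominated by their $i=h$ terms fails for the Frobenius sum. With $\eps_h^2=2^{(d-1)h}\log(4N)$ you only get $\|R'_t\|_F^2\lesssim Nh^{5/2}/\log(4N)$. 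Rescaling the $\eps_i$ to repair this inflates $\|\ell'_t\|_2^2$ by the same $h^{5/2}\approx\log^{5/2}N$ factor, so the stated bounds are missed. This is also not a mirror of Lemma~\ref{lm:net_queries}, whose $\eps_i$ has no geometric factor.

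The fix, which is the paper's choice, is to let only polynomial damping vary with the layer: $\eps_i^2=\eps_h^2(1+h-i)^{-3}$, i.e.\ $\eps_i=N^{\frac{d-1}{2d}}\log_2^{1/2}(4N)\,(1+h-i)^{-3/2}\log_2^{-1+1/d}N$. Then $\sum_i\eps_i^2\lesssim\eps_h^2$ because $\sum_j(1+j)^{-3}<\infty$. On the other side, $\sum_i 2^{(d-1)i}\eps_i^{-2}=\eps_h^{-2}2^{(d-1)h}\sum_{j\ge 0}2^{-(d-1)j}(1+j)^3\lesssim\eps_h^{-2}2^{(d-1)h}$ by geometric decay.

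On the $R'$ side, convergence comes from the geometric decay $2^{-(d-1)j}$, not from polynomial damping, so your discussion of $d$ near $1$ has it backwards. For any weights, Cauchy--Schwarz gives $\bigl(\sum_i\eps_i^2\bigr)\bigl(\sum_i 2^{(d-1)i}\eps_i^{-2}\bigr)\ge\bigl(\sum_i 2^{(d-1)i/2}\bigr)^2$, which equals $h^2$ at $d=1$. So this layer-by-layer bound necessarily loses a power of $\log N$ there. That loss cannot be absorbed into a constant even though $N^{1-1/d}=1$. The paper's computation likewise needs $d$ bounded away from $1$; it invokes $d\ge 2$ in the proof of Theorem~\ref{thm:fact-vcdim}.
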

\begin{proof}
    We will slightly modify the data structure we defined in Lemma \ref{lm:net_queries}. The data structure $\mathcal{L}_0, \ldots, \mathcal{L}_h$ is constructed as before, using the algorithm in Figure~\ref{fig:net-pseudocode}, but with a different value of $h$ (see the next paragraph). We initialize $R_0 = I$, and the rest of $R_t$, which we denote $R'_t$ is updated as before using the data structure. The first $N$ columns of $L_t$, which we denote $L_t^{(I)}$, correspond to the rows of the identity matrix in $R_t$.
    
    In Lemma \ref{lm:net_queries}, we set the $h$ so that the bottom layer $\mathcal{L}_h$ can contain every possible vector. Now, we instead set $h=\floor{\frac{\log_2{N}-2\log_2{\log_2{N}}}{d}}$. Recall that $\rho(q_t)$ is the closest vector in $\mathcal{L}_h$ to $q_t$. Because of our choice of $h$, it is no longer true that $\rho(q_t) = q_t$, so the sum of $q - \rho_i(q)$ over difference edges $(q,\rho_i(q))$ on the path from $\rho(q_t)$ to the zero vector in $\mathcal{L}_0$ is $\rho(q_t)$ rather than $q_t$. 
    To address this, we modify the vector $\ell_t$. 
    \begin{figure}[H]
    \begin{framed}
    \begin{algorithmic}
    \INDSTATE[0]$\ell^{(I)}_t\leftarrow q_t-\rho(q_t)$ // The only modification
    \INDSTATE[0]$\ell'_t\leftarrow {0}$
    \INDSTATE[0]$currentNode\leftarrow \rho(q_t)$
    \INDSTATE[0]\textbf{For} $i$ from $h$ to 1 \textbf{do}:
    \INDSTATE[1]\textbf{If} $currentNode\neq \rho_i(currentNode)$:
    \INDSTATE[2]Set $j$ s.t.~the $j$-th row of $R'_t$ is $\eps_i^{-1}(currentNode-\rho_i(currentNode))$
    \INDSTATE[2]$\ell'_j\leftarrow \eps_i$
    \INDSTATE[2]$currentNode\leftarrow \rho_i(q_t)$
    \INDSTATE[0] \textbf{Output} $\ell_t \leftarrow \begin{pmatrix}
        \ell^{(I)}_t & \ell'_t
    \end{pmatrix}.$
    \end{algorithmic}
    \end{framed}
    \vspace{-5mm}\caption{Computing $\ell_t$}
    \end{figure}

    In other words, we choose $R'_t$ the same way we chose $R_t$ in Lemma~\ref{lm:net_queries}, and $L'_t$ the same way we chose $L_t$. The same reasoning as in Lemma~\ref{lm:net_queries} gives us that $\ell'_t R'_t = \rho(q_t)$, so we set $\ell_t^{(I)} = q_t - \rho(q_t)$, which guarantees that 
    \[
    \ell_t R_t = \begin{pmatrix} \ell_t^{(I)} & \ell'_t \end{pmatrix}\begin{pmatrix} I \\ R'_t \end{pmatrix}=\ell_t^{(I)} + \ell'_t R'_t = q_t - \rho(q_t) + \rho(q_t) = q_t.
    \]
    
    We now proceed to bounding $\|L^{(I)}_t\|_{2\to\infty}$ $\|L'_t\|_{2\to\infty}$, and $\|R'_t\|_F$. Recall that each difference edge $(q, \rho_i(q))$ contributes $\eps_i^{-2} \|q-\rho_i(q)\|_2^2 \le \eps_i^{-2} 2^{1-i} N$ to $\|R'_t\|_F^2$. We set $\eps_i=\frac{N^{\frac{d-1}{2d}}\log_2^{0.5}{(4N)}}{(1+h-i)^{1.5}\log_2^{1-1/d}{N}}$.
    Since, by Lemma~\ref{lm:packing}, $|\mathcal{L}_i|\le 2^{di}$, so there are at most $2^{di}$ difference edges $(q, \rho_i(q))$ with $q \in \mathcal{L}_i$, we have
    \begin{align*}
    \norm{R'_t}^2_{F}&\le \sum_{i=1}^{h}{2^{di}\cdot 2^{1-i}N\eps_i^{-2}}\\
    &\lesssim  \frac{N}{\log{4N}}\left({N}^{\frac{1-d}{d}}\log^{2-2/d}{({N})}\sum_{i=1}^{h}{2^{(d-1)i}(1+h-i)^3}\right)\\
    &\lesssim \frac{N}{\log{4N}}\left({N}^{\frac{1-d}{d}}\log^{2-2/d}{({N})}\cdot 2^{(d-1)h}\right)
    \lesssim  \frac{N}{\log{4N}}
\end{align*}

We also have
\begin{align*}
    {\norm{\ell'_t}_2^2}&\leq \sum_{i=1}^h \eps_i^2\\
    &= \log_2{(4N)}\cdot N^{\frac{d-1}{d}}\log_2^{-2+2/d}{({N})}\cdot\sum_{i=1}^h (1+h-i)^{-3}\\
    &\lesssim {N}^{\frac{d-1}{d}}\log^{-1+2/d}{{N}}
\end{align*}

It remains to bound $\norm{\ell^{(I)}_t}_2$. By the construction of the data structure, we know that either $q_t = \rho(q_t) = \rho_{h+1}(q_t)$, or  $\|q_t-\rho{(q_t)}\|_2^2 \le 2^{-h} N$. Thus we have
\begin{align*}
    \norm{\ell^{(I)}_t}_2^2\leq \|q_t - \rho{(q_t)}\|_2^2\leq 2^{-h}{N}\leq {N}^{1-1/d}\log_2^{2/d}{{N}}
\end{align*}
This finishes the proof.
\end{proof}

We are now ready to prove our main result for Boolean matrices with bounded shatter function exponent.
\begin{proof}[Proof of Theorem~\ref{thm:fact-vcdim}]
    By Corollary \ref{col:upper_bound_queries} it suffices to only prove the second bound. To do so, we follow the same outline as Lemma~\ref{lm:simple_reduction}. We start with the online average factorization algorithm from Lemma~\ref{lm:net_universe}, and then apply the two reductions that allow us to support insertions (Lemma~\ref{lm:insertable}) and one-time deletions (Lemma~\ref{lm:revocable}). We make several modifications to the transformations in order to optimize the value of the factorization. We can then apply the strategy from Lemma~\ref{lm:simple_reduction} to get an online factorization algorithm from an online average factorization algorithm that supports semi-dynamic outputs. Again we slightly optimize the construction from Lemma~\ref{lm:simple_reduction} and analyze it carefully to get the promised factorization value. 
    
    To begin, let us slightly modify the algorithm for combining instances in Lemma~\ref{lm:insertable} to the following:
    \begin{figure}[H]
    \begin{framed}
    \begin{algorithmic}
    \INDSTATE[0]Initialize $\ell'_t$, $R_t'$ to be empty matrices.
    \INDSTATE[0]\textbf{for each} instance $\mathcal{I}$:
    \INDSTATE[1] Let $k$ be such that the size of $\mathcal{I}$ is $2^k$, and $\beta_k=(1+\floor{\log_2{N}}-k)^{1.5}$.
    \INDSTATE[1] Obtain $l_{\mathcal{I},t}$ and $R_{\mathcal{I},t}$
    \INDSTATE[1] $\ell'_t\leftarrow \begin{pmatrix}
\ell'_t & \mathbf{1}(\mathcal{I}\text{ is active})\beta_{k}l_{\mathcal{I},t}
\end{pmatrix}$
    \INDSTATE[1]$R_t'\leftarrow\begin{pmatrix}
 {R_t'}\\
 R_{\mathcal{I},t}/\beta_{k}
\end{pmatrix}$
    
    \end{algorithmic}
    \end{framed}
    \vspace{-5mm}\caption{Insertion with scaling}
    \end{figure}
 The only difference with the algorithm in Lemma~\ref{lm:insertable} is that we scale every instance by a parameter $\beta_k$ that depends on its size. Let us apply this transformation to the online average factorization algorithm in Lemma \ref{lm:net_universe}. Since the online average factorization algorithm outputs a factorization where the rows of $R_t$ and the columns of $L_t$ are partitioned into two parts, and the transformation in Lemma~\ref{lm:insertable} only concatenates and scales rows and columns of the factorizations output by its instances, we can also think of the factorization that is computed by the transformation as being similarly partitioned. So, we can write the output of the resulting online average factorization algorithm with insertions as
 $$\begin{pmatrix}
        \hat{L}^{ins}_t & L^{ins}_t
    \end{pmatrix}\begin{pmatrix}
         \hat{R}^{ins}_t \\R^{ins}_t
    \end{pmatrix},$$
where every row of $\hat{R}^{ins}_t$ is a scaled copy of a row of the identity matrix, so has exactly one non-zero entry. We can now use this online factorization algorithm with the transformation in Lemma~\ref{lm:revocable}. We  slightly modify this transformation as well, as follows. If at time $t$, when some element $x$ is deleted from the universe, there is a row of $\hat{R}^{ins}_t$ whose only nonzero entry is in the column indexed by $x$, then we simply replace the corresponding entry of $\ell'_{t'}$ with $0$ for all future time steps $t' \ge t$. In this way, we only need to apply the transformation in Lemma~\ref{lm:revocable} to $L^{ins}_t$ and $R^{ins}_t$. Moreover we will not normalize the right matrix of the online factorization to have Frobenius norm at most $\sqrt{N}$. With these modifications,  we may assume that at step $t$ the online factorization algorithm outputs the left matrix
\(\begin{pmatrix}
       \hat{L}'_t & L'_t 
    \end{pmatrix}\) and the right matrix 
\(\begin{pmatrix}
\hat{R}_t \\R'_t
\end{pmatrix}\) where $\hat{L}'_t$ can be obtained from $\hat{L}^{ins}_t$ by replacing some entries with zeros, $\hat{R}_t$ is a submatrix of $\hat{R}^{ins}_t$ and has one nonzero entry in each row, and we have the bounds
    \begin{align*}
        \norm{L'_t}^2_{2\rightarrow\infty}\leq\norm{L^{ins}_t}^2_{2\rightarrow\infty}\log_2{4N};    &&    \norm{R'_t}^2_{F}\leq \norm{R^{ins}_t}^2_{F}\log_2{4N}.
    \end{align*}
    Finally, for any element $x$, we have
    $s^2_{R_t'}(x)\leq s^2_{R^{ins}_{t^{del}(x)}}(x)\log_2{4N}.$
This gives us our online average factorization algorithm for semi-dynamic inputs.

 
 For the rest of the proof we follow the approach in Lemma~\ref{lm:simple_reduction}. We use the same basic algorithm and reuse the notation of Lemma~\ref{lm:simple_reduction}, with some small modifications. Inside any zone $\zeta$, let $N_{\zeta, t}=\abs{U_{\zeta, t}^{ins}}$. In each zone $\zeta$, the factorizations output, respectively, by the insertion-only and the semi-dynamic online factorization algorithms are
 \[
 \begin{pmatrix}\hat{L}^{ins}_{\zeta,t}\end{pmatrix}
 \begin{pmatrix}
     \hat{R}^{ins}_{\zeta,t}\\
     R^{ins}_{\zeta,t}
 \end{pmatrix}
 = Q_t|_{U^{ins}_{\zeta,t}},
 \quad \quad \quad
 \begin{pmatrix}\hat{L}_{\zeta,t}\end{pmatrix}
 \begin{pmatrix}
     \hat{R}_{\zeta,t}\\
     R_{\zeta,t}
 \end{pmatrix}
 = Q_t|_{U_{\zeta,t}}.
 \]
 
  Applying the same analysis as in Lemma \ref{lm:insertable}, we have
    \begin{align*}
       \norm{R^{ins}_{\zeta,t}}^2_F&\lesssim N_{\zeta, t}\sum_{k=0}^{\floor{\log{N_{\zeta, t}}}}\frac{1}{(1+\floor{\log_2{N}})-k)^3\log{4N}}\\
        \norm{\hat{R}^{ins}_{\zeta,t}}^2_{1\rightarrow 2}&\lesssim \sum_{k=0}^{\floor{\log{N_{\zeta, t}}}}\frac{1}{(1+\floor{\log_2{N}})-k)^3}\\
         \norm{L^{ins}_{\zeta,t}}^2_{2\rightarrow \infty}
         &\lesssim \sum_{k=0}^{\floor{\log{N_{\zeta, t}}}}(1+\floor{\log_2{N}}-k)^32^{k(1-1/d)}\log^{-1+2/d}{N}\\
          \norm{\hat{L}^{ins}_{\zeta,t}}^2_{2\rightarrow \infty}&\lesssim \sum_{k=0}^{\floor{\log{N_{\zeta, t}}}}(1+\floor{\log_2{N}}-k)^32^{k(1-1/d)}\log^{2/d}{N}
    \end{align*}
    Above, the bound on $\hat{R}^{ins}_{\zeta,t}$  is in terms of the $1\to 2$ norm rather than the Frobenius norm because each row of this matrix has at most one nonzero entry.

    We modify the algorithm in Figure~\ref{fig:kick-out} so that $x$ is deleted from $U_{\zeta,t}$ if 
    \[s^2_{R'_{\zeta,t}}(x) > \frac{C_1}{\log_2{4N}}\sum_{k=0}^{\floor{\log_2{N}}}\frac{1}{(1+\floor{\log_2{N}})-k)^3},\]
    where $C_1$ is a big enough constant.
    Note that this condition for deleting elements of $U_{\zeta,t}$ does not depend on $\hat{R}^{ins}$. We can now bound
      \begin{align*}
         \norm{R'_{\zeta,t}}^2_{1\rightarrow2}&\le \max_x{s^2_{R^{ins}_{\zeta,t^{del}(x)}}(x)}\log_2{4N_{\zeta,t}}\\
        &\leq \frac{C_1\log_2{4N_{\zeta,t}}}{\log_2{4N}}\sum_{k=0}^{\floor{\log_2{N_{\zeta,t}}}}{(1+\floor{\log_2{N}}-k)^{-3}}
        \lesssim \sum_{k=0}^{\floor{\log_2{N_{\zeta,t}}}}{(1+\floor{\log_2{N}}-k)}^{-3}.
     \end{align*}
    Setting $C_1$ large enough and using the bound on $\|R^{ins}_t\|_F^2$ and Markov's inequality, we have that at most half the elements of $U^{ins}_{\zeta,t}$ will be ``kicked out'' and inserted into $U^{ins}_{\zeta+1,t}$. Thus, $N_{\zeta,t}\le 2^{-\zeta} N$, and we have
    \begin{align*}
        \norm{R'_t}_{1\rightarrow2}^2\leq \sum_{\zeta=0}^{\floor{\log_2{N}}}\norm{R'_{\zeta,t}}^2_{1\rightarrow2}&\lesssim
         \sum_{\zeta=0}^{\floor{\log_2{N}}}\sum_{k=0}^{\floor{\log_2 N}-\zeta }{(1+\floor{\log_2{N}}-k)}^{-3}\\
         &\le 
         \sum_{\zeta=0}^{\infty}\sum_{k'=\zeta}^{\infty }{(1+k')}^{-3}
         = \sum_{k' = 0}^\infty {(1+k')}^{-2} < 2
    \end{align*}
    where we used the substitution $k' = \floor{\log_2 N} - k$, and exchanged the order of summation in the last equality. We can bound $\norm{\hat{R}_t}_{1\to 2}$ by an absolute constant using an analogous argument.  

    We now proceed to bounding the value of the factorization. We have 
    \begin{align*}
    \norm{\hat{L}_t}_{1\to 2}^2 + \norm{L'_t}_{2\rightarrow\infty}^2
    &\leq \sum_{\zeta=0}^{\floor{\log_2 N}}\left[ \log_2{(4N)} \norm{L^{ins}_{\zeta,t}}^2_{2\rightarrow \infty}+\norm{\hat{L}^{ins}_{\zeta,t}}^2_{2\rightarrow \infty} \right]\\
    & \lesssim \sum_{\zeta=0}^{\floor{\log_2 N}}\sum_{k=0}^{\floor{\log_2{N}}-\zeta}(1+\floor{\log_2{N}}-k)^32^{k(1-1/d)}\log_2^{2/d}{(4N)}\\
    & \lesssim N^{1-1/d}\log_2^{2/d}{(4N)}\sum_{\zeta=0}^{\floor{\log_2 N}}\sum_{k'=\zeta}^{\floor{\log_2{N}}}(1+k')^3 2^{-k'(1-1/d)}\\
    &\le N^{1-1/d}\log^{2/d}{(4N)} \sum_{k' = 0}^\infty (1+k')^4 2^{-k'(1-1/d)}\\
    &\lesssim N^{1-1/d}\log^{2/d}{(4N)}.
    \end{align*}
    Above, the third line again uses the substitution $k' = \floor{\log_2 N} - k$, the fourth line follows by changing the order of summation, and the last line uses the assumption $d \ge 2$, under which $\sum_{k' = 0}^\infty (1+k')^4 2^{-k'(1-1/d)}< \infty$. 
    
    Setting $R_t = \frac{1}{C}\begin{pmatrix} \hat{R}_t \\ R'_t\end{pmatrix}$ and $L_t= C\begin{pmatrix}
        \hat{L}_t & L'_t
    \end{pmatrix}$
    for an appropriate absolute constant $C$ finishes the proof. 
\end{proof}

\end{document}